\documentclass{article}

\usepackage{arxiv}

\usepackage[utf8]{inputenc} % allow utf-8 input
\usepackage[T1]{fontenc}    % use 8-bit T1 fonts     % hyperlinks
\usepackage{url}            % simple URL typesetting
\usepackage{booktabs}       % professional-quality tables    % blackboard math symbols
\usepackage{nicefrac}       % compact symbols for 1/2, etc.
\usepackage{microtype}      % microtypography
\usepackage{lipsum}
%%%%%%%%%%%%%%%%%%
%%%%%%%%%%%%%%%%%%%
\usepackage{amsmath,amstext,amsthm,amssymb,amsfonts}
\usepackage{graphicx,psfrag}
\usepackage[linesnumbered,ruled,vlined]{algorithm2e}
\usepackage{comment}
\usepackage{multicol,multirow}
\usepackage{xcolor}
\usepackage{xspace}
\usepackage{url}
\usepackage{hyperref}
\usepackage{subcaption}
\usepackage{multirow}
\usepackage{longtable}
\usepackage{enumitem}
\graphicspath{ {./images/} }

\makeatletter
\def\hlinewd#1{%
\noalign{\ifnum0=`}\fi\hrule \@height #1 \futurelet
\reserved@a\@xhline}
\makeatother

\newtheorem{theorem}{Theorem}[section]

\newtheorem{lemma}{Lemma}[section]
\newtheorem{definition}{Definition}[section]

\newtheorem{remark}{Remark}

\title{Game of arrivals at a two queue network with heterogeneous customer routes}

\author{
 Agniv Bandyopadhyay \\
  School of Technology and Computer Science\\
  Tata Institute of Fundamental Research\\
  Mumbai, India \\
  \texttt{agniv.bandyopadhyay@tifr.res.in} \\
  %% examples of more authors
   \And
 Sandeep Juneja \\
Safexpress Center for Data, Learning and Decision Sciences, \\
Ashoka University,\\
Sonipat, India\\
  \texttt{sandeep.juneja@ashoka.edu.in} \\
  %% \AND
  %% Coauthor \\
  %% Affiliation \\
  %% Address \\
  %% \texttt{email} \\
  %% \And
  %% Coauthor \\
  %% Affiliation \\
  %% Address \\
  %% \texttt{email} \\
  %% \And
  %% Coauthor \\
  %% Affiliation \\
  %% Address \\
  %% \texttt{email} \\
}
\linespread{1.5}

\begin{document}
\maketitle
\begin{abstract}
We consider a queuing network that opens at a specified time, where  customers are non-atomic and belong to different  classes. Each class has its own route, and as is typical in the literature, the costs are a linear function of waiting  and service completion time. We restrict ourselves to a two class, two  queue network: this simplification is well motivated as the diversity in solution structure as a function of problem parameters is substantial even in this simple setting (e.g., a specific routing structure involves eight different regimes), suggesting a combinatorial blow up as the number of queues, routes and customer classes increase. We identify the unique Nash equilibrium customer arrival profile when the customer linear cost preferences are different. This profile is a function of problem parameters including the size of each class, service rates at each queue, and customer cost preferences. When customer cost  preferences match, under certain parametric settings, the equilibrium arrival profiles may not be unique and may lie in a convex set. We further make a surprising observation  that in some parametric settings, customers in one class may arrive in disjoint intervals. Further, the two classes may arrive  in contiguous intervals or in overlapping intervals, and at varying rates within an interval, depending upon the problem parameters.
\end{abstract}

\keywords{strategic arrivals \and queuing network games \and population games}

% keywords can be removed
%\keywords{First keyword \and Second keyword \and More}

\section{Introduction}\label{sec:intro}

\noindent\textbf{Motivation:}~Queueing games or games where strategic customers, 
served by a 
queue or a queuing network, decide on actions such as which queue to join, whether to join,
when to join, what level of priority to select and so on, are well studied in 
the literature (see \cite{Hassin16,Hassin03,Haviv21} for surveys). In this paper we focus 
on the queuing arrival game where customers decide on when to join a queuing network. This is
in contrast to much of the literature that considers arrival games to a single
queue (see, e.g., \cite{Breinbjerg17,Glazer83,Oz21,Haviv21,Jain11,Juneja13,Lindsey04}). Applications of such arrival games to multiple queues are many: Customers arriving 
to a health facility where they queue up to meet a doctor and then may queue 
to  get tests done and/or procure medicines; in banks where different customers may need to go to different and multiple counters; in cafeteria where customers queue up for different and multiple food items, and so on. 
In consonance with  much of the existing literature, we model customers 
as non-atomic `infinitesimal' fluid particles with costs that are
linear in waiting time and in time to service, customers are served in a first-come-first-serve manner and service facility opens
at time zero (see \cite{Honnappa15,Jain11,Lindsey04}). In \cite{Juneja13}, uniqueness of equilibrium solution was proven in a single queue setting with stochastic service rates and large no of users. Moreover, \cite{Juneja13} showed that, the equilibrium solution with a large no of users is well approximated by the corresponding fluid system and thus lending support to the fluid analysis.
This fluid setting has resulted
in   unique and  elegant, easily calculable customer equilibrium profiles
for single queues as well as certain symmetric queuing networks where customers
have homogeneous travel routes (see \cite{Honnappa15,Jain11}). \\

\noindent\textbf{Problem description:}~To keep the discussion simple we focus on a two queue, two class customer setting where 
each class of customers has a distinct  route,
and customers in each queue are served in a first-come-first-served manner.
While this set-up may be practically interesting, 
our main contributions are theoretical: Our key aim is to test whether the elegance and simplicity of customer
equilibrium behavior to a single queue extends to more general queuing networks in presence of heterogeneous routing. \\

\noindent\textbf{Brief overview of results:}~Even in the simple two queue setting, we see that, unlike for the single queue,
here the solution structure and order of arrivals in equilibrium is a function of all the problem parameters, \textit{i.e.}, linear coefficients of the cost function, the queue service rates and the population size of each class of customers. For one  set of customer travel routes  we observe that depending upon problem parameters, there exist eight 
distinct solution structures. This suggests that as the number of queues increase there may be a rapid blow-up in the solution structures. This may make the problem of identifying and learning the correct structure computationally prohibitive. In this paper, we do not address the issue of customers 
learning the equilibrium profile by repeatedly playing the game. Limiting behaviour of players repeatedly updating their action in a game using a simple rule, often called a `no-regret' policy, are studied in \cite{Blum07} and we refer the reader to  \cite{Lugosi06} for a comprehensive exposition of this literature.

\begin{comment}
equilibrium behaviour of customers who play repeated games
using simple rules (referred to as `no-regret' strategies, see ....).  
\end{comment}

Our other broad contributions/observations  are:~\textbf{1)}\hspace{0.025in} We find that similar to the single queue setting, the equilibrium profile of arriving customers is unique for a wide set of parameters. However, interestingly, when customer cost preferences across classes are identical up to a constant, there may be multiple equilibrium arrival profiles, all lying in a convex set that we identify. Although there are many  arrival profiles in equilibrium in this case,  they all have identical social cost.\\ \textbf{2)}\hspace{0.025in}In \cite{Jain11}, the equilibrium profile  is determined for the case when multiple classes of customers with linear costs are arriving at a single queue again. They find that different classes of customers arrive in non-overlapping and contiguous intervals. In our two queue setting we find that depending upon the problem parameters, in equilibrium, arrivals may come in non-overlapping and contiguous intervals,  in overlapping intervals, or under certain parametric settings, a class of customers may even arrive in disjoint intervals.  Moreover, we show that whether the classes will arrive over overlapping sets or not is independent of the population sizes and decided entirely by the queue service rates and customer preferences. \\

\noindent \textbf{Related literature:}~The arrival games to queues were first considered by \cite{Glazer83}. The concert queueing game is the fluid setting was introduced in \cite{Jain11}. The arrival game in a fluid network of bottleneck queues including tandem, Trellis, and general feed-forward networks was considered in \cite{Honnappa15}, where they characterized the  equilibrium arrival profile in each of these topologies.  

Transportation modeling community has extensively studied  arrival games. Vickrey in \cite{Vickrey69},  introduced the morning commute problem. Unlike the concert queuing game, in these transportation problems, the service facility  has no predetermined opening time. Instead, the customers have a preferred time to complete service and a  cost for arriving too early or too late (see \cite{Hendrickson81}). This led to a huge literature on arrival games to a bottleneck queue, the impact of tolls, etc. (see \cite{Li20} for an extensive list of references). 
Much of the transportation literature considers  single queue
settings.   Lindsey, in an influential work
\cite{Lindsey04}, establishes the existence of equilibrium arrival 
profile for multiple classes of customers with general non-linear cost functions arriving at a bottleneck queue with a constant service rate, through intricate fixed point arguments.  Our work differs from transportation literature in that we consider a two queue network with heterogeneous arrival routes, linear costs, and in this setting we are able  characterize the equilibrium user arrival profiles in a closed form,
and for a large class of parameters, show that these profiles are unique.\\

\noindent\textbf{Outline of the paper:}~In Section \ref{sec_prelims} we provide the background to the arrival queueing game and overview the two-class, two queue, two-route networks that we consider. We emphasize on two heterogeneous routes networks 1) where the departures of the two classes are through different queues (Heterogeneous Departure System or HDS) and 2) where the arrivals enter at different queues (Heterogeneous Arrival System or HAS). In Section \ref{sec:non_fluid_to_fluid}, we provide a characterization of the fluid model as the limit of a sequence of non-fluid instances as the no. of users increase to infinity. This establishes the practicability of our fluid model for approximating real world non-fluid systems. Furthermore, in Section \ref{sec:non_fluid_to_fluid}, we conjecture that, the equilibrium behavior of the fluid model can be approximated using the Mixed Nash Equilibrium of these non-fluid models. In Section \ref{sec_hetdepartures}, we identify the equilibrium arrival profile for all possible parameters for  arriving customers for HDS. In particular, we see that these parameters can be partitioned into four distinct regions each having a separate solution structure, when the two customer classes have unequal preferences. In Section \ref{sec_hetarrivals} we similarly analyze HAS. Here we discover that the parameter space can be partitioned into eight regions based on the solution structure, when the two customer classes have unequal preferences. Moreover, for both HDS and HAS, when the groups have identical preference, we identify a parametric regime where unique equilibrium exists, as well as a parametric regime where the equilibrium is non-unique and the set of equilibrium profiles is convex. We end with a brief conclusion in Section \ref{sec:conclusion}. In the main body, we have confined our discussion to the main proof ideas behind our results and have kept the detailed proofs in the appendix. 

\section{Preliminaries}\label{sec_prelims}
%%%%%%%%%%%%%%%%%%%%%%%%%%%%%%%%%%%%%%%%%%%%%%%%%%%%%%%%%%%%%%%%%%%%%%%%
% Single Queue EAP main theorem, tandem network main theorem, 
% Introduce the fluid model for multiple groups of users, define queue length for a single queue 
% State the main theorem for the single queue from Jain and for tandem networks from Honnappa 
% define the arrival process and queueing delays for every queue in the instances with heterogeneous route choice
% Draw the 4 times 4 table to show the only two non-trivial instances we will be considering in this paper
%%%%%%%%%%%%%%%%%%%%%%%%%%%%%%%%%%%%%%%%%%%%%%%%%%%%%%%%%%%%%%%%%%%%%%%%

% Fluid Model --- Will cover the defn of groups of users and rate of service of a queue, 

\subsection{Fluid Model} 
We consider a fluid model having two classes of customers or users. The size of each class $i=1,2 $ is given by a positive quantity $\Lambda^{(i)} >0$. In every class $i=1,2$ individual users are infinitesimal and the set of all users in class $i$ is given by the points in the interval $[0,\Lambda^{(i)}]$. 

% Users of every class $i=1,2$ are infinitesimal and represented by all points in the interval $[0,\Lambda^{(i)}]$. 
% In this paper we study two groups of users travelling by two routes through a network having two bottleneck queues of capacities $\mu_1$ and $\mu_2$ connected in tandem, like in Fig \ref{fig:two_queue_net_intro}. 

% We let $F^{(i)}$, $i=1,2$ denote the arrival
% profiles of the two user classes. Thus,  $F^{(i)}(t)$ denotes the amount of users that arrive by time $t$. 

We define functions $F^{(i)}:\mathbb{R}\to[0,\Lambda^{(i)}]$ for $i=1,2$ such that, $F^{(i)}(t)$ denotes the amount of users of class $i$ that arrive by time $t$. We call $F^{(i)}$ the arrival profile of class $i$ users. For $i=1,2$, we restrict ourselves to arrival profiles $F^{(i)}(\cdot)$ satisfying the following properties:
\begin{enumerate}[leftmargin=6em]
    \item[\textbf{Property 1:}] $F^{(i)}(\cdot)$ is non-decreasing and satisfies $F^{(i)}(-\infty)=0$ and $F^{(i)}(+\infty)= \Lambda^{(i)}$.
    \item[\textbf{Property 2:}] $F^{(i)}(\cdot)$ is right-continuous.
\end{enumerate}
Every $F^{(i)}(\cdot)$ satisfying the above two properties can be expressed as a sum of a  non-decreasing absolutely continuous function, a non-decreasing discrete function and a non-decreasing singularly continuous function (see \cite[Proposition 4.5.1]{Athreya06}). For simplicity in analysis, we further assume the following property: 
\begin{enumerate}[leftmargin=6em]
    \item[\textbf{Property 3:}]  $F^{(i)}(\cdot)$ doesn't have any singularly continuous component. 
\end{enumerate}
Thus, by property 1,2, and 3, we restrict ourselves only to  arrival profiles $F^{(1)}(\cdot), F^{(2)}(\cdot)$ having discrete and absolutely continuous components. We call the pair $\mathbf{F}=\{F^{(1)},F^{(2)}\}$ as the joint arrival profile of the two classes.  Later on, by Lemma \ref{lem_eq_has_no_jump}, we will further restrict ourselves to absolutely continuous arrival profiles for the search of equilibrium behavior.   

We consider a network comprising of two queues, both starting service at time $t=0$. Let $\mu_1$ and $\mu_2$,
respectively, denote the
deterministic fixed service rates at the two queues after they start service. 
We consider  four routes
of the two arriving classes to the two queues. These cases are displayed in Table \ref{tab:all_2queue_instances}. Note that every situation where the two classes use both the queues through their routes can be reduced to one of these four cases.

\begin{table}[h!]
    \centerline{
    \begin{tabular}{|c|c|}
    \hline  
    \parbox{2cm}{\vspace{0.05in}\textbf{Instance I}} & \parbox{2cm}{\vspace{0.05in}\textbf{Instance II}} \\
    \parbox{7cm}{\vspace{-0.05in}\centerline{\includegraphics[width=6cm]{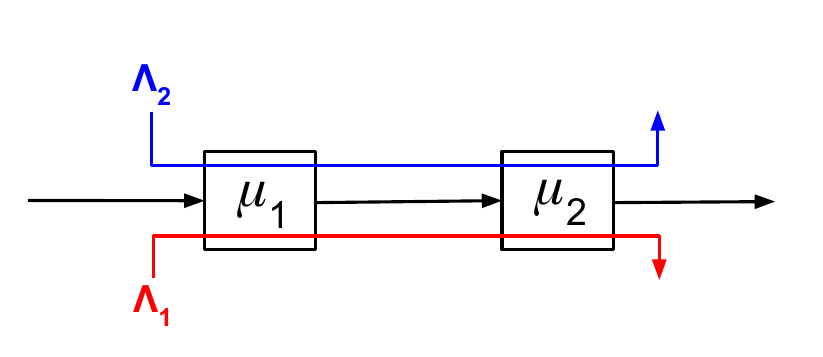}}}  &  \parbox{7cm}{\vspace{-0.05in}\centerline{\includegraphics[width=6cm]{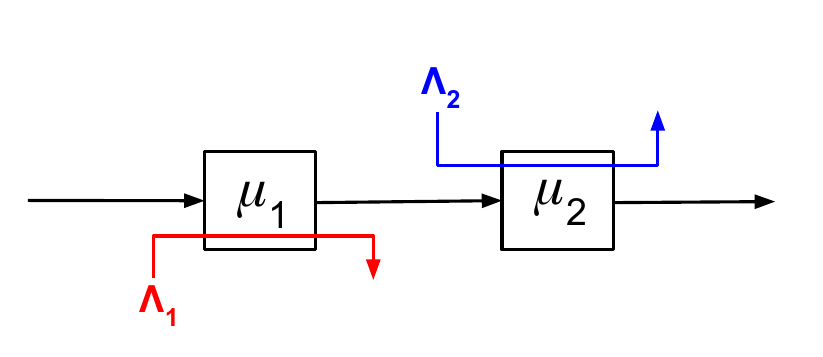}}}\\
    \hline
    \parbox{6.5cm}{\vspace{0.05in}\textbf{Instance III} (Heterogeneous Departure System or HDS)} & \parbox{6.5cm}{\vspace{0.05in}\textbf{Instance IV} (Heterogeneous Arrival System or HAS)} \\
    \parbox{7cm}{\vspace{0.05in}\centerline{\includegraphics[width=6cm]{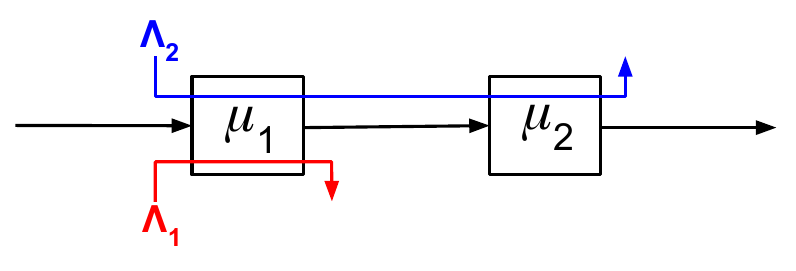}}}  & \parbox{7cm}{\vspace{0.05in}\centerline{\includegraphics[width=6cm]{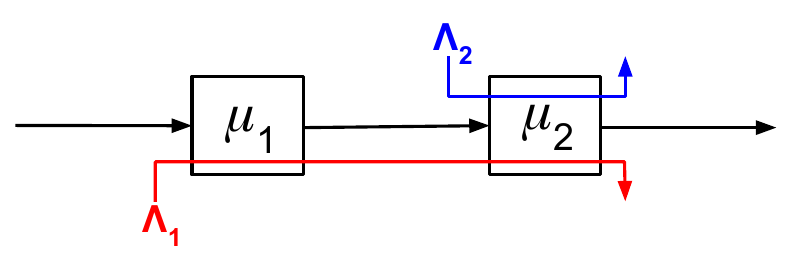}}}\\
    \hline
    \end{tabular}}
    \caption{Various instances with two groups traveling through two queues connected in tandem}
    \label{tab:all_2queue_instances}
\end{table} 

Instance I is equivalent to two groups of users arriving at a two-layer tandem network to travel by the same path. By Theorem 5 of \cite{Honnappa15}, the instance is equivalent to the case where the two groups are arriving at a single queue of capacity $\min\{\mu_1,\mu_2\}$. Instance II is equivalent to the case where the two queues independently serve the two groups and therefore is equivalent to two independent instances of a single queue with a single class customer arrivals. Hence, the first two instances are reducible to instances with just one queue studied in \cite{Jain11}. 
In this paper we study  the
 arrival equilibrium behaviour in the other two instances III and IV. We
 refer to them as  Heterogeneous Departure (HDS) and Heterogeneous Arrival Systems (HAS), respectively.\vspace{0.05in}
 
\subsection{Waiting and Departure Times}

To specify the waiting and departure times in a system,
first consider a single queue setting
where $A(t)$ denotes the total mass of users of all classes
that have arrived at the queue by time $t$. Let $\mu$ denote the service rate.
Then at time $t$, the length of the waiting queue developed in that queue will be (see Theorem 6.5 in \cite{Chen01}): 
\begin{align}\label{eq_queue_len_proc}
    Q(t)&=A(t)-\mu\cdot\max\{t,0\}+\underbrace{\sup_{s\in[0,t]}\max\left\{\mu s-A(s),0\right\}}_{U(t)}.
\end{align}
In (\ref{eq_queue_len_proc}), $U(t)/\mu$ quantifies of time the queue has remained empty during the interval $[0,t]$, and $t\mapsto U(t)$ is called the \textit{unused service process} of the queue. We assume that if there is a jump in the arrival profile $A$ at time $t$, the arrivals are positioned in the queue at uniformly random order. As a result, a user arriving at that queue at time $t$ will suffer an expected waiting time of
\begin{align}\label{eq_waiting_time_proc}
    W(t)&=\frac{Q(t+)+Q(t-)}{2\mu}+\max\{0,-t\},~\text{and departs at time}~~\tau(t)=W(t)+t,
\end{align}  
where $Q(t+)$ and $Q(t-)$ respectively denote the right and left limits of $Q$ at time $t$. Note that if the queue length process $Q(\cdot)$ is continuous (which is the case if $A(\cdot)$ is absolutely continuous), waiting time as a function of time $t$ will be $W(t)=\frac{Q(t)}{\mu}+\max\{0,-t\}$. 

If the arrival profile $A(\cdot)$ is absolutely continuous, by (\ref{eq_queue_len_proc}) and (\ref{eq_waiting_time_proc}), the departure time as a function of time $t$ will be: $\tau(t)=\frac{A(t)}{\mu}+\sup_{s\in[0,t]}\max\left\{0,s-\frac{A(s)}{\mu}\right\}$. Whenever $Q(t)>0$, the term $\sup_{s\in[0,u]}\max\{\mu s-A(s),0\}$ is independent of the choice of $u\in[t-\delta,t+\delta]$ for $\delta>0$ and sufficiently small, and when $t<0$, $\tau_1(t)=\frac{A(t)}{\mu}$. If $A(\cdot)$ is absolutely continuous,  its derivative $A^\prime(\cdot)$ will exist a.e. As a result,
\begin{align}\label{eq:derv_of_tau}
    \tau^\prime(t)&=\frac{A^\prime(t)}{\mu}~~\text{a.e. in the closure of the set of times}~\{s~\vert~s<0~\text{or}~Q(s)>0\}.
\end{align}
The above observation will be useful in our analysis of HDS and HAS in the later sections. 
\begin{definition}
    We say the queue is \emph{engaged} at time $t$ if $t$ lies in the closure of the set  $\{s~\vert~Q(s)>0\}$.
\end{definition}
By (\ref{eq:derv_of_tau}), after the queue starts serving, users depart at rate $\mu$ whenever the queue is engaged. We introduce the following notation: 

\begin{enumerate}[leftmargin=*]
    \item  $A_i(t)$ be the total mass of customers of both the groups who have arrived at queue $i=1,2$ till time $t$ (note that while $F^{(j)}(\cdot)$ denotes arrival profile
    corresponding to users of class $j$, $A_i(\cdot)$ denotes overall arrival profile to queue $i$). 
    
    \item $Q_i(t)$ and $W_i(t)$ be the length of the waiting queue, and the waiting time that a user arriving in queue $i$ at time $t$ will observe. Let $\tau_i(t)$ denote the time that user will depart the system.
    
    \item  $W_{\mathbf{F}}^{(j)}(t)$ and $\tau_{\mathbf{F}}^{(j)}(t)$ be the waiting and departure times from the network suffered by a class $j$ user arriving at time $t$ for $j\in\{1,2\}$. Note that these two functions depend on the joint arrival profile $\mathbf{F}$ and this notation will be helpful to us when we compute these functions explicitly for HAS and HDS for arbitrary $\mathbf{F}$'s.  
\end{enumerate} 

Note that we use subscripts (such as $A_i(\cdot),Q_i(\cdot),W_i(\cdot),\tau_i(\cdot)$) to denote quantities related to queue $i$ (for $i=1,2$), and use superscripts (such as $\Lambda^{(j)},F^{(j)}(\cdot),W_{\mathbf{F}}^{(j)}(\cdot),\tau_{\mathbf{F}}^{(j)}(\cdot)$) to denote quantities related to class $j$ users (for $j=1,2$). For both the queues $i=1,2$, upon defining the arrival profile $A_i(\cdot)$, using  (\ref{eq_queue_len_proc}) and (\ref{eq_waiting_time_proc}), $Q_i(\cdot)$, $W_i(\cdot)$ and $\tau_i(\cdot)$ are well-defined. Now we specify the waiting and departure times of both the queues in HDS and HAS as functions of time, under the assumption that the joint arrival profile $\mathbf{F}=\{F^{(1)},F^{(2)}\}$ is absolutely continuous (we later argue by Lemma \ref{lem_eq_has_no_jump} that considering absolutely continuous joint arrival profiles are sufficient for identifying equilibrium behavior). 
\begin{enumerate}
    \item[\textbf{HDS}]:~Arrival profiles at individual queues are $A_1(t)=F^{(1)}(t)+F^{(2)}(t)$ and $A_2(t)=F^{(2)}(\tau_1^{-1}(t))$, where $\tau_1^{-1}(t)=\sup\{s~\vert~\tau_1(s)\leq t\}$. Both $A_1(\cdot)$ and $A_2(\cdot)$ are absolutely continuous. With this, $W_{\mathbf{F}}^{(1)}(t)=W_1(t)$, $\tau_{\mathbf{F}}^{(1)}(t)=\tau_1(t)$, $W_{\mathbf{F}}^{(2)}(t)=W_1(t)+W_2(\tau_1(t))$, and $\tau_{\mathbf{F}}^{(2)}(t)=\tau_1(t)+W_2(\tau_1(t))$.
    
    \item[\textbf{HAS}]:~Arrival profile at individual queues are $A_1(t)=F^{(1)}(t)$ and  $A_2(t)=F^{(1)}(\tau_1^{-1}(t))+F^{(2)}(t)$ where $\tau_1^{-1}(t)=\sup\{s~\vert~\tau_1(s)\leq t\}$. Both $A_1(\cdot)$ and $A_2(\cdot)$ are absolutely continuous. With this, $W_{\mathbf{F}}^{(1)}(t)=W_1(t)+W_2(\tau_1(t))$, $\tau_{\mathbf{F}}^{(1)}(t)=\tau_1(t)+W_2(\tau_1(t))$, $W_{\mathbf{F}}^{(2)}(t)=W_2(t)$, and $\tau_{\mathbf{F}}^{(2)}(t)=\tau_2(t)$.
\end{enumerate}

\subsection{Solution Concept}

First we define the cost function of every user depending on the class they belong to, as well as, their arrival time. Every user, irrespective of the class, tries to simultaneously minimize her waiting and departure times. By minimizing the waiting time, every user tries to minimize the time they spend in the network. In many practical applications (such as, in a cafeteria, or, a medical facility), the quality of service (quality of the food, or, the stockpile of medicines) deteriorates with the departure time of the user.  Therefore minimizing the departure time means maximizing the quality of service.  Keeping these observations into account and following the existing literature on concert queueing games (see \cite{Juneja13}, \cite{Jain11}, \cite{Juneja18}, \cite{Honnappa15}), we model the overall objective by assuming that every user in class $i$ ($i\in\{1,2\}$) has a cost function linear in her waiting and departure times from the network given by: $C_{\mathbf{F}}^{(i)}(t)=\alpha^{(i)}\cdot W_{\mathbf{F}}^{(i)}(t)+\beta^{(i)}\cdot \tau_{\mathbf{F}}^{(i)}(t)$, where $\alpha^{(i)}$ and $\beta^{(i)}$ are positive constants quantifying the cost suffered by a class $i$ user for unit waiting time and delay in departure. Note that, for every class $i=1,2$, the relative priority over minimizing waiting time ($W_{\mathbf{F}}^{(i)}(\cdot)$) vs minimizing departure time ($\tau_{\mathbf{F}}^{(i)}(\cdot)$) is measured by the constants $\alpha^{(i)},\beta^{(i)}$, and is homogeneous across all users in class $i$.

\begin{definition}[Support of arrival profile]
Given an arrival profile $t\mapsto B(t)$ such that $B(+\infty)<\infty$, the support of $B$, denoted by $\mathcal{S}(B)$, is defined as the smallest closed set having a $B$-measure equal to $B(+\infty)$. 
\end{definition}

\begin{definition}[Equilibrium Arrival Profile (EAP)]
The joint arrival profile $\mathbf{F^\star}=\{F^{(1),\star},F^{(2),\star}\}$ is an Equilibrium Arrival Profile (EAP) of this game if for both the groups $i\in\{1,2\}$:~ $t\in \mathcal{S}(F^{(i),\star})$ and $\tilde{t}\in\mathbb{R}$ implies $C_{\mathbf{F^\star}}^{(i)}(t)\leq C_{\mathbf{F^\star}}^{(i)}(\tilde{t})$. In particular, the arrival profile is iso-cost along its support. 
\end{definition}

By definition of the EAP, users cannot strictly reduce their cost by arriving at a different time, and therefore none of the users, irrespective of the class they belong to, have any incentive to deviate. Furthermore the EAP doesn't change upon normalizing the cost function of both the classes $i=1,2$ by multiplying $1/(\alpha^{(i)}+\beta^{(i)})$. For simplicity, and without loss of generality, we assume both the classes $i=1,2$ have their normalized cost function, which are: $C_{\mathbf{F}}^{(i)}(t)=\gamma^{(i)}\cdot W_{\mathbf{F}}^{(i)}(t)+(1-\gamma^{(i)})\cdot \tau_{\mathbf{F}}^{(i)}(t)$ where $\gamma^{(i)}=\frac{\alpha^{(i)}}{\alpha^{(i)}+\beta^{(i)}}$ quantifies the preference of every class $i$ user. A value of $\gamma^{(i)}$ close to $1$ indicates users in group $i$ prefer late departure compared to waiting a long time in the network and $\gamma^{(i)}$ close to $0$ implies the opposite. So, we use $\gamma^{(i)}$ to quantify the \textit{cost preference} of every group $i$ user. 

\begin{remark}
\emph{EAP captures the aggregate equilibrium behavior of the group. We can equivalently define Nash equilibrium at individual level where under it no individual has unilateral incentive to deviate.  As is well known and discussed in more detail in \cite{Jain11}, the two concepts are equivalent.} 
\end{remark}

Recall that every arrival profile $F^{(i)}(\cdot)$ for $i=1,2$, can have discrete jumps owing to its right continuity and non-decreasing property. Lemma \ref{lem_eq_has_no_jump} below helps us restrict our search of EAP only to absolutely continuous arrival profiles without discrete jumps. 

\begin{lemma}\label{lem_eq_has_no_jump}
    In every EAP, $\mathbf{F}=\{F^{(1)},F^{(2)}\}$ of the HDS and HAS, the arrival profiles $F^{(1)}$ and $F^{(2)}$ cannot have any discontinuity (or jump increments).   
\end{lemma}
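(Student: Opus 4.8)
The plan is to reduce everything to ruling out atoms, and then to ruling out atoms via a strictly profitable backward deviation. Since each $F_i$ is assumed to decompose as an absolutely continuous plus a discrete (atomic) part with no singular continuous component, ``absolutely continuous'' is equivalent to ``atomless,'' so it suffices to suppose, for contradiction, that some $F_i$ carries an atom of mass $a>0$ at a time $t_0$, and derive a violation of the EAP condition. The first thing I would record is the convenient identity $\tau^{(i)}_{\mathbf F}(t)=t+W^{(i)}_{\mathbf F}(t)$: it follows from the single-queue relation $\tau=t+W$ together with the HDS/HAS compositions $\tau^{(2)}_{\mathbf F}=\tau_1+W_2(\tau_1)$, $W^{(2)}_{\mathbf F}=W_1+W_2(\tau_1)$, $\tau_1=t+W_1$, and their analogues. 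Substituting into $C^{(i)}_{\mathbf F}=\gamma_i W^{(i)}_{\mathbf F}+(1-\gamma_i)\tau^{(i)}_{\mathbf F}$ gives the clean form $C^{(i)}_{\mathbf F}(t)=\tau^{(i)}_{\mathbf F}(t)-\gamma_i t$. Because $t\mapsto-\gamma_i t$ is continuous, the entire argument reduces to establishing a \emph{strict left discontinuity} $\lim_{t\uparrow t_0}\tau^{(i)}_{\mathbf F}(t)<\tau^{(i)}_{\mathbf F}(t_0)$: this yields $C^{(i)}_{\mathbf F}(t_0-\epsilon)<C^{(i)}_{\mathbf F}(t_0)$ for small $\epsilon$, and since a single infinitesimal user deviating to $t_0-\epsilon$ does not alter the aggregate profile $\mathbf F$, that user faces exactly this lower cost, contradicting the minimality of $C^{(i)}_{\mathbf F}$ on $\mathcal S(F_i)\ni t_0$.

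Next I would split on which queue the atom feeds directly. For a route whose first queue receives the atom directly (class $1$ in HDS through $A_1=F_1+F_2$; class $2$ in HAS through $A_2=F_1\circ\tau_1^{-1}+F_2$), the length process $Q$ of that queue jumps up by exactly $a$ at $t_0$, since service over a single instant is zero; hence by \eqref{eq_waiting_time_proc} the waiting time at the atom exceeds its left limit by precisely $\tfrac{a}{2\mu}>0$ (the continuous $\max\{0,-t\}$ term drops out, covering also $t_0<0$ where $Q(t)=A(t)$). When the route is a single queue (class $1$ in HDS, class $2$ in HAS), one has $\tau^{(i)}_{\mathbf F}=\tau$ for that queue, so $\tau^{(i)}_{\mathbf F}(t_0)-\lim_{t\uparrow t_0}\tau^{(i)}_{\mathbf F}(t)=\tfrac{a}{2\mu}>0$ is immediate and the contradiction closes exactly as in the single-queue game.

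The hard part is the two-queue route where the atom sits at the upstream queue (class $2$ in HDS, class $1$ in HAS), because there the atom does \emph{not} survive as a jump downstream: the upstream queue discharges the mass $a$ continuously at rate $\mu_1$ over the interval $[\tau_1(t_0-),\tau_1(t_0+)]$, so the downstream arrival process is absolutely continuous and its queue length has no jump. The subtlety is that, by \eqref{eq_waiting_time_proc}, the atom users are assigned the \emph{averaged} upstream departure $\tau_1(t_0)$, which is exactly the midpoint of the discharge interval, whereas a user arriving at $t_0-\epsilon$ enters the downstream queue at $\tau_1(t_0-\epsilon)<\tau_1(t_0-)$, strictly ahead of the whole discharge interval. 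Writing $g$ for the downstream departure-time map so that $\tau^{(i)}_{\mathbf F}(t)=g(\tau_1(t))$, I would show $g$ is strictly increasing on $[\tau_1(t_0-),\tau_1(t_0)]$: over this interval the downstream queue receives mass at the strictly positive rate $\mu_1$, so its arrival profile $A_2$ is strictly increasing, and from $\tau_2(u)=\tfrac{A_2(u)}{\mu_2}+\sup_{s\le u}\max\{0,s-\tfrac{A_2(s)}{\mu_2}\}$ the departure is strictly increasing wherever $A_2$ is (the supremum term being locally constant when the queue is engaged, and $\tau_2(u)=u$ otherwise). By continuity of $g$ this gives $\lim_{t\uparrow t_0}\tau^{(i)}_{\mathbf F}(t)=g(\tau_1(t_0-))<g(\tau_1(t_0))=\tau^{(i)}_{\mathbf F}(t_0)$, the required strict left discontinuity. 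I expect this smoothing step to be the main obstacle: one must argue that upstream averaging combined with strict downstream monotonicity across the discharge interval still forces a strict drop in the composite cost, even though the downstream queue itself is continuous. The pre-opening regime $t_0<0$ and the possibility of simultaneous atoms in both classes need separate mention but reduce to the same computation, so assembling the cases completes the contradiction and proves $F_1,F_2$ are absolutely continuous.
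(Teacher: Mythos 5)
Your proposal is correct and follows the same route as the paper, which gives only a one-line sketch (deferring to Lemma 1(ii) of the single-queue analysis in Jain et al.): assume an atom, show the departure time $\tau^{(i)}_{\mathbf F}$ has a strict left jump at the atom so that $C^{(i)}_{\mathbf F}(t)=\tau^{(i)}_{\mathbf F}(t)-\gamma_i t$ strictly drops for a user deviating to $t_0-\epsilon$, contradicting the EAP property. The one place you go beyond the cited single-queue argument --- the atom sitting upstream of a second queue, where the jump is smoothed into a discharge interval and the deviation gain comes from $\tau_1(t_0)$ being the midpoint of that interval combined with strict monotonicity of $\tau_2$ over it --- is handled correctly and is exactly the case the paper's sketch leaves implicit.
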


Proof of the above lemma is similar to proof of statement (ii) of Lemma 1 in \cite{Jain11}. On assuming contradiction, if any of the arrival profiles have a jump, any user arriving in that jump will be strictly better off arriving slightly early and as a result the arrival profile cannot be an EAP. 

We argued before that Instances I and II in Table \ref{tab:all_2queue_instances} are reducible to instances where one or more groups of users having distinct preferences are arriving to a single queue. \cite{Jain11} show that when two classes of customers having cost preferences $\gamma^{(1)}$ and $\gamma^{(2)}$ arrive at a single queue  with service rate $\mu$, the EAP has a simple structure. The class with smaller $\gamma^{(i)}$ comes first at arrival rate $\mu\cdot\min\{\gamma^{(1)}, \gamma^{(2)}\}$ over an interval, while the next class arrives at a contiguous but non-overlapping interval, at rate $\mu\cdot\max\{\gamma^{(1)}, \gamma^{(2)}\}$. Fig \ref{fig:one_queue_two_grps} illustrates this EAP and the resulting queue length with the assumption $\gamma^{(1)}<\gamma^{(2)}$ and is useful to contrast with the various EAP structures that we find for HDS and HAS in Sections \ref{sec_hetdepartures} and \ref{sec_hetarrivals} below. The queue length process is constructed assuming that in the EAP, class 2 users start arriving from a positive time, which is equivalent to saying, masses of the two classes satisfy $\Lambda^{(1)}>\left(\frac{1}{\gamma^{(2)}}-1\right)\Lambda^{(2)}$.  

\begin{figure}[h]
    \centering
    \includegraphics[width=12cm]{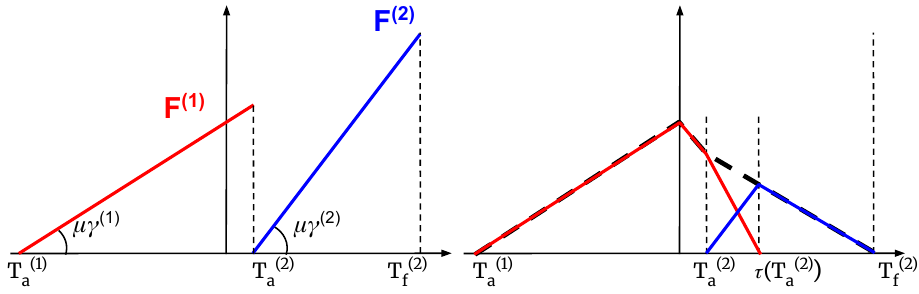}
    \caption{EAP structure (left) and resulting queue length process (right) when two classes of users with cost preferences $\gamma^{(1)}$ and $\gamma^{(2)}$ (assuming $\gamma^{(1)}<\gamma^{(2)}$) are arriving at a queue of capacity $\mu$. The support boundaries are $T^{(2)}_f=\frac{\Lambda^{(1)}+\Lambda^{(2)}}{\mu},~T^{(2)}_a=\frac{\Lambda^{(1)}}{\mu}-\left(\frac{1}{\gamma^{(2)}}-1\right)\frac{\Lambda^{(2)}}{\mu}$ and $T^{(1)}_a=-\left(\frac{1}{\gamma^{(1)}}-1\right)\frac{\Lambda^{(1)}}{\mu}-\left(\frac{1}{\gamma^{(2)}}-1\right)\frac{\Lambda^{(2)}}{\mu}$. The queue length process is illustrated only for the situation $T^{(2)}_a>0$, or equivalently $\Lambda^{(1)}>\left(\frac{1}{\gamma^{(2)}}-1\right)\Lambda^{(2)}$. \textbf{\textcolor{red}{Red}} and \textbf{\textcolor{blue}{blue}}, respectively, represents class 1 and 2 populations. The \textbf{black} dashed line represents the total waiting mass of the two classes in the plot for queue length. }
    \label{fig:one_queue_two_grps}
\end{figure}

\subsection{Relation between non-fluid queuing networks and fluid networks}\label{sec:non_fluid_to_fluid}

Following \cite{Honnappa12},  the fluid model corresponds to the limit of a series of queueing systems indexed  by $N$ where  $N\Lambda^{(1)}$ group 1 users and  $N\Lambda^{(2)}$ group 2 users arrive in the $N$-th system. In the $N$-th system, the $j$-th user of class $i$ (note that $i=1,2$ and $j\in\{1,2,\dots,N\Lambda^{(i)}\}$) chooses a distribution $F^{(i)}_{j,N}(\cdot)$ from where she samples her arrival time. We call the overall tuple $\mathbf{F}_N~:=~\left\{~F^{(i)}_{j,N}(\cdot)~:~i\in \{1,2\},~j\in\{1,2,\dots,N\Lambda^{(i)}\}\right\}$ as the strategy tuple of the $N$-th game. Now we consider HDS and HAS topologies separately and define the service time of individual users in the $N$-th queueing system: 
\begin{enumerate}[leftmargin=*]
  \item \textbf{In HDS:}~Users of both classes first arrive at queue 1 at their chosen arrival times. Service times of users arriving at queue 1 are sampled independently from an exponential distribution with mean $1/\mu_1$. After getting served at queue 1, class 1 users depart the system and class 2 users arrive at queue 2. Service times of class 2 users arriving at queue 2 are sampled independently  from an exponential distribution with mean $1/\mu_2$. Waiting time of a class 1 user is the sum of the overall time she waits in queue 1 and her service time in queue 1. Similarly, waiting time of a class 2 user is the sum of the overall time she waits in the two queues and her service times in the two queues. Departure time of every user is the sum of her arrival time and her overall waiting time in the network. 

  \item \textbf{In HAS:}~Users of class 1 first get served at queue 1 and then arrive at queue 2. Service times of class 1 users in queue 1 are sampled independently from an exponential distribution with mean $1/\mu_1$. Users of class 2 directly arrive at queue 2 alongside the class 1 users who got served from queue 1. Service times of the class 1 and 2 users at queue 2 are sampled independently from an exponential distribution with mean $1/\mu_2$. Waiting time of a class 1 user is the sum of the overall time she waits in the two queues and her service times in the two queues. Similarly, waiting time of a class 2 user is the sum of the time she waits in queue 2 and her service time in queue 2. Departure time of every user is the sum of her arrival time and her overall waiting time in the network.  
\end{enumerate}
  
We use $W_{j,N}^{(i)}$ and $\tau_{j,N}^{(i)}$, respectively, to denote the waiting and departure times of the $j$-th user of class $i$ in the $N$-th system. Both $W_{j,N}^{(i)}$ and $\tau_{j,N}^{(i)}$ are random variables. Cost function of every user is defined in the similar way as in the fluid dynamics, with the waiting time and departure times in fluid model replaced by their expectations in this non-fluid model. To be more specific, for the strategy tuple $\mathbf{F}_N$, cost of the $j$-th user of class-$i$ is defined as $C^{(i)}_{j,N}(\mathbf{F}_N)~=~\gamma^{(i)}\cdot \mathbb{E}[W^{(i)}_{j,N}]+(1-\gamma^{(i)})\cdot \mathbb{E}[\tau^{(i)}_{j,N}]$ (the expectation $\mathbb{E}$ is taken w.r.t. the joint distribution of the arrival times of users as well as the distributions of the service times at the two queues). We define Mixed Nash Equilibrium (MNE) of the $N$-th system as a strategy profile $\mathbf{F}_N$ such that, for any $i\in\{1,2\}$ and $j\in\{1,2,\dots,N\Lambda^{(i)}\}$, $j$-th user of class $i$ doesn't have any incentive to unilaterally deviate from her chosen distribution $F_{j,N}^{(i)}$, \textit{i.e.}, any such unilateral deviation will increase the expected cost $C^{(i)}_{j,N}(\cdot)$ of that user.  Let $\mathbf{F}_N$ be the MNE of the $N$-th system, assuming that a unique MNE exists for the $N$-th system. We define the limiting \textit{aggregate arrival profile}  $F^{(i)}(t)$ of class $i$ ($i=1,2$) as $F^{(i)}(t):=\lim_{N\to\infty}\frac{1}{N}\sum_{j=1}^{N\Lambda^{(i)}} F^{(i)}_{j,N}(Nt)$, with the time scaled by $N$ and space scaled by $1/N$. \cite{Juneja13} made a similar construction of a sequence of non-fluid queueing systems for the single queue fluid model. With a technically involved analysis, \cite{Juneja13} showed that the limiting aggregate arrival profile of this sequence of single queue non-fluid games is equivalent to the EAP of the single queue fluid model. Replicating the same analysis for HAS and HDS using the above model is a technically challenging task and is beyond the scope of the results presented in this paper. However, we conjecture that similar limiting results can be proved for HDS and HAS as well.

\section{Heterogeneous Departure Systems (HDS)}\label{sec_hetdepartures}

In this section, we consider the situation where the two classes arrive at the first queue and depart from different queues, as illustrated in Table \ref{tab:all_2queue_instances}. If $\mu_1\leq\mu_2$,  class 2 users arrive at queue 2 at a maximum rate of $\mu_1$ and as a result, queue 2 remains empty and the cost of class 2 is unaffected by the second queue.  Thus, if $\mu_1\leq\mu_2$, the instance becomes equivalent to both the groups arriving at a queue of capacity $\mu_1$. The problem is identical to the two-class, single queue case studied in \cite{Jain11}. Therefore, in subsequent discussion, we restrict ourselves to HDS with $\mu_1>\mu_2$. We further consider the case $\gamma^{(1)} \neq \gamma^{(2)}$ separately from $\gamma^{(1)}=\gamma^{(2)}$ since the latter displays different behaviour. 

\subsection{Unequal Preferences: $\gamma^{(1)}\neq\gamma^{(2)}$}\label{sec_inst1_uneqpref}

Theorem \ref{thm:HDS_uneqpref_brief} contains the main result for HDS with unequal preference and Theorem \ref{mainthm_inst1} is a detailed version of Theorem \ref{thm:HDS_uneqpref_brief}, where we disclose the explicit closed form of the EAP as a function of the instance parameters $\{(\Lambda^{(i)},\mu_i,\gamma^{(i)})\}_{i=1,2}$. 

\begin{theorem}\label{thm:HDS_uneqpref_brief}
  HDS with a unequal preferences, \textit{i.e.} $\gamma^{(1)}\neq \gamma^{(2)}$ and $\mu_1>\mu_2$ has a unique EAP. Furthermore, the EAP can have four different structures depending on the instance parameters.   
\end{theorem}

\noindent\textbf{Structural properties of EAP.}\hspace{0.05in} Before presenting the detailed result in Theorem \ref{mainthm_inst1}, We identify several structural properties that every EAP of HDS satisfies.  We will exploit these properties later to narrow our search of an EAP. Many of these properties are true even when the two groups have equal preference, \textit{i.e.}, $\gamma^{(1)}=\gamma^{(2)}$, and we use them later in Section \ref{sec_inst1_eqpref}.

As mentioned earlier, when $\mu_1 \leq \mu_2$, the second queue is not relevant to equilibrium behavior, and the two classes arrive in disjoint, contiguous intervals in the order of increasing $\gamma$'s. Lemma \ref{lem_threshold_behav_inst1} shows that the EAP has a similar arrival pattern as long as $\mu_1$ is below the threshold value of $\mu_2\cdot\max\left\{1,\frac{\gamma_2}{\gamma_1}\right\}$. Furthermore the threshold given by Lemma \ref{lem_threshold_behav_inst1} is tight, \textit{i.e.}, if $\mu_1>\mu_2\cdot\max\left\{1,\frac{\gamma_2}{\gamma_1}\right\}$, the two classes arrive over overlapping intervals. 

\begin{lemma}[\textbf{Threshold Behavior}]\label{lem_threshold_behav_inst1}
If $\gamma^{(1)}\neq\gamma^{(2)}$, in the EAP, the two classes arrive over contiguous intervals with disjoint interiors if and only if $\mu_1\leq\mu_2\cdot\max\left\{1,\frac{\gamma^{(2)}}{\gamma^{(1)}}\right\}$.
\end{lemma}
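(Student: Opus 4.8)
The plan is to prove both directions by pinning down, for each candidate order of arrival, the iso-cost arrival rates forced by the EAP condition and then checking the first-order no-deviation conditions at the boundary between the two classes' intervals; the threshold will emerge as exactly the condition under which the earlier class cannot profitably move into the later class's interval. Since we are in the regime $\mu_1 > \mu_2$, the stated inequality $\mu_1 \le \mu_2\max\{1,\gamma_2/\gamma_1\}$ is equivalent to $\gamma_1 < \gamma_2$ together with $\gamma_1\mu_1 \le \gamma_2\mu_2$ (when $\gamma_1 > \gamma_2$ the right-hand side is $\mu_2 < \mu_1$ and it fails), so it suffices to show that a contiguous, disjoint-interior EAP exists iff $\gamma_1 < \gamma_2$ and $\gamma_1\mu_1 \le \gamma_2\mu_2$. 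On the interior of an interval where only class 1 arrives with queue 1 engaged, $C_{\mathbf{F}}^{(1)}(t) = \tau_1(t) - \gamma_1 t$ and (\ref{eq:derv_of_tau}) force the arrival rate into queue 1 to be $\gamma_1\mu_1$; writing $C_{\mathbf{F}}^{(2)}(t) = \gamma_2 W_1(t) + (1-\gamma_2)\tau_1(t) + W_2(\tau_1(t))$ and noting that while class 2 is served at queue 1 it leaves at rate $\mu_1$ and hence enters the (thereby engaged) queue 2 at rate $\mu_1 > \mu_2$, differentiation via (\ref{eq:derv_of_tau}) collapses the derivative to $F_2'(t)/\mu_2 - \gamma_2$, forcing rate $\gamma_2\mu_2$. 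The two deviation derivatives I reuse are: for a class-1 customer deviating to a time $t$ where queue 1 is engaged with ambient queue-1 rate $a$, $\frac{d}{dt}C_{\mathbf{F}}^{(1)} = a/\mu_1 - \gamma_1$; for a class-2 customer at such a $t$, $\frac{d}{dt}C_{\mathbf{F}}^{(2)} = a/\mu_1 - \gamma_2$ when queue 2 is empty at $\tau_1(t)$, but $\frac{d}{dt}C_{\mathbf{F}}^{(2)} = -\gamma_2$ (independent of $a$) when queue 2 is engaged there with no class-2 input at that instant.

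For the forward direction, assume the EAP is contiguous with disjoint interiors, so one class occupies $[t_0,t_1]$ and the other $[t_1,t_2]$, with queue 1 engaged on the support (engagement being a standard structural property, established as in the single-queue analysis of \cite{Jain11}). I first rule out class 2 arriving first: just past $t_1$ the class-2 mass that has already cleared queue 1 is still draining out of the engaged queue 2 with no fresh class-2 input, so a class-2 customer deviating there has cost slope $-\gamma_2 < 0$ and strictly beats the equilibrium cost, a contradiction that is independent of the $\Lambda_i$. Hence class 1 arrives first. A class-2 customer deviating into $[t_0,t_1]$ (where queue 2 is empty) then has slope $\gamma_1 - \gamma_2$, and unless $\gamma_1 < \gamma_2$ this lets class 2 profitably arrive earlier; so $\gamma_1 < \gamma_2$. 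Finally a class-1 customer deviating into $[t_1,t_2]$ has slope $(\gamma_2\mu_2 - \gamma_1\mu_1)/\mu_1$; since the intervals are contiguous the class-1 cost is continuous across $t_1$, so no-improvement forces this slope to be nonnegative, i.e. $\gamma_1\mu_1 \le \gamma_2\mu_2$.

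For the converse, assume $\gamma_1 < \gamma_2$ and $\gamma_1\mu_1 \le \gamma_2\mu_2$ and build the candidate profile: class 1 on an initial interval at rate $\gamma_1\mu_1$ followed by class 2 at rate $\gamma_2\mu_2$, with the overall time-shift fixed so that queue 1 forms a single busy period clearing exactly at the end. After checking iso-cost on each support, I verify every deviation direction with the recorded slopes: into the other class's interval (the slopes $\gamma_1-\gamma_2 < 0$ and $(\gamma_2\mu_2-\gamma_1\mu_1)/\mu_1 \ge 0$ are both non-improving), to times before the first arrival, and to the tail after the last arrival. The tail needs care, because class 2 must still clear queue 2: I track the instant queue 1 empties, after which class 2's feed into queue 2 drops from $\mu_1$ (queue 2 building) to $\gamma_2\mu_2$ (queue 2 draining), and show class 2's interval ends exactly when queue 2 clears, making the terminal class-2 deviation slope nonnegative.

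The main obstacle is this global verification in the converse rather than the threshold, which drops out of the single boundary slope $(\gamma_2\mu_2 - \gamma_1\mu_1)/\mu_1$. Confirming that the candidate is a genuine EAP requires establishing the engagement pattern of both queues across the whole support, correctly locating the end of class 2's interval through the two-phase (building then draining) behaviour at queue 2, and excluding the late class-2 deviation whose slope is $-\gamma_2$ while queue 2 drains. The reusable fact that a class-2 deviator facing a draining queue 2 has cost slope exactly $-\gamma_2$, independent of the ambient queue-1 rate, is the key technical ingredient---both for eliminating the ``class 2 first'' order in the forward direction and for pinning class 2's terminal time in the converse.
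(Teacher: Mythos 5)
Your forward direction (disjoint intervals $\Rightarrow$ condition) is essentially sound and takes a genuinely different route from the paper: you read off the threshold from the sign of the class-1 deviation slope $(\mu_2\gamma_2-\mu_1\gamma_1)/\mu_1$ at the interface $t_1$, whereas the paper gets the same contradiction from the queue-emptiness statements of Lemma \ref{lem_queue1_and_2_idle} (network empty at $T_{1,f}$ when class 1 goes first, $Q_2(\tau_1(T_{2,f}))>0$ when class 2 goes first). Your ``class 2 cannot go first'' step and the reusable fact that a class-2 deviator facing a draining queue 2 has slope $-\gamma_2$ are exactly the paper's Lemma \ref{lem_inst1_uneqpref_derv_of_tau2} in action. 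One small unaddressed point: your slope formula for the class-1 deviator into $(t_1,t_2)$ presumes queue 1 is engaged just past $t_1$; if instead $Q_1(t_1)=0$ the whole network is empty at $t_1$ (queue 2 has had no input yet) and class 2 cannot be iso-cost on a nondegenerate interval starting there, so the case is vacuous --- but this needs to be said.

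The converse direction has a genuine gap. You prove that \emph{a} contiguous, disjoint-interior EAP \emph{exists} under the condition, by constructing the candidate and verifying all deviations. But the lemma, as it is used in the paper (to restrict the set of admissible candidates \emph{before} uniqueness is established in Theorem \ref{mainthm_inst1}), must assert that \emph{every} EAP has disjoint supports when $\mu_1\le\mu_2\max\{1,\gamma_2/\gamma_1\}$. Exhibiting one disjoint-interval equilibrium does not exclude another EAP with overlapping supports, and you cannot appeal to uniqueness since that is downstream of this lemma. The paper's sufficiency argument is direct and avoids any construction: take an arbitrary EAP whose supports overlap, pick $t_1,t_2\in\mathcal{S}(F_1)$ with $[t_1,t_2]\subseteq\mathcal{S}(F_2)$; class-1 iso-cost with queue 1 engaged forces total arrival mass $\mu_1\gamma_1(t_2-t_1)$ on $[t_1,t_2]$, class-2 iso-cost with queue 2 engaged at $\tau_1(t)$ (Lemma \ref{lem_inst1_uneqpref_queue2_busy}) forces class-2 mass $\mu_2\gamma_2(t_2-t_1)$, and since class 1 contributes positive mass this yields $\mu_1\gamma_1>\mu_2\gamma_2$, contradicting the hypothesis. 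Your converse should be replaced by (or supplemented with) an argument of this form; the construction-plus-verification you outline is the content of case 1 of Theorem \ref{mainthm_inst1}, not of this lemma.
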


\noindent\textbf{Intuition behind the threshold behavior:}~One of the key observations which is essential for proving Lemma \ref{lem_threshold_behav_inst1} is:~if users of class 1 and 2 arrive together over some interval $\mathcal{I}$ of time in the EAP, then, in $\mathcal{I}$, classes 1 and 2 arrive together in queue 1 at a combined rate of $\mu_1\gamma_1$ and class 2 arrives in queue 1 at a rate of $\mu_2\gamma_2$ (we discuss them in detail in the proof sketch). Since the combined rate of the two classes must be strictly larger than the rate of class 2, we need $\mu_1\gamma_1>\mu_2\gamma_2$, which gives us $\mu_1>\mu_2\cdot \frac{\gamma_2}{\gamma_1}$. As a result, if the two classes arrive over overlapping intervals, then $\mu_1>\mu_2\cdot \frac{\gamma_2}{\gamma_1}$, which gives us the candidate threshold $\mu_2\cdot\frac{\gamma_2}{\gamma_1}$. Earlier, we observed that, if $\mu_1\leq\mu_2$, queue 2 becomes ineffective and the two classes arrive over disjoint intervals. Thus we obtain the overall threshold $\mu_2\cdot\max\left\{1,\frac{\gamma_2}{\gamma_1}\right\}$, such that, if $\mu_1$ is below this threshold, the two classes cannot arrive over overlapping intervals.  Lemma \ref{lem_threshold_behav_inst1} further shows that the obtained threshold is tight.

Below we provide a detailed sketch the proof of sufficiency of the condition stated in Lemma \ref{lem_threshold_behav_inst1}. Proving the other direction requires exploiting the behavior of the two queues in EAP and also the structure of $\mathcal{S}(F^{(1)})$ and $\mathcal{S}(F^{(2)})$. So, we prove that after stating Lemma \ref{lem_queue1_and_2_idle}. 

\textit{\textbf{Proof sketch of sufficiency in Lemma \ref{lem_threshold_behav_inst1}:}}\hspace{0.05in}~The detailed proof of sufficiency of the condition $\mu_1\leq\mu_2\cdot\max\left\{1,\frac{\gamma^{(2)}}{\gamma^{(1)}}\right\}$ is in Lemma \ref{lem_inst1_uneqpref_suff_cond_for_disj_arrival} and is similar to the proof sketch we  will present here, but instead uses some other supplementary lemmas and tools. First we show via contradiction that if $\mu_1\leq\mu_2\cdot\max\left\{1,\frac{\gamma^{(2)}}{\gamma^{(1)}}\right\}$, $\mathcal{S}(F^{(1)})$ and $(\mathcal{S}(F^{(2)}))^o$ cannot overlap. We later argue via Lemma \ref{lem_supp_are_intervals_inst1} (stated later) that in every EAP $\mathbf{F}=\{F^{(1)},F^{(2)}\}$, $\mathcal{S}(F^{(1)})$ and $\mathcal{S}(F^{(2)})$ are intervals. As a result, by the previous two statements, sufficiency of the stated condition will follow.

If $\mu_1\leq\mu_2\cdot\max\left\{1,\frac{\gamma^{(2)}}{\gamma^{(1)}}\right\}$, and $\mathcal{S}(F^{(1)}), (\mathcal{S}(F^{(2)}))^o$ overlap, we can find $t_1,t_2\in\mathcal{S}(F^{(1)})$ such that $[t_1,t_2]\subseteq\mathcal{S}(F^{(2)})$. Note that queue 1 must be engaged in $(t_1,t_2)$, otherwise the class 1 user arriving at $t_2$ is strictly better off arriving at the time when queue 1 is empty in $(t_1,t_2)$. Hence using \ref{eq:derv_of_tau}, $(C_{\mathbf{F}}^{(1)})^\prime(t)=\tau_1^\prime(t)-\gamma^{(1)}=\frac{(F^{(1)})^\prime(t)+(F^{(2)})^\prime(t)}{\mu_1}-\gamma^{(1)}$ in $[t_1,t_2]$. Now $t_1,t_2\in\mathcal{S}(F^{(1)})$ implies $C_{\mathbf{F}}^{(1)}(t_2)=C_{\mathbf{F}}^{(1)}(t_1)$, which by integrating $(C_{\mathbf{F}}^{(1)})^\prime(t)=\frac{(F^{(1)})^\prime(t)+(F^{(2)})^\prime(t)}{\mu_1}-\gamma^{(1)}$ in $[t_1,t_2]$ gives
\[F^{(1)}(t_2)+F^{(2)}(t_2)-(F^{(1)}(t_1)+F^{(2)}(t_1))=\mu_1\gamma^{(1)}\cdot(t_2-t_1).\]
Therefore $\mu_1\gamma^{(1)}\cdot(t_2-t_1)$ is the total mass of the two groups that have arrived in $[t_1,t_2]$. 

Since $[t_1,t_2]\subseteq\mathcal{S}(F^{(2)})$, we must have $(C_{\mathbf{F}}^{(2)})^\prime(t)=0$ and hence $(\tau_{\mathbf{F}}^{(2)})^\prime(t)=\gamma^{(2)}$ in $[t_1,t_2]$. As $\tau_{\mathbf{F}}^{(2)}(t)=\tau_2(\tau_1(t))$, we have $(\tau_{\mathbf{F}}^{(2)})^\prime(t)=\tau_2^\prime(\tau_1(t))\tau_1^\prime(t)$ in $[t_1,t_2]$, assuming the derivatives exist. Later we argue in Lemma \ref{lem_inst1_uneqpref_queue2_busy} that, queue 2 must remain engaged at $\tau_1(t)$ for every $t\in[t_1,t_2]$. Therefore, using (\ref{eq:derv_of_tau}), $\tau_2^\prime(\tau_1(t))=\frac{A_2^\prime(\tau_1(t))}{\mu_2}$, and hence $(\tau_{\mathbf{F}}^{(2)})^\prime(t)=\frac{A_2^\prime(\tau_1(t))\tau_1^\prime(t)}{\mu_2}$ in $[t_1,t_2]$. Since $A_2(\tau_1(t))=F^{(2)}(t)$, the previous statement implies $(\tau_{\mathbf{F}}^{(2)})^\prime(t)=\frac{(F^{(2)})^\prime(t)}{\mu_2}$ in $[t_1,t_2]$. Combining this with the observation that $(\tau_{\mathbf{F}}^{(2)})^\prime(t)=\gamma^{(2)}$ a.e. in $[t_1,t_2]$, we have $(F^{(2)})^\prime(t)=\mu_2\gamma^{(2)}$ a.e. in $[t_1,t_2]$. Therefore, 
\[F^{(2)}(t_2)-F^{(2)}(t_1)=\mu_2\gamma^{(2)}\cdot(t_2-t_1).\]

Since a positive mass of both the classes arrive in $[t_1,t_2]$, we must have $F^{(1)}(t_2)+F^{(2)}(t_2)-(F^{(1)}(t_1)+F^{(2)}(t_1))>F^{(2)}(t_2)-F^{(2)}(t_1)$, which implies $\mu_1\gamma^{(1)}>\mu_2\gamma^{(2)}$ by the previous arguments. This contradicts our assumption that $\mu_1\leq\mu_2\max\left\{1,\frac{\gamma^{(2)}}{\gamma^{(1)}}\right\}$. \hfill\qedsymbol

Lemma \ref{lem_supp_are_intervals_inst1} and \ref{lem_arrival_rates_inst1} imply EAPs can only be piece-wise linear arrival profiles with intervals as support. Proof of Lemma \ref{lem_supp_are_intervals_inst1} (in  \ref{appndx:inst1_uneqpref}) is done via contradiction by arguing if there is a gap, we can identify a user who can improve her cost by arriving at a different time. 

\begin{lemma}[{\bf Structure of supports}]\label{lem_supp_are_intervals_inst1}
    If $\mu_1>\mu_2$ and $\mathbf{F}=\{F^{(1)},F^{(2)}\}$ is an EAP, then $\mathcal{S}(F^{(1)})$ and $\mathcal{S}(F^{(1)})\cup \mathcal{S}(F^{(2)})$ must be intervals. Additionally, if $\gamma^{(1)}\neq\gamma^{(2)}$, $\mathcal{S}(F^{(2)})$ must also be an interval.
\end{lemma}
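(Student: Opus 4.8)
The plan is to establish all three assertions by contradiction, in the order: first $\mathcal{S}(F_1)\cup\mathcal{S}(F_2)$, then $\mathcal{S}(F_1)$, then $\mathcal{S}(F_2)$ (the latter two leaning on the first). In each case I assume the support in question has a \emph{gap}, i.e.\ a bounded open interval $(t_a,t_b)$ disjoint from it whose endpoints both lie in it, and I produce a user on $\mathcal{S}(F_1)$ or $\mathcal{S}(F_2)$ who can strictly lower her cost by moving to a point of the gap, contradicting the global-minimum/iso-cost property of an EAP. Everything runs on one computation. Writing $C_{\mathbf{F}}^{(1)}(t)=\tau_1(t)-\gamma_1 t$ and $C_{\mathbf{F}}^{(2)}(t)=\tau_2(\tau_1(t))-\gamma_2 t$, we get $(C_{\mathbf{F}}^{(1)})'=\tau_1'-\gamma_1$ and $(C_{\mathbf{F}}^{(2)})'=\tau_2'(\tau_1)\,\tau_1'-\gamma_2$, so each derivative is controlled through (\ref{eq:derv_of_tau}) by the engaged/idle status of the two queues: with no arrivals, an engaged queue $1$ (or any $t<0$) gives $\tau_1'=0$, while an idle queue $1$ at $t>0$ gives $\tau_1'=1$, and the same dichotomy transported through $\tau_2\circ\tau_1$ governs class $2$. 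I will also reuse the computation from the sufficiency sketch of Lemma~\ref{lem_threshold_behav_inst1}: on a sub-interval of $\mathcal{S}(F_2)$ with queue $2$ engaged, class-$2$ iso-cost forces $F_2'=\mu_2\gamma_2$, and symmetrically $F_1'=\mu_1\gamma_1$ on $\mathcal{S}(F_1)$ with queue $1$ engaged.

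For the combined support, a gap means \emph{no} mass reaches queue $1$, so $A_1$ is flat and both queues can only drain — each passes from engaged to idle but never the reverse, since queue $2$ is fed solely by queue $1$'s (now zero) output. It therefore suffices to inspect the endpoints. For the class $i$ with $t_a\in\mathcal{S}(F_i)$, the derivative $(C_{\mathbf{F}}^{(i)})'(t_a^+)$ equals $-\gamma_i<0$ unless every queue on $i$'s route is idle at $t_a^+$ (note that even an idle queue $1$ with an \emph{engaged but unfed} queue $2$ gives $\tau_2'=0$, hence still slope $-\gamma_2$ for class $2$); in that non-idle case the $t_a$-user improves by shifting right. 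An idle queue $1$ at $t_a^+$, however, must already be idle just left of $t_a$ (a draining queue stays engaged), which is incompatible with $t_a$ being a left-accumulation point of $\mathcal{S}(F_1)$ under class-$1$ iso-cost and, for class $2$, forces queue $2$ idle too; this all-idle state then persists across the whole gap, so $(C_{\mathbf{F}}^{(i)})'(t_b^-)=1-\gamma_i>0$ and the $t_b$-user improves by shifting left. Every branch contradicts the EAP, so $\mathcal{S}(F_1)\cup\mathcal{S}(F_2)$ is an interval.

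Both remaining claims then exploit that any gap of one class's support is, by the interval property just proved, filled entirely by the other class. For $\mathcal{S}(F_1)$: the filling class $2$ fixes $F_2'$ by iso-cost, and queue $2$ cannot be idle on the whole gap (queue-$2$ idleness while class $2$ arrives would force queue $1$ idle, making class-$2$ cost $(1-\gamma_2)t$ non-constant), so queue $2$ is engaged on a positive-measure set where $(C_{\mathbf{F}}^{(1)})'=\mu_2\gamma_2/\mu_1-\gamma_1$. Throughout the overlap regime this slope is strictly negative — it is the negation of the threshold condition of Lemma~\ref{lem_threshold_behav_inst1}, namely $\mu_1\gamma_1>\mu_2\gamma_2$, when $\gamma_1\neq\gamma_2$, and reduces to $\mu_1>\mu_2$ when $\gamma_1=\gamma_2$ — giving $C_{\mathbf{F}}^{(1)}(t_2)<C_{\mathbf{F}}^{(1)}(t_1)$ and contradicting the iso-cost endpoints (outside overlap the classes are already contiguous and $\mathcal{S}(F_1)$ is an interval for free). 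For $\mathcal{S}(F_2)$ with $\gamma_1\neq\gamma_2$: the filling class $1$ departs at queue $1$ and \emph{never feeds} queue $2$, so queue $2$ receives no arrivals and drains monotonically (engaged then idle); with class $1$ present, queue $1$ is engaged with $\tau_1'=\gamma_1$, yielding $(C_{\mathbf{F}}^{(2)})'=-\gamma_2$ while queue $2$ is engaged and $(C_{\mathbf{F}}^{(2)})'=\gamma_1-\gamma_2$ once idle. As $\gamma_1\neq\gamma_2$, both slopes are nonzero: for $\gamma_1<\gamma_2$, $C_{\mathbf{F}}^{(2)}$ is strictly decreasing so the endpoints cannot coincide; for $\gamma_1>\gamma_2$ it strictly decreases then strictly increases, placing an interior value below the endpoint and hence below the global class-$2$ minimum — a contradiction either way. (When $\gamma_1=\gamma_2$ the idle slope vanishes and $C_{\mathbf{F}}^{(2)}$ goes flat at the minimum, so class $2$ is genuinely indifferent across the gap; this is precisely the mechanism behind the non-unique, possibly disconnected, profiles arising under equal preferences.)

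The principal obstacle is the coupling of the two queues: since class $2$'s departure time is $\tau_2\circ\tau_1$, every slope computation depends on the simultaneous engaged/idle status of \emph{both} queues, and queue $2$'s status is itself driven by queue $1$'s output. Making the case analysis exhaustive — especially ruling out the deceptive configuration where queue $1$ is idle while queue $2$ remains engaged, and aligning the inequality $\mu_1\gamma_1>\mu_2\gamma_2$ with the overlap regime of Lemma~\ref{lem_threshold_behav_inst1} — is the delicate part. I expect the clean write-up to certify the needed queue states by invoking the auxiliary engagement results, Lemmas~\ref{lem_inst1_uneqpref_queue2_busy} and~\ref{lem_queue1_and_2_idle}, rather than re-deriving them inline.
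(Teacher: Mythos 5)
Your overall strategy (assume a gap, exhibit a profitably deviating user) is the paper's, and your treatment of $\mathcal{S}(F_2)$ is essentially sound. But the proof that $\mathcal{S}(F_1)\cup\mathcal{S}(F_2)$ is an interval has a real hole. You reduce everything to the two endpoint users via the claim that ``an idle queue $1$ at $t_a^+$ must already be idle just left of $t_a$.'' That inference is false: absent arrivals a queue merely stays engaged \emph{until it drains}, and nothing prevents $Q_1$ from being positive on $(t_a-\epsilon,t_a)$ and hitting zero exactly at $t_a$ (Lemma \ref{lem_queue1_and_2_idle} shows queue 1 empties exactly at $T_{1,f}$ in equilibrium, so this is the generic picture at the right end of class 1's arrivals). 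In the configuration where $t_a\in\mathcal{S}(F_1)$, $Q_1(t_a)=0$, and queue 2 still carries a backlog across the whole gap, no endpoint deviation works: the class-1 user at $t_a$ sees $(C_{\mathbf{F}}^{(1)})^\prime(t_a^+)=1-\gamma_1>0$; your claimed contradiction from ``queue 1 idle to the left of $t_a$'' is unavailable; the all-idle state does not persist; and if $t_b\in\mathcal{S}(F_2)$, the class-2 user at $t_b$ sits at the minimum of a $C_{\mathbf{F}}^{(2)}$ that strictly decreases across the gap, so she gains nothing by moving left. The improving deviation lives elsewhere: the paper takes $\tilde t=\sup\{t\le t_a:\ t\in\mathcal{S}(F_2)\}$ and observes that, queue 2 being engaged and unfed on $[\tau_1(\tilde t),\tau_1(t_b)]$, this user departs at the same instant whether she arrives at $\tilde t$ or at $t_b$, so she strictly prefers $t_b$. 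An endpoint-only analysis cannot find her.

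There is a second gap in the $\mathcal{S}(F_1)$ step. Outside the overlap regime ($\mu_1\gamma_1\le\mu_2\gamma_2$) you assert the interval property comes ``for free'' from contiguity. It does not: disjointness of $\mathcal{S}(F_1)$ and $(\mathcal{S}(F_2))^o$ together with the union being an interval only forces $\mathcal{S}(F_1)$ to be a union of at most two intervals flanking $\mathcal{S}(F_2)=[T_{2,a},T_{2,f}]$, and one must separately rule out class 1 arriving on both sides. The paper does this by noting queue 2 is empty from $\tau_1(T_{2,f})$ onward (Lemma \ref{lem_appndx_inst1_uneqpref_queue2idle}), so on a putative piece $[T_{2,f},T]\subseteq\mathcal{S}(F_1)$ both classes depart via $\tau_1$, class-1 iso-cost forces $\tau_1^\prime=\gamma_1$, and $(C_{\mathbf{F}}^{(2)})^\prime=\gamma_1-\gamma_2<0$ (this regime with $\mu_1>\mu_2$ forces $\gamma_1<\gamma_2$), so the class-2 user at $T_{2,f}$ deviates to $T$. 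Relatedly, in the overlap regime your slope computation only holds where queue 2 is engaged; on any idle portion the slope is $\gamma_2-\gamma_1$, which can be positive, so ``engaged on a positive-measure set'' does not yield $C_{\mathbf{F}}^{(1)}(t_2)<C_{\mathbf{F}}^{(1)}(t_1)$. You need queue 2 engaged throughout the image of the gap, which the paper obtains because queue 1, engaged and serving only class 2 there, feeds queue 2 at rate $\mu_1>\mu_2$.
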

For a joint arrival profile $\mathbf{F}=\{F^{(1)},F^{(2)}\}$,  we define the quantities $T_{i,a}\overset{def.}{=}\inf \mathcal{S}(F^{(i)})$ and $T_{i,f}\overset{def.}{=}\sup \mathcal{S}(F^{(i)})$ for the two classes $i\in\{1,2\}$. For every EAP $\mathbf{F}$, $\mathcal{S}(F^{(1)}),~\mathcal{S}(F^{(2)})$ must be compact, as cost of the two classes over their support must be finite. As a result, the support  boundaries $T^{(1)}_a,T^{(1)}_f,T^{(2)}_a,T^{(2)}_f$ must be finite. By Lemma \ref{lem_supp_are_intervals_inst1}, we can further say, $\mathcal{S}(F^{(1)})=[T^{(1)}_a,T^{(1)}_f]$ and $\mathcal{S}(F^{(2)})=[T^{(2)}_a,T^{(2)}_f]$, such that union of the two intervals is also an interval. 

Lemma \ref{lem_inst1_uneqpref_queue2_busy} is about the behavior of queue 2 in equilibrium.
\begin{lemma}\label{lem_inst1_uneqpref_queue2_busy}
    If $\mu_1>\mu_2$ and $\gamma^{(1)}\neq\gamma^{(2)}$, then under EAP, for every $t\in(T^{(2)}_a,T^{(2)}_f)$, $\tau_1(t)$ belongs to the closure of the set $\{s~\vert~Q_2(s)>0\}$. 
\end{lemma}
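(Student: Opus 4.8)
The plan is to argue by contradiction. Suppose the conclusion fails, so that for some $t_0\in(T_{2,a},T_{2,f})$ the point $\tau_1(t_0)$ lies outside the closure of $\{s\mid Q_2(s)>0\}$. That complement is open and $Q_2$ vanishes on it, so there is an open interval $I$ around $\tau_1(t_0)$ on which $Q_2\equiv 0$. Since $\tau_1$ is continuous, the preimage $J=\{t\in(T_{2,a},T_{2,f})\mid\tau_1(t)\in I\}$ is an open set containing $t_0$, which I would shrink to an open interval. On $J$ queue $2$ is idle at $\tau_1(t)$; moreover, departures from queue $1$ never occur before time $0$ and $A_1(t)\geq F_2(t)>0$ for $t>T_{2,a}$, so $\tau_1(t)>0$ on $J$. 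Hence $W_2(\tau_1(t))=0$ there, giving the pass-through identity $\tau_{\mathbf{F}}^{(2)}(t)=\tau_2(\tau_1(t))=\tau_1(t)$ on $J$.

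Next I would bring in the equilibrium conditions. By Lemma \ref{lem_supp_are_intervals_inst1}, $\mathcal{S}(F_2)=[T_{2,a},T_{2,f}]$, so $C_{\mathbf{F}}^{(2)}$ is constant on $J$; differentiating and using $\tau_{\mathbf{F}}^{(2)}(t)=W_{\mathbf{F}}^{(2)}(t)+t$ yields $(\tau_{\mathbf{F}}^{(2)})'(t)=\gamma_2$ a.e.\ on $J$. Combined with the pass-through identity this forces $\tau_1'(t)=\gamma_2$ a.e.\ on $J$. Since $\gamma_2<1$, queue $1$ cannot be idle on $J$ (an idle queue $1$ at a positive time would give $\tau_1(t)=t$ and $\tau_1'(t)=1$), so queue $1$ is engaged and by (\ref{eq:derv_of_tau}) we get $\tau_1'(t)=\frac{F_1'(t)+F_2'(t)}{\mu_1}=\gamma_2$, i.e.\ $F_1'(t)+F_2'(t)=\mu_1\gamma_2$ a.e.\ on $J$. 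This is where the hypothesis $\gamma_1\neq\gamma_2$ is essential: on any positive-measure subset of $J$ where $F_1'>0$, those $t$ lie in $\mathcal{S}(F_1)$, so iso-cost of class $1$ forces $\tau_1'(t)=\gamma_1\neq\gamma_2$, a contradiction. Hence $F_1'=0$ and $F_2'=\mu_1\gamma_2$ a.e.\ on $J$.

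Finally I would transfer this back to queue $2$. Differentiating $A_2(\tau_1(t))=F_2(t)$ gives $A_2'(\tau_1(t))\,\tau_1'(t)=F_2'(t)$, so $A_2'(\tau_1(t))=\mu_1\gamma_2/\gamma_2=\mu_1>\mu_2$ a.e.\ on $J$. On the other hand, $Q_2\equiv 0$ on $I$ forces $A_2(s)-\mu_2 s$ to be non-increasing there (from the queue-length formula (\ref{eq_queue_len_proc})), whence $A_2'(s)\leq\mu_2$ a.e.\ on $I$. Since $\tau_1'=\gamma_2>0$ on $J$, the image $\tau_1(J)\subseteq I$ has positive measure, so $A_2'=\mu_1>\mu_2$ on a positive-measure subset of $I$, contradicting $A_2'\leq\mu_2$ a.e.\ on $I$. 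This contradiction establishes the lemma.

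I expect the main obstacle to be the measure-theoretic bookkeeping in the change of variables $s=\tau_1(t)$: ensuring the a.e.\ identities for $\tau_1'$, $F_2'$, and $A_2'\circ\tau_1$ transfer cleanly between $t$-space and $s$-space on positive-measure sets, and rigorously justifying both the pass-through identity and the idleness bound $A_2'\leq\mu_2$ from $Q_2\equiv 0$. The conceptual core is short: an idle second queue would force all the mass arriving over $J$ to be class $2$ leaving queue $1$ at its full rate $\mu_1>\mu_2$, which queue $2$ cannot absorb without building a backlog.
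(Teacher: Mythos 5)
Your proof is correct, and it turns on the same two contradictions that drive the paper's argument: if queue 2 is idle in a neighbourhood of $\tau_1(t)$ then both classes' departure times collapse to $\tau_1$, so simultaneous iso-cost of the two classes would force $\gamma_1=\gamma_2$; and class 2 alone passing through an engaged queue 1 feeds queue 2 at rate $\mu_1>\mu_2$, which an idle queue 2 cannot absorb. What differs is the organization. The paper decomposes a priori on the support structure: for $t\in(T_{1,a},T_{1,f})\cap(T_{2,a},T_{2,f})$ it invokes Lemma \ref{lem_inst1_uneqpref_mixedarrival}, and for $t\in(T_{2,a},T_{2,f})\setminus(T_{1,a},T_{1,f})$ it splits further on whether queue 1 is engaged, disposing of the idle subcase by noting the network cannot be empty while class 2 mass remains to arrive. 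You instead run a single contradiction from the assumed idleness of queue 2 near $\tau_1(t_0)$: the pass-through identity plus class 2 iso-cost pins $\tau_1'=\gamma_2$ on $J$, which simultaneously rules out an idle queue 1 (that would give $\tau_1'=1$) and, via class 1 iso-cost, forces $F_1'=0$ a.e.\ on $J$, after which the rate contradiction closes the argument. Both routes need Lemma \ref{lem_supp_are_intervals_inst1} to know $(T_{2,a},T_{2,f})\subseteq\mathcal{S}(F_2)$, so neither is lighter in prerequisites; yours buys a cleaner, case-free narrative at the cost of slightly more analysis. The measure-theoretic bookkeeping you flag is harmless here: since $\tau_1$ is absolutely continuous with $\tau_1'=\gamma_2$ a.e.\ on the interval $J$, it is in fact affine there, so the change of variables $s=\tau_1(t)$ carries null sets to null sets and positive-measure sets to positive-measure sets trivially, and the bound $A_2'\le\mu_2$ a.e.\ on $I$ does follow from (\ref{eq_queue_len_proc}) exactly as you say.
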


\noindent\textbf{\textit{Proof sketch of Lemma \ref{lem_inst1_uneqpref_queue2_busy}:}}\hspace{0.05in}We consider two separate cases:\begin{enumerate}
    \item[\textbf{1.}] If $t\in(T^{(2)}_a,T^{(2)}_f)/(T^{(1)}_a,T^{(1)}_f)$, two possibilities are there. If queue 1 is engaged at $t$, class 2 users arrive from queue 1 to 2 at rate $\mu_1>\mu_2$ at $\tau_1(t)$, making queue 2 engaged. Otherwise, if queue 1 is empty at $t$, in EAP, the network cannot be empty at $t$ with a positive mass of class 2 users arriving in $(t,T^{(2)}_f)$. As a result, queue 2 must be engaged at $t=\tau_1(t)$.

    \item[\textbf{2.}] If $t\in(T^{(1)}_a,T^{(1)}_f)\cap(T^{(2)}_a,T^{(2)}_f)$, we assume a contradiction, \textit{i.e.}, queue 2 is empty in some neighbourhood of $\tau_1(t)$. As a result, by continuity of $\tau_1(\cdot)$, only queue 1 will be serving in some neighbourhood of $t$. Therefore $\tau_{\mathbf{F}}^{(1)}(\cdot)=\tau_{\mathbf{F}}^{(2)}(\cdot)=\tau_1(\cdot)$ in that neighbourhood. Also, for both the classes $i=1,2$ to be iso-cost in that neighbourhood, we need $(C_\mathbf{F}^{(i)})^\prime(s)=\tau_1^\prime(s)-\gamma^{(i)}=0$. This gives us $\tau_1^\prime(s)=\gamma^{(1)}=\gamma^{(2)}$, contradicting $\gamma^{(1)}\neq\gamma^{(2)}$.
\end{enumerate}
\hfill\qedsymbol

\begin{comment}
In the proof of Lemma \ref{lem_inst1_uneqpref_queue2_busy}, we consider two cases separately:~(1)~If $t\in (T^{(2)}_a,T^{(2)}_f)/(T^{(1)}_a,T^{(1)}_f)$, the network cannot be empty at $t$, otherwise users arriving after $t$ will prefer to arrive at $t$. Now if queue 1 has positive waiting time at $t$, class 2 users arrive from queue 1 to 2 at rate $\mu_1>\mu_2$ in some neighbourhood of $\tau_1(t)$, making queue 2 engaged at $\tau_1(t)$. Otherwise, if queue 1 is empty at $t$, then queue 2 must have a positive waiting time at $\tau_1(t)=t$. Therefore $\tau_1(t)$ will be in the closure of $\{s~\vert~Q_2(s)>0\}$ irrespective of the state of queue 1.~(2)~If $t\in(T^{(1)}_a,T^{(1)}_f)\cap(T^{(2)}_a,T^{(2)}_f)$, we assume contradiction, \textit{i.e.}, queue 2 is empty in a neighbourhood $[\tau_1(t)-\epsilon, \tau_1(t)+\epsilon]$ of $\tau_1(t)$. Then over some neighbourhood $[t-\delta,t+\delta]\subseteq(T^{(1)}_a,T^{(1)}_f)\cap(T^{(2)}_a,T^{(2)}_f)$ ($\delta>0$ chosen sufficiently small), both classes get served only by queue 1. As a result, cost $C_{\mathbf{F}}^{(i)}(s)=\tau_1(s)-\gamma^{(i)} s$ for both classes $i=1,2$ in $[t-\delta,t+\delta]$. By the definition of EAP, the previous step implies, $(C_{\mathbf{F}}^{(i)})^\prime(s)=\tau_1^\prime(s)-\gamma^{(i)}=0$ in $[t-\delta,t+\delta]$ for both $i=1,2$, giving us, $\tau_1^\prime(s)=\gamma^{(1)}=\gamma^{(2)}$ in $[t-\delta,t+\delta]$ contradicting $\gamma^{(1)}\neq\gamma^{(2)}$. 
\end{comment}

Lemma~\ref{lem_arrival_rates_inst1}  states conditions on the arrival rates necessary for the two classes to have constant cost over their support in any  EAP. 
\begin{lemma}[{\bf Rates of arrival}]\label{lem_arrival_rates_inst1}
    If  $\mu_1>\mu_2$, $\gamma^{(1)}\neq\gamma^{(2)}$, and $\mathbf{F}=\{F^{(1)},F^{(2)}\}$ is an EAP, the following properties must be true almost everywhere:
    \begin{align*}
        (F^{(1)})^\prime(t)&=\begin{cases}
            \mu_1\gamma^{(1)}~~&\text{if}~t\in \mathcal{S}(F^{(1)})/\mathcal{S}(F^{(2)}),\\
            \mu_1\gamma^{(1)}-\mu_2\gamma^{(2)}~~&\text{if}~t\in \mathcal{S}(F^{(1)})\cap \mathcal{S}(F^{(2)}),
        \end{cases}~~\text{and,}~(F^{(2)})^\prime(t)=\mu_2\gamma^{(2)}~~\text{if}~t\in \mathcal{S}(F^{(2)}),
    \end{align*}
    where from Lemma \ref{lem_threshold_behav_inst1}, $\mathcal{S}(F^{(1)})\cap \mathcal{S}(F^{(2)})$ has zero measure if $\mu_1\gamma^{(1)}\leq\mu_2\gamma^{(2)}$.
\end{lemma}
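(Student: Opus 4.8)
The plan is to read everything off the iso-cost condition. Since on the engaged region $W_{\mathbf{F}}^{(i)}(t)=\tau_{\mathbf{F}}^{(i)}(t)-t$ for both classes, the normalized cost collapses to $C_{\mathbf{F}}^{(i)}(t)=\gamma_i\bigl(\tau_{\mathbf{F}}^{(i)}(t)-t\bigr)+(1-\gamma_i)\tau_{\mathbf{F}}^{(i)}(t)=\tau_{\mathbf{F}}^{(i)}(t)-\gamma_i t$. Because an EAP is iso-cost along its support, differentiating gives $(\tau_{\mathbf{F}}^{(i)})^\prime(t)=\gamma_i$ almost everywhere on $(\mathcal{S}(F_i))^o$. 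The whole lemma is then a matter of converting these two derivative identities into rates $F_1^\prime,F_2^\prime$ using (\ref{eq:derv_of_tau}) together with the earlier structural lemmas.

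For class $1$ I would first argue that queue $1$ is engaged throughout $(\mathcal{S}(F_1))^o$: were it idle on a subinterval, a class $1$ user could strictly lower her cost by arriving at an idle epoch, exactly as in the sufficiency sketch for Lemma \ref{lem_threshold_behav_inst1}. Engagement lets me invoke (\ref{eq:derv_of_tau}) with $\tau_{\mathbf{F}}^{(1)}=\tau_1$ and $A_1=F_1+F_2$, so $\tau_1^\prime(t)=\frac{F_1^\prime(t)+F_2^\prime(t)}{\mu_1}$ a.e.\ on $\mathcal{S}(F_1)$; combining with $\tau_1^\prime(t)=\gamma_1$ yields $F_1^\prime(t)+F_2^\prime(t)=\mu_1\gamma_1$ a.e.\ there. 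On $\mathcal{S}(F_1)\setminus\mathcal{S}(F_2)$ no class $2$ mass arrives, so $F_2^\prime=0$ a.e.\ and hence $F_1^\prime=\mu_1\gamma_1$, the first case. On the overlap I keep the relation $F_1^\prime+F_2^\prime=\mu_1\gamma_1$ and determine $F_2^\prime$ from the class $2$ analysis.

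For class $2$ I use $\tau_{\mathbf{F}}^{(2)}(t)=\tau_2(\tau_1(t))$, so by the chain rule $(\tau_{\mathbf{F}}^{(2)})^\prime(t)=\tau_2^\prime(\tau_1(t))\,\tau_1^\prime(t)$. Lemma \ref{lem_inst1_uneqpref_queue2_busy} guarantees queue $2$ is engaged at $\tau_1(t)$ for $t\in(T_{2,a},T_{2,f})$, so (\ref{eq:derv_of_tau}) gives $\tau_2^\prime(\tau_1(t))=\frac{A_2^\prime(\tau_1(t))}{\mu_2}$. Differentiating the identity $A_2(\tau_1(t))=F_2(t)$ (which holds because $A_2(s)=F_2(\tau_1^{-1}(s))$) produces $A_2^\prime(\tau_1(t))\,\tau_1^\prime(t)=F_2^\prime(t)$, so the factor $\tau_1^\prime(t)$ cancels and $(\tau_{\mathbf{F}}^{(2)})^\prime(t)=\frac{F_2^\prime(t)}{\mu_2}$. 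Setting this equal to $\gamma_2$ gives $F_2^\prime(t)=\mu_2\gamma_2$ a.e.\ on $\mathcal{S}(F_2)$; substituting into $F_1^\prime+F_2^\prime=\mu_1\gamma_1$ on the overlap yields $F_1^\prime=\mu_1\gamma_1-\mu_2\gamma_2$ there. The closing zero-measure claim is immediate from Lemma \ref{lem_threshold_behav_inst1}: with $\mu_1>\mu_2$, the inequality $\mu_1\gamma_1\le\mu_2\gamma_2$ forces $\gamma_2/\gamma_1\ge\mu_1/\mu_2>1$, hence $\mu_1\le\mu_2\max\{1,\gamma_2/\gamma_1\}$, so the supports have disjoint interiors.

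The hard part will be the measure-theoretic justification of the chain rule for the compositions $\tau_2\circ\tau_1$ and $A_2\circ\tau_1$ and of differentiating $A_2(\tau_1(t))=F_2(t)$ almost everywhere. Since on $\mathcal{S}(F_1)$ and $\mathcal{S}(F_2)$ the arrival derivative to queue $1$ is strictly positive a.e., $\tau_1$ is absolutely continuous and strictly increasing there, and I would use this strict monotonicity to argue that $\tau_1$ carries the relevant null sets to null sets, so that the composite functions are differentiable a.e.\ and the product rule holds a.e.\ on the supports. The finitely many boundary and non-engaged epochs are of measure zero on the interiors and do not affect any of the a.e.\ statements.
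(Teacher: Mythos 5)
Your proposal is correct and follows essentially the same route as the paper's proof: deriving $(\tau_{\mathbf{F}}^{(i)})^\prime=\gamma_i$ from the iso-cost condition, using queue 1's engagement on $\mathcal{S}(F_1)$ with (\ref{eq:derv_of_tau}) to get $F_1^\prime+F_2^\prime=\mu_1\gamma_1$, and invoking Lemma \ref{lem_inst1_uneqpref_queue2_busy} together with the chain-rule cancellation in $A_2(\tau_1(t))=F_2(t)$ (which the paper packages as Lemma \ref{lem_inst1_uneqpref_derv_of_tau2}) to get $F_2^\prime=\mu_2\gamma_2$. The only difference is cosmetic: you inline the derivative computation for $\tau_{\mathbf{F}}^{(2)}$ and flag the measure-theoretic justification explicitly, whereas the paper delegates both to its supplementary lemma.
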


In the proof of Lemma \ref{lem_arrival_rates_inst1} (in \ref{appndx:inst1_uneqpref}), we first observe that both the classes $i=1,2$ have $C_{\mathbf{F}}^{(i)}(t)=\tau_{\mathbf{F}}^{(i)}(t)-\gamma^{(i)} t$ constant in $[T_{i,a},T^{(1)}_f]$, causing $(\tau_{\mathbf{F}}^{(i)})^\prime(t)=\gamma^{(i)}$. The rest of the proof relies on relating $(\tau_{\mathbf{F}}^{(i)})^\prime(t)$ with the arrival rates $(F^{(1)})^\prime(t)$ and $(F^{(2)})^\prime(t)$ of the two classes. Towards that, we use (\ref{eq:derv_of_tau}) and leverage the facts that queue 1 has positive waiting time in $(T^{(1)}_a,T^{(1)}_f)$, (otherwise, if $Q_1(t)=0$ at some $t\in(T^{(1)}_a,T^{(1)}_f)$, every class 1 user arriving in $(t,T^{(1)}_f]$ is strictly better off arriving at time $t$) and queue 2 stays engaged in $[\tau_1(T^{(2)}_a),\tau_1(T^{(2)}_f)]$  (by Lemma \ref{lem_inst1_uneqpref_queue2_busy}).   

Lemma~\ref{lem_queue1_and_2_idle} is about the state of the two queues at support boundaries $T^{(1)}_f$ and $T^{(2)}_f$. Proof of Lemma \ref{lem_queue1_and_2_idle} is done via a contradiction by showing that if the specified properties do not hold, a user can reduce her cost by arriving at a different time.

\begin{lemma}\label{lem_queue1_and_2_idle}
    If $\mu_1>\mu_2$ and $\gamma^{(1)}\neq\gamma^{(2)}$, then under EAP,
    queue length at the second queue is zero  at $\tau_1(T^{(2)}_f)$. If, in  addition, we have  $\mu_1\gamma^{(1)}>\mu_2\gamma^{(2)}$, queue length at the first queue equals zero at $T^{(1)}_f$.
\end{lemma}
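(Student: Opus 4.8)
My plan is to prove both claims by contradiction, in each case exhibiting a profitable \emph{late} deviation for the last arriving user of the relevant class. The key simplification I will use throughout is that, since $W_{\mathbf F}^{(i)}(t)=\tau_{\mathbf F}^{(i)}(t)-t$, the normalized cost collapses to $C_{\mathbf F}^{(i)}(t)=\tau_{\mathbf F}^{(i)}(t)-\gamma_i t$; in particular $C_{\mathbf F}^{(1)}(t)=\tau_1(t)-\gamma_1 t$ and $C_{\mathbf F}^{(2)}(t)=\tau_2(\tau_1(t))-\gamma_2 t$. Consequently, a deviation that postpones a user's arrival by $\epsilon>0$ while leaving her network departure time unchanged lowers her cost by exactly $\gamma_i\epsilon$, and this is the contradiction I will engineer. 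I will also use that $\tau_1(\cdot)$ is continuous (as $A_1$ is absolutely continuous), and that by Lemma \ref{lem_supp_are_intervals_inst1} the supports are the closed intervals $[T_{i,a},T_{i,f}]$ with $\mathcal S(F_1)\cup\mathcal S(F_2)$ an interval.

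For the first claim, suppose $Q_2(\tau_1(T_{2,f}))=q>0$ and write $t_0:=\tau_1(T_{2,f})$, the instant the last class-$2$ fluid reaches queue $2$. Since $A_2(t)=F_2(\tau_1^{-1}(t))$ equals $\Lambda_2$ for $t>t_0$, queue $2$ receives no further input after $t_0$ and drains at rate $\mu_2$, so the last class-$2$ user departs at $t_0+q/\mu_2=\tau_{\mathbf F}^{(2)}(T_{2,f})$. Now move this user to arrival epoch $T_{2,f}+\epsilon$. She leaves queue $1$ at $t_0':=\tau_1(T_{2,f}+\epsilon)\ge t_0$, and by continuity of $\tau_1$ we have $t_0'-t_0\to 0$ as $\epsilon\to 0$, so for small $\epsilon$ we still have $q-\mu_2(t_0'-t_0)>0$, i.e.\ queue $2$ is still draining when she reaches it. Joining a draining busy period leaves her network departure unchanged at $t_0+q/\mu_2$, so her cost changes by $-\gamma_2\epsilon<0$, contradicting that $T_{2,f}\in\mathcal S(F_2)$ is cost-minimizing for class $2$. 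Hence $q=0$.

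For the second claim, assume $\mu_1\gamma_1>\mu_2\gamma_2$ and suppose $Q_1(T_{1,f})>0$; by continuity queue $1$ stays engaged just after $T_{1,f}$, so $\tau_1'(t)=A_1'(t)/\mu_1$ there by (\ref{eq:derv_of_tau}). I split on whether arrivals to queue $1$ continue past $T_{1,f}$. If $T_{2,f}\le T_{1,f}$, then no fluid arrives after $T_{1,f}$, queue $1$ drains, and exactly as above a class-$1$ user moved to $T_{1,f}+\epsilon$ keeps her queue-$1$ departure $\tau_1(T_{1,f})$ and lowers her cost by $\gamma_1\epsilon$. If instead $T_{2,f}>T_{1,f}$, then ($\mathcal S(F_1)\cup\mathcal S(F_2)$ being an interval forces $T_{2,a}\le T_{1,f}$) we have $(T_{1,f},T_{1,f}+\epsilon)\subseteq\mathcal S(F_2)\setminus\mathcal S(F_1)$, so only class $2$ arrives there and by Lemma \ref{lem_arrival_rates_inst1} $A_1'=F_2'=\mu_2\gamma_2$. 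Then $\tau_1(T_{1,f}+\epsilon)=\tau_1(T_{1,f})+(\mu_2\gamma_2/\mu_1)\epsilon$, and moving a class-$1$ user to $T_{1,f}+\epsilon$ changes her cost by $(\mu_2\gamma_2/\mu_1-\gamma_1)\epsilon<0$, the sign coming precisely from $\mu_1\gamma_1>\mu_2\gamma_2$. Either way $T_{1,f}$ fails to be cost-minimizing, a contradiction, so $Q_1(T_{1,f})=0$.

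The main obstacle is the second sub-case of the second claim: I must pin down the exact rate at which queue $1$ fills immediately after $T_{1,f}$, since this determines whether postponing helps or hurts. This is where the hypothesis $\mu_1\gamma_1>\mu_2\gamma_2$ is indispensable, as it guarantees that the residual class-$2$ inflow $\mu_2\gamma_2$ raises the queue-$1$ departure time only at rate $\mu_2\gamma_2/\mu_1$, strictly below the arrival-penalty rate $\gamma_1$, making the late deviation profitable. The supporting facts I lean on, namely that queue $2$ receives no input after $\tau_1(T_{2,f})$, the interval structure from Lemma \ref{lem_supp_are_intervals_inst1}, and the precise arrival rate from Lemma \ref{lem_arrival_rates_inst1}, do the rest; the only additional care is the standard observation that joining a draining busy period does not change the departure time, which I justify through continuity of $\tau_1$ for small $\epsilon$.
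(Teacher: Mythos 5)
Your proposal is correct and follows essentially the same route as the paper: both claims are proved by exhibiting a profitable late deviation for the last arriving user of the relevant class, with the same two-case split for the second claim (no arrivals after $T_{1,f}$ versus residual class-2 inflow at rate $\mu_2\gamma_2$, where $\mu_1\gamma_1>\mu_2\gamma_2$ makes the deviation profitable). The only cosmetic difference is that for the first claim you track the draining busy period of queue 2 directly, whereas the paper invokes the derivative formula of its Lemma \ref{lem_inst1_uneqpref_derv_of_tau2} to conclude $\tau_{\mathbf F}^{(2)}$ is unchanged; the substance is identical.
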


\noindent\textbf{\textit{Proof of necessity of the condition in Lemma \ref{lem_threshold_behav_inst1}:}}\hspace{0.05in}We prove via contradiction that $\mu_1\leq\mu_2\cdot\max\left\{1,\frac{\gamma^{(2)}}{\gamma^{(1)}}\right\}$ is necessary for the intervals $[T^{(1)}_a,T^{(1)}_f]$ and $[T^{(2)}_a,T^{(2)}_f]$ to overlap. This forms the other direction of Lemma \ref{lem_threshold_behav_inst1}. On assuming a contradiction, we must have $\mu_1>\mu_2\cdot\max\left\{1,\frac{\gamma^{(2)}}{\gamma^{(1)}}\right\}$ and interiors of  $[T^{(1)}_a,T^{(1)}_f]$, $[T^{(2)}_a,T^{(2)}_f]$ are disjoint. Now by Lemma \ref{lem_supp_are_intervals_inst1}, union of $[T^{(1)}_a,T^{(1)}_f]$ and $[T^{(2)}_a,T^{(2)}_f]$ must be an interval. This leaves us two possibilities:
\begin{enumerate}
    \item[\textbf{1.}] If $T^{(1)}_f=T^{(2)}_a$, by Lemma \ref{lem_queue1_and_2_idle}, queue 1 will be empty at $T^{(1)}_f$, making the whole network empty at $T^{(1)}_f$. As a result, every class 2 user will be strictly better off arriving at $T^{(1)}_f$.
    \item[\textbf{2.}] If $T^{(2)}_f=T^{(1)}_a$, queue 1 must have a positive waiting time in $(T^{(2)}_a,T^{(2)}_f]$ and as a result, class 2 users arrive at queue 2 at rate $\mu_1>\mu_2$ in $[0,\tau_1(T^{(2)}_f)]$, causing $Q_2(\tau_1(T^{(2)}_f))=(\mu_1-\mu_2)\cdot\tau_1(T^{(2)}_f)>0$, contradicting Lemma \ref{lem_queue1_and_2_idle}.\hfill\qedsymbol
\end{enumerate}

\noindent\textbf{Specification of the EAP.}\hspace{0.05in}Theorem \ref{mainthm_inst1} specifies the unique EAP of this regime and we mention below the support boundaries of the unique EAP, which we will refer to later in Theorem \ref{mainthm_inst1}. 
\begin{enumerate}
    \item[\textbf{1.}] If $\gamma^{(1)}\leq\frac{\mu_2}{\mu_1}\gamma^{(2)}$, then
    \begin{align}\label{eq_bdary_inst1_uneqpref_case1}
        T^{(1)}_a&=-\left(\frac{1}{\gamma^{(1)}}-1\right)\frac{\Lambda^{(1)}}{\mu_1}-\left(\frac{1}{\gamma^{(2)}}-1\right)\frac{\Lambda^{(2)}}{\mu_2},~T^{(1)}_f=T^{(2)}_a=\frac{\Lambda^{(1)}}{\mu_1}-\left(\frac{1}{\gamma^{(2)}}-1\right)\frac{\Lambda^{(2)}}{\mu_2},\nonumber\\
        \text{and}~T^{(2)}_f&=\frac{\Lambda^{(1)}}{\mu_1}+\frac{\Lambda^{(2)}}{\mu_2}.
    \end{align}
    
    \item[\textbf{2.}] If $\gamma^{(1)}>\frac{\mu_2}{\mu_1}\gamma^{(2)}$ and $\Lambda^{(1)}\geq\min\left\{\frac{1-\gamma^{(2)}}{1-\gamma^{(1)}}\left(\frac{\mu_1\gamma^{(1)}}{\mu_2\gamma^{(2)}}-1\right),\frac{\mu_1}{\mu_2}-1\right\}\cdot\Lambda^{(2)}$, then
    \begin{enumerate}
        \item[\textbf{a)}] if $\gamma^{(1)}<\gamma^{(2)}$, then
        \begin{align}\label{eq_bdary_inst1_uneqpref_case2a}
            \vspace{-0.05in}T^{(1)}_a&=-\frac{1-\gamma^{(1)}}{\mu_1\gamma^{(1)}}\left[\Lambda^{(1)}+\frac{1-\gamma^{(2)}}{1-\gamma^{(1)}}\Lambda^{(2)}\right],~T^{(1)}_f=\frac{1}{\mu_1}\left[\Lambda^{(1)}+\frac{1-\gamma^{(2)}}{1-\gamma^{(1)}}\Lambda^{(2)}\right],\nonumber\\
            T^{(2)}_a&=\frac{1}{\mu_1}\left[\Lambda^{(1)}-\frac{\mu_1-\mu_2\gamma^{(2)}}{\mu_2\gamma^{(2)}}\frac{1-\gamma^{(2)}}{1-\gamma^{(1)}}\Lambda^{(2)}\right],\nonumber\\
            \text{and}~T^{(2)}_f&=\frac{1}{\mu_1}\left[\Lambda^{(1)}+\frac{\mu_1(\gamma^{(2)}-\gamma^{(1)})+\mu_2\gamma^{(2)}\cdot(1-\gamma^{(2)})}{\mu_2\gamma^{(2)}\cdot(1-\gamma^{(1)})}\Lambda^{(2)}\right].
        \end{align}    

        \item[\textbf{b)}] if $\gamma^{(1)}>\gamma^{(2)}$, then 
        \begin{align}\label{eq_bdary_inst1_uneqpref_case3a}
            \vspace{-0.05in}T^{(1)}_a&=\frac{\gamma^{(1)}-\gamma^{(2)}}{\gamma^{(1)}}\frac{\Lambda^{(2)}}{\mu_1\gamma^{(1)}-\mu_2\gamma^{(2)}}-\frac{1-\gamma^{(1)}}{\gamma^{(1)}}\frac{\Lambda^{(1)}+\Lambda^{(2)}}{\mu_1},~T^{(1)}_f=\frac{\Lambda^{(1)}+\Lambda^{(2)}}{\mu_1},\nonumber\\
            T^{(2)}_a&=-\frac{1-\gamma^{(1)}}{\gamma^{(1)}}\frac{\Lambda^{(1)}}{\mu_1}-\left(\frac{\gamma^{(1)}}{\gamma^{(2)}}+(1-\gamma^{(1)})\frac{\mu_2}{\mu_1}-1\right)\frac{\Lambda^{(2)}}{\mu_2\gamma^{(1)}},\nonumber\\
            \text{and}~T^{(2)}_f&=-\frac{1-\gamma^{(1)}}{\gamma^{(1)}}\frac{\Lambda^{(1)}}{\mu_1}+\left(1-(1-\gamma^{(1)})\frac{\mu_2}{\mu_1}\right)\frac{\Lambda^{(2)}}{\mu_2\gamma^{(1)}}.
        \end{align}
    \end{enumerate}
    
    \item[\textbf{3.}] If $\gamma^{(1)}>\frac{\mu_2}{\mu_1}\gamma^{(2)}$ and $\Lambda^{(1)}<\min\left\{\frac{1-\gamma^{(2)}}{1-\gamma^{(1)}}\left(\frac{\mu_1\gamma^{(1)}}{\mu_2\gamma^{(2)}}-1\right),\frac{\mu_1}{\mu_2}-1\right\}\cdot\Lambda^{(2)}$, then
    \begin{align}\label{eq_bdary_inst1_uneqpref_case2b}
         \hspace{-0.3in}~T^{(1)}_a&=\frac{1-\gamma^{(2)}}{\mu_1-\mu_2\gamma^{(2)}}\left[\Lambda^{(2)}-\frac{(1-\gamma^{(1)})\mu_1}{(1-\gamma^{(2)})(\mu_1\gamma^{(1)}-\mu_2\gamma^{(2)})}\Lambda^{(1)}\right],~T^{(1)}_f=\frac{\Lambda^{(1)}+(1-\gamma^{(2)})\Lambda^{(2)}}{\mu_1-\mu_2\gamma^{(2)}}\nonumber\\ T^{(2)}_a&=-\frac{1-\gamma^{(2)}}{\gamma^{(2)}}\frac{\Lambda^{(2)}}{\mu_2},~\text{and}~T^{(2)}_f=\frac{\Lambda^{(2)}}{\mu_2}.
    \end{align}
\end{enumerate}

\begin{theorem}\label{mainthm_inst1}
If $\mu_1>\mu_2$ and $\gamma^{(1)}\neq\gamma^{(2)}$, HDS has a unique EAP. The arrival rates in the EAP are given below along with the support boundaries: 
\begin{enumerate}
    \item[\textbf{1.}] If $\gamma^{(1)}\leq\frac{\mu_2}{\mu_1}\gamma^{(2)}$, $(F^{(1)})^\prime(t)=\mu_1\gamma^{(1)}$ for $t\in[T^{(1)}_a,T^{(1)}_f]$ and $(F^{(2)})^\prime(t)=\mu_2\gamma^{(2)}$ for $t\in[T^{(2)}_a,T^{(2)}_f]$ where $T^{(1)}_a,T^{(1)}_f,T^{(2)}_a$ and $T^{(2)}_f$ are given in (\ref{eq_bdary_inst1_uneqpref_case1}).
    
    \item[\textbf{2.}] If $\frac{\mu_2}{\mu_1}\gamma^{(2)}<\gamma^{(1)}<\gamma^{(2)}$,
    \begin{enumerate}
        \item[\textbf{2a.}] when $\Lambda^{(1)}\geq \frac{1-\gamma^{(2)}}{1-\gamma^{(1)}}\left(\frac{\mu_1\gamma^{(1)}}{\mu_2\gamma^{(2)}}-1\right)\Lambda^{(2)}$, 
            \begin{align*}
                (F^{(1)})^\prime(t)&=\begin{cases}
                    \mu_1\gamma^{(1)}~~&\text{if}~t\in[T^{(1)}_a,T^{(2)}_a],\\
                    \mu_1\gamma^{(1)}-\mu_2\gamma^{(2)}~~&\text{if}~t\in[T^{(2)}_a,T^{(1)}_f],
                \end{cases}~~\text{and}~(F^{(2)})^\prime(t)=\mu_2\gamma^{(2)}~~\text{if}~t\in[T^{(2)}_a,T^{(2)}_f] 
            \end{align*}
            where $T^{(1)}_a,T^{(1)}_f,T^{(2)}_a$ and $T^{(2)}_f$ are given in (\ref{eq_bdary_inst1_uneqpref_case2a}).

            \item[\textbf{2b.}] when $\Lambda^{(1)}<\frac{1-\gamma^{(2)}}{1-\gamma^{(1)}}\left(\frac{\mu_1\gamma^{(1)}}{\mu_2\gamma^{(2)}}-1\right)\Lambda^{(2)}$, $(F^{(1)})^\prime(t)=\mu_1\gamma^{(1)}-\mu_2\gamma^{(2)}$ for $t\in[T^{(1)}_a,T^{(1)}_f]$, and $(F^{(2)})^\prime(t)=\mu_2\gamma^{(2)}$ for $t\in[T^{(2)}_a,T^{(2)}_f]$, where $T^{(1)}_a,T^{(1)}_f,T^{(2)}_a$ and $T^{(2)}_f$ are given in (\ref{eq_bdary_inst1_uneqpref_case2b}).
    \end{enumerate}

    \item[\textbf{3.}] If $\gamma^{(2)}<\gamma^{(1)}$,
    \begin{enumerate}
        \item[\textbf{3a.}] when $\Lambda^{(1)}\geq \left(\frac{\mu_1}{\mu_2}-1\right)\Lambda^{(2)}$,
            \begin{align*}
                (F^{(1)})^\prime(t)&=\begin{cases}
                    \mu_1\gamma^{(1)}-\mu_2\gamma^{(2)}~~&\text{if}~t\in[T^{(1)}_a,T^{(2)}_f],\\
                    \mu_1\gamma^{(1)}~~&\text{if}~t\in[T^{(2)}_f,T^{(1)}_f],
                \end{cases}~~\text{and}~(F^{(2)})^\prime(t)=\mu_2\gamma^{(2)}~~\text{if}~t\in[T^{(2)}_a,T^{(2)}_f]
            \end{align*}
            where $T^{(1)}_a,T^{(1)}_f,T^{(2)}_a$ and $T^{(2)}_f$ are given in (\ref{eq_bdary_inst1_uneqpref_case3a}).
            
        \item[\textbf{3b.}] when $\Lambda^{(1)}<\left(\frac{\mu_1}{\mu_2}-1\right)\Lambda^{(2)}$, the EAP has a closed form same as case 2b.
    \end{enumerate}           
\end{enumerate}
\end{theorem}

\begin{remark}
    \emph{Figures \ref{fig:inst1case1}, \ref{fig:inst1case21}, \ref{fig:inst1case22}, and \ref{fig:inst1case31}  show the illustrative EAPs and resulting queue length processes, respectively, for cases 1, 2a, 2b and 3a of Theorem \ref{mainthm_inst1}. EAP structure in case 3b of Theorem \ref{mainthm_inst1} is similar to case 2b and is illustrated in Figure \ref{fig:inst1case22}. The structure of the queue length processes in certain regimes depend on the support boundaries and may vary accordingly. In the figures referred above, we have illustrated only one possible structure of the queue length process. Moreover, we have mentioned the conditions to be satisfied by the support boundaries for the attainment of that structure in the caption. In all these figures, \textbf{\textcolor{red}{red}} and \textbf{\textcolor{blue}{blue}}, respectively, represent class 1 and 2 populations, and the \textbf{black} dashed line represents the total mass of the two populations waiting in queue 1.}
\end{remark}

\begin{remark}
    \emph{The illustrative EAPs referred to in the previous remark also displays the threshold behavior stated in Lemma \ref{lem_threshold_behav_inst1}. Since  $\mu_1\leq\mu_2\cdot\max\left\{1,\frac{\gamma^{(2)}}{\gamma^{(1)}}\right\}$ in case 1 of Theorem \ref{mainthm_inst1}, we can see the support intervals of the two classes to have disjoint interiors in Figure \ref{fig:inst1case1}. On the other hand, since cases 2 and 3 of Theorem \ref{mainthm_inst1} have $\mu_1>\mu_2\cdot\left\{1,\frac{\gamma^{(2)}}{\gamma^{(1)}}\right\}$, two classes arrive over overlapping intervals, as can be seen in Figure \ref{fig:inst1case21}, \ref{fig:inst1case22}, and \ref{fig:inst1case31}.}
\end{remark}

The following lemma specifies the arrival order of the two classes in EAP and is necessary for proving Theorem \ref{mainthm_inst1}.
\begin{lemma}\label{lem_sign_inst1_uneqpref}
    If $\mu_1>\mu_2$ and $\gamma^{(1)}\neq\gamma^{(2)}$, the support boundaries of EAP of HDS satisfy:
    \begin{enumerate}
        \item[\textbf{1.}] $T^{(1)}_f=T^{(2)}_a$ if $\gamma^{(1)}\leq\frac{\mu_2}{\mu_1}\gamma^{(2)}$,
        \item[\textbf{2.}] $T^{(1)}_f\leq T^{(2)}_f$ if $\frac{\mu_2}{\mu_1}\gamma^{(2)}<\gamma^{(1)}<\gamma^{(2)}$, and 
        \item[\textbf{3.}] $T^{(1)}_a>T^{(2)}_a$ if $\gamma^{(1)}>\gamma^{(2)}$.
    \end{enumerate}
\end{lemma}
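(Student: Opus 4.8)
The plan is to handle all three parts through one mechanism. On any maximal subinterval where only a single class drives the first queue, I can compute the marginal cost of a real or hypothetical user of either class in closed form, and the sign of that marginal cost---pinned down by the parametric inequality defining the regime---forces the claimed order by exhibiting a profitable deviation whenever the order is violated. Throughout I use that $\mathcal{S}(F_1)=[T_{1,a},T_{1,f}]$, $\mathcal{S}(F_2)=[T_{2,a},T_{2,f}]$ and their union are intervals (Lemma~\ref{lem_supp_are_intervals_inst1}), that the single-class arrival rates are $\mu_1\gamma_1$ for class~$1$ and $\mu_2\gamma_2$ for class~$2$ (Lemma~\ref{lem_arrival_rates_inst1}), and that queue~$1$ is engaged throughout $(T_{1,a},T_{1,f})$ (else some class~$1$ user deviates to an idle instant, as in the proof of Lemma~\ref{lem_arrival_rates_inst1}). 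The key computation is that when only class $i$ drives queue~$1$, $\tau_1'=\gamma_i$ there, so a user who effectively uses only queue~$1$ (a class~$1$ user, or a class~$2$ deviator who finds queue~$2$ empty) has cost derivative $\tau_1'-\gamma_{(\cdot)}$. I first classify regimes via Lemma~\ref{lem_threshold_behav_inst1}: part~1 ($\gamma_1\le\frac{\mu_2}{\mu_1}\gamma_2$, hence $\gamma_1<\gamma_2$ and $\mu_1\gamma_1\le\mu_2\gamma_2$) is the contiguous disjoint-interiors regime, whereas parts~2 and~3 both satisfy $\mu_1>\mu_2\max\{1,\gamma_2/\gamma_1\}$ and so lie in the overlapping regime; each claim then reduces to fixing the relative order of two intervals whose union is an interval.

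For part~3 ($\gamma_1>\gamma_2$), suppose for contradiction $T_{1,a}\le T_{2,a}$. If $T_{1,a}<T_{2,a}$, then on the class-$1$-only head $[T_{1,a},T_{2,a})$ queue~$2$ is never fed, so a class-$2$ deviator there sees an empty second queue and has cost derivative $\tau_1'-\gamma_2=\gamma_1-\gamma_2>0$; she strictly lowers her cost by arriving before $T_{2,a}$, contradicting equilibrium. The tie $T_{1,a}=T_{2,a}$ is excluded because class~$2$ would then reach queue~$2$ at rate $\mu_2\gamma_2/\gamma_1<\mu_2$ (as $\gamma_1>\gamma_2$), leaving queue~$2$ empty and again producing slope $\gamma_1-\gamma_2>0$ for class~$2$. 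Hence $T_{1,a}>T_{2,a}$. Part~2 ($\gamma_1<\gamma_2$, $\mu_1\gamma_1>\mu_2\gamma_2$) is the mirror image at the right end: if $T_{1,f}>T_{2,f}$, then on the class-$1$-only tail $(T_{2,f},T_{1,f}]$ queue~$2$ receives no further class-$2$ input and is already empty at $\tau_1(T_{2,f})$ by Lemma~\ref{lem_queue1_and_2_idle} (applicable since $\mu_1\gamma_1>\mu_2\gamma_2$), so a class-$2$ deviator there has slope $\gamma_1-\gamma_2<0$ and strictly gains by arriving later---a contradiction. Thus $T_{1,f}\le T_{2,f}$, with equality permitted.

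For part~1 it remains to rule out $T_{2,f}=T_{1,a}$ (class~$2$ entirely before class~$1$), since by Lemma~\ref{lem_supp_are_intervals_inst1} the only alternative to the asserted $T_{1,f}=T_{2,a}$ is this reverse contiguous order. Assuming it, I examine the state of queue~$1$ at the junction $T_{1,a}=T_{2,f}$. If queue~$1$ is empty there, then since class~$1$ arrives at rate $\mu_1\gamma_1<\mu_1$ the first queue stays empty on $[T_{1,a},T_{1,f}]$, so $C_{\mathbf{F}}^{(1)}(t)=(1-\gamma_1)t$ is strictly increasing and class~$1$ is not iso-cost---a contradiction (and an empty queue~$1$ at a negative junction is impossible, as then the whole mass $\Lambda_2$, arrived but unserved, would occupy it). If instead queue~$1$ is engaged just to the left of the junction, class~$2$ is arriving there at rate $\mu_2\gamma_2$, so $\tau_1'=\mu_2\gamma_2/\mu_1$ and the first class-$1$ user has cost derivative $\mu_2\gamma_2/\mu_1-\gamma_1>0$ (strict since $\mu_1\gamma_1<\mu_2\gamma_2$), whence she strictly benefits from arriving slightly earlier---again a contradiction. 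Hence $T_{1,f}=T_{2,a}$; the boundary $\mu_1\gamma_1=\mu_2\gamma_2$ follows by continuity.

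I expect the main obstacle to be the careful bookkeeping of queue engagement and of the pre-opening interval $t<0$: in each sub-case I must verify that $\tau_1'$ equals the arriving class's $\gamma$ rather than the idle-queue value $1$ (using $\tau_1(t)=A_1(t)/\mu_1$ for $t<0$, which still gives $\tau_1'=\gamma_i$), that queue~$2$ is genuinely empty on the relevant single-class interval (via Lemma~\ref{lem_queue1_and_2_idle} for the tail and the absence of any class-$2$ input for the head, consistent with Lemma~\ref{lem_inst1_uneqpref_queue2_busy}), and that the ties are correctly excluded. These are precisely the points where the regime inequalities $\mu_1\gamma_1\lessgtr\mu_2\gamma_2$ and $\gamma_1\lessgtr\gamma_2$ enter, so tracking their directions in every sub-case is the crux of the argument.
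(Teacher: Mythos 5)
Your parts 2 and 3 are sound and essentially reproduce the paper's own arguments: for part 3 you observe that $T_{1,a}\le T_{2,a}$ would deliver class 2 to queue 2 at rate $\mu_2\gamma_2/\gamma_1<\mu_2$, leaving queue 2 empty and forcing class-2 marginal cost $\gamma_1-\gamma_2>0$ on part of $\mathcal{S}(F_2)$ (the paper routes the same contradiction through Lemma \ref{lem_inst1_uneqpref_mixedarrival}); for part 2 you use the class-2 deviation onto the tail $(T_{2,f},T_{1,f}]$, where queue 2 is empty and $\tau_1'=\gamma_1<\gamma_2$, exactly as the paper does. One small slip there: you invoke Lemma \ref{lem_queue1_and_2_idle} as "applicable since $\mu_1\gamma_1>\mu_2\gamma_2$", but the assertion you actually need --- queue 2 empty at $\tau_1(T_{2,f})$ --- is the first statement of that lemma and carries no such hypothesis; the hypothesis is only attached to the claim about queue 1.

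Part 1 contains a genuine gap. Your Case B (queue 1 engaged at the junction $T_{1,a}=T_{2,f}$) rests on the class-1 marginal cost $\mu_2\gamma_2/\mu_1-\gamma_1$ being strictly positive, which fails precisely on the boundary $\gamma_1=\frac{\mu_2}{\mu_1}\gamma_2$ that the hypothesis $\gamma_1\le\frac{\mu_2}{\mu_1}\gamma_2$ includes: there the derivative vanishes, no profitable class-1 deviation exists, and "follows by continuity" is not an argument --- the lemma quantifies over every EAP at fixed parameters, and no continuity of the equilibrium correspondence in $(\gamma_1,\gamma_2)$ has been established, nor would one-sided validity transfer a statement to the boundary. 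The repair is the queue-overflow argument the paper deploys when ruling out $T_{2,f}=T_{1,a}$ in the necessity direction of Lemma \ref{lem_threshold_behav_inst1}: in your Case B queue 1 cannot empty anywhere in $(0,T_{2,f}]$ (the arrival rate $\mu_2\gamma_2<\mu_1$ means that once empty it stays empty, contradicting engagement at the junction), so class 2 is fed to queue 2 at rate $\mu_1>\mu_2$ on $[0,\tau_1(T_{2,f})]$ with $\tau_1(T_{2,f})=\Lambda_2/\mu_1>0$, whence $Q_2(\tau_1(T_{2,f}))\ge(\mu_1-\mu_2)\,\tau_1(T_{2,f})>0$, contradicting the first statement of Lemma \ref{lem_queue1_and_2_idle}; this covers $\mu_1\gamma_1=\mu_2\gamma_2$ as well. (For what it is worth, your junction analysis is otherwise cleaner than the text printed in the appendix for part (1), which applies the queue-1 statement of Lemma \ref{lem_queue1_and_2_idle} outside its hypothesis and ends by asserting the reversed ordering; the overflow argument above is the one that actually establishes part 1.)
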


Proof of Lemma \ref{lem_sign_inst1_uneqpref} (in  \ref{appndx:inst1_uneqpref}) is done via contradiction by showing that if the stated property doesn't hold, a user can improve her cost by arriving at a different time.

\begin{remark}
    \emph{The arrival orders anticipated by Lemma \ref{lem_sign_inst1_uneqpref} can be observed in the illustrative EAPs referred earlier.  By Lemma \ref{lem_sign_inst1_uneqpref}, in case 2 ($\gamma^{(2)}>\gamma^{(1)}>\frac{\mu_2}{\mu_1}\gamma^{(2)}$) of Theorem \ref{mainthm_inst1}, class 2 population finishes arrival after class 1, as can be observed in Figure \ref{fig:inst1case21} and \ref{fig:inst1case22}. Similarly for case 3 ($\gamma^{(1)}>\gamma^{(2)}$) of Theorem \ref{mainthm_inst1}, class 1 population starts arrival after class 2, as can be observed in Figure \ref{fig:inst1case31} and \ref{fig:inst1case22}. The class arriving first in case 2 and finishing late in case 3 is the one having a population size significantly larger among the two classes. In case 2 (or case 3): \textbf{1)}~when $\Lambda^{(1)}>c\Lambda^{(2)}$, class 1 starts arriving before class 2 (or finishes arriving after class 2),~\textbf{2)}~when $\Lambda^{(1)}=c\Lambda^{(2)}$, both classes start arriving from the same time (or finishes arriving at the same time),~\textbf{3)}~when $\Lambda^{(1)}<c\Lambda^{(2)}$, class 1 starts arriving after class 2 starts (or finishes arriving before class 2), where $c=\min\left\{\frac{1-\gamma^{(2)}}{1-\gamma^{(1)}}\left(\frac{\mu_1\gamma^{(1)}}{\mu_2\gamma^{(2)}}-1\right),\frac{\mu_1}{\mu_2}-1\right\}$ for both case 2 and 3. }    
\end{remark}

\noindent\textit{\textbf{Key steps in the proof of Theorem \ref{mainthm_inst1}\hspace{0.05in}(details in \ref{appndx:thm_inst1_uneqpref}):}}~~By Lemma \ref{lem_supp_are_intervals_inst1} we only consider candidates $\mathbf{F}=\{F^{(1)},F^{(2)}\}$ which are absolutely continuous with supports of the form $\mathcal{S}(F^{(1)})=[T^{(1)}_a,T^{(1)}_f]$, $\mathcal{S}(F^{(2)})=[T^{(2)}_a,T^{(2)}_f]$, such that their union is an interval and the arrival rates $(F^{(1)})^\prime(\cdot),~(F^{(2)})^\prime(\cdot)$ satisfy the property in Lemma \ref{lem_arrival_rates_inst1}. We eliminate candidates not satisfying structural properties desired of an EAP and will be eventually left with only one candidate, which is the only candidate that can qualify as an EAP. We then prove that the only remaining candidate is indeed an EAP. Thus, existence and uniqueness of  EAP follows. 
 
We follow this agenda for the three cases:\hspace{0.05in}\textbf{1)}~$\gamma^{(1)}\leq\frac{\mu_2}{\mu_1}\gamma^{(2)}$,~\textbf{2)}~$\frac{\mu_2}{\mu_1}\gamma^{(2)}<\gamma^{(1)}<\gamma^{(2)}$,~and~\textbf{3)}~$\gamma^{(2)}<\gamma^{(1)}$. The final candidate we get in these cases have their arrival rates and support boundaries same as the joint arrival profiles mentioned in Theorem \ref{mainthm_inst1} under the respective cases. 

\textbf{Case 1}\hspace{0.05in}$\mathbf{\gamma^{(1)}\leq\frac{\mu_2}{\mu_1}\gamma^{(2)}}$:\hspace{0.05in}By Lemma \ref{lem_threshold_behav_inst1}, every EAP must have $[T^{(1)}_a,T^{(1)}_f]$ and $[T^{(2)}_a,T^{(2)}_f]$ disjoint.  Lemma \ref{lem:bdaryinst1case1} helps us substantially narrow our search for an EAP.

\begin{lemma}\label{lem:bdaryinst1case1}
    Under case 1 $\gamma^{(1)}\leq\frac{\mu_2}{\mu_1}\gamma^{(2)}$, every EAP has (\ref{eq_bdary_inst1_uneqpref_case1}) as support boundaries $T^{(1)}_a,T^{(1)}_f,T^{(2)}_a,T^{(2)}_f$.
\end{lemma}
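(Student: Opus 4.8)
The plan is to show that any EAP in this regime reduces to a linear system in the four support boundaries, and then solve it. By Lemma \ref{lem_threshold_behav_inst1} the Case 1 hypothesis $\gamma_1\le\frac{\mu_2}{\mu_1}\gamma_2$ (equivalently $\mu_1\gamma_1\le\mu_2\gamma_2$) forces $\mathcal{S}(F_1)$ and $\mathcal{S}(F_2)$ to have disjoint interiors, and part 1 of Lemma \ref{lem_sign_inst1_uneqpref} fixes the order via $T_{1,f}=T_{2,a}$, so class 1 occupies $[T_{1,a},T_{1,f}]$ and class 2 the contiguous interval $[T_{1,f},T_{2,f}]$. Lemma \ref{lem_arrival_rates_inst1} then gives $F_1'=\mu_1\gamma_1$ a.e.\ on $[T_{1,a},T_{1,f}]$ and $F_2'=\mu_2\gamma_2$ a.e.\ on $[T_{1,f},T_{2,f}]$ (the overlap has measure zero here). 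Integrating yields the two mass-balance relations $T_{1,f}-T_{1,a}=\Lambda_1/(\mu_1\gamma_1)$ and $T_{2,f}-T_{1,f}=\Lambda_2/(\mu_2\gamma_2)$. Since $T_{2,a}=T_{1,f}$, only three boundaries are free, so one further independent equation pins everything down.

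The third equation comes from locating the profile relative to the service-start time $0$ by tracking the two queues. Because every EAP keeps queue 1 engaged throughout $(T_{1,a},T_{1,f})$ (used already in the proof of Lemma \ref{lem_arrival_rates_inst1}; otherwise a class 1 user could deviate to an idle instant) and queue 1 starts empty at $T_{1,a}$, equation (\ref{eq:derv_of_tau}) and the expression for $\tau_1$ in the preliminaries give $\tau_1(T_{1,f})=A_1(T_{1,f})/\mu_1=\Lambda_1/\mu_1$, using $A_1(T_{1,f})=\Lambda_1$ (all of class 1 and no class 2). Next I analyse queue 2, which receives only class 2: it is empty before $\tau_1(T_{2,a})$ (no class 2 has cleared queue 1 yet), is engaged on all of $(\tau_1(T_{2,a}),\tau_1(T_{2,f}))$ by Lemma \ref{lem_inst1_uneqpref_queue2_busy}, and is empty again at $\tau_1(T_{2,f})$ by Lemma \ref{lem_queue1_and_2_idle}. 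Hence over this busy period queue 2 serves its entire mass $\Lambda_2$ at rate $\mu_2$, giving $\tau_1(T_{2,f})-\tau_1(T_{2,a})=\Lambda_2/\mu_2$, i.e.\ $\tau_1(T_{2,f})=\Lambda_1/\mu_1+\Lambda_2/\mu_2$.

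It remains to convert this statement about $\tau_1(T_{2,f})$ into one about $T_{2,f}$ itself, and this is where $\mu_1>\mu_2$ enters decisively. If queue 1 were engaged at $T_{2,f}$, then $\tau_1(T_{2,f})=A_1(T_{2,f})/\mu_1=(\Lambda_1+\Lambda_2)/\mu_1$; equating this to $\Lambda_1/\mu_1+\Lambda_2/\mu_2$ would force $\mu_1=\mu_2$, a contradiction. Thus queue 1 is idle at $T_{2,f}$, the last class 2 user passes through instantly so $\tau_1(T_{2,f})=T_{2,f}$, and the third equation reads $T_{2,f}=\Lambda_1/\mu_1+\Lambda_2/\mu_2$. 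Substituting back through the two mass-balance relations recovers exactly the boundaries in (\ref{eq_bdary_inst1_uneqpref_case1}): $T_{2,a}=T_{1,f}=\frac{\Lambda_1}{\mu_1}-\left(\frac{1}{\gamma_2}-1\right)\frac{\Lambda_2}{\mu_2}$ and $T_{1,a}=-\left(\frac{1}{\gamma_1}-1\right)\frac{\Lambda_1}{\mu_1}-\left(\frac{1}{\gamma_2}-1\right)\frac{\Lambda_2}{\mu_2}$.

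The main obstacle is the anchoring step: the two mass-balance equations only fix interval lengths, and determining the absolute position requires correctly reading the state of both queues at the boundaries from Lemmas \ref{lem_inst1_uneqpref_queue2_busy} and \ref{lem_queue1_and_2_idle}. The most delicate point is justifying $\tau_1(T_{2,f})=T_{2,f}$, i.e.\ that queue 1 is idle at the final class 2 arrival; I expect to handle it, as above, by ruling out the engaged alternative through $\mu_1>\mu_2$, rather than by computing the exact instant at which queue 1 drains.
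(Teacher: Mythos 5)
Your proposal is correct and follows essentially the same route as the paper: order the supports via Lemmas \ref{lem_threshold_behav_inst1} and \ref{lem_sign_inst1_uneqpref}, get the two mass-balance relations from Lemma \ref{lem_arrival_rates_inst1}, and anchor the system with the queue-2 busy-period identity $\mu_2(\tau_1(T_{2,f})-\tau_1(T_{2,a}))=\Lambda_2$ from Lemmas \ref{lem_inst1_uneqpref_queue2_busy} and \ref{lem_queue1_and_2_idle}. The one place you genuinely diverge is the justification that queue 1 is idle at $T_{2,f}$ (so that $\tau_1(T_{2,f})=T_{2,f}$): the paper argues dynamically that a positive queue~1 backlog at $T_{2,f}$ would feed queue 2 at rate $\mu_1>\mu_2$ just before $\tau_1(T_{2,f})$, forcing $Q_2(\tau_1(T_{2,f}))>0$ and contradicting Lemma \ref{lem_queue1_and_2_idle}; you instead compute $\tau_1(T_{2,f})$ two ways and extract the algebraic contradiction $\mu_1=\mu_2$. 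Your version is clean and arguably slicker, but it leans on two facts you should state explicitly: (i) $T_{1,a}<0$, so that $\tau_1(T_{1,a})=0$ and the supremum term in the expression for $\tau_1$ vanishes, which is what legitimizes both $\tau_1(T_{1,f})=\Lambda_1/\mu_1$ and $\tau_1(T_{2,f})=A_1(T_{2,f})/\mu_1$ (the paper derives $T_{1,a}<0$ from the network not being empty at time zero); and (ii) ``engaged at $T_{2,f}$'' implies ``engaged throughout $[0,T_{2,f}]$'', which holds because the input rate to queue 1 on $[T_{1,f},T_{2,f}]$ is $\mu_2\gamma_2<\mu_1$, so $Q_1$ is non-increasing there and cannot re-engage after emptying. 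With those two lines added, your argument is complete.
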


\textbf{Proof Sketch:}\hspace{0.05in}By Lemma \ref{lem_sign_inst1_uneqpref}, we must have $T^{(1)}_f=T^{(2)}_a$. By Lemma \ref{lem_arrival_rates_inst1}, $(F^{(1)})^\prime(t)=\mu_1\gamma^{(1)}$ in $[T^{(1)}_a,T^{(1)}_f]$ and $(F^{(2)})^\prime(t)=\mu_2\gamma^{(2)}$ in $[T^{(1)}_f,T^{(2)}_f]$. Therefore, $T^{(1)}_a,T^{(1)}_f,T^{(2)}_f$ must satisfy:
\begin{align}\label{eq:inst1_uneqpref_case1_prfsktch1}
    T^{(1)}_f&=T^{(1)}_a+\frac{\Lambda^{(1)}}{\mu_1\gamma^{(1)}}~\text{and}~T^{(2)}_f=T^{(1)}_f+\frac{\Lambda^{(2)}}{\mu_2\gamma^{(2)}}.
\end{align} 
Now, queue 2 must be empty at $\tau_1(T^{(2)}_f)$ (by Lemma \ref{lem_queue1_and_2_idle}) and must stay engaged in the image of $[T^{(1)}_f,T^{(2)}_f]$ under $\tau_1(\cdot)$ (by Lemma \ref{lem_inst1_uneqpref_queue2_busy}), which is $[\tau_1(T^{(1)}_f),\tau_1(T^{(2)}_f)]$. Since, the first and last class 2 users arrive at queue 2, respectively at times $\tau_1(T^{(1)}_f)$ and $\tau_1(T^{(2)}_f)$, the previous statement implies,
\begin{align}\label{eq:inst1_uneqpref_case1_prfsktch2}
    \mu_2\cdot(\tau_1(T^{(2)}_f)-\tau_1(T^{(1)}_f))&=\Lambda^{(2)}   
\end{align}
For queue 2 to be empty at $\tau_1(T^{(2)}_f)$, queue 1 must also be empty at $T^{(2)}_f$. Otherwise if queue 1 has a positive waiting time at $T^{(2)}_f$, class 2 users will be arriving at queue 2 from queue 1 at rate $\mu_1>\mu_2$ in $[\tau_1(T^{(2)}_f-\delta),\tau_1(T^{(2)}_f)]$, where $\delta>0$ picked sufficiently small such that queue 1 has positive waiting time in $[T^{(2)}_f-\delta,T^{(2)}_f]$. By (\ref{eq:derv_of_tau}), $\tau_1^\prime(t)=\frac{(F^{(2)})^\prime(t)}{\mu_1}=\frac{\mu_2\gamma^{(2)}}{\mu_1}>0$ in $[T^{(2)}_f-\delta,T^{(2)}_f]$, giving us $\tau_1(T^{(2)}_f)>\tau_1(T^{(2)}_f-\delta)$. As a result, $Q_2(\tau_1(T^{(2)}_f))\geq(\mu_1-\mu_2)\cdot(\tau_1(T^{(2)}_f)-\tau_1(T^{(2)}_f-\delta))>0$, which contradicts Lemma \ref{lem_queue1_and_2_idle}. With queue 1 empty at $T^{(2)}_f$, we have $\tau_1(T^{(2)}_f)=T^{(2)}_f$.\vspace{0.05in}

Note that queue 1 must have a positive waiting time in $(T^{(1)}_a,T^{(1)}_f]$, otherwise, every class 2 user arriving after $T^{(1)}_f$ is strictly better off arriving at the time queue 1 is empty. Applying (\ref{eq:derv_of_tau}), we have $\tau_1(T^{(1)}_f)-\tau_1(T^{(1)}_a)=\frac{F^{(1)}(T^{(1)}_f)-F^{(1)}(T^{(1)}_a)}{\mu_1}=\frac{\Lambda^{(1)}}{\mu_1}$. Since the network cannot be empty at time zero, we have $T^{(1)}_a<0$. This gives us $\tau_1(T^{(1)}_a)=0$ and hence, $\tau_1(T^{(1)}_f)=\frac{\Lambda^{(1)}}{\mu_1}$. Putting $\tau_1(T^{(2)}_f)=T^{(2)}_f$ and $\tau_1(T^{(1)}_f)=\frac{\Lambda^{(1)}}{\mu_1}$ in (\ref{eq:inst1_uneqpref_case1_prfsktch2}), we get $T^{(2)}_f=\frac{\Lambda^{(1)}}{\mu_1}+\frac{\Lambda^{(2)}}{\mu_2}$. Using this in (\ref{eq:inst1_uneqpref_case1_prfsktch1}) and by the fact $T^{(2)}_a=T^{(1)}_f$, we get the values of $T^{(1)}_a,T^{(1)}_f,T^{(2)}_a,T^{(2)}_f$ mentioned in (\ref{eq_bdary_inst1_uneqpref_case1}). 

\hfill\qedsymbol

\bigskip 

The only candidate having arrival rates satisfying Lemma \ref{lem_arrival_rates_inst1} and support boundaries from Lemma \ref{lem:bdaryinst1case1} is the joint arrival profile under case 1 of Theorem \ref{mainthm_inst1}. Proving that this unique remaining candidate is an EAP requires analyzing the behavior of the two queues induced by this candidate.s Details of it are in  \ref{appndx:inst1_uneqpref}.\vspace{0.05in}

\begin{figure}[h]
    \centering
    \includegraphics[width=12cm]{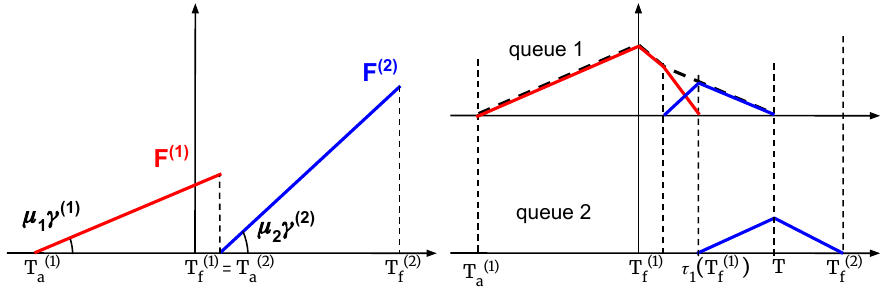}
    \caption{Illustrative EAP (left) and resulting queue length process (right) of HDS with $\mu_1>\mu_2$ under case 1: $\gamma^{(1)}\leq\frac{\mu_2}{\mu_1}\cdot\gamma^{(2)}$ and the condition $T^{(1)}_f>0$ in (\ref{eq_bdary_inst1_uneqpref_case1}). $T\in(T^{(1)}_f,T^{(2)}_f)$ is the time at which queue 1 empties after time zero. In queue 1, since class 1 users stop arriving after $T^{(1)}_f$, class 1 waiting mass decreases faster in $[T^{(1)}_f,\tau_1(T^{(1)}_f)]$ than in $[0,T^{(1)}_f]$.}
    \label{fig:inst1case1}
\end{figure}

\textbf{Case 2}\hspace{0.05in}$\frac{\mu_2}{\mu_1}\cdot\gamma^{(2)}<\gamma^{(1)}<\gamma^{(2)}$:\hspace{0.05in}By Lemma \ref{lem_threshold_behav_inst1}, $[T^{(1)}_a,T^{(1)}_f]$ and $[T^{(2)}_a,T^{(2)}_f]$ must overlap, and by Lemma \ref{lem_sign_inst1_uneqpref}, class 2 shouldn't finish arriving before class 1, \textit{i.e.}, $T^{(1)}_f\leq T^{(2)}_f$. Therefore, every EAP must have its support boundaries $T^{(1)}_a,T^{(1)}_f,T^{(2)}_a,T^{(2)}_f$ ordered by one of the following two ways: 

\begin{enumerate}
    \item[Type I:]~$T^{(1)}_a\leq T^{(2)}_a<T^{(1)}_f\leq T^{(2)}_f$ with $T^{(1)}_a<0$, and,
    \item[Type II:]~$T^{(2)}_a<T^{(1)}_a<T^{(1)}_f\leq T^{(2)}_f$ with $T^{(2)}_a<0$. 
\end{enumerate}

Note that, we need $\min\{T^{(1)}_a,T^{(2)}_a\}<0$, otherwise, any user will be strictly better off arriving at time zero when the network is empty. The following lemma now specifies the necessary and sufficient conditions for existence of an EAP under these two types.  Existence of unique EAP in case 2 of Theorem \ref{mainthm_inst1} follows trivially from this lemma.  

\begin{lemma}\label{lem:inst1case2}
    If $\gamma^{(2)}>\gamma^{(1)}>\frac{\mu_2}{\mu_1}\gamma^{(2)}$, the following statements are true:\vspace{-0.05in}\begin{itemize}
        \item There exists an EAP under Type I if and only if $\Lambda^{(1)}\geq\frac{1-\gamma^{(2)}}{1-\gamma^{(1)}}\left(\frac{\mu_1\gamma^{(1)}}{\mu_2\gamma^{(2)}}-1\right)\Lambda^{(2)}$ and if it exists, it will be unique with a closed form same as the joint arrival profile under case 2a of Theorem \ref{mainthm_inst1}.
        \item There exists an EAP under Type II if and only if $\Lambda^{(1)}<\frac{1-\gamma^{(2)}}{1-\gamma^{(1)}}\left(\frac{\mu_1\gamma^{(1)}}{\mu_2\gamma^{(2)}}-1\right)\Lambda^{(2)}$ and if it exists, it will be unique with a closed form same as the joint arrival profile under case 2b of Theorem \ref{mainthm_inst1}.
    \end{itemize}
\end{lemma}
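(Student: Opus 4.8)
The plan is to handle Type I and Type II separately, and within each to prove three things: that any EAP with that ordering has the boundaries claimed (uniqueness together with the ``only if'' of existence), that the stated $\Lambda_1$-inequality is exactly the feasibility condition for the ordering, and that when the inequality holds the resulting candidate really is an EAP (the ``if''). The common starting point is Lemma \ref{lem_arrival_rates_inst1}, which already fixes the arrival rates: since $\gamma_1>\frac{\mu_2}{\mu_1}\gamma_2$ gives $\mu_1\gamma_1>\mu_2\gamma_2$, on $\mathcal{S}(F_1)\cap\mathcal{S}(F_2)$ one has $F_1'=\mu_1\gamma_1-\mu_2\gamma_2>0$ and $F_2'=\mu_2\gamma_2$, on the class-1-only piece $F_1'=\mu_1\gamma_1$, and on the class-2-only piece $F_2'=\mu_2\gamma_2$. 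Thus a candidate of either type is piecewise linear and is pinned down entirely by the four boundaries $T_{1,a},T_{2,a},T_{1,f},T_{2,f}$, and by Lemmas \ref{lem_supp_are_intervals_inst1}, \ref{lem_sign_inst1_uneqpref} and \ref{lem_threshold_behav_inst1} every EAP in case 2 falls under exactly one of the two orderings.

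First I would write a determined linear system for the four boundaries. For Type I the two mass balances are $\Lambda_1=\mu_1\gamma_1(T_{2,a}-T_{1,a})+(\mu_1\gamma_1-\mu_2\gamma_2)(T_{1,f}-T_{2,a})$ and $\Lambda_2=\mu_2\gamma_2(T_{2,f}-T_{2,a})$. Two more equations come from the emptying conditions of Lemma \ref{lem_queue1_and_2_idle}: because $\mu_1\gamma_1>\mu_2\gamma_2$, queue 1 is empty at $T_{1,f}$ and queue 2 is empty at $\tau_1(T_{2,f})$. I would convert these using that queue 1 is engaged on all of $(T_{1,a},T_{1,f})$ (established inside the proof of Lemma \ref{lem_arrival_rates_inst1}) together with $T_{1,a}<0$, which forces $\tau_1(T_{1,a})=0$ and departures at rate $\mu_1$; this yields $\tau_1(t)=A_1(t)/\mu_1$ on $[T_{1,a},T_{1,f}]$ and $\tau_1(t)=t$ beyond $T_{1,f}$, where only class 2 arrives at rate $\mu_2\gamma_2<\mu_1$. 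Lemma \ref{lem_inst1_uneqpref_queue2_busy} then gives that queue 2 is engaged on $[\tau_1(T_{2,a}),\tau_1(T_{2,f})]$, so its emptying condition reads $\Lambda_2=\mu_2(\tau_1(T_{2,f})-\tau_1(T_{2,a}))$. Solving the $4\times4$ system gives a unique boundary vector, which I expect to match (\ref{eq_bdary_inst1_uneqpref_case2a}); the identical programme for Type II ($T_{2,a}<T_{1,a}<T_{1,f}\le T_{2,f}$, $T_{2,a}<0$) produces (\ref{eq_bdary_inst1_uneqpref_case2b}). This settles uniqueness within each type.

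Next I would read off the $\Lambda_1$-threshold as the requirement that the solved boundaries actually respect the postulated ordering. The Type-I solution is a bona fide Type-I profile only if $T_{1,a}\le T_{2,a}$; computing $T_{2,a}-T_{1,a}$ from (\ref{eq_bdary_inst1_uneqpref_case2a}) and imposing nonnegativity should collapse exactly to $\Lambda_1\ge\frac{1-\gamma_2}{1-\gamma_1}\big(\frac{\mu_1\gamma_1}{\mu_2\gamma_2}-1\big)\Lambda_2$, while the Type-II solution respects $T_{2,a}<T_{1,a}$ precisely under the opposite strict inequality. This simultaneously gives necessity of the stated condition and the clean split between the two types.

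The last and hardest step is the ``if'' direction: confirming that each candidate is an EAP. This means reconstructing $Q_1$ and $Q_2$ from the candidate and checking the assumptions above are self-consistent --- that queue 1 is indeed engaged throughout $[T_{1,a},T_{1,f}]$ and empties exactly at $T_{1,f}$, and that queue 2, fed by the shifted profile $F_2(\tau_1^{-1}(\cdot))$, stays engaged on $[\tau_1(T_{2,a}),\tau_1(T_{2,f})]$ and empties at $\tau_1(T_{2,f})$. Granting self-consistency, $(C_{\mathbf{F}}^{(i)})'=0$ on each support is immediate from (\ref{eq:derv_of_tau}) and the chosen rates, so the genuine work is to bound $C_{\mathbf{F}}^{(i)}$ off the support and show the common value on the support is the global minimum for each class. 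I would do this interval by interval: to the left of the first arrival the cost is decreasing, to the right of the last arrival it is increasing, and on the complementary pieces I would sign $(C_{\mathbf{F}}^{(i)})'$ directly. I expect the delicate case to be the class-2 cost on the piece where only class 1 is arriving (near $T_{1,f}$ in Type I) and across the interior breakpoint $T_{2,a}$ (resp.\ $T_{1,a}$), where the slope of $\tau_{\mathbf{F}}^{(2)}(\cdot)=\tau_2(\tau_1(\cdot))$ changes; these I would control by explicit monotonicity estimates on $\tau_1$ and $\tau_2$ from the known rates. The paper defers this verification to \ref{appndx:thm_inst1_uneqpref}, and I would follow the same path.
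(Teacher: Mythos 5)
Your proposal follows essentially the same route as the paper: reduce to piecewise-linear candidates via Lemmas \ref{lem_supp_are_intervals_inst1}, \ref{lem_arrival_rates_inst1}, \ref{lem_sign_inst1_uneqpref} and \ref{lem_threshold_behav_inst1}, pin down the four boundaries by a determined linear system built from the emptying conditions of Lemmas \ref{lem_queue1_and_2_idle} and \ref{lem_inst1_uneqpref_queue2_busy}, read off the $\Lambda_1$-threshold from the ordering constraint $T_{1,a}\le T_{2,a}$ (resp.\ $T_{2,a}<T_{1,a}$), and then verify the surviving candidate is an EAP by signing $(C_{\mathbf{F}}^{(i)})'$ interval by interval. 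Your choice of the class-1 mass balance in place of the paper's iso-cost relation $T_{1,a}=-\bigl(\tfrac{1}{\gamma_1}-1\bigr)T_{1,f}$ is an equivalent fourth equation given the queue-1-emptying condition, so the two systems have the same unique solution.
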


\noindent\textit{\textbf{Proof Sketch of Lemma \ref{lem:inst1case2}:}}\hspace{0.05in}\textbf{Identifying the support boundaries:} For every EAP under Type I, we identify the following system of equations to be satisfied by its support boundaries $T^{(1)}_a,T^{(1)}_f,T^{(2)}_a,T^{(2)}_f$:\\
\textbf{1)} By Lemma \ref{lem_arrival_rates_inst1},  $(F^{(2)})^\prime(t)=\mu_2\gamma^{(2)}$ in $[T^{(2)}_a,T^{(2)}_f]$, giving us: $T^{(2)}_f=T^{(2)}_a+\frac{\Lambda^{(2)}}{\mu_2\gamma^{(2)}}$. \vspace{0.025in}

\noindent\textbf{2)} Since $T^{(1)}_a<0$, queue 1 starts serving at time zero and must have positive waiting time in $(0,T^{(1)}_f)$. By Lemma \ref{lem_queue1_and_2_idle}, queue 1 must be empty at time $T^{(1)}_f$. Therefore, $F^{(1)}(T^{(1)}_f)+F^{(2)}(T^{(1)}_f)=\mu_1 T^{(1)}_f$. Now $F^{(1)}(T^{(1)}_f)=\Lambda^{(1)}$ and $F^{(2)}(T^{(1)}_f)=\mu_2\gamma^{(2)}\cdot(T^{(1)}_f-T^{(2)}_a)$ (by Lemma \ref{lem_arrival_rates_inst1}). Plugging this in, we get: $\Lambda^{(1)}+\mu_2\gamma^{(2)}\cdot(T^{(1)}_f-T^{(2)}_a)=\mu_1 T^{(1)}_f$. \vspace{0.025in}

\noindent\textbf{3)} By definition of EAP, $C_{\mathbf{F}}^{(1)}(T^{(1)}_a)=C_{\mathbf{F}}^{(1)}(T^{(1)}_f)$. Queue 1 empties at $T^{(1)}_f$ (by Lemma \ref{lem_queue1_and_2_idle}), giving us $C_{\mathbf{F}}^{(1)}(T^{(1)}_f)=(1-\gamma^{(1)})T^{(1)}_f$. The first class 1 user arriving at time $T^{(1)}_a$ gets served by queue 1 at time zero and waits for $-T^{(1)}_a$ time, giving us $C_{\mathbf{F}}^{(1)}(T^{(1)}_a)=-\gamma^{(1)} T^{(1)}_a$. Equating these two costs, we get: $T^{(1)}_a=-\left(\frac{1}{\gamma^{(1)}}-1\right)T^{(1)}_f$.\vspace{0.025in}

\noindent\textbf{4)} By Lemma \ref{lem_inst1_uneqpref_queue2_busy}, queue 2 stays engaged in $[\tau_1(T^{(2)}_a),\tau_1(T^{(2)}_f)]$ and by Lemma \ref{lem_queue1_and_2_idle}, queue 2 empties at $\tau_1(T^{(2)}_f)$. In this way, queue 2 serves the class 2 population in $[\tau_1(T^{(2)}_a),\tau_1(T^{(2)}_f)]$, giving us $\mu_2\cdot(\tau_1(T^{(2)}_f)-\tau_1(T^{(2)}_a))=\Lambda^{(2)}$. Queue 1 empties at $T^{(1)}_f$ (by Lemma \ref{lem_queue1_and_2_idle}) and after $T^{(1)}_f$, since class 2 users arrive at rate $\mu_2\gamma^{(2)}<\mu_1\gamma^{(1)}<\mu_1$, queue 1 remains empty at $T^{(2)}_f$, giving us $\tau_1(T^{(2)}_f)=T^{(2)}_f$. Since queue 1 has positive waiting time in $(T^{(1)}_a,T^{(2)}_a]$, we have $\tau_1(T^{(2)}_a)-\tau_1(T^{(1)}_a)=\frac{F^{(1)}(T^{(2)}_a)-F^{(1)}(T^{(1)}_a)}{\mu_1}=\gamma^{(1)}\cdot(T^{(2)}_a-T^{(1)}_a)$~(by (\ref{eq:derv_of_tau}) and Lemma \ref{lem_arrival_rates_inst1}), which upon putting $\tau_1(T^{(1)}_a)=0$ gives us $\tau_1(T^{(2)}_a)=\gamma^{(1)}\cdot(T^{(2)}_a-T^{(1)}_a)$. Placing the expressions obtained for $\tau_1(T^{(2)}_a)$ and $\tau_1(T^{(2)}_f)$ into $\mu_2(\tau_1(T^{(2)}_f)-\tau_1(T^{(2)}_a))=\Lambda^{(2)}$, we get, $\mu_2(T^{(2)}_f-\gamma^{(1)}\cdot(T^{(2)}_a-T^{(1)}_a))=\Lambda^{(2)}$.\vspace{0.025in}

\noindent\textbf{Getting the necessary condition:} Solution of the identified system of equations is (\ref{eq_bdary_inst1_uneqpref_case2a}) and therefore, every EAP under Type I must have  (\ref{eq_bdary_inst1_uneqpref_case2a}) as support boundaries. Now (\ref{eq_bdary_inst1_uneqpref_case2a}) must be ordered by $T^{(1)}_a\leq T^{(2)}_a<T^{(1)}_f\leq T^{(2)}_f$ to represent an EAP of Type I. Note that (\ref{eq_bdary_inst1_uneqpref_case2a}) satisfies $T^{(2)}_a-T^{(1)}_a=\frac{1}{\mu_1\gamma^{(1)}}\left[\Lambda^{(1)}-\frac{1-\gamma^{(2)}}{1-\gamma^{(1)}}\left(\frac{\mu_1\gamma^{(1)}}{\mu_2\gamma^{(2)}}-1\right)\Lambda^{(2)}\right]$. As a result, for  $T^{(2)}_a\geq T^{(1)}_a$, we need  $\Lambda^{(1)}\geq\frac{1-\gamma^{(2)}}{1-\gamma^{(1)}}\left(\frac{\mu_1\gamma^{(1)}}{\mu_2\gamma^{(2)}}-1\right)\Lambda^{(2)}$ and therefore this is a necessary condition for existence of an EAP under Type I.\vspace{0.025in} 

\noindent\textbf{Proving sufficiency of the obtained necessary condition:} Now, if $\Lambda^{(1)}\geq\frac{1-\gamma^{(2)}}{1-\gamma^{(1)}}\left(\frac{\mu_1\gamma^{(1)}}{\mu_2\gamma^{(2)}}-1\right)\Lambda^{(2)}$, it is easy to verify that (\ref{eq_bdary_inst1_uneqpref_case2a}) satisfies $T^{(1)}_a\leq T^{(2)}_a<T^{(1)}_f\leq T^{(2)}_f$. Therefore, with the necessary condition satisfied and upon plugging in arrival rates using Lemma \ref{lem_arrival_rates_inst1}, we get a candidate having arrival rates and support boundaries same as the joint arrival profile mentioned under case 2a of Theorem \ref{mainthm_inst1}. This candidate satisfies  $F^{(1)}(T^{(1)}_f)=\Lambda^{(1)},~F^{(2)}(T^{(2)}_f)=\Lambda^{(2)}$ and will be the only candidate under Type I to qualify as an EAP. Proving that this candidate is an EAP requires analyzing the behavior of the two queues. Details of this argument is in  \ref{appndx:inst1_uneqpref}. Therefore, it follows that the obtained necessary condition is also sufficient for existence of an EAP under Type I, and once it is satisfied, there is a unique EAP under Type I which has a closed form same as the one mentioned under case 2a of Theorem \ref{mainthm_inst1}. Thus the first statement of the lemma follows.  

The second statement is proved via an argument similar to the one used for proving the first statement via identifying a system of equations satisfied by the support boundaries and getting the necessary condition by imposing $T^{(2)}_a<T^{(1)}_a$ on the solution to that system. Details of the argument in  \ref{appndx:thm_inst1_uneqpref}.\hfill\qedsymbol

\noindent\textbf{Case 3: $\gamma^{(1)}>\gamma^{(2)}$:}~By Lemma \ref{lem_threshold_behav_inst1}, $[T^{(1)}_a,T^{(1)}_f]$ and $[T^{(2)}_a,T^{(2)}_f]$ must overlap, and by Lemma \ref{lem_sign_inst1_uneqpref}, $T^{(1)}_a>T^{(2)}_a$. Therefore, the support boundaries of any EAP must be ordered by the following two orderings: 
\begin{enumerate}
    \item[Type I:] $T^{(2)}_a<T^{(1)}_a<T^{(2)}_f\leq T^{(1)}_f$ , and 
    \item[Type II:] $T^{(2)}_a<T^{(1)}_a<T^{(1)}_f<T^{(2)}_f$. 
\end{enumerate}
For both the above types, every EAP must have $T^{(2)}_a<0$. The following lemma specifies the necessary and sufficient condition for existence of an EAP under the two types mentioned above. Existence of unique EAP under case 3 of Theorem \ref{mainthm_inst1} follows trivially from this lemma. 

\begin{lemma}\label{lem:inst1case3}
    If $\gamma^{(1)}>\gamma^{(2)}$, the following statements are true: \begin{enumerate}
        \item There exists an EAP under Type I if and only if $\Lambda^{(1)}\geq\left(\frac{\mu_1}{\mu_2}-1\right)\Lambda^{(2)}$ and if it exists, it has a closed form same as the one mentioned under case 3a of Theorem \ref{mainthm_inst1}.
        \item There exists an EAP under Type II if and only if $\Lambda^{(1)}<\left(\frac{\mu_1}{\mu_2}-1\right)\Lambda^{(2)}$ and if it exists, it has a closed form same as the one mentioned under case 3b of Theorem \ref{mainthm_inst1}.
    \end{enumerate}
\end{lemma}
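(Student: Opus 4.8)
The plan is to mirror the proof of Lemma~\ref{lem:inst1case2}, handling Type~I and Type~II separately. For each type I would (i) convert the structural lemmas into a linear system for the four boundaries $T_{1,a},T_{1,f},T_{2,a},T_{2,f}$, (ii) solve it and recognise the solution as (\ref{eq_bdary_inst1_uneqpref_case3a}) for Type~I or (\ref{eq_bdary_inst1_uneqpref_case2b}) for Type~II, (iii) impose the order inequality defining the type to read off the necessary $\Lambda$-condition, and (iv) show that once the condition holds the resulting candidate is genuinely an EAP, which gives sufficiency together with uniqueness (the latter because the linear system has a unique solution). Throughout I may use that $\gamma_1>\gamma_2$ together with $\mu_1>\mu_2$ forces $\mu_1\gamma_1>\mu_2\gamma_2$, so the overlap branch of Lemma~\ref{lem_arrival_rates_inst1} and the ``queue~1 empties at $T_{1,f}$'' conclusion of Lemma~\ref{lem_queue1_and_2_idle} both apply; moreover, by Lemma~\ref{lem_sign_inst1_uneqpref} class~2 always starts first ($T_{2,a}<T_{1,a}$) and the supports overlap by Lemma~\ref{lem_threshold_behav_inst1}.

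For Type~I, with ordering $T_{2,a}<T_{1,a}<T_{2,f}\le T_{1,f}$, the support of $F_1$ splits into the overlap $[T_{1,a},T_{2,f}]$, where $F_1'=\mu_1\gamma_1-\mu_2\gamma_2$, and the tail $[T_{2,f},T_{1,f}]$, where $F_1'=\mu_1\gamma_1$, while $F_2'=\mu_2\gamma_2$ on $[T_{2,a},T_{2,f}]$ (Lemma~\ref{lem_arrival_rates_inst1}). I would write down four equations: the class-2 mass identity $\Lambda_2=\mu_2\gamma_2(T_{2,f}-T_{2,a})$; the fact that queue~1 empties at $T_{1,f}$ (Lemma~\ref{lem_queue1_and_2_idle}), which since $T_{2,a}<0$ and queue~1 has positive workload on $(0,T_{1,f})$ reads $\Lambda_1+\Lambda_2=\mu_1 T_{1,f}$; the class-1 iso-cost equality $C_{\mathbf{F}}^{(1)}(T_{1,a})=C_{\mathbf{F}}^{(1)}(T_{1,f})$; and the class-2 service identity $\mu_2\bigl(\tau_1(T_{2,f})-\tau_1(T_{2,a})\bigr)=\Lambda_2$, which uses that queue~2 stays engaged on $[\tau_1(T_{2,a}),\tau_1(T_{2,f})]$ and empties at its right end (Lemmas~\ref{lem_inst1_uneqpref_queue2_busy} and~\ref{lem_queue1_and_2_idle}). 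The genuinely ``Case~3'' feature, and the place where the bookkeeping differs from Case~2, is that class~2 now arrives \emph{first}, so evaluating $\tau_1$ at $T_{2,a}$ and $T_{2,f}$ requires tracking the queue-1 workload through the class-2-only segment $[T_{2,a},T_{1,a}]$ before the overlap; I would apply (\ref{eq:derv_of_tau}) on each segment to express these departure times and then substitute into the service identity.

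Solving this linear system yields (\ref{eq_bdary_inst1_uneqpref_case3a}), so every Type~I EAP must have these boundaries. The necessary condition then comes from the defining inequality $T_{2,f}\le T_{1,f}$: from (\ref{eq_bdary_inst1_uneqpref_case3a}) the difference $T_{1,f}-T_{2,f}$ depends linearly on $\Lambda_1,\Lambda_2$, increases in $\Lambda_1$ with coefficient $\frac{1}{\mu_1\gamma_1}>0$, and vanishes exactly when $\Lambda_1=\left(\frac{\mu_1}{\mu_2}-1\right)\Lambda_2$; hence $T_{2,f}\le T_{1,f}$ holds if and only if $\Lambda_1\ge\left(\frac{\mu_1}{\mu_2}-1\right)\Lambda_2$, and I would check the remaining orderings $T_{2,a}<0$, $T_{2,a}<T_{1,a}$ and $T_{1,a}<T_{2,f}$ are then automatic. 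For Type~II the analogous derivation, now with $\mathcal{S}(F_1)\subseteq\mathcal{S}(F_2)$ so that $F_1'=\mu_1\gamma_1-\mu_2\gamma_2$ on all of $[T_{1,a},T_{1,f}]$, produces (\ref{eq_bdary_inst1_uneqpref_case2b}); imposing $T_{1,f}<T_{2,f}$ gives the complementary condition $\Lambda_1<\left(\frac{\mu_1}{\mu_2}-1\right)\Lambda_2$, which also explains why case~3b shares the closed form of case~2b.

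The main obstacle is the sufficiency step, exactly as in Case~2: having pinned down the unique boundary-consistent candidate, I must verify it really is an EAP and not merely a profile with the right boundaries. This requires constructing the induced queue-length processes $Q_1,Q_2$ and departure maps $\tau_1,\tau_2$ from the prescribed rates, confirming that $C_{\mathbf{F}}^{(i)}$ is constant along $\mathcal{S}(F_i)$ for $i\in\{1,2\}$, and—the delicate part—ruling out profitable deviations to times \emph{outside} the supports, where one must show $\tau_{\mathbf{F}}^{(i)}$ rises at least as steeply as the iso-cost line $\gamma_i t+\text{const}$. I would carry this out with the same queue-behaviour analysis used in~\ref{appndx:inst1_uneqpref}, establishing that queue~1 is busy on $(0,T_{1,f})$ and empty at $T_{1,f}$ and that queue~2 is engaged precisely on $[\tau_1(T_{2,a}),\tau_1(T_{2,f})]$, which together certify the candidate as the unique EAP under each type and thus complete both statements of the lemma.
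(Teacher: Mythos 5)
Your proposal is correct and follows essentially the same route as the paper: the same four-equation linear system for the Type I boundaries (your service identity $\mu_2(\tau_1(T_{2,f})-\tau_1(T_{2,a}))=\Lambda_2$ reduces to the paper's $\mu_2\tau_1(T_{2,f})=\Lambda_2$ since $\tau_1(T_{2,a})=0$), the reuse of the Case 2 Type II system for Type II, the same ordering constraints yielding the complementary $\Lambda$-conditions, and the same queue-behaviour verification for sufficiency. Your computation that $T_{1,f}-T_{2,f}=\frac{1}{\mu_1\gamma_1}\bigl(\Lambda_1-(\frac{\mu_1}{\mu_2}-1)\Lambda_2\bigr)$ matches what the paper's solution (\ref{eq_bdary_inst1_uneqpref_case3a}) gives.
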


Proof of the above lemma follows an argument similar to the proof of Lemma \ref{lem:inst1case2} by identifying a system of linear equations to be satisfied by the support boundaries of any EAP under the two types. Details are in \ref{appndx:thm_inst1_uneqpref}. \hfill\qedsymbol

\bigskip

\begin{remark}
    \emph{The proof of the main results in Sections \ref{sec_inst1_eqpref}, \ref{sec_inst2_uneqpref} and \ref{sec_inst2_eqpref} will follow a similar sequence of arguments as used in the proof of Theorem \ref{mainthm_inst1}, through elimination of candidates not satisfying structural properties of every EAP in that regime. The skeleton of the arguments for the other sections is similar:1)~We first identify the possible ordering to be satisfied by the support boundaries $T^{(1)}_a,T^{(1)}_f,T^{(2)}_a,T^{(2)}_f$. Such statements are proved via contradiction. We show that, if the said ordering of the support boundaries do not hold, and users arrive following the arrival rates of the equilibrium arrival profile, then we can identify customers who can reduce their cost by arriving at a different time,~2)~For every possible ordering of the support boundaries, we identify a linear system to be satisfied by them. Solving this linear system gives us the unique values these support boundaries must have and the unique candidate EAP under each of these orderings,~3)~For each possible ordering, checking for whether the obtained solution satisfy the ordering gives us a necessary condition to be satisfied by the masses of the two groups for the consistency of that candidate EAP. Furthermore, we derive that, the obtained necessary conditions for the consistency of different candidate EAPs are disjoint, but their union encompasses all possible instances $\Lambda^{(1)},\Lambda^{(2)},\gamma^{(1)},\gamma^{(2)}$. Hence existence and uniqueness of EAP under the respective regimes follow naturally by our arguments.}     
\end{remark}

\begin{figure}[h]
    \centering
    \includegraphics[width=12cm]{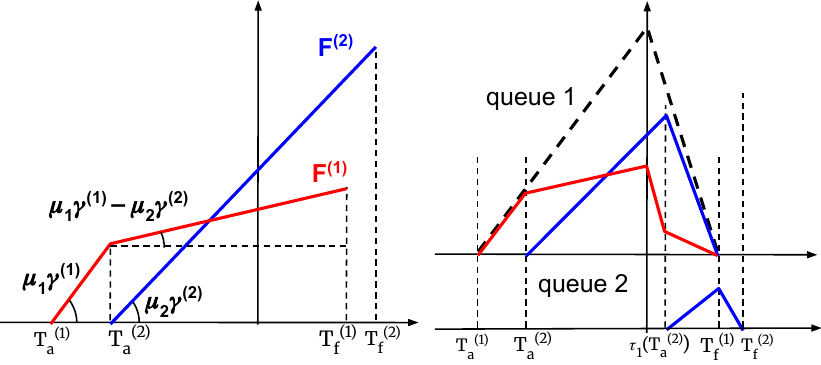}
    \caption{Illustrative EAP (left) and resulting queue length process (right) of HDS with $\mu_1>\mu_2$ under case 2a of Theorem \ref{mainthm_inst1} with the assumption $T^{(2)}_a<0$ in (\ref{eq_bdary_inst1_uneqpref_case2a}). Queue 1 divides its capacity between the two classes from $\tau_1(T^{(2)}_a)$ proportionally to their arrival rate. As a result, queue 1 serves class 1 users at a slower rate in $[\tau_1(T^{(2)}_a),T^{(1)}_f]$ than in $[0,\tau_1(T^{(2)}_a)]$, causing class 1 mass in queue 1 to decrease at a slower rate in the former interval than in the latter one.}
    \vspace{-0.1in}
    \label{fig:inst1case21}
\end{figure}
% \begin{figure}[h]
%     \centering
%     \includegraphics[width=12cm]{inst1plots/inst1case21.pdf}
%     \caption{Illustrative EAP (left) and resulting queue length process (right) of HDS with $\mu_1>\mu_2$ under case 2a of Theorem \ref{mainthm_inst1} with the assumption $T^{(2)}_a<0$ in (\ref{eq_bdary_inst1_uneqpref_case2a}). Queue 1 divides its capacity between the two classes from $\tau_1(T^{(2)}_a)$ proportionally to their arrival rate. As a result, queue 1 serves class 1 users at a slower rate in $[\tau_1(T^{(2)}_a),T^{(1)}_f]$ than in $[0,\tau_1(T^{(2)}_a)]$, causing class 1 mass in queue 1 to decrease at a slower rate in the former interval than in the latter one.}
%     \vspace{-0.1in}
%     \label{fig:inst1case21}
% \end{figure}

\begin{figure}[h]
    \centering
    \includegraphics[width=12cm]{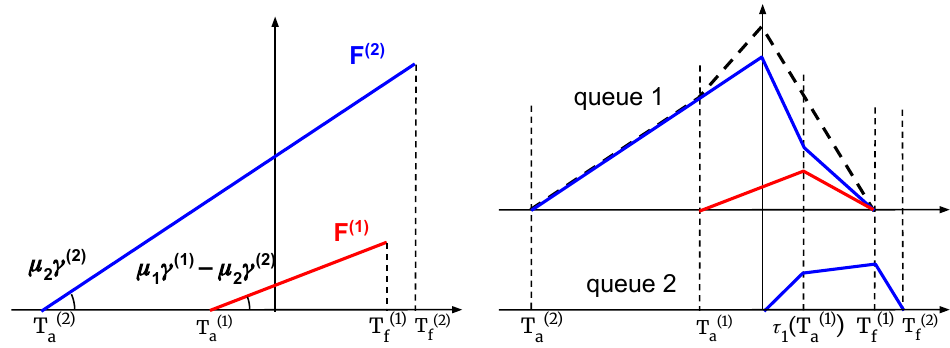}
    \caption{Illustrative EAP (left) of HDS with $\mu_1>\mu_2$ under case 2b and 3b of Theorem \ref{mainthm_inst1}. The resulting queue length process (right) is for case 2b with the assumption $T^{(2)}_a<0$ in (\ref{eq_bdary_inst1_uneqpref_case2b}). Queue 1 divides its capacity between the two classes from $\tau_1(T^{(1)}_a)$ proportionally to their service rates. As a result, queue 1 serves class 2 users at a lesser rate in $[\tau_1(T^{(1)}_a),T^{(1)}_f]$ than in $[0,\tau_1(T^{(1)}_a)]$, causing class 2 waiting mass to decrease at a lesser rate in the former interval than in the latter one. In queue 2, waiting mass increases in $[\tau_1(T^{(1)}_a),T^{(1)}_f]$ because class 2 arrival rate to queue 2 is $\mu_2\gamma^{(1)}/\gamma^{(2)} >\mu_2$. However the waiting mass increases in $[\tau_1(T^{(1)}_a),T^{(1)}_f]$ at a rate slower than in $[0,\tau_1(T^{(1)}_a)]$, since the arrival rate in the formal interval is $\mu_2\gamma^{(2)}/\gamma^{(1)}<\mu_1=$ arrival rate in the latter interval. For case 3b of Theorem \ref{mainthm_inst1}, with the assumption $T^{(1)}_a<0$ on (\ref{eq_bdary_inst1_uneqpref_case2b}), queue 1 length process has the same structure. Queue 2 length process of case 3b is also same as case 2b, except for one region: queue 2 length decreases in $[\tau_1(T^{(1)}_a),T^{(1)}_f]$, since class 2 arrival rate to queue 2 is $\mu_2\gamma^{(2)}/\gamma^{(1)}<\mu_2$ in case 3b.}
    \vspace{-0.1in}
    \label{fig:inst1case22}
\end{figure}
\begin{figure}[h]
    \centering
    \includegraphics[width=12cm]{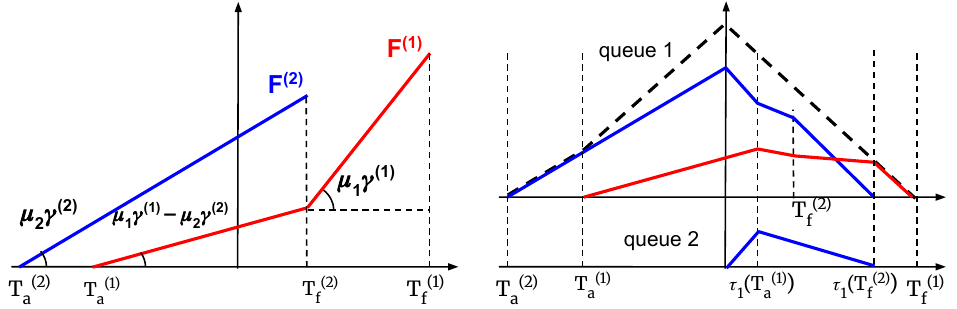}
    \caption{Illustrative EAP (left) and resulting queue length process (right) of HDS with $\mu_1>\mu_2$ under case 3a of Theorem \ref{mainthm_inst1} with the assumption $T^{(1)}_a<0$ and $\tau_1(T^{(1)}_a)=\frac{\mu_2\gamma^{(2)}}{\mu_1}(T^{(1)}_a-T^{(2)}_a)<T^{(2)}_f$ in (\ref{eq_bdary_inst1_uneqpref_case3a}). In $[\tau_1(T^{(1)}_a),T^{(1)}_f]$, queue 1 divides its capacity between the two classes proportionally to their arrival rates. As a result, class 2 mass in queue 1 decreases at a lesser rate in $[\tau_1(T^{(1)}_a),T^{(2)}_f]$ than in $[0,\tau_1(T^{(1)}_a)]$. After $T^{(2)}_f$, as class 2 users stop arriving, the class 2 mass in queue 1 decreases to zero at a higher rate in $[T^{(2)}_f,\tau_1(T^{(2)}_f)]$ than in $[\tau_1(T^{(1)}_a),T^{(2)}_f]$. Class 1 mass in queue 1 initially decreases in $[\tau_1(T^{(1)}_a),T^{(2)}_f]$. Since class 1 users increase their arrival rate after $T^{(2)}_f$, slope of the red line increases and the class 1 mass in queue 1 can even  increase in $[T^{(2)}_f,\tau_1(T^{(2)}_f)]$ if $\mu_1\gamma^{(1)}>\mu_1-\mu_2\gamma^{(2)}/\gamma^{(1)}$. After $\tau_1(T^{(2)}_f)$, queue 1 allocates the entire capacity to class 1 and the class 1 mass in queue 1 decreases to zero in $[\tau_1(T^{(2)}_f),T^{(1)}_f]$.}
    \label{fig:inst1case31}
\end{figure}

\subsection{Equal Preferences: $\gamma^{(1)}=\gamma^{(2)}=\gamma$}\label{sec_inst1_eqpref}

Below we state the main theorem of this section. 

\begin{theorem}\label{mainthm_inst1_eqpref}
    If $\mu_1>\mu_2$ and $\gamma^{(1)}=\gamma^{(2)}=\gamma$, EAP of HDS exists and:
    \begin{enumerate}
        \item[\textbf{1.}] If $\Lambda^{(1)}<\left(\frac{\mu_1}{\mu_2}-1\right)\Lambda^{(2)}$, the EAP is unique and its arrival rates and support boundaries are:
        \begin{align*}
            (F^{(1)})^\prime(t)&=(\mu_1-\mu_2)\gamma~\text{for}~t\in[T^{(1)}_a,T^{(1)}_f]~\text{and},~(F^{(2)})^\prime(t)=\mu_2\gamma~\text{for}~t\in[T_a,T_f],\\
            \text{where},~T^{(1)}_a&=\frac{1-\gamma}{\mu_1-\mu_2\gamma}\left[\Lambda^{(2)}-\frac{\mu_1}{(\mu_1-\mu_2)\gamma}\Lambda^{(1)}\right],~T^{(1)}_f=\frac{\Lambda^{(1)}+(1-\gamma)\Lambda^{(2)}}{\mu_1-\mu_2\gamma},\\
            T_{a}&=-\frac{1-\gamma}{\gamma}\frac{\Lambda^{(2)}}{\mu_2},~\text{and}~T_f=\frac{\Lambda^{(2)}}{\mu_2}.
        \end{align*}
        
        \item[\textbf{2.}] If $\Lambda^{(1)}\geq\left(\frac{\mu_1}{\mu_2}-1\right)\Lambda^{(2)}$, the set of EAPs is the convex set of joint arrival profiles $\mathbf{F}=\{F^{(1)},F^{(2)}\}$ satisfying:~~\textbf{1)}~~$\mathcal{S}(F^{(1)})=[T_a,T_f]$,~~$\mathcal{S}(F^{(2)})\subseteq [T_a,T_f]$,~~for $i=1,2$~~$F^{(i)}(T_f)=\Lambda^{(i)}$,~~\textbf{2)}~~$(F^{(1)})^\prime(t)+(F^{(2)})^\prime(t)=\mu_1\gamma$, and~~\textbf{3)}~~$(F^{(2)})^\prime(t)\leq\mu_2\gamma$  for every $t\in[T_a,T_f]$, where $T_a=-\left(\frac{1}{\gamma}-1\right)\frac{\Lambda^{(1)}+\Lambda^{(2)}}{\mu_1}$ and $T_f=\frac{\Lambda^{(1)}+\Lambda^{(2)}}{\mu_1}$.
    \end{enumerate}
\end{theorem}

In Theorem \ref{mainthm_inst1_eqpref}, note that under case 2 $\left(\Lambda^{(1)}\geq\left(\frac{1}{\gamma}-1\right)\Lambda^{(2)}\right)$, EAP is not necessarily unique and the set of EAPs is a convex set of joint arrival profiles.  

\noindent\textbf{Structural properties of any EAP.}\hspace{0.05in}Before discussing the proof of Theorem \ref{mainthm_inst1_eqpref}, we first identify the structural properties of any EAP in this regime and later we exploit them to refine our search of EAP. By Lemma \ref{lem_supp_are_intervals_inst1}, we can restrict our search to joint arrival profiles having supports satisfying $\mathcal{S}(F^{(1)})=[T^{(1)}_a,T^{(1)}_f]$ and $\mathcal{S}(F^{(1)})\cup \mathcal{S}(F^{(2)})=[T_a,T_f]$ for some $T_f>T_a$ and $T^{(1)}_f>T^{(1)}_a$. 
Lemma \ref{lem_rate_of_arrivals_inst1_eqpref} states the arrival rates of the two classes in EAP and is proved (in  \ref{appndx:inst1_eqpref}) via an argument similar to the one used for proving Lemma \ref{lem_arrival_rates_inst1}. Lemma \ref{lem_queue1_and_2_idle_inst1_eqpref} is proved via contradiction, by showing that, if the stated condition doesn't hold, users of one of the classes can improve by arriving at a different time.

\begin{lemma}[\textbf{Rate of arrival}]\label{lem_rate_of_arrivals_inst1_eqpref}
    If $\mu_1>\mu_2$ and $\gamma^{(1)}=\gamma^{(2)}=\gamma$, the following properties must be true almost everywhere for every EAP  $\mathbf{F}=\{F^{(1)},F^{(2)}\}$,
    \begin{align*}
        (F^{(1)})^\prime(t)+(F^{(2)})^\prime(t)&=\mu_1\gamma~~~\text{if $t\in \mathcal{S}(F^{(1)})$, and,}\\
        (F^{(2)})^\prime(t)&\leq\mu_2\gamma ~~~\text{if $t\in \mathcal{S}(F^{(2)})$, with equality when $Q_2(\tau_1(t))>0$}.
    \end{align*}
\end{lemma}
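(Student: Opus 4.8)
The plan is to mirror the proof of Lemma \ref{lem_arrival_rates_inst1}: exploit the iso-cost property of each class on its support to pin down the derivative of the relevant departure-time map, then convert that into a statement about arrival rates via (\ref{eq:derv_of_tau}) and the identity $A_2(\tau_1(t)) = F_2(t)$. Using the normalized cost and $W_{\mathbf{F}}^{(i)}(t) = \tau_{\mathbf{F}}^{(i)}(t) - t$, write $C_{\mathbf{F}}^{(1)}(t) = \tau_1(t) - \gamma t$ and $C_{\mathbf{F}}^{(2)}(t) = \tau_2(\tau_1(t)) - \gamma t$. Absolute continuity of $\mathbf{F}$ (Lemma \ref{lem_eq_has_no_jump}) makes $A_1, A_2, \tau_1, \tau_2$ absolutely continuous, so both costs are differentiable a.e.; since each is constant on the corresponding support, its derivative vanishes a.e. there.

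For the first identity, constancy of $C_{\mathbf{F}}^{(1)}$ on $\mathcal{S}(F_1) = [T_{1,a}, T_{1,f}]$ gives $\tau_1^\prime(t) = \gamma$ a.e. on $\mathcal{S}(F_1)$. I would next argue queue 1 is engaged throughout the interior: if $Q_1(t_0) = 0$ at some $t_0 \in (T_{1,a}, T_{1,f})$ with $t_0 \ge 0$, then $\tau_1(t_0) = t_0$, and any class-1 user arriving at $t_1 \in (t_0, T_{1,f}]$ incurs cost $\tau_1(t_1) - \gamma t_1 \ge (1-\gamma)t_1 > (1-\gamma)t_0 = C^{(1)}_{\mathbf{F}}(t_0)$, contradicting iso-cost; the range $t_0 < 0$ is already covered by (\ref{eq:derv_of_tau}). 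Hence (\ref{eq:derv_of_tau}) yields $\tau_1^\prime(t) = A_1^\prime(t)/\mu_1 = (F_1^\prime(t) + F_2^\prime(t))/\mu_1$ a.e. on $\mathcal{S}(F_1)$, and combining with $\tau_1^\prime = \gamma$ gives $F_1^\prime(t) + F_2^\prime(t) = \mu_1 \gamma$.

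For the second, constancy of $C_{\mathbf{F}}^{(2)}$ on $\mathcal{S}(F_2)$ gives $(\tau_2 \circ \tau_1)^\prime(t) = \tau_2^\prime(\tau_1(t))\,\tau_1^\prime(t) = \gamma$ a.e., while differentiating $A_2(\tau_1(t)) = F_2(t)$ gives $A_2^\prime(\tau_1(t))\,\tau_1^\prime(t) = F_2^\prime(t)$. On $\{t : Q_2(\tau_1(t)) > 0\}$ queue 2 is engaged at $\tau_1(t)$, so (\ref{eq:derv_of_tau}) gives $\tau_2^\prime(\tau_1(t)) = A_2^\prime(\tau_1(t))/\mu_2$; substituting, $\gamma = A_2^\prime(\tau_1(t))\,\tau_1^\prime(t)/\mu_2 = F_2^\prime(t)/\mu_2$, i.e. $F_2^\prime(t) = \mu_2\gamma$, the equality branch. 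On $\{t : Q_2(\tau_1(t)) = 0\}$, for a.e. such $t$ queue 2 is empty on a neighbourhood of $\tau_1(t)$, so $\tau_2(s) = s$ there and $\tau_2^\prime(\tau_1(t)) = 1$; this forces $\tau_1^\prime(t) = \gamma$, while an empty queue cannot be fed faster than it serves, giving $A_2^\prime(\tau_1(t)) \le \mu_2$. Therefore $F_2^\prime(t) = A_2^\prime(\tau_1(t))\,\gamma \le \mu_2\gamma$.

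The main obstacle is the inequality branch: I must upgrade the pointwise hypothesis $Q_2(\tau_1(t)) = 0$, for a.e. such $t$, to \emph{queue 2 empty on a neighbourhood of} $\tau_1(t)$, so that both $\tau_2^\prime(\tau_1(t)) = 1$ and the feed-rate bound $A_2^\prime(\tau_1(t)) \le \mu_2$ are valid. This needs the transition boundary $\partial\{s : Q_2(s) = 0\}$ to be negligible and its $\tau_1$-preimage to have measure zero, which I would justify using that $\tau_1$ is strictly increasing wherever queue 1 is engaged (so it does not flatten a positive-measure set onto the boundary). The bound $A_2^\prime \le \mu_2$ a.e. on $\{Q_2 = 0\}$ follows from $Q_2^\prime = A_2^\prime - \mu_2$ on $\{Q_2 > 0\}$ together with $Q_2^\prime = 0$ a.e. on $\{Q_2 = 0\}$; the remaining steps are the direct chain-rule computations already used in Lemma \ref{lem_arrival_rates_inst1}.
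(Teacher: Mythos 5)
Your proof follows the paper's route almost exactly: the first identity is obtained, as in the paper, from $\mathcal{S}(F_1)\subseteq\overline{E_1}$ together with (\ref{eq:derv_of_tau}) and iso-cost (your explicit contradiction argument for why queue 1 cannot idle inside $(T_{1,a},T_{1,f})$ is a correct expansion of what the paper asserts), and the equality branch of the second statement is exactly the computation in Lemma \ref{lem_inst1_uneqpref_derv_of_tau2}. The one place you genuinely deviate is the case split for the inequality branch: you condition pointwise on $Q_2(\tau_1(t))=0$ versus $Q_2(\tau_1(t))>0$, whereas the paper conditions on $\tau_1(t)\in\overline{E_2}$ versus $\tau_1(t)\notin\overline{E_2}$. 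This is what manufactures your ``main obstacle'': with your split you must argue that the transition boundary of $\{s:Q_2(s)=0\}$ and its $\tau_1$-preimage are null, and the justification you sketch is not airtight as stated — the boundary of a closed set need not be Lebesgue-negligible, so ``negligible boundary'' cannot simply be assumed. The obstacle evaporates with the paper's dichotomy: if $\tau_1(t)\in\overline{E_2}$ (which includes every boundary point of $\{Q_2=0\}$), Lemma \ref{lem_inst1_uneqpref_derv_of_tau2} plus iso-cost of class 2 already forces $F_2^\prime(t)=\mu_2\gamma$, which in particular satisfies the inequality; and if $\tau_1(t)\notin\overline{E_2}$, queue 2 is empty on a neighbourhood of $\tau_1(t)$ by definition of the closure, so your computation $F_2^\prime(t)=A_2^\prime(\tau_1(t))\,\gamma\leq\mu_2\gamma$ goes through with no measure-theoretic overhead. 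So the result you want is correct and your computations are the right ones; you should simply reorganize the case split so that the equality case absorbs the closure of the busy set, after which no claim about the measure of the transition set is needed.
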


\begin{lemma}\label{lem_queue1_and_2_idle_inst1_eqpref}
   If $\mu_1>\mu_2$ and $\gamma^{(1)}=\gamma^{(2)}=\gamma$, in the EAP,
    \begin{enumerate}
        \item If $\mathcal{S}(F^{(2)})$ has a gap $(t,t+\delta]$ at $t\in \mathcal{S}(F^{(2)})$ for some $\delta>0$ sufficiently small, queue 2 must be empty at $\tau_1(t)$.
        \item Queue 1 must be empty time at $T^{(1)}_f$. 
    \end{enumerate}
\end{lemma}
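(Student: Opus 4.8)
The plan is to establish both parts by contradiction, in each case exhibiting a profitable deviation that violates the iso-cost property of an EAP. Throughout I would use that, since waiting time equals departure minus arrival time, the normalized cost of a class $i$ user arriving at time $s$ can be written $C_{\mathbf F}^{(i)}(s)=\tau_{\mathbf F}^{(i)}(s)-\gamma s$; for HDS this is $\tau_1(s)-\gamma s$ for class $1$ and $\tau_2(\tau_1(s))-\gamma s$ for class $2$. The single most useful fact I would exploit is a fluid-drain phenomenon: if a queue is engaged on an interval during which no mass arrives, its departure-time map is constant there, since a unit arriving later sees a proportionally smaller backlog and departs exactly when the existing backlog clears.

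For part 1, I would assume $Q_2(\tau_1(t))>0$ and examine the image of the gap under $\tau_1$. Because $F_2$ is constant on $(t,t+\delta]$ and $A_2(u)=F_2(\tau_1^{-1}(u))$, queue 2 receives no class 2 mass on $(\tau_1(t),\tau_1(t+\delta))$; being engaged at $\tau_1(t)$ it therefore drains at rate $\mu_2$, and every class 2 user reaching it before it empties departs at the same completion time $T^\star:=\tau_2(\tau_1(t))$. Hence for $s\in(t,t+\delta)$ close enough to $t$ (so that $\tau_1(s)$ still falls in the draining interval, by continuity of $\tau_1$), one has $\tau_{\mathbf F}^{(2)}(s)=T^\star$, so $C_{\mathbf F}^{(2)}(s)=T^\star-\gamma s$ is strictly decreasing in $s$. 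Then $C_{\mathbf F}^{(2)}(s)<T^\star-\gamma t=C_{\mathbf F}^{(2)}(t)$, so the class 2 user at $t\in\mathcal S(F_2)$ strictly prefers arriving at $s$, contradicting the EAP condition and forcing $Q_2(\tau_1(t))=0$.

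For part 2, I would assume $Q_1(T_{1,f})>0$. After $T_{1,f}$ no class 1 mass arrives, so the arrival rate into queue 1 is $A_1^\prime(s)=F_2^\prime(s)$, which by Lemma \ref{lem_rate_of_arrivals_inst1_eqpref} is at most $\mu_2\gamma$ (and is $0$ if class 2 has also finished). Since $\mu_2<\mu_1$, this rate is strictly below $\mu_1$, so a positive backlog at $T_{1,f}$ keeps queue 1 engaged on $[T_{1,f},T_{1,f}+\epsilon]$ while draining. On this interval (\ref{eq:derv_of_tau}) gives $\tau_1^\prime(s)=F_2^\prime(s)/\mu_1\le\mu_2\gamma/\mu_1$ a.e., whence $(C_{\mathbf F}^{(1)})^\prime(s)=\tau_1^\prime(s)-\gamma\le\gamma(\mu_2/\mu_1-1)<0$. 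Integrating, $C_{\mathbf F}^{(1)}(T_{1,f}+\epsilon)<C_{\mathbf F}^{(1)}(T_{1,f})$, so the last class 1 user (arriving at $T_{1,f}\in\mathcal S(F_1)$) would profit by arriving slightly later, contradicting equilibrium; hence $Q_1(T_{1,f})=0$.

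The routine parts are the cost-derivative computations; the one place that needs care — and which I regard as the crux — is justifying in part 1 that the deviating class 2 user genuinely departs at the fixed time $T^\star$. This requires confirming that no class 2 mass enters queue 2 between $\tau_1(t)$ and $\tau_1(s)$ and that the queue has not yet emptied, both of which follow for $s$ sufficiently close to $t$ from the continuity and monotonicity of $\tau_1$ together with the constancy of $F_2$ on the gap.
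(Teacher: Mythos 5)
Your proposal is correct and follows essentially the same route as the paper: part 1 is the paper's argument that the gap's image under $\tau_1$ stays in the engaged set of queue 2, so $(\tau_{\mathbf F}^{(2)})^\prime=F_2^\prime/\mu_2=0$ there and the departure time is frozen (your explicit drain computation is just the hands-on version of Lemma \ref{lem_inst1_uneqpref_derv_of_tau2}), while part 2 is verbatim the paper's bound $(C_{\mathbf F}^{(1)})^\prime(t)\le\mu_2\gamma/\mu_1-\gamma<0$ while queue 1 drains after $T_{1,f}$. No gaps.
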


\noindent\textbf{Proof of Theorem \ref{mainthm_inst1_eqpref}}

We only provide the key steps of the proof of Theorem \ref{mainthm_inst1_eqpref} below. The details are given in  \ref{appndx:thm_inst1_eqpref}. Following two supporting lemmas are necessary for proving Theorem \ref{mainthm_inst1_eqpref}. 

\begin{lemma}\label{lem_cost2_const_inst1_eqpref}
    In every EAP, the cost of the class 2 users remains constant over $[T_a,T_f]$.
\end{lemma}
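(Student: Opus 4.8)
The plan is to reduce the constancy of the class-2 cost to facts already available about the two supports and about queue~2 emptying. First I would record the identity $C_{\mathbf{F}}^{(2)}(t)=\tau_{\mathbf{F}}^{(2)}(t)-\gamma t$, which follows from $W_{\mathbf{F}}^{(2)}(t)=\tau_{\mathbf{F}}^{(2)}(t)-t$ together with the normalized cost $C_{\mathbf{F}}^{(2)}=\gamma W_{\mathbf{F}}^{(2)}+(1-\gamma)\tau_{\mathbf{F}}^{(2)}$. By the definition of EAP, $C_{\mathbf{F}}^{(2)}$ is constant, say equal to $c^\star$, on $\mathcal{S}(F_2)$, and $C_{\mathbf{F}}^{(2)}(t)\ge c^\star$ for every $t$. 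Since $\mathcal{S}(F_1)\cup\mathcal{S}(F_2)=[T_a,T_f]$ with $\mathcal{S}(F_1)=[T_{1,a},T_{1,f}]$ an interval, any point of $[T_a,T_f]$ lying outside $\mathcal{S}(F_2)$ must lie in $\mathcal{S}(F_1)$. Hence it suffices to show $C_{\mathbf{F}}^{(2)}\equiv c^\star$ on each maximal gap $G$ of $\mathcal{S}(F_2)$ inside $[T_a,T_f]$, every such gap being contained in $\mathcal{S}(F_1)$.

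The key step is to show that queue~2 is empty at $\tau_1(t)$ for every $t\in G$. Because class-2 users enter queue~2 in the first-come-first-served order in which they leave queue~1, we have $A_2(\tau_1(t))=F_2(t)$; as $F_2$ is constant on $G$, no class-2 mass enters queue~2 during the image interval $\tau_1(G)$ (which is an interval since $\tau_1$ is continuous and nondecreasing). At the left end of $G$ queue~2 is empty: if $G$ is interior its left endpoint lies in $\mathcal{S}(F_2)$ and Lemma~\ref{lem_queue1_and_2_idle_inst1_eqpref}(1) applies directly, whereas if $G=[T_{1,a},T_{2,a})$ is the leftmost gap then $F_2\equiv 0$ there, so $A_2(\tau_1(t))=0$ and emptiness is immediate. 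Combining emptiness at the left end of $\tau_1(G)$ with the absence of arrivals over $\tau_1(G)$ forces $Q_2\equiv 0$ on the whole image, i.e. queue~2 is empty at $\tau_1(t)$ throughout $G$.

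With queue~2 empty at $\tau_1(t)$, and using that $\tau_1(t)\ge 0$ always (no one departs before the queues open), a class-2 user arriving at $t$ suffers no wait at queue~2, so $\tau_{\mathbf{F}}^{(2)}(t)=\tau_1(t)=\tau_{\mathbf{F}}^{(1)}(t)$ and hence $C_{\mathbf{F}}^{(2)}(t)=\tau_1(t)-\gamma t=C_{\mathbf{F}}^{(1)}(t)$ for every $t\in G$. By the EAP property for class~1, $C_{\mathbf{F}}^{(1)}$ is constant on $\mathcal{S}(F_1)\supseteq G$, so $C_{\mathbf{F}}^{(2)}$ is constant on $G$. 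Finally, $C_{\mathbf{F}}^{(2)}$ is continuous, since by Lemma~\ref{lem_eq_has_no_jump} the arrival profiles are absolutely continuous and therefore the induced queue-length and departure maps are continuous; letting $t$ tend to an endpoint of $G$ that lies in $\mathcal{S}(F_2)$ (or to $T_{2,a}$, respectively $T_{2,f}$, for a boundary gap) forces this constant to equal $c^\star$. Thus $C_{\mathbf{F}}^{(2)}\equiv c^\star$ on every gap, and with the iso-cost property on $\mathcal{S}(F_2)$ this yields $C_{\mathbf{F}}^{(2)}\equiv c^\star$ on all of $[T_a,T_f]$.

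I expect the main obstacle to be the second step: rigorously propagating the emptiness of queue~2 from the left endpoint of a gap across its entire image $\tau_1(G)$, and handling the boundary gaps and the endpoint-matching with $c^\star$ uniformly. This rests entirely on the identity $A_2\circ\tau_1=F_2$ and on Lemma~\ref{lem_queue1_and_2_idle_inst1_eqpref}(1); once queue~2 is known to remain empty over the gap, the identification $C_{\mathbf{F}}^{(2)}=C_{\mathbf{F}}^{(1)}$ and the final conclusion are routine.
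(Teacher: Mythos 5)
Your proof is correct and takes essentially the same route as the paper's: both reduce the claim to the gaps of $\mathcal{S}(F_2)$ inside $[T_a,T_f]$ (which lie in $\mathcal{S}(F_1)$) and use Lemma~\ref{lem_queue1_and_2_idle_inst1_eqpref} to conclude that queue 2 is empty over the image of each gap under $\tau_1$, so that $\tau_{\mathbf{F}}^{(2)}=\tau_1$ there. The only difference is the finish --- the paper computes $(C_{\mathbf{F}}^{(2)})^\prime(t)=\tau_1^\prime(t)-\gamma=0$ via Lemma~\ref{lem_rate_of_arrivals_inst1_eqpref}, while you identify $C_{\mathbf{F}}^{(2)}=C_{\mathbf{F}}^{(1)}$ on each gap and invoke the class-1 iso-cost property together with continuity --- and your explicit propagation of queue-2 emptiness from the left endpoint across the whole gap is, if anything, a more careful rendering of the step the paper states in one line.
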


\begin{lemma}\label{lem_inst1_eqpreF^{(2)}rate_less_than_mu2}
    In every EAP, if $t\in\mathcal{S}(F^{(1)})\cap\mathcal{S}(F^{(2)})$, then class 2 users will arrive at queue 2 at a maximum rate of $\mu_2$ at time $\tau_1(t)$. 
\end{lemma}
Proofs of Lemma \ref{lem_cost2_const_inst1_eqpref} and 
Lemma \ref{lem_inst1_eqpreF^{(2)}rate_less_than_mu2} are in  \ref{appndx:inst1_eqpref}. Lemma \ref{lem_cost2_const_inst1_eqpref} is proved  via exploiting Lemma \ref{lem_queue1_and_2_idle_inst1_eqpref} and the fact that the two classes have equal cost preferences. Proof of Lemma \ref{lem_inst1_eqpreF^{(2)}rate_less_than_mu2} is based on the observation that $A_2(\tau_1(t))=F^{(2)}(t)$. As a result, at time $\tau_1(t)$, class 2 users arrive at queue 2 at rate $A_2^\prime(\tau_1(t))=\frac{(F^{(2)})^\prime(t)}{\tau_1^\prime(t)}$, assuming that the derivatives exist. Following this, we use Lemma \ref{lem_rate_of_arrivals_inst1_eqpref} and (\ref{eq:derv_of_tau}) to upper bound the rate of arrival. \vspace{0.125in}

\noindent\textit{\textbf{Key steps of proof of Theorem \ref{mainthm_inst1_eqpref}:}}\hspace{0.05in}By Lemma \ref{lem_supp_are_intervals_inst1}, we consider candidates $\mathbf{F}=\{F^{(1)},F^{(2)}\}$ which are absolutely continuous with $\mathcal{S}(F^{(1)})=[T^{(1)}_a,T^{(1)}_f],~\mathcal{S}(F^{(1)})\cup\mathcal{S}(F^{(2)})=[T_a,T_f]$ and have arrival rates $(F^{(1)})^\prime(\cdot),~(F^{(2)})^\prime(\cdot)$ given by Lemma \ref{lem_rate_of_arrivals_inst1_eqpref}. We eliminate candidates which do not satisfy the structural properties of any EAP and narrow our search of an EAP to a smaller set of candidates. We follow this agenda separately over two subsets of candidates with:~\textbf{1)}~$T_f>T^{(1)}_f$~and~\textbf{2)}~$T_f=T^{(1)}_f$. Lemma \ref{lem:inst1eqprefcase1} and \ref{lem:inst1eqprefcase2} provides the necessary and sufficient condition for existence of EAPs of the mentioned two types. The statement of Theorem \ref{mainthm_inst1_eqpref} follows from these two lemmas. \vspace{-0.05in}

\begin{lemma}\label{lem:inst1eqprefcase1}
    If $\gamma^{(1)}=\gamma^{(2)}=\gamma$, there exists an EAP with $T_f>T^{(1)}_f$ if and only if $\Lambda^{(1)}<\left(\frac{\mu_1}{\mu_2}-1\right)\Lambda^{(2)}$, and if it exists, it will be unique with a closed form same as the joint arrival profile mentioned under case 1 of Theorem \ref{mainthm_inst1_eqpref}.
\end{lemma}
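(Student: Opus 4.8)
The plan is to follow the same elimination-of-candidates agenda used for Theorem~\ref{mainthm_inst1}: start from an arbitrary EAP with $T_f>T_{1,f}$, use the structural lemmas to reduce it to a single piecewise-linear candidate, extract a linear system for the four support boundaries, and finally read off the existence condition and verify the surviving candidate. First I would pin down the qualitative shape. Since $T_f>T_{1,f}$, there is a terminal sub-interval $(T_{1,f},T_f]$ on which only class~2 arrives. By Lemma~\ref{lem_supp_are_intervals_inst1} the sets $\mathcal{S}(F_1)=[T_{1,a},T_{1,f}]$ and $\mathcal{S}(F_1)\cup\mathcal{S}(F_2)=[T_a,T_f]$ are intervals; I would first rule out gaps in $\mathcal{S}(F_2)$, using Lemma~\ref{lem_queue1_and_2_idle_inst1_eqpref}(1) (a gap forces queue~2 empty at the corresponding image point) together with the cost-constancy of class~2 on $[T_a,T_f]$ from Lemma~\ref{lem_cost2_const_inst1_eqpref}, so that $\mathcal{S}(F_2)=[T_a,T_f]$. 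Lemma~\ref{lem_rate_of_arrivals_inst1_eqpref} then gives $F_2'=\mu_2\gamma$ on $[T_a,T_f]$ (the equality case, once queue~2 is shown engaged, which Lemma~\ref{lem_inst1_eqpref_2rate_less_than_mu2} supports by capping the class-2 feed rate into queue~2 at $\mu_2$ during the overlap) and $F_1'=\mu_1\gamma-F_2'=(\mu_1-\mu_2)\gamma$ on $[T_{1,a},T_{1,f}]$. Thus the only candidate is the piecewise-linear profile of case~1, with class~1's support nested strictly inside class~2's.

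Next I would set up the boundary equations. Two come from mass balance: $\mu_2\gamma(T_f-T_a)=\Lambda_2$ and $(\mu_1-\mu_2)\gamma(T_{1,f}-T_{1,a})=\Lambda_1$. The third uses Lemma~\ref{lem_queue1_and_2_idle_inst1_eqpref}(2): queue~1 is empty at $T_{1,f}$, which, on tracking $A_1=F_1+F_2$ against $\mu_1 t$ and using that class~2 arrives at rate $\mu_2\gamma<\mu_1$ after $T_{1,f}$, forces $\tau_1(T_f)=T_f$ and relates $T_{1,f}$ to the masses. The fourth comes from class~2's iso-cost condition evaluated at the endpoints of $[T_a,T_f]$: because queue~1 is empty at $T_f$, the last class~2 user passes straight to queue~2, and equating her cost with the first class~2 user's cost pins down $T_a$ and $T_f$; as a consistency check these solve to $T_a=-\frac{1-\gamma}{\gamma}\frac{\Lambda_2}{\mu_2}$ and $T_f=\frac{\Lambda_2}{\mu_2}$, i.e.\ exactly the single-$\mu_2$-queue equilibrium boundaries for a mass $\Lambda_2$ of class~2, independent of $\mu_1$ and $\Lambda_1$. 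Solving the resulting linear system returns the unique boundaries listed in case~1 of Theorem~\ref{mainthm_inst1_eqpref}.

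The existence condition then falls out by imposing the orderings needed for this candidate to be a genuine nested profile, namely $T_a<T_{1,a}$ and $T_{1,f}<T_f$. A short computation on the solved boundaries gives $T_f-T_{1,f}=\frac{1}{\mu_1-\mu_2\gamma}\left[\left(\frac{\mu_1}{\mu_2}-1\right)\Lambda_2-\Lambda_1\right]$ and, similarly, $T_{1,a}-T_a=\frac{(1-\gamma)\mu_1}{\gamma(\mu_1-\mu_2\gamma)}\left[\frac{\Lambda_2}{\mu_2}-\frac{\Lambda_1}{\mu_1-\mu_2}\right]$, both of which are positive precisely when $\Lambda_1<\left(\frac{\mu_1}{\mu_2}-1\right)\Lambda_2$; this yields necessity, and under the same inequality the two orderings hold so the candidate is well defined and feasible.

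It remains to verify that this surviving candidate is actually an EAP, and this is the step I expect to be the main obstacle. One must compute the induced length processes of both queues across the three regions (pre-overlap $[T_a,T_{1,a}]$, overlap $[T_{1,a},T_{1,f}]$, and terminal $[T_{1,f},T_f]$), confirm that queue~2 remains engaged over the window that makes $F_2'=\mu_2\gamma$ cost-stationary for class~2 (note that class~2 feeds queue~2 at rate $\mu_2$ during the overlap, by \eqref{eq:derv_of_tau} since $\tau_1'=\gamma$ there, and at rate $\mu_2\gamma$ afterward, so queue~2 holds a constant backlog and then drains), confirm queue~1 empties exactly at $T_{1,f}$, and check that both classes are iso-cost on their supports and strictly worse off outside. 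The bookkeeping is delicate because the position of time~$0$ relative to $T_{1,a}$ varies with the parameters, so queue~1 may or may not fully drain its initial backlog before class~1 starts; this is why, as in Theorem~\ref{mainthm_inst1}, the figures depict only one representative length-process structure. I would carry out this multi-case verification in the appendix, exactly as the paper does, which establishes sufficiency and hence the claimed uniqueness and closed form.
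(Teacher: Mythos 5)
Your overall plan coincides with the paper's: the same structural lemmas, essentially the same four boundary equations, the same extraction of the condition $\Lambda_1<\left(\frac{\mu_1}{\mu_2}-1\right)\Lambda_2$ from the ordering of the solved boundaries (your expressions for $T_f-T_{1,f}$ and $T_{1,a}-T_a$ are both correct and both reduce to that inequality), and the same deferral of the final verification to a queue-by-queue cost computation, which the paper indeed imports from the Type II argument of Lemma \ref{lem:inst1case2} with $\gamma_1=\gamma_2=\gamma$.

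However, one step does not work as you have justified it: the claim that gaps in $\mathcal{S}(F_2)$ are ruled out by Lemma \ref{lem_queue1_and_2_idle_inst1_eqpref}(1) together with Lemma \ref{lem_cost2_const_inst1_eqpref}. These two facts are mutually consistent with a gap. If $(t,t+\delta]$ is a gap in $\mathcal{S}(F_2)$ it must lie inside $\mathcal{S}(F_1)$ (the union of supports is an interval), queue 2 is empty at $\tau_1(t)$ and stays empty across the image of the gap since no class 2 mass enters it there, class 1 arrives at rate $\mu_1\gamma$ by Lemma \ref{lem_rate_of_arrivals_inst1_eqpref}, hence $\tau_1^\prime=\gamma$ and $(C_{\mathbf{F}}^{(2)})^\prime=\tau_1^\prime-\gamma=0$ across the gap: class 2 remains iso-cost and no local contradiction appears. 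This matters because your first mass-balance equation $\mu_2\gamma(T_f-T_a)=\Lambda_2$ (and likewise the class-1 equation, which needs $F_1^\prime=(\mu_1-\mu_2)\gamma$ throughout $\mathcal{S}(F_1)$) presupposes exactly the no-gap, constant-rate conclusion, so your derivation of $T_a$ and $T_f$ is circular as written. The paper breaks the circle by deriving $T_f$ from the queue-2 side rather than the arrival side: Lemmas \ref{lem_inst1_eqpref_2rate_less_than_mu2} and \ref{lem_queue1_and_2_idle_inst1_eqpref} show queue 2 is engaged on $(0,T_f)$ and empties at $T_f$, so throughput alone gives $T_f=\Lambda_2/\mu_2$; the iso-cost condition then gives $T_a=-\left(\frac{1}{\gamma}-1\right)T_f$; and only then does a pigeonhole argument, namely that $[T_a,T_f]$ has length exactly $\Lambda_2/(\mu_2\gamma)$ while Lemma \ref{lem_rate_of_arrivals_inst1_eqpref} caps $F_2^\prime$ at $\mu_2\gamma$, force $F_2^\prime\equiv\mu_2\gamma$ with no gaps. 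You mention the value $T_f=\Lambda_2/\mu_2$ only as a consistency check; promoting that throughput derivation to the first step and then applying the pigeonhole repairs the argument, after which everything else you wrote goes through.
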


\begin{lemma}\label{lem:inst1eqprefcase2}
    If $\gamma^{(1)}=\gamma^{(2)}=\gamma$, there exists an EAP with $T_f=T^{(1)}_f$ if and only if $\Lambda^{(1)}\geq\left(\frac{\mu_1}{\mu_2}-1\right)\Lambda^{(2)}$, and if it exists, the set of all EAPs with $T^{(1)}_f=T_f$ is the set of joint arrival profiles mentioned under case 2 of Theorem \ref{mainthm_inst1_eqpref}.
\end{lemma}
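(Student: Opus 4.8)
The plan is to follow the candidate-elimination strategy used for Theorem~\ref{mainthm_inst1}: by Lemma~\ref{lem_supp_are_intervals_inst1} and Lemma~\ref{lem_rate_of_arrivals_inst1_eqpref} I would restrict attention to absolutely continuous $\mathbf{F}=\{F_1,F_2\}$ with $\mathcal{S}(F_1)=[T_{1,a},T_{1,f}]$, $\mathcal{S}(F_1)\cup\mathcal{S}(F_2)=[T_a,T_f]$, arrival rates satisfying $F_1'+F_2'=\mu_1\gamma$ on $\mathcal{S}(F_1)$ and $F_2'\le\mu_2\gamma$ on $\mathcal{S}(F_2)$, and then impose the defining condition $T_f=T_{1,f}$, meaning class $1$ is (weakly) the last to finish. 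The core of the argument is to upgrade this to $\mathcal{S}(F_1)=[T_a,T_f]$: class $1$ must occupy the \emph{entire} interval and queue $2$ must stay empty throughout.

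First I would establish this structural claim, which I expect to be the main obstacle. Suppose for contradiction that $T_{2,a}<T_{1,a}$, i.e.\ class $2$ arrives alone on a nonempty prefix $[T_a,T_{1,a})$. There class $2$ enters queue $1$ at rate $\le\mu_2\gamma<\mu_1$ and hence leaves queue $1$ (entering queue $2$) at the full rate $\mu_1>\mu_2$, so queue $2$ accumulates a strictly positive backlog. By Lemma~\ref{lem_inst1_eqpref_2rate_less_than_mu2} this backlog can only be maintained or drained once class $1$ joins, but while it drains, class $1$ is still arriving and $C_{\mathbf{F}}^{(1)}$ stays at its minimal constant value on $\mathcal{S}(F_1)$. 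Using the equal-preference identity $C_{\mathbf{F}}^{(2)}(t)=C_{\mathbf{F}}^{(1)}(t)+W_2(\tau_1(t))$, a class $2$ user arriving at a time whose image under $\tau_1$ lands in the draining phase would then attain cost strictly below the iso-cost level on $\mathcal{S}(F_2)$, contradicting the EAP property. Such a class-$2$-alone stretch is harmless only at the \emph{end} of the horizon, which is precisely the regime $T_f>T_{1,f}$ handled in Lemma~\ref{lem:inst1eqprefcase1}. Hence $T_{1,a}=T_a$, class $2$ never floods queue $2$, and combining Lemma~\ref{lem_inst1_eqpref_2rate_less_than_mu2} with queue $2$ starting empty gives $Q_2\equiv0$ and $W_2\equiv0$.

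With $\mathcal{S}(F_1)=[T_a,T_f]$ secured, Lemma~\ref{lem_rate_of_arrivals_inst1_eqpref} gives $F_1'+F_2'=\mu_1\gamma$ on all of $[T_a,T_f]$, so queue $1$ is engaged on $[0,T_f]$ and, being empty at $T_f$ by Lemma~\ref{lem_queue1_and_2_idle_inst1_eqpref}, serves exactly $\mu_1 T_f=\Lambda_1+\Lambda_2$, whence $T_f=\frac{\Lambda_1+\Lambda_2}{\mu_1}$. Equating the class $1$ cost at the two endpoints (using $\tau_1(T_a)=0$ and $\tau_1(T_f)=T_f$) yields $-\gamma T_a=(1-\gamma)T_f$, i.e.\ $T_a=-\left(\frac{1}{\gamma}-1\right)\frac{\Lambda_1+\Lambda_2}{\mu_1}$, matching the boundaries in case $2$ of Theorem~\ref{mainthm_inst1_eqpref}. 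The necessity of the population condition then falls out of feasibility: since $F_2'\le\mu_2\gamma$ and $\int_{T_a}^{T_f}F_2'=\Lambda_2$, one needs $\Lambda_2\le\mu_2\gamma(T_f-T_a)=\frac{\mu_2}{\mu_1}(\Lambda_1+\Lambda_2)$, which rearranges to $\Lambda_1\ge\left(\frac{\mu_1}{\mu_2}-1\right)\Lambda_2$.

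For the converse and the full characterization I would assume $\Lambda_1\ge\left(\frac{\mu_1}{\mu_2}-1\right)\Lambda_2$ and verify that \emph{every} $\mathbf{F}$ meeting conditions 1--3 of case $2$ is an EAP. Since $F_1'+F_2'=\mu_1\gamma$ forces $\tau_1'=\gamma$, we get $\tau_1(t)=\gamma(t-T_a)$; the class $2$ inflow into queue $2$ is $F_2'/\gamma\le\mu_2$, so queue $2$ remains empty and $W_2\equiv0$, giving $C_{\mathbf{F}}^{(2)}(t)=C_{\mathbf{F}}^{(1)}(t)=\tau_1(t)-\gamma t\equiv-\gamma T_a$, constant on $[T_a,T_f]$ and strictly larger outside (the network is empty before $T_a$ and after $T_f$), so both classes are in equilibrium. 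Feasibility of at least one such profile is exactly the inequality just checked, the solution set is convex because conditions 1--3 are linear in $(F_1,F_2)$, and the earlier steps show any EAP with $T_f=T_{1,f}$ satisfies 1--3. The delicate point remains the exclusion of the class-$2$-alone prefix: it is the only place where the equal-preference coupling of the two queues must be used quantitatively, and where a naive iso-cost check \emph{along} the support (which the spurious profile does pass) has to be strengthened to a genuine global best-response comparison.
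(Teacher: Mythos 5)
Your proposal is correct and reaches the same characterization, but it replaces the paper's central step with a genuinely different mechanism. The paper first pins down $T_f=\frac{\Lambda_1+\Lambda_2}{\mu_1}$ (queue 1 busy on $[0,T_f)$ and empty at $T_f$ by Lemma~\ref{lem_queue1_and_2_idle_inst1_eqpref}) and $T_a=-\left(\frac{1}{\gamma}-1\right)T_f$ (via the class-2 iso-cost Lemma~\ref{lem_cost2_const_inst1_eqpref}), and only then forces $\mathcal{S}(F_1)=[T_a,T_f]$ by a packing argument: the total mass $\Lambda_1+\Lambda_2$ must fill an interval of length exactly $\frac{\Lambda_1+\Lambda_2}{\mu_1\gamma}$ while the combined arrival rate is at most $\mu_1\gamma$, with equality attainable only on $\mathcal{S}(F_1)$ by Lemma~\ref{lem_rate_of_arrivals_inst1_eqpref}; this forces $T_{1,a}=T_a$ and, via Lemma~\ref{lem_inst1_eqpref_2rate_less_than_mu2}, $Q_2\equiv 0$, with no need to analyze a hypothetical class-2-only prefix at all. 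You instead rule out a class-2-alone prefix directly by a best-response contradiction through queue 2's backlog, and then compute $T_f$ and $T_a$; this is also valid and closer in spirit to the elimination arguments used elsewhere in the paper, but it is longer and forces you to handle the transient behaviour of queue 2, which the counting argument sidesteps. The rest of your proof (necessity of $\Lambda_1\geq\left(\frac{\mu_1}{\mu_2}-1\right)\Lambda_2$ from $F_2'\leq\mu_2\gamma$, the verification that every profile satisfying conditions 1--3 of case 2 of Theorem~\ref{mainthm_inst1_eqpref} is an EAP, and convexity from linearity) matches the paper's.

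Two local slips in your backlog argument are worth fixing, though neither is fatal. First, ``class 2 enters queue 1 at rate $\leq\mu_2\gamma<\mu_1$ and hence leaves queue 1 at the full rate $\mu_1$'' is a non sequitur as stated: the outflow is $\mu_1$ only because the prefix lies (at least partly) before time zero, so a backlog accumulates in queue 1 before service begins and then drains at rate $\mu_1>\mu_2$ into queue 2. Second, the sign of your cost comparison is inverted: a class 2 user whose image under $\tau_1$ meets the positive backlog pays $C_{\mathbf{F}}^{(1)}(t)+W_2(\tau_1(t))$, which is strictly \emph{above} the level $c_1$ paid by the last class 2 user (for whom $Q_2(\tau_1(T_{2,f}))=0$ and $T_{2,f}\in\mathcal{S}(F_1)$), so it is the deviation to the end of the horizon that yields a strictly lower cost and breaks iso-cost on $\mathcal{S}(F_2)$. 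Once rephrased this way the contradiction goes through.
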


Below we sketch the proof of Lemma \ref{lem:inst1eqprefcase1} and \ref{lem:inst1eqprefcase2}. The remaining details are presented in  \ref{appndx:thm_inst1_eqpref}.

\noindent\textit{\textbf{Proof sketch of Lemma \ref{lem:inst1eqprefcase1}:}}\hspace{0.05in}\textbf{Identifying the support boundaries:}~First we identify the support boundaries $T_a, T_f$ of any EAP with $T_f>T^{(1)}_f$:

\begin{enumerate}
    \item[\textbf{1)}] Exploiting Lemma \ref{lem_inst1_eqpreF^{(2)}rate_less_than_mu2} and \ref{lem_queue1_and_2_idle_inst1_eqpref}, we argue that queue 2 has a positive waiting time in $(0,T_f)$ and empties at $T_f$. As a result, queue 2 serves the whole class 2 population in $(0,T_f)$, giving us $T_f=\frac{\Lambda^{(2)}}{\mu_2}$.

    \item[\textbf{2)}] Using Lemma \ref{lem_cost2_const_inst1_eqpref} we have, $C_{\mathbf{F}}^{(2)}(T_a)=C_{\mathbf{F}}^{(2)}(T_f)$. The class 2 user arriving at $T_a$ gets served at time zero causing $C_{\mathbf{F}}^{(2)}(T_a)=-\gamma T_a$. Since the network is empty at $T_f$, $C_{\mathbf{F}}^{(2)}(T_f)=(1-\gamma)T_f$. Therefore, $C_{\mathbf{F}}^{(2)}(T_a)=C_{\mathbf{F}}^{(2)}(T_f)$ implies $T_a=-\left(\frac{1}{\gamma}-1\right)T_f=-\left(\frac{1}{\gamma}-1\right)\frac{\Lambda^{(2)}}{\mu_2}$.

    \item[\textbf{3)}] The class 2 population of mass $\Lambda^{(2)}$ has to arrive in the interval $\left[T_a,T_f\right]$ of length $\frac{\Lambda^{(2)}}{\mu_2\gamma^{(2)}}$ at maximum rate of $\mu_2\gamma^{(2)}$ (by Lemma \ref{lem_rate_of_arrivals_inst1_eqpref}). This leaves us with only one possible arrival rate for class 2 users: $(F^{(2)})^\prime(t)=\mu_2\gamma^{(2)}$ in $[T_a,T_f]$. As a result, by Lemma \ref{lem_rate_of_arrivals_inst1_eqpref}, we must have $(F^{(1)})^\prime(t)=(\mu_1-\mu_2)\gamma$ in $[T^{(1)}_a,T^{(1)}_f]$, giving us  $T^{(1)}_f=T^{(1)}_a+\frac{\Lambda^{(1)}}{(\mu_1-\mu_2)\gamma}$.
    
    \item[\textbf{4)}] Since $T_a<0$, queue 1 starts serving from time zero and has positive waiting time in $[0,T^{(1)}_f)$. By Lemma \ref{lem_queue1_and_2_idle_inst1_eqpref}, queue 1 empties at $T^{(1)}_f$. Therefore, $\mu_1 T^{(1)}_f=F^{(1)}(T^{(1)}_f)+F^{(2)}(T^{(1)}_f)=\Lambda^{(1)}+\newline\mu_2\gamma\cdot(T^{(1)}_f-T_a)$. 
\end{enumerate}

\textbf{Getting the necessary condition:} Solving the above system of equations, we get $T_a=-\left(\frac{1}{\gamma}-1\right)\frac{\Lambda^{(2)}}{\mu_2}$, $T_f=\frac{\Lambda^{(2)}}{\mu_2}$, $T^{(1)}_a=\frac{1-\gamma}{\mu_1-\mu_2\gamma}\left[\Lambda^{(2)}-\frac{\mu_1}{(\mu_1-\mu_2)\gamma}\Lambda^{(1)}\right]$, $T^{(1)}_f=\frac{\Lambda^{(1)}+(1-\gamma)\Lambda^{(2)}}{\mu_1-\mu_2\gamma}$ and as a result, every EAP with $T_f>T^{(1)}_f$ has $T_a, T_f, T^{(1)}_a,T^{(1)}_f$ same as the ones we identified. For $T_f>T^{(1)}_f$, we need $\Lambda^{(1)}<\left(\frac{\mu_1}{\mu_2}-1\right)\Lambda^{(2)}$ and therefore, this is a necessary condition for existence of an EAP with $T_f>T^{(1)}_f$. \vspace{0.05in}

\textbf{Identifying the unique EAP:} If the necessary condition $\Lambda^{(1)}<\left(\frac{\mu_1}{\mu_2}-1\right)\Lambda^{(2)}$ is satisfied, it is easy to verify that, the support boundaries obtained above follow the ordering $T_a < T^{(1)}_a<T^{(1)}_f<T_f$. With this ordering satisfied, the only candidate with $T_f>T^{(1)}_f$, which qualifies to be an EAP is: $(F^{(1)})^\prime(t)=(\mu_1-\mu_2)\gamma$ in $[T^{(1)}_a,T^{(1)}_f]$ and $(F^{(2)})^\prime(t)=\mu_2\gamma$ in $[T_a,T_f]$, with $T_a,T_f,T^{(1)}_a,T^{(1)}_f$ same as we identified before. Moreover, the candidate obtained is same as the joint arrival profile under case 1 of Theorem \ref{mainthm_inst1_eqpref}. 

\textbf{Proving sufficiency of the obtained necessary condition:} Another interesting observation is, if $\Lambda^{(1)}<\left(\frac{\mu_1}{\mu_2}-1\right)\Lambda^{(2)}$, the obtained candidate with $T_f>T^{(1)}_f$ is same as the EAPs in cases 2b and 3b of Theorem \ref{mainthm_inst1}, respectively, upon taking limits $\gamma^{(1)}=\gamma,~\gamma^{(2)}\to\gamma+$ and $\gamma^{(2)}=\gamma,~\gamma^{(1)}\to \gamma+$. Proving that this candidate is an EAP follows by the argument used for proving the unique Type II candidate in Lemma \ref{lem:inst1case2} is an EAP (in  \ref{appndx:thm_inst1_uneqpref}), except replacing $T^{(2)}_a,T^{(2)}_f$ with $T_a,T_f$ and taking $\gamma^{(1)}=\gamma^{(2)}=\gamma$. Therefore, the statement of Lemma \ref{lem:inst1eqprefcase1} stands proved. \vspace{0.05in} \hfill\qedsymbol \\

\noindent\textit{\textbf{Proof sketch of Lemma \ref{lem:inst1eqprefcase2}:}}\hspace{0.05in}\textbf{Identifying $\mathbf{T_a}$:} Since $[0,T_f]\subseteq\mathcal{S}(F^{(1)})$, queue 1 must have a positive waiting time in $[0,T_f)$. By Lemma \ref{lem_queue1_and_2_idle_inst1_eqpref}, queue 1 empties at time $T_{f}$. Therefore we get $\mu_1 T_f=F^{(1)}(T_f)+F^{(2)}(T_f)=\Lambda^{(1)}+\Lambda^{(2)}$ implying $T_f=\frac{\Lambda^{(1)}+\Lambda^{(2)}}{\mu_1}$.

\textbf{Identifying $\mathbf{T_f}$:} By Lemma \ref{lem_cost2_const_inst1_eqpref}, we have $C_{\mathbf{F}}^{(2)}(T_a)=C_{\mathbf{F}}^{(2)}(T_f)$. Using the argument used in the proof sketch of Lemma \ref{lem:inst1eqprefcase1} for finding $T_a$, we get $T_a=-\left(\frac{1}{\gamma}-1\right)T_f=-\left(\frac{1}{\gamma}-1\right)\frac{\Lambda^{(1)}+\Lambda^{(2)}}{\mu_1}$. 

\textbf{Getting the necessary condition:} By Lemma \ref{lem_rate_of_arrivals_inst1_eqpref}, all class 2 users arrive in $[T_a,T_f]$ at a maximum rate of $\mu_2\gamma$. Therefore for existence of an EAP with $T_f=T^{(1)}_f$, we need the necessary condition $\mu_2\gamma\cdot(T_f-T_a)\geq\Lambda^{(2)}$, which after some manipulation gives us $\Lambda^{(1)}\geq\left(\frac{\mu_1}{\mu_2}-1\right)\Lambda^{(2)}$.

\textbf{Identifying the set of EAPs:} By Lemma \ref{lem_rate_of_arrivals_inst1_eqpref}, $\mu_1\gamma$ is the maximum rate at which the entire population of mass $\Lambda^{(1)}+\Lambda^{(2)}$ can arrive within the time interval $[T_a,T_f]$ of length $\frac{\Lambda^{(1)}+\Lambda^{(2)}}{\mu_1\gamma}$. Therefore, we must have $(F^{(1)})^\prime(t)+(F^{(2)})^\prime(t)=\mu_1\gamma$ in $[T_a,T_f]$. Also by Lemma \ref{lem_rate_of_arrivals_inst1_eqpref}, we have $(F^{(1)})^\prime(t)+(F^{(2)})^\prime(t)=\mu_1\gamma$ only in $\mathcal{S}(F^{(1)})=[T^{(1)}_a,T^{(1)}_f]\subseteq[T_a,T_f]$, which implies $T^{(1)}_a=T_a$. By Lemma \ref{lem_inst1_eqpreF^{(2)}rate_less_than_mu2}, class 2 users arrive at queue 2 from queue 1 in $[0,T_f]$ at a maximum rate of $\mu_2$. Hence queue 2 stays empty and as a result, by Lemma \ref{lem_rate_of_arrivals_inst1_eqpref}, $(F^{(2)})^\prime(t)\leq\mu_2\gamma$ for every $t\in[T_a,T_f]$. Therefore, our reduced set of candidates with $T_f=T^{(1)}_f$ will be the set of all joint arrival profiles $\mathbf{F}=\{F^{(1)},F^{(2)}\}$ satisfying:~~\textbf{1)}~$\mathcal{S}(F^{(1)})=[T_a,T_f]$ and $\mathcal{S}(F^{(2)})\subseteq[T_a,T_f]$,~for $i=1,2$ $F^{(i)}(T_f)=\Lambda^{(i)}$,~~\textbf{2)}~$(F^{(1)})^\prime(t)+(F^{(2)})^\prime(t)=\mu_1\gamma$, and ~~\textbf{3)}~$(F^{(2)})^\prime(t)\leq\mu_2\gamma$ for every $t\in[T_a,T_f]$ where $T_a=-\frac{1-\gamma}{\gamma}\frac{\Lambda^{(1)}+\Lambda^{(2)}}{\mu_1}$ and $T_f=\frac{\Lambda^{(1)}+\Lambda^{(2)}}{\mu_1}$, which is exactly the set mentioned under the case 2 of Theorem \ref{mainthm_inst1_eqpref}.   

\textbf{Proving sufficiency of the obtained necessary condition:}~It is easy to verify that, if the necessary condition $\Lambda^{(1)}\geq\left(\frac{\mu_1}{\mu_2}-1\right)\Lambda^{(2)}$ holds, the set of candidates with $T_f=T^{(1)}_f$ mentioned earlier is non-empty. Two elements of the set are the limits of the EAPs in cases 2a and 3a of Theorem \ref{mainthm_inst1}, respectively, when $\gamma^{(1)}=\gamma,~\gamma^{(2)}\to\gamma+$ and $\gamma^{(2)}=\gamma,~\gamma^{(1)}\to\gamma+$. After this, we prove that, if $\Lambda^{(1)}\geq\left(\frac{\mu_1}{\mu_2}-1\right)\cdot\Lambda^{(2)}$, every joint arrival profile $\mathbf{F}=\{F^{(1)},F^{(2)}\}$ in the obtained set of candidates is an EAP. This step requires exploiting Lemma \ref{lem_cost2_const_inst1_eqpref} and the fact that queue 2 stays empty when users are arriving by such a candidate and the details of it are in  \ref{appndx:thm_inst1_eqpref}. Hence, the statement of Lemma \ref{lem:inst1eqprefcase2} stands proved. \hfill\qedsymbol
\section{Heterogeneous Arrival System (HAS)}\label{sec_hetarrivals}

In this section, we consider the two groups arrive at different queues and depart from the same queue as illustrated in Table \ref{tab:all_2queue_instances}. Namely, the $i$-th group arrive at $i$-th queue and both the groups depart from the 2nd queue.  We consider the case $\gamma^{(1)}\neq\gamma^{(2)}$ and $\gamma^{(1)}=\gamma^{(2)}$ separately, since the later displays different behaviors. 

\subsection{Unequal Preferences $\gamma^{(1)}\neq\gamma^{(2)}$}\label{sec_inst2_uneqpref}

Theorem \ref{thm:HAS_uneqpref_brief} contains the main result of this section. Theorems \ref{mainthm_inst2_reg1} and \ref{mainthm_inst2_reg2} contain the detailed version of the main results, explicitly disclosing the closed forms of the EAP as a function of the instance parameters $\{(\Lambda^{(i)},\gamma^{(i)},\mu_i)\}_{i=1,2}$. 

\begin{theorem}\label{thm:HAS_uneqpref_brief}
    HAS with unequal preference, \textit{i.e.} $\gamma^{(1)}\neq\gamma^{(2)}$, has a unique EAP. Furthermore, there can be eight possible structures of the EAP depending on the instance parameters. 
\end{theorem}

As for HDS, here too we narrow down our search for EAP by exploiting its structural properties. The  detailed account of these structural properties as well as detailed proofs of results in this section are  given in  \ref{appndx:inst2_uneqpref}. The main results in this section are proved by arguments similar to those  used for proving Theorem \ref{mainthm_inst1} for HDS,
except that we now exploit a different set of structural properties and the no. of different regimes is significantly larger and diverse for HAS. 
In this section, we state the key results and after Theorem \ref{mainthm_inst2_reg2} in Remark \ref{rem:disj_support}, we discuss two interesting
cases:~~\textbf{1)}~where the EAP corresponds to arrivals in a stream coming in disjoint intervals 
and~~\textbf{2)}~where all the arrivals for class 2 arrive before time zero.

The following lemma identifies an important threshold property of  an EAP. Proof of Lemma \ref{lem_threshold_behav_inst2} is in  \ref{appndx:inst2_uneqpref} and follows a sequence of arguments similar to the one used for proving Lemma \ref{lem_threshold_behav_inst1}. 

\begin{lemma}[\textbf{Threshold Behavior}]\label{lem_threshold_behav_inst2}
    If $\gamma^{(1)}\neq\gamma^{(2)}$, in EAP, queue 2 serves the two classes over disjoint sets of time if and only if $\mu_1\geq\mu_2\gamma^{(2)}$.    
\end{lemma}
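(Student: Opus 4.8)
The plan is to mirror the proof of Lemma~\ref{lem_threshold_behav_inst1}: establish sufficiency by a direct rate-and-cost contradiction on a putative overlap, and establish necessity by assuming the served-sets are disjoint, invoking the HAS analogues of the structural lemmas (supports are intervals, and the relevant queue is idle at a support boundary), and exhibiting a profitable deviation. Throughout, I read ``queue~2 serves the two classes over disjoint sets of time'' as: the set of queue-2 times at which class~1 is present, $\tau_1(\mathcal{S}(F_1))$, and the set at which class~2 is present, $\mathcal{S}(F_2)$, have disjoint interiors; by FCFS this is equivalent to the service intervals being disjoint. The HAS-specific facts I use repeatedly are that $A_2(t)=F_1(\tau_1^{-1}(t))+F_2(t)$, that class~1 discharges into queue~2 at rate exactly $\mu_1$ whenever queue~1 is engaged (queue~1 serves only class~1), and that $\tau_{\mathbf F}^{(2)}(t)=\tau_2(t)$ while $C_{\mathbf F}^{(1)}(t)=\tau_2(\tau_1(t))-\gamma_1 t$.

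For sufficiency, suppose $\mu_1\ge\mu_2\gamma_2$ and, for contradiction, that the two served-sets overlap. Then there is a queue-2 interval $[s_1,s_2]$ on which both classes arrive with positive mass and, being served, queue~2 is engaged. Since a positive mass of class~2 arrives directly here, $[s_1,s_2]\subseteq\mathcal{S}(F_2)$, so class~2 iso-cost gives $\tau_2'=\gamma_2$, and hence by~(\ref{eq:derv_of_tau}) the total arrival rate satisfies $A_2'=\mu_2\gamma_2$ a.e.\ on $[s_1,s_2]$. Writing $A_2'=r_1+F_2'$ with $r_1$ the class-1 rate into queue~2, at the (positive-measure) co-arrival times we have $F_2'>0$, so $r_1<\mu_2\gamma_2\le\mu_1$. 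This forces queue~1 to be \emph{idle} at the corresponding source times $u=\tau_1^{-1}(s)$, because an engaged queue~1 would give $r_1=\mu_1\ge\mu_2\gamma_2>r_1$. With queue~1 idle we have $\tau_1(u)=u$ and $\tau_1'(u)=1$, so the class-1 iso-cost condition $(C_{\mathbf F}^{(1)})'(u)=\tau_2'(u)-\gamma_1=0$ gives $\tau_2'=\gamma_1$ at these queue-2 times, whereas the total-rate computation gave $\tau_2'=\gamma_2$ there, contradicting $\gamma_1\ne\gamma_2$. Hence no overlap. The only point needing extra care is when the source time $u\le 0$, i.e.\ class~1 reaches queue~2 only at time~$0$; this boundary situation is handled separately.

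For necessity I prove the contrapositive: if $\mu_1<\mu_2\gamma_2$ then the served-sets cannot be disjoint. Assume they are disjoint; by the HAS analogue of Lemma~\ref{lem_supp_are_intervals_inst1} they are adjacent intervals whose union is an interval, leaving two orderings at queue~2. The governing observation is that $\mu_1<\mu_2\gamma_2<\mu_2$, so class~1 feeds queue~2 at rate at most $\mu_1<\mu_2$; thus during any stretch in which queue~2 serves class~1 alone it cannot build a backlog and, starting empty, stays empty there. In the ordering where class~1 precedes class~2 at queue~2, this leaves an empty-queue-2 window at queue-2 times strictly earlier than class~2's support; a class~2 user deviating into that window departs immediately at cost $(1-\gamma_2)s$, strictly below its equilibrium cost $(1-\gamma_2)T_{2,a}$, a contradiction. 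In the ordering where class~2 precedes class~1, I use the HAS analogue of Lemma~\ref{lem_queue1_and_2_idle} (queue~2 empty at the end of class~2's service, queue~1 empty at the end of class~1's arrivals) together with the same rate comparison to produce a profitable deviation, again contradicting equilibrium. Combining the two directions yields the claimed equivalence.

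The main obstacle is the necessity direction, and specifically the two-queue coupling: unlike the single shared-arrival queue of HDS, here the shared queue is downstream, so controlling whether queue~1 is engaged or idle at the source times feeding a given queue-2 interval, and pinning down the exact emptiness of queue~2 at the boundary between the two served-sets, is delicate. These require first proving the HAS structural lemmas (served-sets are intervals, rate characterisations analogous to Lemma~\ref{lem_arrival_rates_inst1}, and queue-emptiness at support boundaries analogous to Lemma~\ref{lem_queue1_and_2_idle}), which carry the real work; once they are in place, the deviation arguments and the rate/cost contradictions above close the proof.
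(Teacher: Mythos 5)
Your proposal is correct and takes essentially the same route as the paper's proof (Lemmas \ref{lem_inst2_uneqpref_mixedarr_nec_cond} and \ref{lem_appndx_inst2_uneqpref_threshold_behav_suff_cond}): your sufficiency step deduces that queue 1 is idle at the co-arrival source times from the rate bound $r_1<\mu_2\gamma_2\leq\mu_1$ and then contradicts $\gamma_1\neq\gamma_2$, which is a harmless reordering of the paper's argument that first proves queue 1 engaged (Lemma \ref{lem_appndx_inst2_uneqpref_queu1_engaged_mixedarrival}) via the same $\gamma_1=\gamma_2$ contradiction and then contradicts $\mu_1\geq\mu_2\gamma_2$. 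Your necessity step likewise matches the paper's: the interval-support lemmas together with $T_{2,a}<0\leq\inf\mathcal{S}(Y_1)$ force the single ordering $T_{2,f}=T_{1,a}^+$ (so your ``class 1 precedes class 2'' case is vacuous, though including it is harmless), after which the drain rate $\tau_2^\prime\leq\mu_1/\mu_2<\gamma_2$ makes the last class-2 arrival strictly better off deviating to the time queue 2 empties.
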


\noindent\textit{\textbf{Proof sketch:}}\hspace{0.05in}Proof of Lemma \ref{lem_threshold_behav_inst2} is similar to that of Lemma \ref{lem_threshold_behav_inst1}. First we argue via a contradiction that if $\mu_1\geq\mu_2\gamma^{(2)}$, then queue 2 cannot serve the two classes together in an EAP. If queue 2 is serving the two classes together, we argue that, queue 1 must stay engaged in a neighbourhood of that time for class 1 to be iso cost (by Lemma \ref{lem_appndx_inst2_uneqpref_queu1_engaged_mixedarrival} in \ref{appndx:inst2_uneqpref}). As a result, class 1 users arrive from queue 1 to 2 at rate $\mu_1$. For class 2 have constant cost in that neighbourhood, we argue that, users of the two classes arrive at queue 2 at a combined rate of $\mu_2\gamma^{(2)}$. Since queue 2 serves a positive mass of both the classes together, the combined arrival rate must be strictly larger than the arrival rate of class 1 to queue 2, implying $\mu_2\gamma^{(2)}>\mu_1$ and contradicting $\mu_1\geq\mu_2\gamma^{(2)}$. The other direction is proved via contradiction by exploiting the structures of $\mathcal{S}(F^{(1)}),~\mathcal{S}(F^{(2)})$ in an EAP and showing that, if $\mu_1<\mu_2\gamma^{(2)}$ and queue 2 is serving the two classes over disjoint sets of times, we can find a user who can decrease her cost by arriving at a different time. \hfill\qedsymbol \\

\noindent\textbf{Specification of the EAP.}\hspace{0.05in}The two regimes $\mu_1<\mu_2\gamma^{(2)}$ and $\mu_1\geq\mu_2\gamma^{(2)}$ exhibit substantially different EAP structures, owing to the threshold behavior given by Lemma \ref{lem_threshold_behav_inst2}. For conciseness, we specify the unique EAP in these two regimes separately in Theorem \ref{mainthm_inst2_reg1} (for $\mu_1<\mu_2\gamma^{(2)}$) and \ref{mainthm_inst2_reg2} (for $\mu_1\geq\mu_2\gamma^{(2)}$). Proofs of these theorems (in  \ref{appndx:inst2_uneqpref}) have a structure similar to the proof of Theorem \ref{mainthm_inst1}.    

For the two classes $i=1,2$, we define the quantities $T_{i,a}=\inf \mathcal{S}(F_i)$ and $T_{i,f}=\sup \mathcal{S}(F_i)$. In an EAP, both the sets $\mathcal{S}(F^{(1)}),~\mathcal{S}(F^{(2)})$ are compact and therefore $T^{(1)}_a,T^{(1)}_f,T^{(2)}_a,T^{(2)}_f$ are all finite.\vspace{0.05in}

\noindent\textbf{\textit{Regime I}~~$\mu_1<\mu_2\gamma^{(2)}$}.~~We mention below the support boundaries of the unique EAP, which we will refer later in Theorem \ref{mainthm_inst2_reg1}: 
\begin{enumerate}
    \item[\textbf{1.}] If $\mu_1<\mu_2\gamma^{(2)}$ and $\Lambda^{(1)}\geq\max\left\{\frac{1-\gamma^{(2)}}{1-\gamma^{(1)}},1\right\}\cdot\frac{\mu_1}{\mu_2-\mu_1}\Lambda^{(2)}$, then 
    \begin{align}\label{eq_inst2_uneqpref_reg1_bdary_case1a}
        T^{(1)}_a&=-\frac{1-\gamma^{(1)}}{\gamma^{(1)}}\frac{\Lambda^{(1)}}{\mu_1}+\frac{1-\gamma^{(2)}}{\gamma^{(1)}}\frac{\Lambda^{(2)}}{\mu_2-\mu_1},~T^{(1)}_f=\frac{\Lambda^{(1)}}{\mu_1},~T^{(2)}_a=-\frac{1-\gamma^{(2)}}{\gamma^{(2)}}\frac{\Lambda^{(2)}}{\mu_2-\mu_1},~\text{and}~T^{(2)}_f=\frac{\Lambda^{(2)}}{\mu_2-\mu_1}.
    \end{align}
    \vspace{-0.05in}
    \item[\textbf{2.}] If $\mu_1<\mu_2\gamma^{(2)}$ and $\Lambda^{(1)}<\max\left\{\frac{1-\gamma^{(2)}}{1-\gamma^{(1)}},1\right\}\cdot\frac{\mu_1}{\mu_2-\mu_1}\Lambda^{(2)}$, then 
    \begin{enumerate}
        \item[\textbf{a)}] if $\gamma^{(1)}>\gamma^{(2)}$, then
        \begin{align}\label{eq_inst2_uneqpref_reg1_bdary_case1b}
            T^{(1)}_a&=\frac{1}{\mu_2}\left(\Lambda^{(2)}-\frac{1-\gamma^{(1)}}{1-\gamma^{(2)}}\frac{\mu_2-\mu_1}{\mu_1}\Lambda^{(1)}\right),~T^{(1)}_f=\frac{1}{\mu_2}\left(\Lambda^{(2)}+\frac{(\gamma^{(1)}-\gamma^{(2)})\mu_2+(1-\gamma^{(1)})\mu_1}{(1-\gamma^{(2)})\mu_1}\Lambda^{(1)}\right),\nonumber \\
        T^{(2)}_a&=-\frac{1}{\mu_2\gamma^{(2)}}\left((1-\gamma^{(1)})\Lambda^{(1)}+(1-\gamma^{(2)})\Lambda^{(2)}\right),~\text{and}~T^{(2)}_f=\frac{1}{\mu_2}\left(\Lambda^{(2)}+\frac{1-\gamma^{(1)}}{1-\gamma^{(2)}}\Lambda^{(1)}\right).
        \end{align}
        
        \item[\textbf{b)}] if $\gamma^{(1)}<\gamma^{(2)}$, then
        \begin{align}\label{eq_inst2_uneqpref_reg1_bdary_case2b}
            T^{(1)}_a&=-\left(\frac{\gamma^{(2)}}{\gamma^{(1)}}-1\right)\frac{\Lambda^{(1)}}{\mu_1},~T^{(1)}_f=\frac{\Lambda^{(1)}}{\mu_1},~T^{(2)}_a=-\frac{1-\gamma^{(2)}}{\gamma^{(2)}}\frac{\Lambda^{(1)}+\Lambda^{(2)}}{\mu_2},~\text{and}~T^{(2)}_f=\frac{\Lambda^{(1)}+\Lambda^{(2)}}{\mu_2}.
        \end{align}
    \end{enumerate}
\end{enumerate}

\begin{theorem}\label{mainthm_inst2_reg1}
    If $\gamma^{(1)}\neq\gamma^{(2)}$ and $\mu_1<\mu_2\gamma^{(2)}$, HAS has a unique EAP which has the following arrival rates and support boundaries:
    \begin{enumerate}
        \item[\textbf{1.}] If $\gamma^{(1)}>\gamma^{(2)}$: 
        \begin{enumerate}
            \item[\textbf{1a.}] when $\Lambda^{(1)}\geq\frac{1-\gamma^{(2)}}{1-\gamma^{(1)}}\frac{\mu_1}{\mu_2-\mu_1}\Lambda^{(2)}$: 
            \begin{align*}
                (F^{(1)})^\prime(t)&=\begin{cases} \frac{\mu_1\gamma^{(1)}}{\gamma^{(2)}}~~&\text{for}~t\in[T^{(1)}_a,\tau_1^{-1}(T^{(2)}_f)],\\ \mu_1\gamma^{(1)}~~&\text{for}~t\in[\tau_1^{-1}(T^{(2)}_f),T^{(1)}_f],
                \end{cases}~\text{and}~ (F^{(2)})^\prime(t)=\begin{cases} \mu_2\gamma^{(2)}~~&\text{for}~t\in[T^{(2)}_a,0],\\ \mu_2\gamma^{(2)}-\mu_1~~&\text{for}~t\in[0,T^{(2)}_f],\\ \end{cases}
            \end{align*}
            where $T^{(1)}_a,T^{(1)}_f,T^{(2)}_a$ and $T^{(2)}_f$ are given by (\ref{eq_inst2_uneqpref_reg1_bdary_case1a}).    
            \item[\textbf{1b.}] when $\Lambda^{(1)}<\frac{1-\gamma^{(2)}}{1-\gamma^{(1)}}\frac{\mu_1}{\mu_2-\mu_1}\Lambda^{(2)}$: 
            \begin{align*}
                (F^{(1)})^\prime(t)&=\begin{cases} \frac{\mu_1\gamma^{(1)}}{\gamma^{(2)}}~~&\text{for}~t\in[T^{(1)}_a,\tau_1^{-1}(T^{(2)}_f)],~\\ \mu_1\gamma^{(1)}~~&\text{for}~t\in[\tau_1^{-1}(T^{(2)}_f),T^{(1)}_f],\\ \end{cases}~\text{and}~(F^{(2)})^\prime(t)=\begin{cases} \mu_2\gamma^{(2)}~~&\text{for}~t\in[T^{(2)}_a,T^{(1)}_a],\\ \mu_2\gamma^{(2)}-\mu_1~~&\text{for}~t\in[T^{(1)}_a,T^{(2)}_f],\\ \end{cases}    
            \end{align*}
            where $T^{(1)}_a,T^{(1)}_f,T^{(2)}_a$ and $T^{(2)}_f$ are given by (\ref{eq_inst2_uneqpref_reg1_bdary_case1b}).     
        \end{enumerate}
        
        \item[\textbf{2.}] If $\gamma^{(1)}<\gamma^{(2)}$:
        \begin{enumerate}
             \item[\textbf{2a.}] when $\Lambda^{(1)}\geq\frac{\mu_1}{\mu_2-\mu_1}\Lambda^{(2)}$, the EAP has closed form same as case 1a. 
            \item[\textbf{2b.}] when $\Lambda^{(1)}<\frac{\mu_1}{\mu_2-\mu_1}\Lambda^{(2)}$:
            \begin{align*}
                 (F^{(1)})^\prime(t)&=\frac{\mu_1\gamma^{(1)}}{\gamma^{(2)}}~~\text{for}~t\in[T^{(1)}_a,T^{(1)}_f],~\text{and}~(F^{(2)})^\prime(t)=\begin{cases} \mu_2\gamma^{(2)}~~&\text{for}~t\in[T^{(2)}_a,0]\cup[T^{(1)}_f,T^{(2)}_f],\\ \mu_2\gamma^{(2)}-\mu_1~~&\text{for}~t\in[0,T^{(1)}_f], \end{cases}   
            \end{align*}
            where $T^{(1)}_a,T^{(1)}_f,T^{(2)}_a$ and $T^{(2)}_f$ are given by (\ref{eq_inst2_uneqpref_reg1_bdary_case2b}).
        \end{enumerate}
    \end{enumerate}
\end{theorem}

\begin{remark}
    \emph{Illustrative EAPs and resulting queue length processes of cases 1a, 1b, 2a and 2b of Theorem \ref{mainthm_inst2_reg1} are illustrated, respectively, in Figures \ref{fig:inst2reg1case11}, \ref{fig:inst2reg1case12}, \ref{fig:inst2reg1case21} and \ref{fig:inst2reg1case22}. Structure of the queue length processes in some regimes might vary depending on the support boundaries and related quantities. We have illustrated only one possible structure of the queue length process in those regimes and mentioned the conditions on the support boundaries for the attainment of that structure in the caption. \textbf{\textcolor{red}{Red}} and \textbf{\textcolor{blue}{blue}} respectively denote class 1 and 2 populations. In the plot on right-top, the \textbf{black} dashed line represents the total waiting mass of the two classes in queue 2. The same is true for the illustrative EAPs and resulting queue length processes referred to in Theorem \ref{mainthm_inst2_reg2}.}
\end{remark}

\begin{figure}[h]
    \centering
    \includegraphics[width=12cm]{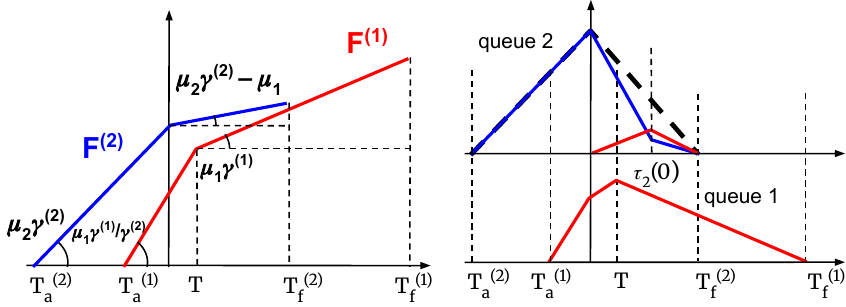}
    \caption{Illustrative EAP (left) and queue length process (right) for HAS with $\mu_1<\mu_2\gamma^{(2)}$ under case 1a of Theorem \ref{mainthm_inst2_reg1} and the assumption $T=T^{(1)}_a+\frac{\gamma^{(2)}}{\gamma^{(1)}}T^{(2)}_f>0$ on (\ref{eq_inst2_uneqpref_reg1_bdary_case1a}). $T\overset{def.}{=}\tau_1^{-1}(T^{(2)}_f)$. }
    \label{fig:inst2reg1case11}
\end{figure}

\begin{figure}[h]
    \centering
    \includegraphics[width=12cm]{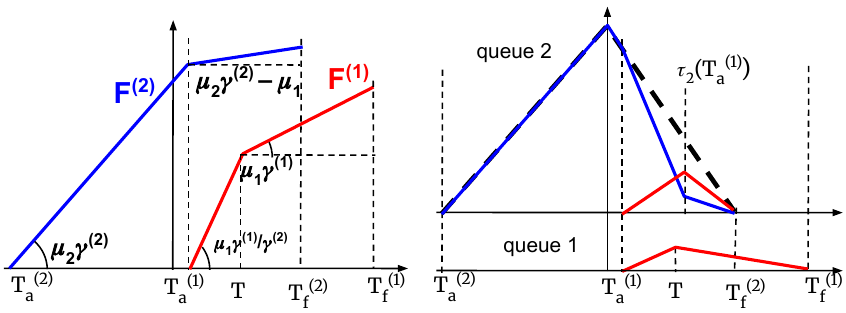}
    \caption{Illustrative EAP (left) and queue length process (right) for HAS with $\mu_1<\mu_2\gamma^{(2)}$ under case 1b of Theorem \ref{mainthm_inst2_reg1}. $T\overset{def.}{=}\tau_1^{-1}(T^{(2)}_f)$.  }
    \label{fig:inst2reg1case12}
\end{figure}

\begin{figure}[h]
    \centering
    \includegraphics[width=12cm]{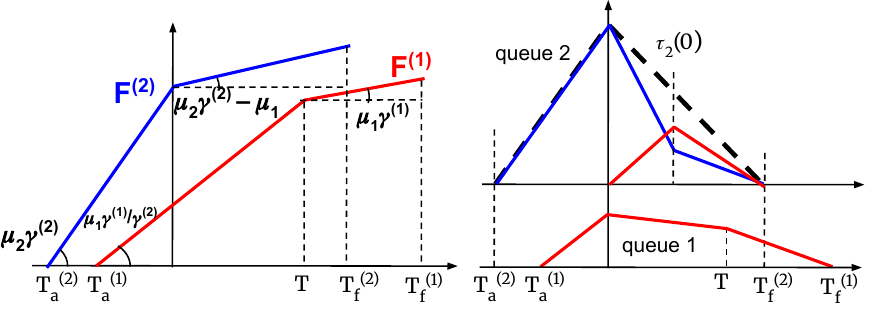}
    \caption{Illustrative EAP (left) and queue length process (right) for HAS with $\mu_1<\mu_2\gamma^{(2)}$ under case 2a of Theorem \ref{mainthm_inst2_reg2} and the assumption  $T=T^{(1)}_a+\frac{\gamma^{(2)}}{\gamma^{(1)}}T^{(2)}_f>0$ on (\ref{eq_inst2_uneqpref_reg1_bdary_case1a}). $T\overset{def.}{=}\tau_1^{-1}(T^{(2)}_f)$. }
    \label{fig:inst2reg1case21}
\end{figure}

\begin{figure}[h]
    \centering
    \includegraphics[width=12cm]{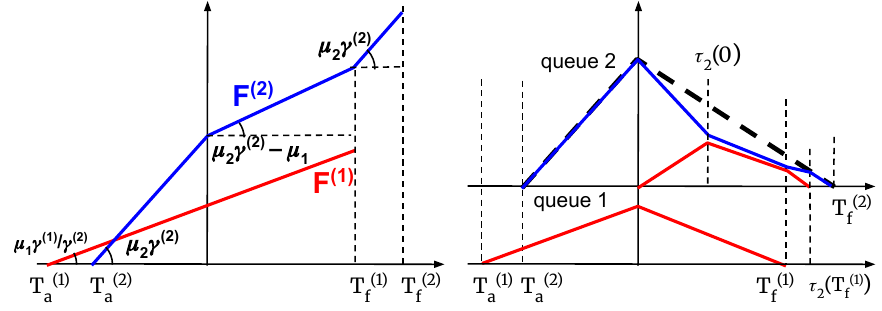}
    \caption{Illustrative EAP for HAS with $\mu_1<\mu_2\gamma^{(2)}$ under case 2b  of Theorem \ref{mainthm_inst2_reg2} and the assumption $\tau_2(0)=-\gamma^{(2)} T^{(2)}_a<T^{(1)}_f$ on (\ref{eq_inst2_uneqpref_reg1_bdary_case2b}). Mass of class 2 waiting in queue 2 can increase in $[T^{(1)}_f,\tau_2(T^{(1)}_f)]$ if $\mu_2\gamma^{(2)}>\mu_2-\mu_1/\gamma^{(2)}$. }
    \label{fig:inst2reg1case22}
\end{figure}

\noindent\textbf{\textit{Regime II}~~$\mu_1\geq\mu_2\gamma^{(2)}$}.~~The quantity $T\overset{def}{=}\inf\{t>0~\vert~Q_2(t)=0\}$, representing the time at which queue 2 empties for the first time, is necessary to describe the unique EAP in Theorem \ref{mainthm_inst2_reg2}. 

We mention below the support boundaries and $T$ of the unique EAP, which we will refer later in the statement of Theorem \ref{mainthm_inst2_reg2}: 
\begin{enumerate}
    \item[\textbf{1.}] If $\mu_2\gamma^{(1)}>\mu_1\geq\mu_2\gamma^{(2)}$ and $\Lambda^{(1)}\geq \frac{\mu_1}{(1-\gamma^{(1)})\mu_2}\Lambda^{(2)}$; or $\mu_1\geq\mu_2\cdot\max\{\gamma^{(1)},\gamma^{(2)}\}$ and $\left(\frac{\mu_2}{\mu_1}-1\right)\Lambda^{(1)}>\Lambda^{(2)}$, then
    \begin{align}\label{eq_inst2_uneqpref_reg2_bdary_case1a}
            T^{(1)}_a&=\frac{\Lambda^{(2)}}{\mu_2\gamma^{(1)}}-\left(\frac{1}{\gamma^{(1)}}-1\right)\frac{\Lambda^{(1)}}{\mu_1},~T^{(1)}_f=\frac{\Lambda^{(1)}}{\mu_1},~T=\frac{\Lambda^{(2)}}{\mu_2-\mu_1},~T^{(2)}_a=-\frac{\Lambda^{(2)}}{\mu_2\gamma^{(2)}},~\text{and}~T^{(2)}_f=0.
    \end{align}
    \item[\textbf{2.}] If $\mu_2\gamma^{(1)}>\mu_1\geq\mu_2\gamma^{(2)}$ and $\Lambda^{(1)}<\frac{\mu_1}{(1-\gamma^{(1)})\mu_2}\Lambda^{(2)}$, then
    \begin{align}\label{eq_inst2_uneqpref_reg2_bdary_case1b}
            T^{(1)}_a=T^{(2)}_f&=\frac{1}{\mu_2}\left(\Lambda^{(2)}-\frac{(1-\gamma^{(1)})\mu_2}{\mu_1}\Lambda^{(1)}\right),~T^{(1)}_f=\gamma^{(1)}\frac{\Lambda^{(1)}}{\mu_1}+\frac{\Lambda^{(2)}}{\mu_2},\nonumber\\
            T&=\frac{\Lambda^{(2)}}{\mu_2}+\frac{(1-\gamma^{(1)})\Lambda^{(1)}}{\mu_2-\mu_1},~\text{and}~T^{(2)}_a=-\left(\frac{1}{\gamma^{(2)}}-1\right)\frac{\Lambda^{(2)}}{\mu_2}-(1-\gamma^{(1)})\frac{\Lambda^{(1)}}{\mu_1}.
    \end{align}
    \item[\textbf{3.}] If $\mu_1\geq\mu_2\cdot\max\{\gamma^{(1)},\gamma^{(2)}\}$ and $\left(\frac{\mu_2}{\mu_1}-1\right)\Lambda^{(1)}\leq\Lambda^{(2)}\leq\left(\frac{1}{\max\{\gamma^{(1)},\gamma^{(2)}\}}-1\right)\Lambda^{(1)}$:
    \begin{align}\label{eq_inst2_uneqpref_reg2_bdary_case2b}
            T^{(1)}_a&=\frac{1}{\mu_2}\left(\Lambda^{(2)}-\left(\frac{1}{\gamma^{(1)}}-1\right)\Lambda^{(1)}\right),~T=T^{(1)}_f=\frac{\Lambda^{(1)}+\Lambda^{(2)}}{\mu_2},~T^{(2)}_a=-\frac{\Lambda^{(2)}}{\mu_2\gamma^{(2)}},~\text{and}~T^{(2)}_f=0.
    \end{align}
    \item[\textbf{4.}] If $\mu_1\geq\mu_2\gamma^{(1)}>\mu_2\gamma^{(2)}$ and $\Lambda^{(2)}>\left(\frac{1}{\gamma^{(1)}}-1\right)\Lambda^{(1)}$:
    \begin{align}\label{eq_inst2_uneqpref_reg2_bdary_case2c}
        T^{(1)}_a=T^{(2)}_f&=-\left(\frac{1}{\gamma^{(1)}}-1\right)\frac{\Lambda^{(1)}}{\mu_2}+\frac{\Lambda^{(2)}}{\mu_2},~T^{(1)}_f=T=\frac{\Lambda^{(1)}+\Lambda^{(2)}}{\mu_2},~\text{and} \nonumber\\
        T^{(2)}_a&=-\left(\frac{1}{\gamma^{(1)}}-1\right)\frac{\Lambda^{(1)}}{\mu_2}-\left(\frac{1}{\gamma^{(2)}}-1\right)\frac{\Lambda^{(2)}}{\mu_2}.
    \end{align}
    \item[\textbf{5.}] If $\mu_1\geq\mu_2\gamma^{(2)}>\mu_2\gamma^{(1)}$ and $\Lambda^{(2)}>\left(\frac{1}{\gamma^{(2)}}-1\right)\Lambda^{(1)}$:
    \begin{align}\label{eq_inst2_uneqpref_reg2_bdary_case3c}
        T^{(1)}_a&=-\left(\frac{1}{\gamma^{(1)}}-\frac{1}{\gamma^{(2)}}\right)\frac{\Lambda^{(1)}}{\mu_2},~T^{(1)}_f=\frac{\Lambda^{(1)}}{\mu_2\gamma^{(2)}},~T=T^{(2)}_f=\frac{\Lambda^{(1)}+\Lambda^{(2)}}{\mu_2},~\text{and}~T^{(2)}_a=-\frac{1-\gamma^{(2)}}{\gamma^{(2)}}\frac{\Lambda^{(1)}+\Lambda^{(2)}}{\mu_2}.
    \end{align}
\end{enumerate}
\begin{theorem}\label{mainthm_inst2_reg2}
    If $\mu_1\geq\mu_2\gamma^{(2)}$ and $\gamma^{(1)}\neq\gamma^{(2)}$, HAS has a unique EAP. Moreover, the EAP has the following arrival rates and support boundaries:
    \begin{enumerate}
        \item[\textbf{1.}] If $\mu_2\gamma^{(1)}>\mu_1\geq\mu_2\gamma^{(2)}$:
        \begin{enumerate}
            \item[\textbf{1a.}] when $\Lambda^{(1)}\geq \frac{\mu_1}{(1-\gamma^{(1)})\mu_2}\Lambda^{(2)}$: 
            \begin{align*}
             (F^{(1)})^\prime(t)&=\begin{cases} \mu_2\gamma^{(1)}~~&\text{for}~t\in[T^{(1)}_a,\tau_1^{-1}(T)],\\ \mu_1\gamma^{(1)}~~&\text{for}~t\in[\tau_1^{-1}(T),T^{(1)}_f], \end{cases}~\text{and}~(F^{(2)})^\prime(t)=\mu_2\gamma^{(2)}~~\text{for}~t\in\left[T^{(2)}_a,T^{(2)}_f\right]   
            \end{align*}
            where $T^{(1)}_a,T^{(1)}_f,T^{(2)}_a,T^{(2)}_f$ and $T$ are in (\ref{eq_inst2_uneqpref_reg2_bdary_case1a}).
            
            \item[\textbf{1b.}] when $\Lambda^{(1)}< \frac{\mu_1}{(1-\gamma^{(1)})\mu_2}\Lambda^{(2)}$, 
            \begin{align*}
             (F^{(1)})^\prime(t)&=\begin{cases} \mu_2\gamma^{(1)}~~&\text{for}~t\in[T^{(1)}_a,\tau_1^{-1}(T)],\\ \mu_1\gamma^{(1)}~~&\text{for}~t\in[\tau_1^{-1}(T),T^{(1)}_f], \end{cases}~\text{and}~(F^{(2)})^\prime(t)=\mu_2\gamma^{(2)}~~~\text{for}~t\in[T^{(2)}_a,T^{(2)}_f]   
            \end{align*}
            where $T^{(1)}_a,T^{(1)}_f,T^{(2)}_a,T^{(2)}_f$ and $T$ are in (\ref{eq_inst2_uneqpref_reg2_bdary_case1b}).
        \end{enumerate}
        
        \item[\textbf{2.}] If $\mu_1\geq\mu_2\gamma^{(1)}>\mu_2\gamma^{(2)}$:
        \begin{enumerate}
            \item[\textbf{2a.}] when $\left(\frac{\mu_2}{\mu_1}-1\right)\Lambda^{(1)}>\Lambda^{(2)}$, the EAP has a closed form same as case 1a. 
            
            \item[\textbf{2b.}] when $\left(\frac{1}{\gamma^{(1)}}-1\right)\Lambda^{(1)}\geq\Lambda^{(2)}\geq\left(\frac{\mu_2}{\mu_1}-1\right)\Lambda^{(1)}$, \begin{align*}
            (F^{(1)})^\prime(t)&=\mu_2\gamma^{(1)}~~\text{for}~t\in[T^{(1)}_a,T^{(1)}_f]~\text{and}~(F^{(2)})^\prime(t)=\mu_2\gamma^{(2)}~~\text{for}~t\in\left[T^{(2)}_a,T^{(2)}_f\right]    
            \end{align*} 
            where $T^{(1)}_a,T^{(1)}_f,T^{(2)}_a,T^{(2)}_f$ and $T$ are in (\ref{eq_inst2_uneqpref_reg2_bdary_case2b}). \vspace{0.05in} 
            
            \item[\textbf{2c.}] when $\Lambda^{(2)}>\left(\frac{1}{\gamma^{(1)}}-1\right)\Lambda^{(1)}$, 
            \begin{align*}
             (F^{(1)})^\prime(t)&=\mu_2\gamma^{(1)}~~~\text{for}~t\in[T^{(1)}_a,T^{(1)}_f]~\text{and}~(F^{(2)})^\prime(t)=\mu_2\gamma^{(2)}~~~\text{for}~t\in[T^{(2)}_a,T^{(2)}_f]   
            \end{align*}
            where $T^{(1)}_a,T^{(1)}_f,T^{(2)}_a,T^{(2)}_f$ and $T$ are in (\ref{eq_inst2_uneqpref_reg2_bdary_case2c}).
        \end{enumerate} 
        
        \item[\textbf{3.}] If $\mu_1\geq\mu_2\gamma^{(2)}>\mu_2\gamma^{(1)}$:
        \begin{enumerate}
            \item[\textbf{3a.}] when $\left(\frac{\mu_2}{\mu_1}-1\right)\Lambda^{(1)}>\Lambda^{(2)}$, the EAP has a closed form same as case 1a.\vspace{0.05in}
        
            \item[\textbf{3b.}] when $\left(\frac{1}{\gamma^{(2)}}-1\right)\Lambda^{(1)}\geq\Lambda^{(2)}\geq\left(\frac{\mu_2}{\mu_1}-1\right)\Lambda^{(1)}$, the EAP has a closed form same as case 2b.\vspace{0.05in}
            
            \item[\textbf{3c.}] when $\Lambda^{(2)}>\left(\frac{1}{\gamma^{(2)}}-1\right)\Lambda^{(1)}$, 
            \begin{align*}
            (F^{(1)})^\prime(t)&=\mu_2\gamma^{(1)}~~\text{for}~t\in[T^{(1)}_a,T^{(1)}_f]~\text{and}~(F^{(2)})^\prime(t)=\mu_2\gamma^{(2)}~~\text{for}~t\in[T^{(2)}_a,0]\cup[T^{(1)}_f,T^{(2)}_f]    
            \end{align*}
            where $T^{(1)}_a,T^{(1)}_f,T^{(2)}_a,T^{(2)}_f$ and $T$ are in (\ref{eq_inst2_uneqpref_reg2_bdary_case3c}).
        \end{enumerate}
    \end{enumerate}
\end{theorem}
 
\begin{remark}
    \emph{Illustrative EAPs and resulting queue length processes of cases 1a, 1b, 2a, 2b, 2c, 3a, 3b,  and 3c of Theorem \ref{mainthm_inst2_reg2} are illustrated, respectively, in Figures \ref{fig:inst2reg2case11}, \ref{fig:inst2reg2case12}, \ref{fig:inst2reg2case21}, \ref{fig:inst2reg2case22}, \ref{fig:inst2reg2case23}, \ref{fig:inst2reg2case31}, \ref{fig:inst2reg2case32} and \ref{fig:inst2reg2case33}.}
\end{remark}  
\begin{remark}\label{rem:disj_support}
    \emph{\textbf{\textit{(Intuition behind broken support in EAPs for instances under case 3c)}}~We show by Lemma \ref{lem_inst2_uneqpref_queue1idle} and \ref{lem_inst2_uneqpref_reg2_case12_sign} in \ref{appndx:inst2_uneqpref} that in every EAP of  HAS under case 3 ($\mu_1\geq\mu_2\gamma^{(2)}>\mu_2\gamma^{(1)}$) of Theorem \ref{mainthm_inst2_reg2}, class 1 users arrive from queue 1 to queue 2 in $[0,T^{(1)}_f]$ at rate $\mu_1$. As a result, by Lemma \ref{lem_threshold_behav_inst2}, class 2 users cannot arrive in $[0,T^{(1)}_f]$. However, in every EAP, class 2 users start arriving before time zero (otherwise queue 2 stays empty at time zero and every class 2 user can be strictly better off arriving at time zero). Therefore there are two possible ways in which class 2 users can arrive in any EAP: \textbf{1)} the whole class 2 population arrives before time zero, or, \textbf{2)} a fraction of the class 2 population arrives after $T^{(1)}_f$ and the rest arrives before time zero. While proving the existence of a unique EAP in case 3 of Theorem \ref{mainthm_inst2_reg2}, we argue that, if $\Lambda^{(2)}$ is under an identified threshold of $\left(\frac{1}{\gamma^{(2)}}-1\right)\Lambda^{(1)}$, the whole class 2 population arrives before time zero and waits in queue 2 at time zero, as can be seen for cases 3a and 3b, respectively, in Figure \ref{fig:inst2reg2case31} and \ref{fig:inst2reg2case32}. After $\Lambda^{(2)}$ crosses that threshold, a fraction of the class 2 population of mass $\gamma^{(2)}\left[\Lambda^{(2)}-\left(\frac{1}{\gamma^{(2)}}-1\right)\Lambda^{(1)}\right]$ arrives after $T^{(1)}_f$ and the rest arrives before time zero, as can be seen for case 3c in Figure \ref{fig:inst2reg2case33}.}
\end{remark}

\begin{figure}[h]
    \centering
    \includegraphics[width=12cm]{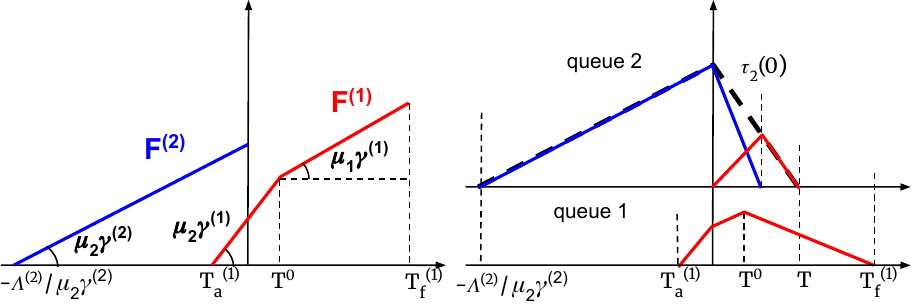}
    \caption{Illustrative EAP (left) and resulting queue length process (right) of HAS with $\mu_1\geq\mu_2\gamma^{(2)}$ under case 1a of Theorem \ref{mainthm_inst2_reg2} with the assumption $T^0>0$, where $T^0\overset{def.}{=}\tau_1^{-1}(T)$.  }
    \label{fig:inst2reg2case11}
\end{figure}

\begin{figure}[h]
    \centering
    \includegraphics[width=12cm]{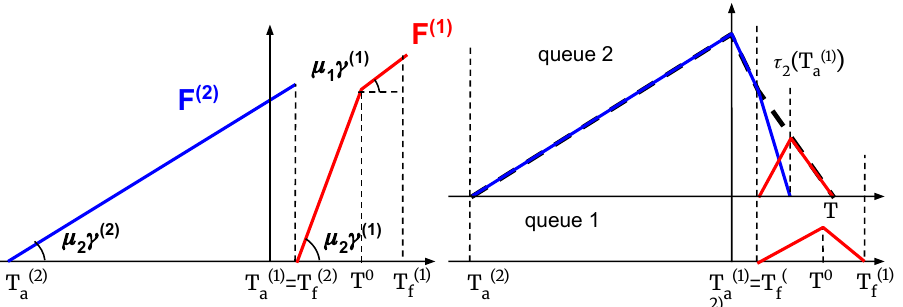}
    \caption{Illustrative EAP (left) and resulting queue length process (right) of HAS with $\mu_2\gamma^{(1)}>\mu_1\geq\mu_2\gamma^{(2)}$ under case 1b of Theorem \ref{mainthm_inst2_reg2}. $T^0\overset{def.}{=}\tau_1^{-1}(T)$. }
    
    \label{fig:inst2reg2case12}.
\end{figure}

\begin{figure}[h]
    \centering
    \includegraphics[width=12cm]{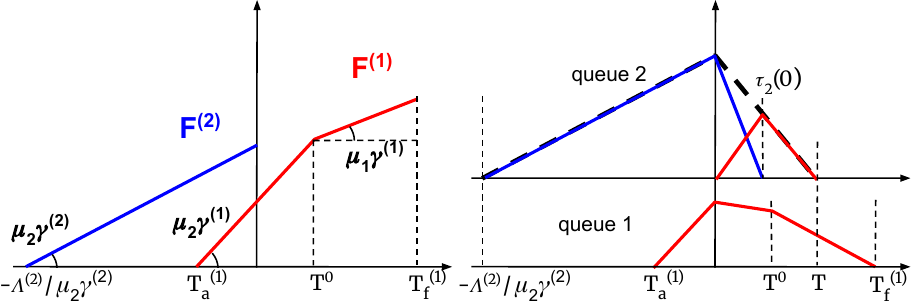}
    \caption{Illustrative EAP (left) and resulting queue length process (right) of HAS with $\mu_1\geq\mu_2\gamma^{(1)}>\mu_2\gamma^{(2)}$ under case 2a of Theorem \ref{mainthm_inst2_reg2} with the assumption $T^0>0$ where $T^0\overset{def.}{=}\tau_1^{-1}(T)$. }
    \label{fig:inst2reg2case21}
\end{figure}

\begin{figure}[h]
    \centering
    \includegraphics[width=12cm]{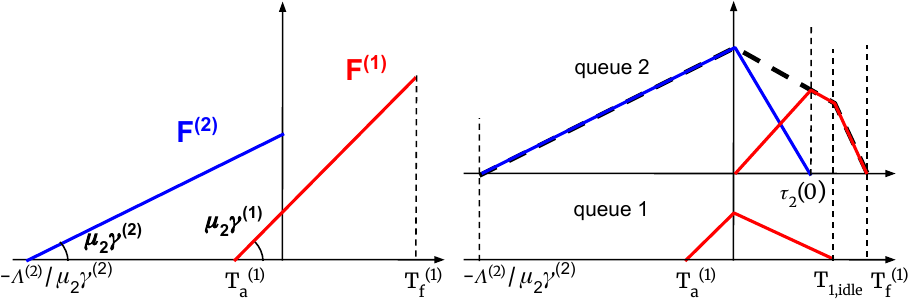}
    \caption{Illustrative EAP (left) and queue length process (right) of HAS with $\mu_1\geq\mu_2\gamma^{(1)}>\mu_2\gamma^{(2)}$ under case 2b of Theorem \ref{mainthm_inst2_reg2} with assumption $\tau_2(0)<T_{1,idle}$, where $T_{1,idle}$ is the time at which queue 1 empties after the network starts at time zero. In the illustration, we assumed $\mu_1<\mu_2$, causing mass of class 1 users waiting in queue 2 to decrease in $[\tau_2(0),T_{1,idle}]$, but it can be non-decreasing if $\mu_1\geq\mu_2$. For the same assumption, the overall mass of the two classes is decreasing in $[0,T_{1,idle}]$ and can be non-decreasing if  $\mu_1\geq\mu_2$. }
    \vspace{-0.2in}
    \label{fig:inst2reg2case22}
\end{figure}

\begin{figure}[h]
    \centering
    \includegraphics[width=12cm]{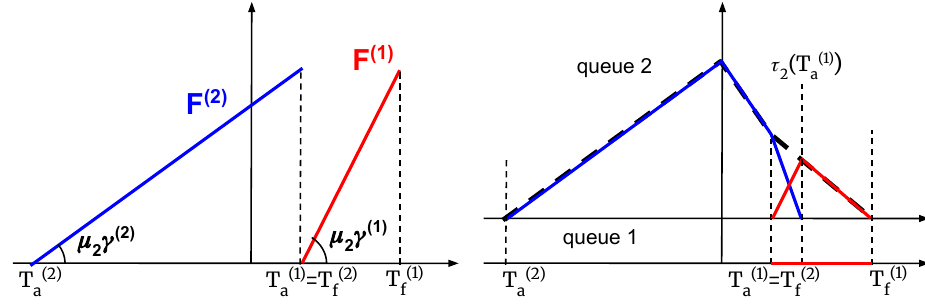}
    \caption{Illustrative EAP (left) and resulting queue length process (right) of HAS with $\mu_1\geq\mu_2\gamma^{(1)}>\mu_2\gamma^{(2)}$ under case 2c of Theorem \ref{mainthm_inst2_reg2}. }
    \label{fig:inst2reg2case23}
\end{figure}

\begin{figure}[h]
    \centering
    \includegraphics[width=12cm]{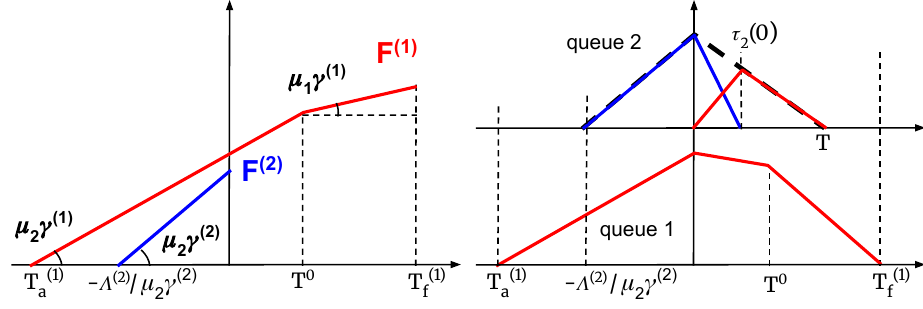}
    \caption{Illustrative EAP of HAS (left) and queue length process (right) with $\mu_1\geq\mu_2\gamma^{(2)}>\mu_2\gamma^{(1)}$ under case 3a of Theorem \ref{mainthm_inst2_reg2} with the assumption $T^0=\tau_1^{-1}(T)>0$. }
    \label{fig:inst2reg2case31}
\end{figure}

\begin{figure}[h]
    \centering
    \includegraphics[width=12cm]{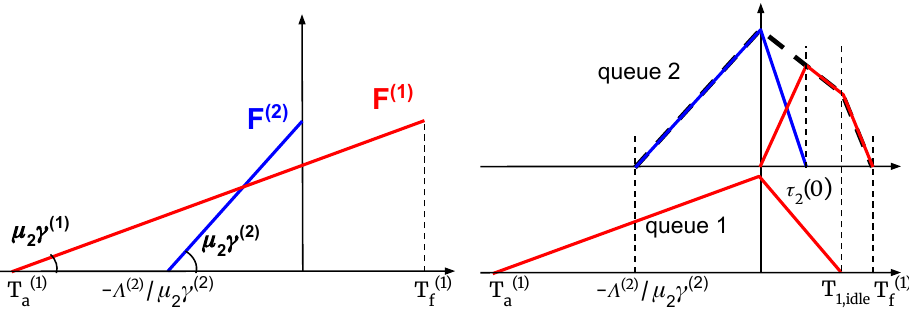}
    \caption{Illustrative EAP (left) and resulting queue length process (right) of HAS with $\mu_1\geq\mu_2\gamma^{(2)}>\mu_2\gamma^{(1)}$ under case 3b of Theorem \ref{mainthm_inst2_reg2} with the assumption $\tau_2(0)<T_{1,idle}$, where $T_{1,idle}$ is the time at which queue 1 empties after the network starts at time zero. }
    \label{fig:inst2reg2case32}
\end{figure}

\begin{figure}[h]
    \centering
    \includegraphics[width=12cm]{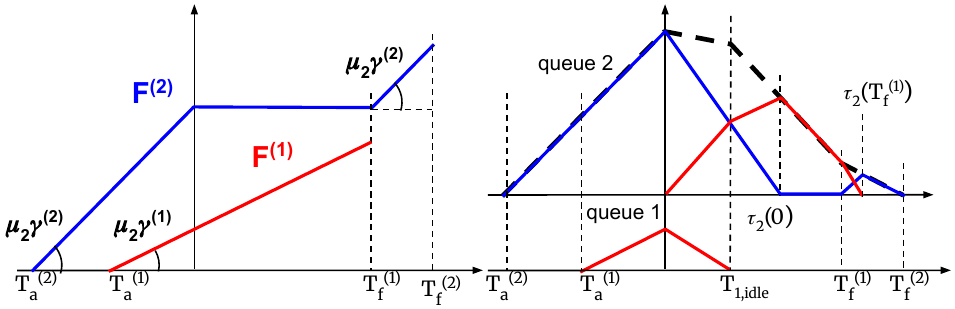}
    \caption{Illustrative EAP (left) and queue length process (right) of HAS with $\mu_1\geq\mu_2\gamma^{(2)}>\mu_2\gamma^{(1)}$ under case 3c of Theorem \ref{mainthm_inst2_reg2} with the assumption $T^{(1)}_f>\tau_2(0)>T_{1,idle}$, where $T_{1,idle}$ is the time at which queue 1 empties after time zero.}
    \label{fig:inst2reg2case33}
\end{figure}

\subsection{Equal Preferences $\gamma^{(1)}=\gamma^{(2)}=\gamma$}\label{sec_inst2_eqpref}
We state the main result of this section separately for two regimes as Theorem \ref{mainthm_inst2_eqpref_reg1} (for $\mu_1<\mu_2\gamma$), and \ref{mainthm_inst2_eqpref_reg2} (for $\mu_1\geq\mu_2\gamma$), since they have different EAP structures. Proofs of these two theorems (in  \ref{appndx:inst2_eqpref}) are similar in structure to that of Theorem \ref{mainthm_inst1_eqpref}. 

$T^{(1)}_a,T^{(1)}_f,T^{(2)}_a,T^{(2)}_f$ are as in the unequal preference case. Let $T_a=\inf \mathcal{S}(F^{(1)})\cup\mathcal{S}(F^{(2)})$ and $T_f=\sup\mathcal{S}(F^{(1)})\cup\mathcal{S}(F^{(2)})$. 

\begin{theorem}\label{mainthm_inst2_eqpref_reg1}
    If $\mu_1<\mu_2\gamma$,\begin{enumerate}
        \item when $\Lambda^{(1)}>\frac{\mu_1}{\mu_2-\mu_1}\Lambda^{(2)}$, the EAP is unique and has the following arrival rates and support boundaries:
        \begin{align*}
            (F^{(1)})^\prime(t)&=\begin{cases}
                \mu_1~~&\text{if}~~t\in[T^{(1)}_a,\tau_1^{-1}(T^{(2)}_f)],\\
                \mu_1\gamma~~&\text{if}~~t\in[\tau_1^{-1}(T^{(2)}_f),T^{(1)}_f].
            \end{cases},~\text{and}~(F^{(2)})^\prime(t)=\begin{cases}
                \mu_2\gamma~~&\text{if}~t\in[T^{(2)}_a,0],\\
                \mu_2\gamma-\mu_1~~&\text{if}~t\in[0,T^{(2)}_f],
            \end{cases}\\
             \text{where}~T^{(1)}_a&=-\frac{1-\gamma}{\gamma}\left(\frac{\Lambda^{(1)}}{\mu_1}-\frac{\Lambda^{(2)}}{\mu_2-\mu_1}\right),~T^{(1)}_f=\frac{\Lambda^{(1)}}{\mu_1},~T^{(2)}_a=-\frac{1-\gamma}{\gamma}\frac{\Lambda^{(2)}}{\mu_2-\mu_1},~\text{and}~T^{(2)}_f=\frac{\Lambda^{(2)}}{\mu_2-\mu_1}.
        \end{align*}
        \item when $\Lambda^{(1)}\leq\frac{\mu_1}{\mu_2-\mu_1}\Lambda^{(2)}$, set of EAPs is given by the convex set of joint customer arrival profiles $\{F^{(1)},F^{(2)}\}$ satisfying: \textbf{1)} $(F^{(1)})^\prime(t)=0,~(F^{(2)})^\prime(t)=\mu_2\gamma$ for all $t\in[T_a,0]$, \textbf{2)} $(F^{(1)})^\prime(t)\leq\mu_1,~(F^{(1)})^\prime(t)+(F^{(2)})^\prime(t)=\mu_2\gamma$ for all $t\in[0,T_f]$, and \textbf{3)} $F^{(1)}(T_f)=\Lambda^{(1)},~F^{(2)}(T_f)=\Lambda^{(2)}$, where $T_a=-\left(\frac{1}{\gamma}-1\right)\frac{\Lambda^{(1)}+\Lambda^{(2)}}{\mu_2}$ and $T_f=\frac{\Lambda^{(1)}+\Lambda^{(2)}}{\mu_2}$.
    \end{enumerate}
\end{theorem}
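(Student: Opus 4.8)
The plan is to mirror the proof of Theorem~\ref{mainthm_inst1_eqpref}, first isolating the structural properties every EAP must satisfy and then splitting into two regimes. I would begin by establishing the HAS analogues of the structural lemmas: that $\mathcal{S}(F_2)$ and $\mathcal{S}(F_1)\cup\mathcal{S}(F_2)=[T_a,T_f]$ are intervals (cf.\ Lemma~\ref{lem_supp_are_intervals_inst1}), that the cost of class~2 is constant on $[T_a,T_f]$ (cf.\ Lemma~\ref{lem_cost2_const_inst1_eqpref}), and the emptiness facts of the type in Lemma~\ref{lem_queue1_and_2_idle_inst1_eqpref}. The governing rate facts, all consequences of \eqref{eq:derv_of_tau}, are: (i)~whenever queue~2 is engaged the \emph{total} arrival rate into it equals $\mu_2\gamma$, since class-2 iso-cost forces $\tau_2'=\gamma$; (ii)~class~1 enters queue~2 at the rate at which it leaves queue~1, hence at a rate at most $\mu_1$, and exactly $\mu_1$ while queue~1 is engaged; and (iii)~writing the class-1 cost as $C_{\mathbf F}^{(1)}(t)=\tau_2(\tau_1(t))-\gamma t$, class-1 iso-cost is equivalent to $\tau_2'(\tau_1(t))\,\tau_1'(t)=\gamma$. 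Because $\gamma<1$ forces $\mu_1<\mu_2\gamma<\mu_2$, we always have $\mu_2-\mu_1>0$, so the quantities in the statement are well defined.

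The dichotomy is whether a positive mass of class~1 arrives before time~$0$, i.e.\ whether $T_{1,a}<0$. Consider first the branch in which it does, so that queue~1 builds a backlog. The first class-1 user then departs queue~1 at $\tau_1(T_{1,a})=0$, so class~1 reaches queue~2 only from time~$0$ at rate $\mu_1$, while class~2 alone feeds queue~2 on $[T_{2,a},0]$ at rate $\mu_2\gamma$. Using (i)--(iii) together with the HAS analogue of Lemma~\ref{lem_queue1_and_2_idle_inst1_eqpref} (queue~2 empty at the end of the class-2 interval, queue~1 empty at $T_{1,f}$), I would close the system for the boundaries: class-2 iso-cost gives $-\gamma T_{2,a}=(1-\gamma)T_{2,f}$; mass conservation $\Lambda_2=\mu_2\gamma(-T_{2,a})+(\mu_2\gamma-\mu_1)T_{2,f}$ then forces $T_{2,f}=\Lambda_2/(\mu_2-\mu_1)$; queue~1 empties at $T_{1,f}=\Lambda_1/\mu_1$; and class-1 iso-cost between $T_{1,a}$ and $T_{1,f}$, using $\tau_2(0)=-\gamma T_{2,a}$ and that queue~2 is empty beyond $T_{2,f}$, gives $\gamma T_{1,a}=(1-\gamma)\bigl(\tfrac{\Lambda_2}{\mu_2-\mu_1}-\tfrac{\Lambda_1}{\mu_1}\bigr)$. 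The class-1 rate must switch from $\mu_1$ to $\mu_1\gamma$ exactly at $\tau_1^{-1}(T_{2,f})$, because a class-1 user entering queue~2 after it empties has $C_{\mathbf F}^{(1)}(t)=\tau_1(t)-\gamma t$ and hence needs $\tau_1'=\gamma$. Solving reproduces the Case~1 boundaries, and the consistency requirement $T_{1,a}<0$ is precisely $\Lambda_1>\tfrac{\mu_1}{\mu_2-\mu_1}\Lambda_2$.

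In the complementary branch no class~1 arrives before time~$0$; I would show this is forced when $\Lambda_1\le\tfrac{\mu_1}{\mu_2-\mu_1}\Lambda_2$, since any pre-$0$ class-1 mass drives the system into the Case~1 boundaries, whose consistency condition then fails. With queue~1 empty at time~$0$ and fed at rate $F_1'\le\mu_1$ thereafter, queue~1 stays empty and class~1 passes through transparently ($\tau_1(t)=t$ for $t\ge0$), so queue~2 acts as a single queue fed at total rate $\mu_2\gamma$. Class-2 iso-cost, emptiness of queue~2 at $T_f$, and total mass $\Lambda_1+\Lambda_2$ give $T_f=\tfrac{\Lambda_1+\Lambda_2}{\mu_2}$ and $T_a=-(\tfrac1\gamma-1)\tfrac{\Lambda_1+\Lambda_2}{\mu_2}$. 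On $[T_a,0]$ only class~2 arrives, and on $[0,T_f]$ \emph{any} split with $F_1'+F_2'=\mu_2\gamma$, $F_1'\le\mu_1$, and $F_i(T_f)=\Lambda_i$ renders both classes iso-cost at the common value; the collection of such profiles is manifestly convex and is nonempty precisely when the full class-1 mass fits at rate at most $\mu_1$, i.e.\ when $\mu_1 T_f\ge\Lambda_1$, which rearranges to $\Lambda_1\le\tfrac{\mu_1}{\mu_2-\mu_1}\Lambda_2$. The two thresholds are complementary, so exactly one regime applies for each parameter setting.

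In both regimes the remaining work is the sufficiency direction: showing the identified candidate(s) actually are EAPs by constructing the induced processes $Q_1(\cdot),Q_2(\cdot)$ and checking that each class's cost is constant on its support and strictly larger off it. I expect two main obstacles. First, in Case~2 one must prove that \emph{every} EAP places zero class-1 mass before time~$0$; ruling out an exotic profile with a small pre-$0$ class-1 backlog (or with $\mathcal{S}(F_1)$ broken into several pieces) is more delicate here than in HDS because the coupling runs through $\tau_1$, and this is where the structural lemmas must be invoked carefully. Second, in Case~1 one must verify the self-consistency of the two time scales: that queue~1 remains engaged throughout $[0,T_{1,f}]$ while queue~2 empties exactly at $T_{2,f}$, and that the switch of $F_1'$ at $\tau_1^{-1}(T_{2,f})$ maps under $\tau_1$ to the instant $T_{2,f}$ at which queue~2 empties. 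Both checks proceed exactly as in Theorem~\ref{mainthm_inst1_eqpref}, by explicit computation of the two queue-length trajectories.
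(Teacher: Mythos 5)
Your proposal is correct and follows essentially the same route as the paper: the paper's Lemmas \ref{lem_inst2_eqpref_reg1_case1sign} and \ref{lem_inst2_eqpref_reg1_case2sign} show that your dichotomy on the sign of $T_{1,a}$ coincides with the paper's split on $T_{2,f}<T_f$ versus $T_{2,f}=T_f$, and your linear systems for the boundaries, the necessary conditions obtained from the ordering constraints, and the sufficiency check by verifying iso-cost on the support all match Lemmas \ref{lem:inst2reg1eqpref1} and \ref{lem:inst2reg1eqpref2}. The two obstacles you flag (the interval structure of $\mathcal{S}(F_1)$ and the self-consistency of the queue trajectories in Case~1) are exactly the points the paper resolves with its structural lemmas, so no gap remains.
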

\begin{remark}
    \emph{If $\Lambda^{(1)}>\frac{\mu_1}{\mu_2-\mu_1}\Lambda^{(2)}$, the EAPs in case 1a and 2a of Theorem \ref{mainthm_inst2_reg1}, respectively, upon taking limits $\gamma^{(1)}\to\gamma+,~\gamma^{(2)}=\gamma$ and $\gamma^{(1)}=\gamma,~\gamma^{(2)}\to\gamma+$ converges to the EAP under case 1 of Theorem \ref{mainthm_inst2_eqpref_reg1}. On the other hand, if $\Lambda^{(1)}\leq\frac{\mu_1}{\mu_2-\mu_1}\Lambda^{(2)}$, the EAPs in case 1b and 2b of Theorem \ref{mainthm_inst2_reg1}, respectively, upon taking limits $\gamma^{(1)}\to\gamma+,~\gamma^{(2)}=\gamma$ and $\gamma^{(1)}=\gamma,~\gamma^{(2)}\to\gamma+$ might converge to different limits, but both the limits will be contained in the convex set of EAPs under case 2 of \ref{mainthm_inst2_reg1}.}
\end{remark}

The quantity $T\overset{def.}{=}\inf\left\{t>0~\vert~Q_2(t)=0\right\}$  will be necessary to describe the structure of the EAP in the regime $\mu_1\leq\mu_2\gamma$.

\begin{theorem}\label{mainthm_inst2_eqpref_reg2}
    If $\mu_1\geq\mu_2\gamma$,\begin{enumerate}
        \item when $\left(\frac{\mu_2}{\mu_1}-1\right)\Lambda^{(1)}>\Lambda^{(2)}$, the EAP is unique and has the following arrival rates and support boundaries:
        \begin{align*}
            (F^{(1)})^\prime(t)&=\begin{cases}
                \mu_2\gamma~~&\text{if}~t\in[T^{(1)}_a,\tau_1^{-1}(T)],\\
                \mu_1\gamma~~&\text{if}~t\in[\tau_1^{-1}(T),T^{(1)}_f],
            \end{cases},~\text{and}~(F^{(2)})^\prime(t)=\mu_2\gamma~~\text{if}~t\in\left[-\frac{\Lambda^{(2)}}{\mu_2\gamma},0\right],\\
            \text{where}~T^{(1)}_a&=\frac{\Lambda^{(2)}}{\mu_2\gamma}-\left(\frac{1}{\gamma}-1\right)\frac{\Lambda^{(1)}}{\mu_1},~T=\frac{\Lambda^{(2)}}{\mu_2-\mu_1},~\text{and}~T^{(1)}_f=\frac{\Lambda^{(1)}}{\mu_1}.    
        \end{align*}
        
        \item when $\left(\frac{1}{\gamma}-1\right)\Lambda^{(1)}\geq\Lambda^{(2)}\geq\left(\frac{\mu_2}{\mu_1}-1\right)\Lambda^{(1)}$, the EAP is unique and has the following arrival rates and support boundaries:
        \begin{align*}
            (F^{(1)})^\prime(t)&=\mu_2\gamma~~\text{if}~t\in[T^{(1)}_a,T^{(1)}_f],~\text{and}~(F^{(2)})^\prime(t)=\mu_2\gamma~~\text{if}~t\in\left[-\frac{\Lambda^{(2)}}{\mu_2\gamma},0\right],\\
            \text{where}~T^{(1)}_a&=\frac{1}{\mu_2}\left(\Lambda^{(2)}-\left(\frac{1}{\gamma}-1\right)\Lambda^{(1)}\right),~\text{and}~T=T^{(1)}_f=\frac{\Lambda^{(1)}+\Lambda^{(2)}}{\mu_2}.
        \end{align*}
        
        \item If $\Lambda^{(2)}>\left(\frac{1}{\gamma}-1\right)\Lambda^{(1)}$, set of EAPs is given by the convex set of joint customer arrival profiles $\{F^{(1)},F^{(2)}\}$ satisfying: \textbf{1)} $(F^{(1)})^\prime(t)=0$ and $(F^{(2)})^\prime(t)=\mu_2\gamma$ for all $t\in[T_a,0]$, \textbf{2)} $(F^{(1)})^\prime(t)+(F^{(2)})^\prime(t)=\mu_2\gamma$ for all $t\in[0,T_f]$, and \textbf{3)} $F^{(1)}(T_f)=\Lambda^{(1)},~F^{(2)}(T_f)=\Lambda^{(2)}$, where $T_a=-\left(\frac{1}{\gamma}-1\right)\frac{\Lambda^{(1)}+\Lambda^{(2)}}{\mu_2}$ and $T_f=\frac{\Lambda^{(1)}+\Lambda^{(2)}}{\mu_2}$.
    \end{enumerate}
\end{theorem}

\begin{remark}
\emph{If $\left(\frac{\mu_2}{\mu_1}-1\right)\Lambda^{(1)}>\Lambda^{(2)}$, the EAPs in case 2a and 3a of Theorem \ref{mainthm_inst2_reg2}, respectively upon taking limits $\gamma^{(1)}\to\gamma+, \gamma^{(2)}=\gamma$ and $\gamma^{(1)}=\gamma, \gamma^{(2)}\to\gamma+$, converges to the EAP under case 1 of Theorem \ref{mainthm_inst2_eqpref_reg2}. If $\left(\frac{1}{\gamma}-1\right)\Lambda^{(1)}\geq\Lambda^{(2)}\geq\left(\frac{\mu_2}{\mu_1}-1\right)\Lambda^{(1)}$, upon taking the same limits, the EAPs in case 2b and 3b of Theorem \ref{mainthm_inst2_reg2} converges to the EAP under case 2 of Theorem \ref{mainthm_inst2_eqpref_reg2}. If $\Lambda^{(2)}>\left(\frac{1}{\gamma}-1\right)\Lambda^{(1)}$, upon taking the same limits, the EAPs in case 2c and 3c of Theorem \ref{mainthm_inst2_reg2} converges to different limits and both of them are contained in the convex set of EAPs under case 3 of Theorem \ref{mainthm_inst2_eqpref_reg2}.}
\end{remark}

\section{Conclusion}\label{sec:conclusion}
Arrival games to a single queue have been well studied in the literature, but the important area of arrival games to a general
queuing network, has limited literature. As we discovered, this may be because the problem complexity may increase substantially with the network size. 
We studied a simple two queue, two class network with a linear cost structure, where we analyzed
in detail two heterogeneous routing configurations: (1) where the customer classes
arrived at the same queue but departed in different queues, and (2) 
where the customer classes arrived at different queues but departed from the same queue. We discovered non-trivial customer behaviour
not apparent in a single queue or in networks studied thus far in the literature. In a specific setting we found that the parameter 
space had to be partitioned into eight distinct regions, where  each region  had its own closed form parametric representation of the
arrival equilibrium profile. We found that although for most
set of parameters, the equilibrium profile is unique,
there exists settings where the collection of equilibrium profiles
is not unique but a convex set.
While in a single queue, multi-class customers arrive in contiguous,
non-overlapping intervals, in our two queue setting
there are regions where, in equilibrium, different class arrival times may overlap. Further, there exist regions, where a single class customer may arrive in disjoint intervals. 
The broad message of the paper is mixed and motivates further research - it suggests that 
even for more complex networks one may expect a unique equilibrium profile for most parameter settings. It also suggests that 
the number of different solution structures may blow up with the network size, so that learning the structure in any instance may be a difficult task.

\section*{Acknowledgement}

Our work is supported by Department of Atomic Energy, Government of India, under project no. RTI4001.

\bibliographystyle{plain} 
\bibliography{references}

%% The Appendices part is started with the command \appendix;
%% appendix sections are then done as normal sections
% \documentclass{article}
% \appendix
% \renewcommand\thesection{\Alph{section}.\arabic{section}}
% \setcounter{section}{A}

\newpage
\appendix
\section*{Appendix}

For the ease of analysis, we define the set $E_j\overset{def.}{=}\{t~\vert~t<0~\text{or}~Q_j(t)>0\}$ to be the set of all times when queue $j$ has a positive waiting time for $j=1,2$. For any set $S\subseteq\mathbb{R}$, we use the notations $\overline{S}$ and $S^o$ to respectively denote the closure and interior of $S$. 

By Lemma \ref{lem_eq_has_no_jump}, we consider only absolutely continuous arrival profiles $F^{(1)}$ and $F^{(2)}$ and as a result, $\mathcal{S}(F^{(1)})$ and $\mathcal{S}(F^{(2)})$ cannot have isolated points.  

\section{Proofs of Lemmas in Section \ref{sec_hetdepartures}}

As argued in Section \ref{sec_hetdepartures}, we assume $\mu_1>\mu_2$ for the following discussion. 

\subsection{Proofs of Lemmas in Section \ref{sec_inst1_uneqpref} (HDS with $\gamma^{(1)}\neq\gamma^{(2)}$)}\label{appndx:inst1_uneqpref}
Several of the lemmas proved below will extend to the case where $\gamma^{(1)}=\gamma^{(2)}$. So, whenever some lemma is applicable only for the case of unequal preferences, we mention it explicitly in the lemma statement. 

Lemma \ref{lem_inst1_uneqpref_mixedarrival} specifies the state of queue 2 when users of both classes are arriving together in queue 1 and is a special case of Lemma \ref{lem_inst1_uneqpref_queue2_busy}. This lemma helps us prove Lemma \ref{lem_threshold_behav_inst1} and together with Lemma \ref{lem_inst1_uneqpref_derv_of_tau2}, it will help us calculate the arrival rate of the two classes in equilibrium. 

\begin{lemma}\label{lem_inst1_uneqpref_mixedarrival}
If $\gamma^{(1)}\neq\gamma^{(2)}$ and $\mu_1>\mu_2$, in the EAP, $t\in (\mathcal{S}(F^{(1)}))^o\cap (\mathcal{S}(F^{(2)}))^o$ implies $\tau_1(t)\in\overline{E_2}$.
\end{lemma}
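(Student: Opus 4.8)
The plan is to argue by contradiction, exploiting the fact that a time in the interior of \emph{both} supports must be iso-cost for the two classes simultaneously; if queue~2 were idle there, both classes would effectively see only queue~1, which forces two incompatible slopes on $\tau_1$ when $\gamma_1\neq\gamma_2$.

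First I would dispose of the trivial case. Since $E_2\supseteq(-\infty,0)$, we have $(-\infty,0]\subseteq\overline{E_2}$, and as departures from queue~1 never precede time zero, $\tau_1(t)\ge 0$ always; hence if $\tau_1(t)\le 0$ the claim is immediate. So assume $\tau_1(t)>0$ and, towards a contradiction, that $\tau_1(t)\notin\overline{E_2}$, i.e. there is $\epsilon>0$ with $Q_2\equiv 0$ on $[\tau_1(t)-\epsilon,\tau_1(t)+\epsilon]\subseteq(0,\infty)$. I would then pull this neighbourhood back through $\tau_1$: since $A_1=F_1+F_2$ is absolutely continuous (Lemma \ref{lem_eq_has_no_jump}), $\tau_1(\cdot)$ is continuous, so continuity at $t$ together with $t\in(\mathcal{S}(F_1))^o\cap(\mathcal{S}(F_2))^o$ yields $\delta>0$ with $(t-\delta,t+\delta)\subseteq\mathcal{S}(F_1)\cap\mathcal{S}(F_2)$ and $\tau_1\big((t-\delta,t+\delta)\big)\subseteq[\tau_1(t)-\epsilon,\tau_1(t)+\epsilon]$.

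Next, for every $s\in(t-\delta,t+\delta)$ the class-2 user reaches queue~2 at the instant $\tau_1(s)>0$, where queue~2 is empty, so $W_2(\tau_1(s))=0$ and hence $\tau_{\mathbf{F}}^{(2)}(s)=\tau_1(s)+W_2(\tau_1(s))=\tau_1(s)=\tau_{\mathbf{F}}^{(1)}(s)$; both classes depart the network exactly at their queue-1 departure time. Using the normalized cost $C_{\mathbf{F}}^{(i)}(s)=\tau_{\mathbf{F}}^{(i)}(s)-\gamma_i s$, this gives $C_{\mathbf{F}}^{(i)}(s)=\tau_1(s)-\gamma_i s$ for both $i=1,2$ on the interval. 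As $(t-\delta,t+\delta)$ lies in the support of each $F_i$, the EAP iso-cost property makes each $C_{\mathbf{F}}^{(i)}$ constant there; differentiating these absolutely continuous identities yields $\tau_1'(s)=\gamma_1$ and $\tau_1'(s)=\gamma_2$ a.e. on the interval, forcing $\gamma_1=\gamma_2$ and contradicting the hypothesis. This contradiction establishes $\tau_1(t)\in\overline{E_2}$.

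The only delicate step is the pull-back: I must ensure a single neighbourhood of $t$ lies inside both support interiors \emph{and} maps under $\tau_1$ into the region where $Q_2$ vanishes, which is precisely where continuity of $\tau_1$ (via absolute continuity of $A_1$) is invoked. Everything else is routine; the differentiation argument only needs the cost identities to hold on a genuine interval so that constancy translates into an almost-everywhere derivative condition. I expect no serious obstacle beyond bookkeeping the neighbourhoods, since this is exactly the local core of the mixed-arrival case used later in Lemma \ref{lem_inst1_uneqpref_queue2_busy}.
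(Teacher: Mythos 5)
Your proposal is correct and follows essentially the same route as the paper: assume $\tau_1(t)\notin\overline{E_2}$, pull the empty neighbourhood of queue 2 back through the continuous map $\tau_1$ to a neighbourhood of $t$ inside both support interiors, observe that both classes then depart at $\tau_1(s)$, and derive $\tau_1'(s)=\gamma_1=\gamma_2$ from the iso-cost property, contradicting $\gamma_1\neq\gamma_2$. Your explicit handling of the case $\tau_1(t)\le 0$ (where membership in $\overline{E_2}$ is automatic since $E_2\supseteq(-\infty,0)$) is a minor tidiness the paper leaves implicit, but otherwise the arguments coincide.
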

\begin{proof}
    Assuming contradiction, \textit{i.e.}, $\tau_1(t)\notin \overline{E_2}$, there is a neighbourhood $[\tau_1(t)-\delta,\tau_1(t)+\delta]$ of $\tau_1(t)$, where queue 2 is empty. Since the arrival profiles $F^{(1)},~F^{(2)}$ are absolutely continuous, $\tau_1(\cdot)$ will be continuous. As a result, we have $[t-\epsilon,t+\epsilon]\subseteq (\mathcal{S}(F^{(1)}))^o\cap (\mathcal{S}(F^{(2)}))^o$ such that, $Q_2(\tau_1(s))=0$ for all $s\in[t-\epsilon,t+\epsilon]$. Hence every class 2 user arriving in $[t-\epsilon,t+\epsilon]$ waits only at queue 1. For both the classes $i=1,2$ departure time $\tau_{\mathbf{F}}^{(i)}(s)=\tau_1(s)$ in $[t-\epsilon,t+\epsilon]$ and hence cost $C_{\mathbf{F}}^{(i)}(s)=\tau_1(s)-\gamma^{(i)} s$. Since, $[t-\epsilon,t+\epsilon]\subseteq (\mathcal{S}(F^{(1)}))^o\cap (\mathcal{S}(F^{(2)}))^o$, by definition of EAP, for $i=1,2$ $C_{\mathbf{F}}^{(i)}(\cdot)$ is constant in $[t-\epsilon,t+\epsilon]$. Therefore, $(C_{\mathbf{F}}^{(i)})^\prime(s)=\tau_1^\prime(s)-\gamma^{(i)}=0$ in $[t-\epsilon,t+\epsilon]$, giving us $\tau_1^\prime(s)=\gamma^{(1)}=\gamma^{(2)}$, contradicting the assumption $\gamma^{(1)}\neq\gamma^{(2)}$. 
\end{proof}

Lemma \ref{lem_inst1_uneqpref_derv_of_tau2} helps us relate the arrival rate of class 2 users and will be referred to ubiquitously while proving the structural properties. 

\begin{lemma}\label{lem_inst1_uneqpref_derv_of_tau2}
    For every absolutely continuous joint user arrival profile $\{F^{(1)},F^{(2)}\}$, $\tau_{\mathbf{F}}^{(2)}(\cdot)$ is differentiable a.e. in the set of times $\tau_1^{-1}(\overline{E_2})\overset{def.}{=}\{t~\vert~\tau_1(t)\in\overline{E_2}\}$ with derivative $(\tau_{\mathbf{F}}^{(2)})^\prime(t)=\frac{(F^{(2)})^\prime(t)}{\mu_2}$ a.e. As a consequence, $(C_{\mathbf{F}}^{(2)})^\prime(t)=\frac{(F^{(2)})^\prime(t)}{\mu_2}-\gamma^{(2)}$ a.e. in  $\tau_1^{-1}(\overline{E_2})$. 
\end{lemma}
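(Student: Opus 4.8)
The plan is to exploit the factorization $\tau_{\mathbf{F}}^{(2)}(t)=\tau_2(\tau_1(t))$ from the HDS description, together with the routing identity $A_2(\tau_1(t))=F_2(t)$, and to push the scalar formula (\ref{eq:derv_of_tau}) for queue~2 through the composition with $\tau_1$. First I would record the structural facts needed: $\tau_1$ is non-decreasing and absolutely continuous (its arrival profile $A_1=F_1+F_2$ is absolutely continuous), and $\tau_2$ is non-decreasing and absolutely continuous, since $\tau_2(u)=\frac{A_2(u)}{\mu_2}+G_2(u)$ with reflection term $G_2(u)=\sup_{s\le u}\max\{0,s-A_2(s)/\mu_2\}$, and the running supremum of the absolutely continuous map $s\mapsto s-A_2(s)/\mu_2$ is again absolutely continuous. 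Hence $\tau_{\mathbf{F}}^{(2)}=\tau_2\circ\tau_1$ is non-decreasing, so differentiable a.e., and by (\ref{eq:derv_of_tau}) applied to queue~2, $\tau_2'(u)=A_2'(u)/\mu_2$ for a.e. $u\in\overline{E_2}$. (The decomposition also makes transparent that $G_2$ is locally constant on the open set $E_2$, so it is the reflection term that drops out on the engaged set.)

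The heart of the argument is a careful chain rule for the composition of absolutely continuous functions. I would invoke the Serrin--Varberg form: for absolutely continuous $\psi,\phi$ the composite $\psi\circ\phi$ is differentiable a.e. with $(\psi\circ\phi)'(t)=\psi'(\phi(t))\phi'(t)$ a.e., under the convention that the right-hand side is $0$ whenever $\phi'(t)=0$. Applying this with $\phi=\tau_1$ yields simultaneously $(\tau_{\mathbf{F}}^{(2)})'(t)=\tau_2'(\tau_1(t))\tau_1'(t)$ and $F_2'(t)=(A_2\circ\tau_1)'(t)=A_2'(\tau_1(t))\tau_1'(t)$ for a.e. $t$. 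On $\{\tau_1'(t)=0\}$ both $(\tau_{\mathbf{F}}^{(2)})'(t)$ and $F_2'(t)$ then vanish, so the claimed identity holds there trivially; the remaining work is on $\{\tau_1'(t)>0\}$.

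The one genuine obstacle is that (\ref{eq:derv_of_tau}) supplies $\tau_2'(u)=A_2'(u)/\mu_2$ only for $u$ outside some null set $N\subseteq\overline{E_2}$, whereas I need it at the points $u=\tau_1(t)$ for a.e. $t\in\tau_1^{-1}(\overline{E_2})$ with $\tau_1'(t)>0$. To clear this I would use the change-of-variables formula for the monotone absolutely continuous map $\tau_1$:
\[
\int \mathbf{1}_N(\tau_1(t))\,\tau_1'(t)\,dt=\int \mathbf{1}_N(u)\,du=|N|=0,
\]
so $\mathbf{1}_N(\tau_1(t))\tau_1'(t)=0$ a.e., and hence $\tau_1(t)\notin N$ for a.e. $t$ with $\tau_1'(t)>0$. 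On this full-measure subset of $\tau_1^{-1}(\overline{E_2})\cap\{\tau_1'>0\}$ I may substitute $\tau_2'(\tau_1(t))=A_2'(\tau_1(t))/\mu_2$ to obtain $(\tau_{\mathbf{F}}^{(2)})'(t)=\frac{A_2'(\tau_1(t))\tau_1'(t)}{\mu_2}=\frac{F_2'(t)}{\mu_2}$. Combining the regimes $\{\tau_1'>0\}$ and $\{\tau_1'=0\}$ gives the identity a.e. on $\tau_1^{-1}(\overline{E_2})$.

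Finally, for the stated consequence I would use that after normalization $C_{\mathbf{F}}^{(2)}(t)=\tau_{\mathbf{F}}^{(2)}(t)-\gamma_2 t$ (from $C_{\mathbf{F}}^{(2)}=\gamma_2 W_{\mathbf{F}}^{(2)}+(1-\gamma_2)\tau_{\mathbf{F}}^{(2)}$ and $W_{\mathbf{F}}^{(2)}=\tau_{\mathbf{F}}^{(2)}-t$), so that $(C_{\mathbf{F}}^{(2)})'(t)=(\tau_{\mathbf{F}}^{(2)})'(t)-\gamma_2=\frac{F_2'(t)}{\mu_2}-\gamma_2$ a.e. on $\tau_1^{-1}(\overline{E_2})$, as claimed. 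I expect the measure-theoretic chain rule together with the pullback-of-the-null-set step to be the only delicate points; everything else is bookkeeping with the definitions of $\tau_1$, $\tau_2$, and the HDS routing identity.
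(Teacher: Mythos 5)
Your proposal is correct and follows essentially the same route as the paper's proof: both rest on the factorization $\tau_{\mathbf{F}}^{(2)}=\tau_2\circ\tau_1$, the application of (\ref{eq:derv_of_tau}) to queue 2 on $\overline{E_2}$, and the routing identity $A_2(\tau_1(t))=F_2(t)$ to convert $A_2^\prime(\tau_1(t))\tau_1^\prime(t)$ into $F_2^\prime(t)$. The only difference is one of rigor rather than strategy — where the paper splits on $\overline{E_1}$ versus $\overline{E_1}^c$ and simply says ``by chain rule,'' you justify the a.e.\ chain rule via Serrin--Varberg and explicitly dispose of the two delicate sets (the pullback under $\tau_1$ of the null set where $\tau_2^\prime\neq A_2^\prime/\mu_2$, and the set $\{\tau_1^\prime=0\}$), which is a legitimate and welcome tightening of the same argument.
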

\begin{proof}
    We divide the proof into two separate cases: 
    \begin{enumerate}[leftmargin=*]
        \item  In $\overline{E_1}^c$, we will have $\tau_1(t)=t$ and as a result $\overline{E_1}^c\cap \tau_1^{-1}(\overline{E_2})=\overline{E_1}^c\cap \overline{E_2}$. Therefore, using (\ref{eq:derv_of_tau}) $(\tau_{\mathbf{F}}^{(2)})^\prime(t)=\tau_2^\prime(t)=\frac{(F^{(2)})^\prime(t)}{\mu_2}$ a.e. in $\overline{E_1}^c\cap \tau_1^{-1}(\overline{E_2})$. 

        \item  In $\overline{E_1}$, by (\ref{eq:derv_of_tau}), $\tau_1^\prime(\cdot)$ exists a.e. Again using (\ref{eq:derv_of_tau}), since $\tau_1(t)\in \overline{E_2}$, we have $\tau_2^\prime(\tau_1(t))=\frac{A_2^\prime(\tau_1(t))}{\mu_2}$. Therefore, $(\tau_{\mathbf{F}}^{(2)})^\prime(t)=\frac{A_2^\prime(\tau_1(t))\tau_1^\prime(t)}{\mu_2}$ a.e. in $\overline{E_1}\cap\tau_1^{-1}(\overline{E_2})$. Since $A_2(\tau_1(t))=F^{(2)}(t)$, by chain rule, $(\tau_{\mathbf{F}}^{(2)})^\prime(t)=\frac{(F^{(2)})^\prime(t)}{\mu_2}$ a.e. in $\overline{E_1}\cap\tau_1^{-1}(\overline{E_2})$. 
    \end{enumerate}

    Since $C_{\mathbf{F}}^{(2)}(t)=\tau_{\mathbf{F}}^{(2)}(t)-\gamma^{(2)} t$, the second statement of the lemma follows trivially. 
\end{proof}

Lemma \ref{lem_appndx_inst2_uneqpref_threshold_behav_suff_cond} implies that $\mu_1\leq\mu_2\cdot\max\left\{1,\frac{\gamma^{(2)}}{\gamma^{(1)}}\right\}$ is sufficient for $\mathcal{S}(F^{(1)})$ and $(\mathcal{S}(F^{(2)}))^o$ to be disjoint, and is helpful for proving sufficiency of the condition in Lemma \ref{lem_threshold_behav_inst1}.

\begin{lemma}\label{lem_inst1_uneqpref_suff_cond_for_disj_arrival}
    If $\mu_1\gamma^{(1)}\leq\mu_2\gamma^{(2)}$, in every EAP, $\mathcal{S}(F^{(1)})$ and $(\mathcal{S}(F^{(2)}))^o$ are disjoint. 
\end{lemma}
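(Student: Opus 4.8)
The plan is to argue by contradiction, closely following the sufficiency sketch of Lemma~\ref{lem_threshold_behav_inst1} but now phrased purely in terms of $\mathcal{S}(F_1)$ and $(\mathcal{S}(F_2))^o$. Suppose some $t^\star\in\mathcal{S}(F_1)\cap(\mathcal{S}(F_2))^o$. Since $(\mathcal{S}(F_2))^o$ is open, a small neighbourhood of $t^\star$ lies inside it; and since $t^\star$ is a support point of the absolutely continuous profile $F_1$, every such neighbourhood carries positive $F_1$-mass and has no atoms. First I would use these facts to extract $t_1<t_2$ with $t_1,t_2\in\mathcal{S}(F_1)$, $[t_1,t_2]\subseteq(\mathcal{S}(F_2))^o\subseteq\mathcal{S}(F_2)$, and strictly positive mass of \emph{both} classes inside $(t_1,t_2)$. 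The goal is then to show that the total arriving mass over $[t_1,t_2]$ forces $\mu_1\gamma_1>\mu_2\gamma_2$, contradicting the hypothesis $\mu_1\gamma_1\le\mu_2\gamma_2$.

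For the class~1 bookkeeping, I would first note that queue~1 must be engaged throughout $(t_1,t_2)$: if it were idle at some interior instant, the class~1 user arriving at $t_2$ could strictly lower her cost by arriving at that idle instant, contradicting $t_2\in\mathcal{S}(F_1)$. Engagement lets me write $C_{\mathbf{F}}^{(1)}(t)=\tau_1(t)-\gamma_1 t$ and, via (\ref{eq:derv_of_tau}) with $A_1=F_1+F_2$, gives $\tau_1^\prime(t)=(F_1^\prime(t)+F_2^\prime(t))/\mu_1$ a.e.\ on $(t_1,t_2)$. Integrating the equilibrium (iso-cost) identity $C_{\mathbf{F}}^{(1)}(t_1)=C_{\mathbf{F}}^{(1)}(t_2)$ then yields $(F_1+F_2)(t_2)-(F_1+F_2)(t_1)=\mu_1\gamma_1(t_2-t_1)$; that is, the combined mass of the two classes arriving in $[t_1,t_2]$ equals $\mu_1\gamma_1(t_2-t_1)$.

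For the class~2 bookkeeping, since $[t_1,t_2]\subseteq\mathcal{S}(F_2)$, the cost $C_{\mathbf{F}}^{(2)}$ is constant there, so $(C_{\mathbf{F}}^{(2)})^\prime=0$ a.e. Here I would invoke Lemma~\ref{lem_inst1_uneqpref_mixedarrival} to conclude $\tau_1(t)\in\overline{E_2}$ (queue~2 engaged at the class~2 arrival time), and then Lemma~\ref{lem_inst1_uneqpref_derv_of_tau2} to turn $(C_{\mathbf{F}}^{(2)})^\prime(t)=F_2^\prime(t)/\mu_2-\gamma_2=0$ into $F_2^\prime(t)=\mu_2\gamma_2$, whence the class~2 mass in $[t_1,t_2]$ equals $\mu_2\gamma_2(t_2-t_1)$. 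Combining the two counts, the strict positivity of the class~1 mass gives $\mu_1\gamma_1(t_2-t_1)=$ total mass $>$ class~2 mass $=\mu_2\gamma_2(t_2-t_1)$, i.e.\ $\mu_1\gamma_1>\mu_2\gamma_2$, the desired contradiction.

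The main obstacle is the class~2 step, specifically justifying that queue~2 is engaged at $\tau_1(t)$ for a.e.\ $t\in[t_1,t_2]$. Lemma~\ref{lem_inst1_uneqpref_mixedarrival} delivers engagement only at points of $(\mathcal{S}(F_1))^o\cap(\mathcal{S}(F_2))^o$, and at this stage I cannot yet appeal to Lemma~\ref{lem_supp_are_intervals_inst1} (that supports are intervals). I would therefore have to control the set of $t\in(t_1,t_2)$ lying outside $(\mathcal{S}(F_1))^o$: on the open gaps of $\mathcal{S}(F_1)$ no class~1 mass arrives, and I expect to argue that on such a gap either queue~2 is still engaged (so the same rate identity holds) or $\tau_1^\prime=\gamma_2$ there, which together with the engaged-queue-1 relation pins down $F_2^\prime$ and still yields the mass bound $\mu_2\gamma_2(t_2-t_1)$, leaving only the boundary $\partial\mathcal{S}(F_1)$ as a measure-theoretic nuisance to dispatch. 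Arranging the extraction of $[t_1,t_2]$ so that this bad set is negligible is where the real care lies.
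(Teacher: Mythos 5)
Your argument is essentially the paper's own proof of this lemma: the same contradiction via mass-counting on an interval $[t_1,t_2]\subseteq\mathcal{S}(F_2)$ with endpoints in $\mathcal{S}(F_1)$, the same use of queue-1 engagement together with (\ref{eq:derv_of_tau}) to get total arriving mass $\mu_1\gamma_1(t_2-t_1)$, and the same appeal to Lemma \ref{lem_inst1_uneqpref_mixedarrival} followed by Lemma \ref{lem_inst1_uneqpref_derv_of_tau2} to pin the class-2 mass at $\mu_2\gamma_2(t_2-t_1)$. The obstacle you flag --- that Lemma \ref{lem_inst1_uneqpref_mixedarrival} certifies queue-2 engagement only on $(\mathcal{S}(F_1))^o\cap(\mathcal{S}(F_2))^o$ --- is real but is glossed over in the paper's write-up as well, and your patch is the right way to close it: on the portion of $(t_1,t_2)$ where queue 2 is not engaged one gets $F_2^\prime=\mu_1\gamma_2\geq\mu_2\gamma_2$ (using $\mu_1>\mu_2$ and queue-1 engagement), so the lower bound on class-2 mass, which is all the contradiction needs, survives.
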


\begin{proof}
    Proof of Lemma \ref{lem_inst1_uneqpref_suff_cond_for_disj_arrival} follows the argument mentioned after Lemma \ref{lem_threshold_behav_inst1} for proving sufficiency of $\mu_1\leq\mu_2\cdot\max\left\{1,\frac{\gamma^{(2)}}{\gamma^{(1)}}\right\}$ for $\mathcal{S}(F^{(1)}), \mathbf{S}(F^{(2)})$ to be non-overlapping, except while proving $(C_{\mathbf{F}}^{(2)})^\prime(t)=\frac{(F^{(2)})^\prime(t)}{\mu_2}-\gamma^{(2)}$ in $[t_1,t_2]$, we use Lemma \ref{lem_appndx_inst2_uneqpref_queu1_engaged_mixedarrival} to argue queue 2 stays engaged at $\tau_1(t)$ and then use Lemma \ref{lem_inst1_uneqpref_derv_of_tau2} to have $(\tau_{\mathbf{F}}^{(2)})^\prime(t)=\frac{(F^{(2)})^\prime(t)}{\mu_2}$ in $[t_1,t_2]$.
    
    % Assuming contradiction and using the fact $\mathcal{S}(F^{(1)})$ cannot have an isolated point, we can have $t_1,t_2\in \mathcal{S}(F^{(1)})$ such that  $(t_1,t_2)\subseteq (\mathcal{S}(F^{(2)}))^o$. As a result,   $C^{(1)}_{\mathbf{F}}(t_2)=C^{(1)}_{\mathbf{F}}(t_1)$, which implies $\tau_{\mathbf{F}}^{(1)}(t_2)-\tau_{\mathbf{F}}^{(1)}(t_1)=\gamma^{(1)}\cdot(t_2-t_1)$. Since $t_1,t_2\in \mathcal{S}(F^{(1)})$ and $\mathcal{S}(F^{(1)})$ doesn't have an isolated point, a positive mass of class 1 users must be arriving in $[t,\infty)$ for every $t\in(t_1,t_2)$. As a result, we must have $[t_1,t_2]\subseteq\overline{E_1}$, which together with (\ref{eq:derv_of_tau}) gives us, $(F^{(1)}(t_2)+F^{(2)}(t_2))-(F^{(1)}(t_1)+F^{(2)}(t_1))=\mu_1\cdot(\tau_1(t_2)-\tau_1(t_1))=\mu_1\cdot(\tau_{\mathbf{F}}^{(1)}(t_2)-\tau_{\mathbf{F}}^{(1)}(t_1))=\mu_1\gamma^{(1)}\cdot(t_2-t_1)$. On the other hand, using Lemma \ref{lem_inst1_uneqpref_derv_of_tau2} and the fact that $C^{(2)}_{\mathbf{F}}(\cdot)$ is constant between $[t_1,t_2]$, we have $F^{(2)}(t_2)-F^{(2)}(t_1)=\mu_2\gamma^{(2)}\cdot(t_2-t_1)$. Since $t_1,t_2\in \mathcal{S}(F^{(1)})$, we have $F^{(1)}(t_2)-F^{(1)}(t_1)>0$ and therefore $\mu_1\gamma^{(1)}\cdot(t_2-t_1)>\mu_2\gamma^{(2)}\cdot(t_2-t_1)$. This gives us $\mu_1\gamma^{(1)}>\mu_2\gamma^{(2)}$, which contradicts our assumption $\mu_1\gamma^{(1)}\leq\mu_2\gamma^{(2)}$.
\end{proof}

As we stated in the main body, $T_{i,a}=\inf \mathcal{S}(F^{(i)})$ and $T_{i,f}=\sup\mathcal{S}(F^{(i)})$ for the two classes $i=1,2$. Note that $T^{(1)}_a,T^{(1)}_f,T^{(2)}_a,T^{(2)}_f$ are all finite if $\mathbf{F}=\{F^{(1)},F^{(2)}\}$ is an EAP. Lemma \ref{lem_appndx_inst1_uneqpref_queue2idle} and \ref{lem_appndx_inst1_uneqpref_queue1idle} are respectively the first and second statements of Lemma \ref{lem_queue1_and_2_idle}.  

\begin{lemma}{\textbf{(First statement of Lemma \ref{lem_queue1_and_2_idle})}}\label{lem_appndx_inst1_uneqpref_queue2idle}
    In the EAP, if $\mathcal{S}(F^{(2)})=[T^{(2)}_a,T^{(2)}_f]$, then queue 2 will have zero waiting time at $\tau_1(T^{(2)}_f)$.
\end{lemma}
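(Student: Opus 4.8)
The plan is to argue by contradiction: assume queue 2 has strictly positive waiting time at $\tau_1(T_{2,f})$, i.e. $Q_2(\tau_1(T_{2,f}))>0$, and then exhibit a class 2 user who can strictly lower her cost, contradicting the EAP condition. First I would rewrite the class 2 cost in the convenient form $C_{\mathbf{F}}^{(2)}(t)=\tau_{\mathbf{F}}^{(2)}(t)-\gamma_2 t$, using $W_{\mathbf{F}}^{(2)}(t)=\tau_{\mathbf{F}}^{(2)}(t)-t$ together with $\gamma_2+(1-\gamma_2)=1$. Since $T_{2,f}\in\mathcal{S}(F_2)$, every class 2 user in the support incurs the common cost $c^\star=C_{\mathbf{F}}^{(2)}(T_{2,f})=\tau_2(\tau_1(T_{2,f}))-\gamma_2 T_{2,f}$, and by the definition of EAP we must have $C_{\mathbf{F}}^{(2)}(t)\ge c^\star$ for every $t\in\mathbb{R}$.

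Next I would analyze queue 2 just after $\tau_1(T_{2,f})$. Because $T_{2,f}=\sup\mathcal{S}(F_2)$, no class 2 mass arrives at queue 1 after $T_{2,f}$; and in HDS only class 2 ever visits queue 2, so $A_2(\cdot)=F_2(\tau_1^{-1}(\cdot))$ is constant on $[\tau_1(T_{2,f}),\infty)$. Hence from $\tau_1(T_{2,f})$ onward queue 2 receives no further input and drains at rate $\mu_2$, emptying at the well-defined time
\[
D:=\tau_1(T_{2,f})+\frac{Q_2(\tau_1(T_{2,f}))}{\mu_2}=\tau_2(\tau_1(T_{2,f})),
\]
with $D>\tau_1(T_{2,f})$ strictly under the contradiction hypothesis. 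The crucial observation is that any class 2 particle reaching queue 2 at an instant $s\in[\tau_1(T_{2,f}),D]$ sees $Q_2(s)=Q_2(\tau_1(T_{2,f}))-\mu_2(s-\tau_1(T_{2,f}))$ and therefore departs at $s+Q_2(s)/\mu_2=D$; that is, the departure time is flat across the whole draining window.

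Finally I would construct the profitable deviation. Since the arrival profiles are absolutely continuous (Lemma \ref{lem_eq_has_no_jump}), $\tau_1$ is continuous and non-decreasing, so I can pick $t'>T_{2,f}$ close enough to $T_{2,f}$ that $\tau_1(t')\in[\tau_1(T_{2,f}),D)$. A single (infinitesimal, hence dynamics-preserving) class 2 user who instead arrives at queue 1 at $t'$ reaches queue 2 at $\tau_1(t')$ inside the draining window and, by the previous step, departs at $D$; her cost is $C_{\mathbf{F}}^{(2)}(t')=D-\gamma_2 t'<D-\gamma_2 T_{2,f}=c^\star$, because $\gamma_2>0$ and $t'>T_{2,f}$. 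This contradicts $C_{\mathbf{F}}^{(2)}(t')\ge c^\star$, forcing $Q_2(\tau_1(T_{2,f}))=0$ and hence zero waiting time at $\tau_1(T_{2,f})$.

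The step I expect to need the most care is the flat-departure-time claim: I must justify that the infinitesimal deviating user does not perturb the queue 2 process (non-atomicity), that $A_2$ is genuinely constant after $\tau_1(T_{2,f})$ (no class 1 leakage into queue 2, which is automatic in HDS but should be stated), and that the linear drain of $Q_2$ persists up to the first emptying time $D$ rather than being interrupted. Once these are pinned down the deviation is immediate; the remaining checks (continuity and monotonicity of $\tau_1$, existence of a suitable $t'$, and $\tau_1(T_{2,f})\ge 0$ so that ``zero waiting time'' coincides with $Q_2=0$) are routine.
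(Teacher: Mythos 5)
Your proof is correct and follows essentially the same route as the paper: assume $Q_2(\tau_1(T_{2,f}))>0$, observe that queue 2 receives no further class 2 input after $\tau_1(T_{2,f})$ so the departure time $\tau_{\mathbf{F}}^{(2)}$ is flat just past $T_{2,f}$, and conclude that a class 2 user can strictly reduce the cost $\tau_{\mathbf{F}}^{(2)}(t)-\gamma_2 t$ by arriving slightly later. The only cosmetic difference is that you establish the flatness by explicitly computing the linear drain of $Q_2$ to the emptying time $D$, whereas the paper invokes Lemma \ref{lem_inst1_uneqpref_derv_of_tau2} to get $(\tau_{\mathbf{F}}^{(2)})^\prime(t)=F_2^\prime(t)/\mu_2=0$ on $[T_{2,f},T_{2,f}+\delta]$.
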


\begin{proof}
    Assume contradiction, \textit{i.e.}, $\tau_1(T^{(2)}_f)\in E_2$. For $\delta>0$ chosen sufficiently small, image of $[T^{(2)}_f,T^{(2)}_f+\delta]$ under $\tau_1(\cdot)$ is contained in $E_2$. Therefore by Lemma \ref{lem_inst1_uneqpref_derv_of_tau2}, $(\tau_{\mathbf{F}}^{(2)})^\prime(t)=\frac{(F^{(2)})^\prime(t)}{\mu_2}$ in $[T^{(2)}_f,T^{(2)}_f+\delta]$. Since $(F^{(2)})^\prime(t)=0$ in $(T^{(2)}_f,T^{(2)}_f+\delta]$, we get $\tau_{\mathbf{F}}^{(2)}(T^{(2)}_f+\delta)=\tau_{\mathbf{F}}^{(2)}(T^{(2)}_f)$. As a result, the class 2 user arriving at $T^{(2)}_f$ will be strictly better off arriving at $T^{(2)}_f+\delta$, contradicting that $\{F^{(1)},F^{(2)}\}$ is an EAP.
\end{proof}

We now prove below Lemma \ref{lem_supp_are_intervals_inst1} about the structure of the supports of the two classes in EAP.\\

\noindent\textit{\textbf{Proof of Lemma \ref{lem_supp_are_intervals_inst1}:}}~~\textbf{(1)}~On assuming contradiction, there exists $t_1,t_2\in\mathcal{S}(F^{(1)})\cup\mathcal{S}(F^{(2)})$ such that $t_2>t_1$ and $F^{(1)}(t_1)+F^{(2)}(t_1)=F^{(1)}(t_2)+F^{(2)}(t_2)$. Note that $t_2<\max\{T^{(1)}_f,T^{(2)}_f\}$. Otherwise, $\max\{T^{(1)}_f,T^{(2)}_f\}$ will be an isolated point of $\mathcal{S}(F^{(1)})\cup\mathcal{S}(F^{(2)})$, contradicting absolute continuity of $F^{(1)}$ and $F^{(2)}$. 

With $t_2<\max\{T^{(1)}_f,T^{(2)}_f\}$, a positive mass of users must arrive after $t_2$. Therefore, none of the classes can have zero waiting time in $[t_1,t_2]$, implying $[t_1,t_2]\subseteq E_1\cup E_2$. There can be two situations now:

\begin{itemize}[leftmargin=*]
    \item \textbf{If $t_1\in E_1$:}~~For $\delta>0$ picked sufficiently small, $[t_1,t_1+\delta]\subseteq E_1$. Since $A_1^\prime(s)=0$ for every $s\in[t_1,t_1+\delta]$, by (\ref{eq:derv_of_tau}), $\tau_1(t_1)=\tau_1(t_1+\delta)$. As a result, for both classes $i=1,2$~~~$\tau_{\mathbf{F}}^{(i)}(t_1)=\tau_{\mathbf{F}}^{(i)}(t_1+\delta)$. Therefore, the user arriving at $t_1$ will be strictly better off arriving at $t_1+\delta$.
    \vspace{0.05in}
    \item \textbf{If $t_1\notin E_1$:}~~Since no users arrive in $[t_1,t_2]$, queue 1 has zero waiting time in $[t_1,t_2]$ (by (\ref{eq:derv_of_tau})) implying, $[t_1,t_2]\subseteq E_2$. Therefore some positive mass of class 2 users must have arrived before $t_1$. Let $\tilde{t}=\sup\{t\leq t_1~\vert~t\in \mathcal{S}(F^{(2)})\}=$ the last time before $t_1$ some class 2 user has arrived. Since no class 2 users arrives in $[\tilde{t},t_2]$, $Q_2(\cdot)$ cannot increase in $[\tau_1(\tilde{t}),\tau_1(t_2)]$. Therefore, $t_2=\tau_1(t_2)\in E_2$ implies $[\tau_1(\tilde{t}),\tau_1(t_2)]\subseteq E_2$. With this observation, since $(F^{(2)})^\prime(s)=0$ in $(\tilde{t},t_2)$, we conclude $\tau_{\mathbf{F}}^{(2)}(\tilde{t})=\tau_{\mathbf{F}}^{(2)}(t_2)$ (by Lemma \ref{lem_inst1_uneqpref_derv_of_tau2}). Therefore, the class 2 user arriving at $\tilde{t}$ will be strictly better off arriving at $t_2$ instead.
\end{itemize}
In both cases, $\mathbf{F}=\{F^{(1)},F^{(2)}\}$ will not be an EAP, contradicting our assumption. \vspace{0.05in}

\textbf{(2)}~On assuming contradiction, there exists $t_1,t_2\in\mathcal{S}(F^{(2)})$ such that $F^{(2)}(t_1)=F^{(2)}(t_2)$. We must have $t_2<T^{(2)}_f$, otherwise, $T^{(2)}_f$ will be an isolated point in $\mathcal{S}(F^{(2)})$. Since $\mathcal{S}(F^{(1)})\cup \mathcal{S}(F^{(2)})$ is an interval, we must have $[t_1,t_2]\subseteq \mathcal{S}(F^{(1)})$. Two situations might arise:
\begin{itemize}[leftmargin=*]
    \item \textbf{If $\tau_1(t_1)\in E_2$:}~~By continuity of $\tau_1(\cdot)$, we can find $\delta \in (0,t_2-t_1)$ sufficiently small such that image of $[t_1,t_1+\delta]$ under $\tau_1(\cdot)$ is contained in $E_2$. Since $(F^{(2)})^\prime(t)=0$ in $[t_1,t_2]$, by Lemma \ref{lem_inst1_uneqpref_derv_of_tau2}, the previous statement implies $\tau_{\mathbf{F}}^{(2)}(t_1)=\tau_{\mathbf{F}}^{(2)}(t_1+\delta)$. Therefore the class 2 user arriving at $t_1$ will be strictly better off arriving at $t_1+\delta$.
    \vspace{0.05in}
    \item \textbf{If $\tau_1(t_1)\notin E_2$:}~With no class 2 user arriving at queue 2 in $[\tau_1(t_1),\tau_1(t_2)]$, we have \newline $[\tau_1(t_1),\tau_1(t_2)]\subseteq E_{2}^{c}$. Therefore for $t\in[t_1,t_2]$, $\tau_{\mathbf{F}}^{(1)}(t)=\tau_{\mathbf{F}}^{(2)}(t)=\tau_1(t)$. Now $[t_1,t_2]\subseteq \mathcal{S}(F^{(1)})$ implies $C_{\mathbf{F}}^{(1)}(\cdot)$ is constant in $[t_1,t_2]$. Hence $(C_{\mathbf{F}}^{(1)})^\prime(t)=\tau_1^\prime(t)-\gamma^{(1)}=0$ in $(t_1,t_2)$, implying $\tau_1^\prime(t)=\gamma^{(1)}$. As a result, in $(t_1,t_2)$, $(C_{\mathbf{F}}^{(2)})^\prime(t)=\tau_1^\prime(t)-\gamma^{(2)}=\gamma^{(1)}-\gamma^{(2)}\neq 0$. Now if $\gamma^{(1)}>\gamma^{(2)}$, class 2 user arriving at time $t_2$ will be better off arriving at time $t_1$ and vice-versa when $\gamma^{(1)}<\gamma^{(2)}$. 
\end{itemize}
$\{F^{(1)},F^{(2)}\}$ cannot be an EAP in both the situations. \vspace{0.05in}

\textbf{(3)}~We consider two separate situations for proving $\mathcal{S}(F^{(1)})$ is an interval:
\begin{itemize}[leftmargin=*]
    \item \textbf{If $\mu_1\gamma^{(1)}\neq\mu_2\gamma^{(2)}$:}~~On assuming contradiction, there exists $t_1,t_2\in \mathcal{S}(F^{(1)})$ such that, $F^{(1)}(t_1)=F^{(1)}(t_2)$. We must have $t_2<T^{(1)}_f$, otherwise, $T^{(1)}_f$ will be an isolated point in $\mathcal{S}(F^{(1)})$. Since $\mathcal{S}(F^{(1)})\cup\mathcal{S}(F^{(2)})$ is an interval, we must have $[t_1,t_2]\subseteq\mathcal{S}(F^{(2)})$. Also, we must have $[t_1,t_2]\subseteq E_1$, since a positive mass of class 1 users arrive after $t_2$. This implies, class 2 users will be arriving at queue 2 at a rate $\mu_1>\mu_2$ in $[\tau_1(t_1),\tau_1(t_2)]$ and therefore $[\tau_1(t_1),\tau_1(t_2)]\subseteq\overline{E_2}$. Using Lemma \ref{lem_inst1_uneqpref_derv_of_tau2} and the definition of EAP, in $[t_1,t_2]$, $(C_{\mathbf{F}}^{(2)})^\prime(t)=(\tau_{\mathbf{F}}^{(2)})^\prime(t)-\gamma^{(2)}=\frac{(F^{(2)})^\prime(t)}{\mu_2}-\gamma^{(2)}=0$ a.e. implying $(F^{(2)})^\prime(t)=\mu_2\gamma^{(2)}$ a.e. Since $[t_1,t_2]\subseteq E_1$, by (\ref{eq:derv_of_tau}), $\tau_1^\prime(t)=\frac{(F^{(2)})^\prime(t)}{\mu_1}=\frac{\mu_2\gamma^{(2)}}{\mu_1}$ a.e. in $[t_1,t_2]$. As a result, $(C_{\mathbf{F}}^{(1)})^\prime(t)=\tau_1^\prime(t)-\gamma^{(1)}=\frac{\mu_2\gamma^{(2)}}{\mu_1}-\gamma^{(1)}$ a.e. in $[t_1,t_2]$. Now if $\mu_1\gamma^{(1)}>\mu_2\gamma^{(2)}$, class 1 user arriving at time $t_1$ will be strictly better off arriving at time $t_2$ and vice-versa if $\mu_1\gamma^{(1)}<\mu_2\gamma^{(2)}$. Therefore $\{F^{(1)},F^{(2)}\}$ will not be an EAP.
    \vspace{0.05in}
    \item \textbf{If~$\mu_1\gamma^{(1)}=\mu_2\gamma^{(2)}$:}~~Since $\mu_1>\mu_2$, $\mu_1\gamma^{(1)}=\mu_2\gamma^{(2)}$ implies $\gamma^{(1)}<\gamma^{(2)}$. By 2nd statement of the lemma, $\mathcal{S}(F^{(2)})$ must be the interval $[T^{(2)}_a,T^{(2)}_f]$. By Lemma \ref{lem_inst1_uneqpref_suff_cond_for_disj_arrival}, $\mathcal{S}(F^{(1)})$ and $(T^{(2)}_a,T^{(2)}_f)$ are disjoint and their union is an interval. As a result, $\mathcal{S}(F^{(1)})$ is union of two intervals, one ending at $T^{(2)}_a$ and the other one starting from $T^{(2)}_f$. We now show that $\mathcal{S}(F^{(1)})$ has no element larger than $T^{(2)}_f$. This implies that $\mathcal{S}(F^{(1)})$ is an interval ending at $T^{(2)}_a$. For this, we again assume a contradiction and suppose that there exists $T>T^{(2)}_f$ such that $[T^{(2)}_f,T]\subseteq \mathcal{S}(F^{(1)})$.  By Lemma \ref{lem_appndx_inst1_uneqpref_queue2idle}, queue 2 must be empty at $\tau_1(T^{(2)}_f)$. Therefore for $t\in[T^{(2)}_f,T]$, $\tau_{\mathbf{F}}^{(1)}(t)=\tau_{\mathbf{F}}^{(2)}(t)=\tau_1(t)$. Since $[T^{(2)}_f,T]\subseteq \mathcal{S}(F^{(1)})$, we will have $(C_{\mathbf{F}}^{(1)})^\prime(t)=\tau_1^\prime(t)-\gamma^{(1)}=0$ implying $\tau_1^\prime(t)=\gamma^{(1)}$ in $[T^{(2)}_f,T]$. Using this, in $[T^{(2)}_f,T]$, $(C_{\mathbf{F}}^{(2)})^\prime(t)=\tau_1^\prime(t)-\gamma^{(2)}=\gamma^{(1)}-\gamma^{(2)}<0$.  Hence, the class 2 user arriving at $T^{(2)}_f$ is strictly better off arriving at $T$, implying that  $\{F^{(1)},F^{(2)}\}$ is not an EAP. \hfill\qedsymbol
\end{itemize}

By Lemma \ref{lem_supp_are_intervals_inst1}, for every EAP $\{F^{(1)},F^{(2)}\}$, we must have $\mathcal{S}(F^{(1)})=[T^{(1)}_a,T^{(1)}_f]$, $\mathcal{S}(F^{(2)})=[T^{(2)}_a,T^{(2)}_f]$  and their union must be an interval. 

\noindent\textbf{\textit{Proof of Lemma \ref{lem_inst1_uneqpref_queue2_busy}:}}~If $t\in(T^{(1)}_a,T^{(1)}_f)\cap(T^{(2)}_a,T^{(2)}_f)$, the statement follows from Lemma \ref{lem_inst1_uneqpref_mixedarrival}. Otherwise, we can obtain $\delta>0$ such that either $[t,t+\delta]$ or $[t-\delta,t]\subseteq(T^{(2)}_a,T^{(2)}_f)/(T^{(1)}_a,T^{(1)}_f)$. Now if $t\in E_1$, we can choose $\delta>0$ small enough such that $[t-\delta,t+\delta]\subseteq E_1$ As a result, $\tau_1(t-\delta)<\tau_1(t)<\tau_1(t+\delta)$ (by (\ref{eq:derv_of_tau})) and class 2 users arrive in queue 2 at a rate $\mu_1>\mu_2$ in $[\tau_1(t),\tau_1(t+\delta)]$ or $[\tau_1(t-\delta),\tau_1(t)]$ implying that  $\tau_1(t)\in\overline{E_2}$.  Otherwise,  if $t\in E_{1}^c$, since a positive mass of class 2 users arrive in $(t,T^{(2)}_f]$, we must have $t=\tau_1(t)\in E_2$. Therefore, the lemma stands proved.  \hfill\qedsymbol \\

\noindent\textit{\textbf{Proof of Lemma \ref{lem_arrival_rates_inst1}:}} ~~By Lemma \ref{lem_inst1_uneqpref_queue2_busy}, if $t\in \mathcal{S}(F^{(2)})$, $\tau_1(t)\in\overline{E_2}$, which by Lemma \ref{lem_inst1_uneqpref_derv_of_tau2} implies $(C_{\mathbf{F}}^{(2)})^\prime(t)=\frac{(F^{(2)})^\prime(t)}{\mu_2}-\gamma^{(2)}$ a.e. in $\mathcal{S}(F^{(2)})$. Since $C_{\mathbf{F}}^{(2)}(\cdot)$ is constant in $\mathcal{S}(F^{(2)})$, we must have $(F^{(2)})^\prime(t)=\mu_2\gamma^{(2)}$ a.e. in $\mathcal{S}(F^{(2)})$.

Note that $t\in\mathcal{S}(F^{(1)})=[T^{(1)}_a,T^{(1)}_f]$ implies $t\in \overline{E_1}$. Therefore, by (\ref{eq:derv_of_tau}),  $(C_{\mathbf{F}}^{(1)})^\prime(t)=(\tau_{\mathbf{F}}^{(1)})^\prime(t)-\gamma^{(1)}=\frac{(F^{(1)})^\prime(t)+(F^{(2)})^\prime(t)}{\mu_1}-\gamma^{(1)}$ a.e. in $\mathcal{S}(F^{(1)})$. Since $C_{\mathbf{F}}^{(1)}(\cdot)$ is constant in $\mathcal{S}(F^{(1)})$, we must have $(F^{(1)})^\prime(t)+(F^{(2)})^\prime(t)=\mu_1\gamma^{(1)}$ a.e. in $\mathcal{S}(F^{(1)})$. As $(F^{(2)})^\prime(t)=\mu_2\gamma^{(2)}$ a.e. in $\mathcal{S}(F^{(2)})$, we must have,
\begin{align*}
    (F^{(1)})^\prime(t)&=\begin{cases}
        \mu_1\gamma^{(1)}~~&\text{if}~t\in \mathcal{S}(F^{(1)})/ \mathcal{S}(F^{(2)}),~\text{and},\\
        \mu_1\gamma^{(1)}-\mu_2\gamma^{(2)}~~&\text{if}~t\in \mathcal{S}(F^{(1)})\cap \mathcal{S}(F^{(2)}).
    \end{cases}
\end{align*}
Note that, the above arrival rates are positive a.e. because, by Lemma \ref{lem_inst1_uneqpref_suff_cond_for_disj_arrival} and \ref{lem_supp_are_intervals_inst1}, if $\mu_1\gamma^{(1)}\leq\mu_2\gamma^{(2)}$, $\mathcal{S}(F^{(1)})\cap \mathcal{S}(F^{(2)})$ has zero Lebesgue measure.\hfill\qedsymbol \\

\begin{lemma}{(\textbf{Second statement of Lemma \ref{lem_queue1_and_2_idle}})}\label{lem_appndx_inst1_uneqpref_queue1idle}
    In the EAP, if $\mu_1\gamma^{(1)}>\mu_2\gamma^{(2)}$ and $\mathcal{S}(F^{(1)})=[T^{(1)}_a,T^{(1)}_f]$, then queue 1 will have zero waiting time at $T^{(1)}_f$.
\end{lemma}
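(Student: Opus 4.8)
The plan is to argue by contradiction. Suppose that in the EAP we have $Q_1(T_{1,f})>0$. Since $A_1=F_1+F_2$ is absolutely continuous (Lemma \ref{lem_eq_has_no_jump}), the queue-length process $Q_1(\cdot)$ is continuous, so there is a $\delta>0$ with $Q_1(t)>0$ for all $t\in[T_{1,f},T_{1,f}+\delta]$; that is, queue 1 stays engaged throughout this interval. I will show that a class 1 user could then strictly lower her cost by arriving at $T_{1,f}+\delta$ instead of at $T_{1,f}$, contradicting $T_{1,f}\in\mathcal{S}(F_1)$ together with the cost-minimality property of an EAP.

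For class 1 the network reduces to queue 1 alone, so $\tau_{\mathbf{F}}^{(1)}(t)=\tau_1(t)$ and $C_{\mathbf{F}}^{(1)}(t)=\tau_1(t)-\gamma_1 t$ for every $t$. Since $T_{1,f}=\sup\mathcal{S}(F_1)$, no class 1 mass arrives after $T_{1,f}$, so $F_1^\prime(t)=0$ on $(T_{1,f},T_{1,f}+\delta]$ and hence $A_1^\prime(t)=F_2^\prime(t)$ there. By Lemma \ref{lem_arrival_rates_inst1}, $F_2^\prime(t)=\mu_2\gamma_2$ when $t\in\mathcal{S}(F_2)$ and $F_2^\prime(t)=0$ otherwise, so in either case $A_1^\prime(t)=F_2^\prime(t)\le\mu_2\gamma_2$ a.e. on this interval. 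Because queue 1 is engaged on $[T_{1,f},T_{1,f}+\delta]$, the running-supremum term in $\tau_1$ is locally constant and (\ref{eq:derv_of_tau}) gives $\tau_1^\prime(t)=A_1^\prime(t)/\mu_1$ a.e. there. Consequently
\[
(C_{\mathbf{F}}^{(1)})^\prime(t)=\tau_1^\prime(t)-\gamma_1=\frac{F_2^\prime(t)}{\mu_1}-\gamma_1\le\frac{\mu_2\gamma_2}{\mu_1}-\gamma_1<0\quad\text{a.e. on }[T_{1,f},T_{1,f}+\delta],
\]
where the final strict inequality is exactly the hypothesis $\mu_1\gamma_1>\mu_2\gamma_2$.

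Since the supremum term is constant on the engaged interval, $\tau_1$ and therefore $C_{\mathbf{F}}^{(1)}$ are absolutely continuous on $[T_{1,f},T_{1,f}+\delta]$, so integrating this negative a.e. derivative yields $C_{\mathbf{F}}^{(1)}(T_{1,f}+\delta)<C_{\mathbf{F}}^{(1)}(T_{1,f})$. This contradicts the EAP property, which requires $C_{\mathbf{F}}^{(1)}(T_{1,f})\le C_{\mathbf{F}}^{(1)}(\tilde t)$ for every $\tilde t\in\mathbb{R}$ because $T_{1,f}\in\mathcal{S}(F_1)$. Hence $Q_1(T_{1,f})=0$.

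I expect the only delicate point to be the justification that (\ref{eq:derv_of_tau}) applies and that the running-supremum term in $\tau_1$ is constant on the engaged interval, so that $C_{\mathbf{F}}^{(1)}$ is genuinely absolutely continuous there and the sign of its a.e. derivative controls its net change. This is also where the case distinction on whether class 2 is still arriving after $T_{1,f}$ gets absorbed uniformly by the single bound $F_2^\prime\le\mu_2\gamma_2$; the remaining steps are routine bookkeeping with the arrival rates from Lemma \ref{lem_arrival_rates_inst1}.
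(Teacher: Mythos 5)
Your proof is correct and follows essentially the same route as the paper's: a contradiction in which the class 1 user arriving at $T_{1,f}$ strictly improves by arriving later while queue 1 is still engaged, with the key computation $(C_{\mathbf{F}}^{(1)})^\prime(t)=F_2^\prime(t)/\mu_1-\gamma_1<0$ driven by $\mu_1\gamma_1>\mu_2\gamma_2$. The only difference is cosmetic: the paper splits into two cases according to whether class 2 mass arrives after $T_{1,f}$, whereas you absorb both via the single bound $F_2^\prime\le\mu_2\gamma_2$, which is a clean simplification.
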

\begin{proof}
    Assume a contradiction, \textit{i.e.}, $T^{(1)}_f\in E_1$ with $\mu_1\gamma^{(1)}>\mu_2\gamma^{(2)}$. Two situations are possible: \begin{itemize}[leftmargin=*]
    \item If $(T^{(1)}_f,\infty)\cap \mathcal{S}(F^{(2)})=\emptyset$, by (\ref{eq:derv_of_tau}), $\tau_{\mathbf{F}}^{(1)}(T^{(1)}_f)=\tau_{\mathbf{F}}^{(1)}(T_{idle})$, where $T_{idle}>T^{(1)}_f$ is the time queue 1 empties after $T^{(1)}_f$. Hence, the class 1 user arriving at $T^{(1)}_f$ is strictly better off arriving at $T_{idle}$.
    
    \item If a positive mass of class 2 users arrive after $T^{(1)}_f$, by Lemma \ref{lem_supp_are_intervals_inst1} and \ref{lem_arrival_rates_inst1}, they will arrive over an interval $[T^{(1)}_f,T]$ at rate $\mu_2\gamma^{(2)}$, where $T>T^{(1)}_f$. Therefore, picking $\delta\in(0,T-T^{(1)}_f)$ sufficiently small, we can have $[T^{(1)}_f,T^{(1)}_f+\delta]\subseteq E_1$. By (\ref{eq:derv_of_tau}), for $t\in[T^{(1)}_f,T^{(1)}_f+\delta]$, $(C_{\mathbf{F}}^{(1)})^\prime(t)=\tau_1^\prime(t)-\gamma^{(1)}=\frac{(F^{(2)})^\prime(t)}{\mu_1}-\gamma^{(1)}=\frac{\mu_2\gamma^{(2)}}{\mu_1}-\gamma^{(1)}<0$ a.e. Hence the class 1 user arriving at $T^{(1)}_f$ will be strictly better off arriving at $T^{(1)}_f+\delta$.
\end{itemize} 
For both the situations above, we get $\{F^{(1)},F^{(2)}\}$ will not be an EAP.    
\end{proof}

\noindent\textbf{\textit{Proof of Lemma \ref{lem_queue1_and_2_idle}:}}~~Lemma \ref{lem_queue1_and_2_idle} follows from Lemma \ref{lem_appndx_inst1_uneqpref_queue2idle} and \ref{lem_appndx_inst1_uneqpref_queue1idle}. \hfill\qedsymbol  \\

\noindent\textit{\textbf{Proof of Lemma \ref{lem_threshold_behav_inst1}:}} 
By Lemma \ref{lem_inst1_uneqpref_suff_cond_for_disj_arrival} and \ref{lem_supp_are_intervals_inst1}, we get $\mu_1\leq\mu_2\cdot\max\{1,\frac{\gamma^{(2)}}{\gamma^{(1)}}\}$ is sufficient for $[T^{(1)}_a,T^{(1)}_f]$ and $[T^{(2)}_a,T^{(2)}_f]$ to have disjoint interiors. Proving that the condition is necessary follows from the argument after Lemma \ref{lem_queue1_and_2_idle} in Section \ref{sec_inst1_uneqpref}. Hence Lemma \ref{lem_threshold_behav_inst1} stands proved.  \hfill\qedsymbol \\

\noindent\textbf{\textit{Proof of Lemma \ref{lem_sign_inst1_uneqpref}}:}~\textbf{(1)}~If $T^{(1)}_f=T^{(2)}_a$, queue 1 is empty at $T^{(1)}_f$ (by Lemma \ref{lem_queue1_and_2_idle}), making the network empty at $T^{(1)}_f$. As a result, every class 2 user will be strictly better off arriving at $T^{(1)}_f$ and $\mathbf{F}$ will not be an EAP. Therefore, the only other possibility is $T^{(2)}_f=T^{(1)}_a$. 

\textbf{(2)}~On assuming a contradiction, if $T^{(2)}_f<T^{(1)}_f$, by Lemma \ref{lem_threshold_behav_inst1}, we must have $T^{(2)}_f\in(T^{(1)}_a, T^{(1)}_f)$. Since group 1 users will arrive in $[T^{(2)}_f,T^{(1)}_f]$, we must have $[T^{(2)}_f,T^{(1)}_f]\subseteq \overline{E_1}$. Using Lemma \ref{lem_appndx_inst1_uneqpref_queue2idle}, queue 2 has a zero waiting time for $t\geq\tau_1(T^{(2)}_f)$ and as a consequence $\tau_{\mathbf{F}}^{(2)}(t)=\tau_1(t)$ for $t\geq T^{(2)}_f$. Therefore, for $t\in[T^{(2)}_f,T^{(1)}_f]$,
\begin{align*}
        (C^{(2)}_{\mathbf{F}})^\prime(t)&=(\tau_F^{(2)})^\prime(t)-\gamma^{(2)}=\tau_1^\prime(t)-\gamma^{(2)}\\
        &=\frac{(F^{(1)})^\prime(t)}{\mu_1}-\gamma^{(2)}~~&\text{(using (\ref{eq:derv_of_tau}) and the fact that $[T^{(2)}_f,T^{(1)}_f]\in \overline{E_1}$)}\\
        &=\gamma^{(1)}-\gamma^{(2)}<0.~~&\text{(using Lemma \ref{lem_arrival_rates_inst1}).}
\end{align*}
Thus, the group 2 user arriving at $T^{(2)}_f$ will be strictly better off by arriving at $T^{(1)}_f$. Hence $\mathbf{F}$ will not be an EAP.

\textbf{(3)}~Let $T=\min\left\{T^{(1)}_f,T^{(2)}_f\right\}$. By Lemma \ref{lem_threshold_behav_inst1}, $\max\{T^{(1)}_a,T^{(2)}_a\}<T$. If we assume contradiction to the statement, we must have $T^{(1)}_a\leq T^{(2)}_a < T$. As a result, class 2 users are arriving in $[T^{(2)}_a, T]$ alongside class 1 users. Using Lemma \ref{lem_arrival_rates_inst1}, for every $t\in[T^{(2)}_a,T]$,
\begin{align*}
    (F^{(2)})^\prime(t)&=\mu_2\gamma^{(2)},~\text{and},\\
    (F^{(1)})^\prime(t)&=\mu_1\gamma^{(1)}-\mu_2\gamma^{(2)}.
\end{align*}
Moreover, for $t\in[T^{(2)}_a, T]$, we have $A_2(\tau_1(t))=F^{(2)}(t)$. Again $[T^{(2)}_a,T]\subseteq \mathcal{S}(F^{(1)})$ implies $[T^{(2)}_a,T]\subseteq \overline{E_1}$ and therefore by (\ref{eq:derv_of_tau}) $\tau_1^\prime(t)=\frac{(F^{(1)})^\prime(t)+(F^{(2)})^\prime(t)}{\mu_1}=\gamma^{(1)}$ a.e. in $[T^{(2)}_a,T]$. Hence, $A_2^\prime(\tau_1(t))=\frac{(F^{(2)})^\prime(t)}{\tau_1^\prime(t)}=\mu_2\frac{\gamma^{(2)}}{\gamma^{(1)}}<\mu_2$ a.e. in $[T^{(2)}_a,T]$. This implies that class 2 users are arriving at queue 2 at a rate $<\mu_2$ from the time when queue 2 starts serving them, \textit{i.e.}, $\tau_1(T^{(2)}_a)$. Therefore $Q_2(t)=0$ for all $t\in[\tau_1(T^{(2)}_a),\tau_1(T)]$, which contradicts Lemma \ref{lem_inst1_uneqpref_mixedarrival}. 
\hfill\qedsymbol

\subsubsection{Proof of Theorem \ref{mainthm_inst1}} \label{appndx:thm_inst1_uneqpref}
Below we mention those parts of the proof of Theorem \ref{mainthm_inst1} which were not covered in the proof sketch mentioned in Section \ref{sec_inst1_uneqpref}. Of the skipped portion is the proof of existence and uniqueness of EAP for case 3,  $\gamma^{(1)}>\gamma^{(2)}$, owing to its similarity in arguments with case 2 $\gamma^{(2)}>\gamma^{(1)}>\frac{\mu_2}{\mu_1}\cdot\gamma^{(2)}$. Here, we cover those portions of the proof for each of the three cases in which the theorem statement is divided. 

\noindent\textbf{Case 1}~~$\gamma^{(1)}\leq\frac{\mu_2}{\mu_1}\gamma^{(2)}$:~~
For convenience of the reader, we mention the unique candidate we obtained by Lemma \ref{lem_arrival_rates_inst1} and \ref{lem:bdaryinst1case1}: 
\begin{align}
    (F^{(1)})^\prime(t)&=\mu_1\gamma^{(1)}~\text{if}~t\in[T^{(1)}_a,T^{(1)}_f]~\text{and}~(F^{(2)})^\prime(t)=\mu_2\gamma^{(2)}~\text{if}~t\in[T^{(1)}_f,T^{(2)}_f].
\end{align}

We first verify that, when the two classes arrive by the unique remaining candidate profile, for both the classes $i=1,2$ cost derivative satisfies~\textbf{1)}~$(C_{\mathbf{F}}^{(i)})^\prime(t)\leq 0$ in $(-\infty,T_{i,a})$,~\textbf{2)}~$(C_{\mathbf{F}}^{(i)})^\prime(t)=0$ in $[T_{i,a},T_{i,f}]$,~and~\textbf{3)}~$(C_{\mathbf{F}}^{(i)})^\prime(t)\geq 0$ in $(T_{i,f},\infty)$. The preceding statement implies both the classes have their cost constant on the support interval and higher outside of it, implying the candidate is an EAP. Following our agenda, we go by the following sequence of arguments. 

\begin{itemize}[leftmargin=*]
    \item In $(-\infty,T^{(1)}_a]$:~~$\tau_{\mathbf{F}}^{(i)}(t)=0$ for both classes $i\in\{1,2\}$ in $(-\infty,T^{(1)}_a]$. As a result, $(C_{\mathbf{F}}^{(i)})^\prime(t)=-\gamma^{(i)}<0$ in $(-\infty,T^{(1)}_a]$.
    
    \item  In $[T^{(1)}_a,T^{(1)}_f]$:~~Queue 1 stays engaged in $[T^{(1)}_a,T^{(1)}_f]$, since $A_1(t)=F^{(1)}(t)>\mu_1\cdot \max\{t,0\}$ in $[T^{(1)}_a,T^{(1)}_f]$. Hence by (\ref{eq:derv_of_tau}), for both the classes $i\in\{1,2\}$, $(C_{\mathbf{F}}^{(i)})^\prime(t)=\frac{(F^{(1)})^\prime(t)}{\mu_1}-\gamma^{(i)}=\gamma^{(1)}-\gamma^{(i)}$. As a result, for $i=1$, $(C_{\mathbf{F}}^{(1)})^\prime(t)=0$ and for $i=2$, $(C_{\mathbf{F}}^{(2)})^\prime(t)=\gamma^{(1)}-\gamma^{(2)}<0$. 

    \item In $[T^{(1)}_f,T^{(2)}_f]$:~~Since $\mu_1\cdot T^{(2)}_f=\Lambda^{(1)}+\frac{\mu_1}{\mu_2}\Lambda^{(2)}>\Lambda^{(1)}+\Lambda^{(2)}=A_1(T^{(2)}_f)$, queue 1 empties at some time $T\in(T^{(1)}_f,T^{(2)}_f]$. As class 2 users arrive at a constant rate of $\mu_2\gamma^{(2)}$ in $[T^{(1)}_f,T^{(2)}_f]$, queue 1 stays engaged in $[T^{(1)}_f,T]$ and empty in $[T,T^{(2)}_f]$. Therefore, in $[T^{(1)}_f,T]$, by (\ref{eq:derv_of_tau}), $(C_{\mathbf{F}}^{(1)})^\prime(t)=\frac{(F^{(2)})^\prime(t)}{\mu_1}-\gamma^{(1)}=\frac{\mu_2\gamma^{(2)}-\mu_1\gamma^{(1)}}{\mu_1}\geq 0$ and in $(T,T^{(2)}_f]$, since queue 1 is empty, $(C_{\mathbf{F}}^{(1)})^\prime(t)=1-\gamma^{(1)}\geq 0$. Now we argue that queue 2 remains engaged in $[\tau_1(T^{(1)}_f),T^{(2)}_f]$ by considering two parts separately:\begin{itemize}
        \item In $[\tau_1(T^{(1)}_f),T]$, class 2 users arrive at queue 2 from queue 1 at rate $\mu_1>\mu_2$ in that period, making queue 2 engaged.
        \item In $[T,T^{(2)}_f]$, since $A_2^\prime(t)=\mu_2\gamma^{(2)}<\mu_2 $, $A_2(t)-\mu_2\cdot(t-\tau_1(T^{(1)}_f))$ is a decreasing function. Combining this with the fourth equation $A_2(T^{(2)}_f)=\mu_2\cdot(T^{(2)}_f-\tau_1(T^{(1)}_f))$ in the identified linear system, we get $A_2(t)>\mu_2\cdot(t-\tau_1(T^{(1)}_f))$ in $[T,T^{(2)}_f)$, implying queue 2 stays engaged. 
    \end{itemize} 
    Therefore, by (\ref{eq:derv_of_tau}), for $t\in [T^{(1)}_f,T^{(2)}_f]$, $\tau_{\mathbf{F}}^{(2)}(t)=\tau_2(\tau_1(t))=\tau_1(T^{(1)}_f)+\frac{A_2(\tau_1(t))}{\mu_2}=\frac{\Lambda^{(1)}}{\mu_1}+\frac{F^{(2)}(t)}{\mu_2}$. Hence, $(C_{\mathbf{F}}^{(2)})^\prime(t)=(\tau_{\mathbf{F}}^{(2)})^\prime(t)-\gamma^{(2)}=\frac{(F^{(2)})^\prime(t)}{\mu_2}-\gamma^{(2)}=0$ in $[T^{(1)}_f,T^{(2)}_f]$. Upon summarizing, for $t\in[T^{(1)}_f,T^{(2)}_f]$, we get $(C_{\mathbf{F}}^{(1)})^\prime(t)\geq 0$ and $(C_{\mathbf{F}}^{(2)})^\prime(t)=0$. 

    \item In $[T^{(2)}_f,\infty)$:~~By the last step, both the queues stay empty after $T^{(2)}_f$. As a result, for both the classes $i\in\{1,2\}$, $(C_{\mathbf{F}}^{(i)})^\prime(t)=1-\gamma^{(i)}>0$ in $[T^{(2)}_f,\infty)$.
\end{itemize}

\noindent\textbf{Case 2}~~$\frac{\mu_2}{\mu_1}\gamma^{(2)}<\gamma^{(1)}<\gamma^{(2)}$:

\noindent\textbf{Skipped parts of the proof of Lemma \ref{lem:inst1case2}:}

\noindent\textbf{Identifying the unique candidate for Type II:}~For every EAP under Type II, we obtain the following system of equations to be satisfied by the support boundaries $T^{(1)}_a,T^{(1)}_f,T^{(2)}_a,T^{(2)}_f$: 
\begin{enumerate}[leftmargin=*]
    \item By Lemma \ref{lem_arrival_rates_inst1}, $(F^{(1)})^\prime(t)=\mu_1\gamma^{(1)}-\mu_2\gamma^{(2)}$ in $[T^{(1)}_a,T^{(1)}_f]$, giving us: $T^{(1)}_f=T^{(1)}_a+\frac{\Lambda^{(1)}}{\mu_1\gamma^{(1)}-\mu_2\gamma^{(2)}}$.
    \item By Lemma \ref{lem_arrival_rates_inst1},  $(F^{(2)})^\prime(t)=\mu_2\gamma^{(2)}$ in $[T^{(2)}_a,T^{(2)}_f]$, giving us: $T^{(2)}_f=T^{(2)}_a+\frac{\Lambda^{(2)}}{\mu_2\gamma^{(2)}}$.
    \item Applying the argument used for getting the third equation of Type I, we must have $F^{(1)}(T^{(1)}_f)+F^{(2)}(T^{(1)}_f)=\mu_1 T^{(1)}_f$. Plugging in $F^{(1)}(T^{(1)}_f)=\Lambda^{(1)}$ and $F^{(2)}(T^{(1)}_f)=\mu_2\gamma^{(2)}(T^{(1)}_f-T^{(2)}_a)$ (by Lemma \ref{lem_arrival_rates_inst1}), we get: $\Lambda^{(1)}+\mu_2\gamma^{(2)}(T^{(1)}_f-T^{(2)}_a)=\mu_1 T^{(1)}_f$.
    \item By an argument similar to the one used for getting the fourth equation of Type I, we have $\mu_2\cdot(\tau_1(T^{(2)}_f)-\tau_1(T^{(2)}_a))=\Lambda^{(2)}$. Since queue 1 is empty at $T^{(1)}_f$ (by Lemma \ref{lem_queue1_and_2_idle}) and class 2 users arrive at rate $\mu_2\gamma^{(2)}<\mu_1$ (by Lemma \ref{lem_arrival_rates_inst1}), queue 1 stays empty at $T^{(2)}_f$, causing $\tau_1(T^{(2)}_f)=T^{(2)}_f$. Queue 1 serves the first class 2 user at time zero, causing $\tau_1(T^{(2)}_a)=0$. Plugging these in, we get: $T^{(2)}_f=\frac{\Lambda^{(2)}}{\mu_2}$. \hfill\qedsymbol
\end{enumerate}
The solution to the above system of equations is in (\ref{eq_bdary_inst1_uneqpref_case2b}). The support boundaries in (\ref{eq_bdary_inst1_uneqpref_case2b}) must satisfy $T^{(2)}_a< T^{(1)}_a<T^{(1)}_f\leq T^{(2)}_f$ to represent an EAP under Type II. Imposing $T^{(1)}_a>T^{(2)}_a$ on (\ref{eq_bdary_inst1_uneqpref_case2b}), we get the condition $\Lambda^{(1)}<\frac{1-\gamma^{(2)}}{1-\gamma^{(1)}}\left(\frac{\mu_1\gamma^{(1)}}{\mu_2\gamma^{(2)}}-1\right)\Lambda^{(2)}$.

Moreover, if $\Lambda^{(1)}<\frac{1-\gamma^{(2)}}{1-\gamma^{(1)}}\left(\frac{\mu_1\gamma^{(1)}}{\mu_2\gamma^{(2)}}-1\right)\Lambda^{(2)}$, (\ref{eq_bdary_inst1_uneqpref_case2b}) satisfies $T^{(2)}_a<T^{(1)}_a<T^{(1)}_f\leq T^{(2)}_f$ and upon plugging in arrival rates from Lemma \ref{lem_arrival_rates_inst1}, we get a candidate having arrival rates and support boundaries same as the joint arrival profile mentioned under case 2b of Theorem \ref{mainthm_inst1}. This candidate satisfies $F^{(1)}(T^{(1)}_f)=\Lambda^{(1)},~F^{(2)}(T^{(2)}_f)=\Lambda^{(2)}$ and will be  the only Type II candidate to qualify as an EAP. Upon proving that this identified Type II candidate is an EAP, the second statement of Lemma \ref{lem:inst1case2} will follow.  

\textbf{Proving that the unique candidate obtained is an EAP:}~~For convenience of the reader we mention the candidates we were left with for both the types: 
\begin{enumerate}[leftmargin=*]
    \item If $\Lambda^{(1)}\geq\frac{1-\gamma^{(2)}}{1-\gamma^{(1)}}\left(\frac{\mu_1\gamma^{(1)}}{\mu_2\gamma^{(2)}}-1\right)\Lambda^{(2)}$, the unique Type I candidate is: 
    \begin{align}\label{eq:rate_inst1_uneqpref_case21}
        (F^{(1)})^\prime(t)&=\begin{cases}
            \mu_1\gamma^{(1)}~~&\text{if}~t\in[T^{(1)}_a,T^{(2)}_a),\\
            \mu_1\gamma^{(1)}-\mu_2\gamma^{(2)}~~&\text{if}~t\in[T^{(2)}_a,T^{(1)}_f],
        \end{cases}~\text{and}~(F^{(2)})^\prime(t)=\mu_2\gamma^{(2)}~~\text{if}~t\in[T^{(2)}_a,T^{(2)}_f] 
    \end{align}
    where $T^{(1)}_a,T^{(1)}_f,T^{(2)}_a,T^{(2)}_f$ are in (\ref{eq_bdary_inst1_uneqpref_case2a}). 
    
    \item If $\Lambda^{(1)}<\frac{1-\gamma^{(2)}}{1-\gamma^{(1)}}\left(\frac{\mu_1\gamma^{(1)}}{\mu_2\gamma^{(2)}}-1\right)\Lambda^{(2)}$, the unique Type II candidate is:
    \begin{align}\label{eq:rate_inst1_uneqpref_case22}
        (F^{(1)})^\prime(t)&=\mu_1\gamma^{(1)}-\mu_2\gamma^{(2)}~~\text{if}~t\in[T^{(2)}_a,T^{(1)}_f],~\text{and}~ 
        (F^{(2)})^\prime(t)=\mu_2\gamma^{(2)}~~\text{if}~t\in[T^{(2)}_a,T^{(2)}_f] 
    \end{align}
    where $T^{(1)}_a,T^{(1)}_f,T^{(2)}_a,T^{(2)}_f$ are in (\ref{eq_bdary_inst1_uneqpref_case2b}).  
\end{enumerate}

By the following sequence of arguments, for both the types, we argue that if users of the two classes arrive by the unique remaining candidate, both classes $i=1,2$ have their cost derivative satisfying: $(C_{\mathbf{F}}^{(i)})^\prime(t)\leq 0$ in $(-\infty,T_{i,a})$, $(C_{\mathbf{F}}^{(i)})^\prime(t)=0$ in $[T_{i,a},T_{i,f}]$, and $(C_{\mathbf{F}}^{(i)})^\prime(t)\geq 0$ in $(T_{i,f},\infty)$. This will imply, for both the classes, cost is constant in the support interval and higher outside and hence the candidate is an EAP.
\begin{enumerate}[leftmargin=*]
    \item In $(-\infty,\min\{T^{(1)}_a,T^{(2)}_a\}]$:~~Note that for $t\in(-\infty,\min\{T^{(1)}_a,T^{(2)}_a\}]$, $\tau_{\mathbf{F}}^{(i)}(t)=0$ for both the classes $i\in\{1,2\}$. As a result, $(C_{\mathbf{F}}^{(i)})^\prime(t)=-\gamma^{(i)}<0$ for $i\in\{1,2\}$. 

    \item In $[\min\{T^{(1)}_a,T^{(2)}_a\},T^{(1)}_f]$:~~Obtained candidates of both the types satisfy $A_1^\prime(t)=(F^{(1)})^\prime(t)+(F^{(2)})^\prime(t)<\mu_1$ in $[\min\{T^{(1)}_a,T^{(2)}_a\},T^{(1)}_f]$, making $A_1(t)-\mu_1 t$ decreasing in $[0,T^{(1)}_f]$. Since both the types satisfy $A_1(T^{(1)}_f)=\mu_1 T^{(1)}_f$, the preceding statement imply $A_1(t)>\mu_1 t$ in $[0,T^{(1)}_f)$ and therefore, queue 1 is engaged in $[\min\{T^{(1)}_a,T^{(2)}_a\},T^{(1)}_f]$ and empty after $T^{(1)}_f$. Using (\ref{eq:derv_of_tau}) and the arrival rates from (\ref{eq:rate_inst1_uneqpref_case21}) and (\ref{eq:rate_inst1_uneqpref_case22}), we get $(C_{\mathbf{F}}^{(1)})^\prime(t)=\tau_1^\prime(t)-\gamma^{(1)}=\frac{(F^{(1)})^\prime(t)+(F^{(2)})^\prime(t)}{\mu_1}-\gamma^{(1)}=0$ for $t\in[\min\{T^{(1)}_a,T^{(2)}_a\},T^{(1)}_f]$. For analyzing the cost of the second class, we consider the two types separately: 
    \begin{itemize}
         \item In Type I ($T^{(1)}_a\leq T^{(2)}_a<T^{(1)}_f$): In $ [T^{(1)}_a,T^{(2)}_a]$, $(C_{\mathbf{F}}^{(2)})^\prime(t)=\tau_1^\prime(t)-\gamma^{(2)}=\frac{(F^{(1)})^\prime(t)+(F^{(2)})^\prime(t)}{\mu_1}-\gamma^{(2)}=\gamma^{(1)}-\gamma^{(2)}<0$. Upon observing $A_2(\tau_1(t))=F^{(2)}(t)$, we can write using chain rule that $A_2^\prime(\tau_1(t))=\frac{(A_2\circ\tau_1)^\prime(t)}{\tau_1^\prime(t)}=\frac{(F^{(2)})^\prime(t)}{\tau_1^\prime(t)}$, whenever the derivatives exists. Now for $t\in[T^{(2)}_a,T^{(1)}_f]$, using (\ref{eq:derv_of_tau}), rate of arrival of class 2 users in queue 2 at $\tau_1(t)$ is $A_2^\prime(\tau_1(t))=\frac{(F^{(2)})^\prime(t)}{\tau_1^\prime(t)}=\mu_1\cdot\frac{(F^{(2)})^\prime(t)}{(F^{(1)})^\prime(t)+(F^{(2)})^\prime(t)}=\frac{\mu_2\gamma^{(2)}}{\gamma^{(1)}}>\mu_2$. As a result, queue 2 remains engaged in $[\tau_1(T^{(2)}_a),T^{(1)}_f]$. With this, the time of service of class 2 users arriving in $[T^{(2)}_a,T^{(1)}_f]$ is $\tau_{\mathbf{F}}^{(2)}(t)=\tau_2(\tau_1(t))=\tau_1(T^{(2)}_a)+\frac{A_2(\tau_1(t))}{\mu_2}=\tau_1(T^{(2)}_a)+\frac{F^{(2)}(t)}{\mu_2}$. Using this and the arrival rates from (\ref{eq:rate_inst1_uneqpref_case21}), $(C_{\mathbf{F}}^{(2)})^\prime(t)=(\tau_{\mathbf{F}}^{(2)})^\prime(t)-\gamma^{(2)}=\frac{(F^{(2)})^\prime(t)}{\mu_2}-\gamma^{(2)}=0$ in $[T^{(2)}_a,T^{(1)}_f]$.  \\  

         \item In Type II ($T^{(2)}_a<T^{(1)}_a<T^{(1)}_f$):  In $[T^{(2)}_a,T^{(1)}_a]$, rate of arrival  of class 2 users at queue 2 at time $\tau_1(t)$ is $\mu_1>\mu_2$, and $\tau_1(T^{(2)}_a)=0$. As a result, queue 2 has a positive waiting queue at $\tau_1(T^{(1)}_a)$. Following a calculation similar to the one used in the previous step for Type I, rate of arrival of class 2 users at queue 2 at time $\tau_1(t)$ is $\frac{\mu_2\gamma^{(2)}}{\gamma^{(1)}}\geq\mu_2$ when $t\in[T^{(1)}_a,T^{(1)}_f]$. The last two statements imply queue 2 has positive waiting time in $(0,T^{(1)}_f]$. Using this, time of service of class 2 users arriving in $[T^{(2)}_a,T^{(1)}_f]$ is $\tau_{\mathbf{F}}^{(2)}(t)=\tau_2(\tau_1(t))=\tau_1(T^{(2)}_a)+\frac{A_2(\tau_1(t))}{\mu_2}=\frac{F^{(2)}(t)}{\mu_2}$ and as a result, $(C_{\mathbf{F}}^{(2)})^\prime(t)=\frac{(F^{(2)})^\prime(t)}{\mu_2}-\gamma^{(2)}=0$.
    \end{itemize} 
    
    \item In $[T^{(1)}_f,T^{(2)}_f]$:~~Queue 1 remains empty in this region. On the other hand, queue 2 remains engaged in $[T^{(1)}_f,T^{(2)}_f]$, since $F^{(2)}(t)-\mu_2\cdot(t-\tau_1(T^{(2)}_a))$ is decreasing in $[T^{(1)}_f,T^{(2)}_f)$ with $F^{(2)}(T^{(2)}_f)=\Lambda^{(2)}=\mu_2\cdot(T^{(2)}_f-\tau_1(T^{(2)}_a))$. As a result, $(C_{\mathbf{F}}^{(1)})^\prime(t)=1-\gamma^{(1)}>0$ and using (\ref{eq:derv_of_tau}), $(C_{\mathbf{F}}^{(2)})^\prime(t)=\tau_2^\prime(t)-\gamma^{(2)}=\frac{(F^{(2)})^\prime(t)}{\mu_2}-\gamma^{(2)}=0$. 

    \item In $[T^{(2)}_f,\infty)$:~~Since both the queues are idle, for both the classes $i\in\{1,2\}$, $(C_{\mathbf{F}}^{(i)})^\prime(t)=1-\gamma^{(i)}>0$.
\end{enumerate}

\noindent\textbf{Case 3}~~$\gamma^{(2)}<\gamma^{(1)}$:

\noindent\textbf{\textit{Proof of Lemma \ref{lem:inst1case3}}:}

\noindent\textbf{Identifying the unique Type I candidate and necessary condition:}\hspace{0.05in}For every EAP under Type I, the support boundaries must satisfy the following system of equations: 
    \begin{enumerate}[leftmargin=*]
            \item By Lemma \ref{lem_queue1_and_2_idle}, queue 1 serves both the groups between $[0,T^{(1)}_f]$ and empties at $T^{(1)}_f$, giving us, $T^{(1)}_f=\frac{\Lambda^{(1)}+\Lambda^{(2)}}{\mu_1}$.
            
            \item By Lemma \ref{lem_arrival_rates_inst1}, group 2 users arrive at rate $\mu_2\gamma^{(2)}$ between $[T^{(2)}_a,T^{(2)}_f]$, giving us, $T^{(2)}_f=T^{(2)}_a+\frac{\Lambda^{(2)}}{\mu_2\gamma^{(2)}}$.  
            
            \item By Lemma \ref{lem_queue1_and_2_idle}, queue 2 starts serving at time zero, empties at $\tau_1(T^{(2)}_f)$ and serves all the group 2 users between $[0,\tau_1(T^{(2)}_f)]$. This gives us, $\mu_2\tau_1(T^{(2)}_f)=\Lambda^{(2)}$, where $\tau_1(T^{(2)}_f)=\frac{F^{(1)}(T^{(2)}_f)+F^{(2)}(T^{(2)}_f)}{\mu_1}=\frac{(\mu_1\gamma^{(1)}-\mu_2\gamma^{(2)})(T^{(2)}_f-T^{(1)}_a)+\Lambda^{(2)}}{\mu_1}$. 
            
            \item By definition of EAP $C_{\mathbf{F}}^{(1)}(T^{(1)}_a)=C_{\mathbf{F}}^{(1)}(T^{(1)}_f)$, giving us, $(1-\gamma^{(1)})T^{(1)}_f=\tau_1(T^{(1)}_a)-\gamma^{(1)} T^{(1)}_a$, where $\tau_1(T^{(1)}_a)=\frac{F^{(2)}(T^{(1)}_a)}{\mu_1}=\frac{\mu_2\gamma^{(2)}(T^{(1)}_a-T^{(2)}_a)}{\mu_1}$.
        \end{enumerate}

Solution to the above linear system is in (\ref{eq_bdary_inst1_uneqpref_case3a}) and every EAP under Type I has (\ref{eq_bdary_inst1_uneqpref_case3a}) as support boundaries. (\ref{eq_bdary_inst1_uneqpref_case3a}) must satisfy $T^{(2)}_a<T^{(1)}_a<T^{(2)}_f\leq T^{(1)}_f$ to represent a Type I EAP. Imposing $T^{(1)}_f\geq T^{(2)}_f$ on \ref{eq_bdary_inst1_uneqpref_case3a}, we get $\Lambda^{(1)}\geq\left(\frac{\mu_1}{\mu_2}-1\right)\Lambda^{(2)}$ and therefore, this is a necessary condition for existence of an EAP under Type I. It is easy to verify that, if $\Lambda^{(1)}\geq\left(\frac{\mu_1}{\mu_2}-1\right)\Lambda^{(2)}$, (\ref{eq_bdary_inst1_uneqpref_case3a}) satisfies $T^{(2)}_a<T^{(1)}_a<T^{(2)}_f\leq T^{(1)}_f$ and upon plugging in arrival rates from Lemma \ref{lem_arrival_rates_inst1}, we get a candidate having arrival rates and support boundaries same as the joint arrival profile under case 3a of Theorem \ref{mainthm_inst1}. This candidate satisfies $F^{(1)}(T^{(1)}_f)=\Lambda^{(1)},~F^{(2)}(T^{(2)}_f)=\Lambda^{(2)}$ and will be the only candidate under Type I to qualify as an EAP. 

\textbf{Identifying the unique Type II candidate and necessary condition:}\hspace{0.05in}The support boundaries of any Type II EAP satisfies the same system of equations as was obtained for Type II in case 2 ($\gamma^{(2)}>\gamma^{(1)}>\frac{\mu_2}{\mu_1}\gamma^{(2)}$) of Theorem \ref{mainthm_inst1}. As a result, every EAP under Type II has (\ref{eq_bdary_inst1_uneqpref_case2b}) as support boundaries. Now (\ref{eq_bdary_inst1_uneqpref_case2b}) must satisfy $T^{(2)}_a<T^{(1)}_a<T^{(1)}_f<T^{(2)}_f$ to represent a Type II EAP. Imposing $T^{(2)}_f>T^{(1)}_f$ on (\ref{eq_bdary_inst1_uneqpref_case2b}), we get $\Lambda^{(1)}<\left(\frac{\mu_1}{\mu_2}-1\right)\Lambda^{(2)}$, and this is a necessary condition for existence of a Type II EAP. Once the necessary condition is satisfied, it is easy to verify that (\ref{eq_bdary_inst1_uneqpref_case2b}) satisfies $T^{(2)}_a<T^{(1)}_a<T^{(1)}_f<T^{(2)}_f$ and plugging in arrival rates from Lemma \ref{lem_arrival_rates_inst1}, we get a candidate with arrival rates and support boundaries same as the joint arrival profile under case 3b of Theorem \ref{mainthm_inst1}. This candidate satisfies $F^{(1)}(T^{(1)}_f)=\Lambda^{(1)},~F^{(2)}(T^{(2)}_f)=\Lambda^{(2)}$ and is the only Type II candidate to qualify as an EAP. 

Now we show that, the unique candidates obtained in both the types are EAP. Upon proving this, the necessary conditions obtained for both the types will also be sufficient and therefore, Lemma \ref{lem:inst1case3} will follow. 

\textbf{Proving that the unique remaining candidate of both classes are EAP:}~By the following sequence of arguments, we prove that, the unique remaining candidates of the two types satisfy: $(C_{\mathbf{F}}^{(i)})^\prime(t)\leq 0$ in $(-\infty,T_{i,a})$, $(C_{\mathbf{F}}^{(i)})^\prime(t)=0$ in $[T_{i,a},T_{i,f}]$, and $(C_{\mathbf{F}}^{(i)})^\prime(t)\geq 0$ in $(T_{i,f},\infty)$, for both the classes $i=1,2$. This will imply that the unique remaining candidates of the two types are EAP. Note that, for both the types $\tau_1(T^{(2)}_a)=0$. 
\begin{itemize}[leftmargin=*]
    \item In $(-\infty,T^{(2)}_a]$:~~For both the classes $i\in\{1,2\}$, $\tau_{\mathbf{F}}^{(i)}(t)=0$ in $(-\infty,T^{(2)}_a]$. As a result, for both $i\in\{1,2\}$, $(C_{\mathbf{F}}^{(i)})^\prime(t)=-\gamma^{(i)}<0$. 
    
    \item In $[T^{(2)}_a,T^{(1)}_a]$:~~Queue 1 remains engaged in $[T^{(2)}_a,T^{(1)}_a]$ since $A_1(t)=F^{(2)}(t)>\mu_1\cdot\max\{0,t\}$ for every $t\in(T^{(2)}_a,T^{(1)}_a]$. As a result, class 2 users arrive at rate $\mu_1>\mu_2$ from queue 1 to 2 in $[0,\tau_1(T^{(1)}_a)]$, making queue 2 engaged in that interval. Therefore, using (\ref{eq:derv_of_tau}), $\tau_{\mathbf{F}}^{(2)}(t)=\tau_2(\tau_1(t))=\tau_1(T^{(2)}_a)+\frac{A_2(\tau_1(t))}{\mu_2}=\frac{F^{(2)}(t)}{\mu_2}$. Hence, in $[T^{(2)}_a,T^{(1)}_a]$, $(C_{\mathbf{F}}^{(2)})^\prime(t)=(\tau_{\mathbf{F}}^{(2)})^\prime(t)-\gamma^{(2)}=\frac{(F^{(2)})^\prime(t)}{\mu_2}-\gamma^{(2)}=0$ and using (\ref{eq:derv_of_tau}), $(C_{\mathbf{F}}^{(1)})^\prime(t)=\tau_1^\prime(t)-\gamma^{(2)}=\frac{(F^{(2)})^\prime(t)}{\mu_1}-\gamma^{(2)}=\left(\frac{\mu_2}{\mu_1}-1\right)\cdot\gamma^{(2)}<0$. 
    
    \item In $[T^{(1)}_a,\max\{T^{(1)}_f,T^{(2)}_f\}]$:~~Note that, $\mu_1\cdot t<F^{(1)}(t)+F^{(2)}(t)$ in $[T^{(1)}_a,T^{(1)}_f)$ and $\mu_1\cdot T^{(1)}_f=F^{(1)}(T^{(1)}_f)+F^{(2)}(T^{(1)}_f)$ for both the types. As a result, queue 1 stays engaged in $[T^{(1)}_a,T^{(1)}_f]$ and empties at $T^{(1)}_f$ in both the types. Therefore, using (\ref{eq:derv_of_tau}), $(C_{\mathbf{F}}^{(1)})^\prime(t)=\tau_1^\prime(t)-\gamma^{(1)}=\frac{(F^{(1)})^\prime(t)+(F^{(2)})^\prime(t)}{\mu_1}-\gamma^{(1)}=0$ for every $t\in[T^{(1)}_a,T^{(1)}_f]$. For the case $T^{(2)}_f>T^{(1)}_f$, since queue 1 stays empty in $[T^{(1)}_f,T^{(2)}_f]$, $(C_{\mathbf{F}}^{(1)})^\prime(t)=1-\gamma^{(1)}>0$ in $[T^{(1)}_f,T^{(2)}_f]$.  For analyzing the cost of the second class, we consider the two types separately:
 
    \begin{itemize}
        \item In Type I ($T^{(1)}_a<T^{(2)}_f\leq T^{(1)}_f$), for every $t\in [T^{(1)}_a,T^{(2)}_f]$, using (\ref{eq:derv_of_tau}), rate of arrival of class 2 users from queue 1 to 2 at time $\tau_1(t)$ is $A_2^\prime(\tau_1(t))=\frac{(F^{(2)})^\prime(t)}{\tau_1^\prime(t)}=\mu_1\cdot\frac{(F^{(2)})^\prime(t)}{(F^{(1)})^\prime(t)+(F^{(2)})^\prime(t)}=\frac{\mu_2\gamma^{(2)}}{\gamma^{(1)}}<\mu_2$. As a result, $A_2(s)-\mu_2\cdot s$ is strictly decreasing in $[\tau_1(T^{(1)}_a),\tau_1(T^{(2)}_f)]$ .
        Also, the support boundaries satisfy $\mu_2 \cdot \tau_1(T^{(2)}_f)=\Lambda^{(2)}=A_2(\tau_1(T^{(2)}_f))$ implying $A_2(\tau_1(T^{(1)}_f))-\mu_2\cdot\tau_1(T^{(2)}_f)=0$. As a result, for every $t\in[T^{(1)}_a,T^{(2)}_f]$, $A_2(\tau_1(t))>\mu_2\cdot\tau_1(t)$, implying queue 2 will be engaged between $[\tau_1(T^{(1)}_a),\tau_1(T^{(2)}_f)]$. Using this, for every $t\in[T^{(1)}_a,T^{(2)}_f]$, by (\ref{eq:derv_of_tau}), $\tau_{\mathbf{F}}^{(2)}(t)=\tau_2(\tau_1(t))=\tau_1(T^{(2)}_a)+\frac{A_2(\tau_1(t))}{\mu_2}=\frac{F^{(2)}(t)}{\mu_2}$. Hence, $(C_{\mathbf{F}}^{(2)})^\prime(t)=\frac{(F^{(2)})^\prime(t)}{\mu_2}-\gamma^{(2)}=0$ in $[T^{(1)}_a,T^{(2)}_f]$. Since queue 2 empties at $\tau_1(T^{(2)}_f)$, we have $\tau_{\mathbf{F}}^{(2)}(t)=\tau_1(t)$ for every $t\in[T^{(2)}_f,T^{(1)}_f]$. As a result,  by (\ref{eq:derv_of_tau}),  $(C_{\mathbf{F}}^{(2)})^\prime(t)=\tau_1^\prime(t)-\gamma^{(2)}=\frac{(F^{(1)})^\prime(t)}{\mu_1}-\gamma^{(2)}=\gamma^{(1)}-\gamma^{(2)}>0$ in $[T^{(2)}_f,T^{(1)}_f]$. 

        \item In Type II ($T^{(1)}_a<T^{(1)}_f<T^{(2)}_f$), for every $t\in[T^{(1)}_a,T^{(1)}_f]$, using a method similar to the one used in previous step, rate of arrival of class 2 users from queue 1 to 2 at time $\tau_1(t)$ is $A_2^\prime(\tau_1(t))=\frac{\mu_2\gamma^{(2)}}{\gamma^{(1)}}<\mu_2$. Since queue 1 empties at $T^{(1)}_f$, class 2 users arrive at queue 2 at rate $A_2^\prime(\tau_1(t))=(F^{(2)})^\prime(t)=\mu_2\gamma^{(2)}<\mu_2$ in $[T^{(1)}_f,T^{(2)}_f]$. Therefore, $A_2(s)-\mu_2\cdot s$ is a strictly decreasing function in $[\tau_1(T^{(1)}_a),T^{(2)}_f]$. Since $\mu_2\cdot T^{(2)}_f=\Lambda^{(2)}=A_2(\tau_1(T^{(2)}_f))$, we have $A_2(\tau_1(t))>\mu_2\cdot\tau_1(t)$ for every $t\in[T^{(1)}_a,T^{(2)}_f)$, making queue 2 engaged in $[\tau_1(T^{(1)}_a),T^{(2)}_f)$. As a result, for every $t\in[T^{(1)}_a,T^{(2)}_f]$, using (\ref{eq:derv_of_tau}), $\tau_{\mathbf{F}}^{(2)}(t)=\tau_2(\tau_1(t))=\frac{A_2(\tau_1(t))}{\mu_2}=\frac{F^{(2)}(t)}{\mu_2}$ and therefore $(C_{\mathbf{F}}^{(2)})^\prime(t)=\frac{(F^{(2)})^\prime(t)}{\mu_2}-\gamma^{(2)}=0$.    
    \end{itemize}    
    
    \item In $[\max\{T^{(1)}_f,T^{(2)}_f\},\infty)$:~~Since both the queues are empty in $[\max\{T^{(1)}_f,T^{(2)}_f\},\infty)$, we have $(C_{\mathbf{F}}^{(i)})^\prime(t)=1-\gamma^{(i)}>0$ for both $i\in\{1,2\}$. \hfill\qedsymbol
\end{itemize}

\subsection{Proofs of Lemmas in Section \ref{sec_inst1_eqpref} (HDS with $\gamma^{(1)}=\gamma^{(2)}$)}\label{appndx:inst1_eqpref}

\noindent\textbf{\textit{Proof of Lemma \ref{lem_rate_of_arrivals_inst1_eqpref}:}}~~Note that $\mathcal{S}(F^{(1)})\subseteq \overline{E_1}$. As a result, $(\tau_{\mathbf{F}}^{(1)})^\prime(t)=\tau_1^\prime(t)=\frac{(F^{(1)})^\prime(t)+(F^{(2)})^\prime(t)}{\mu_1}$ a.e. in $\mathcal{S}(F^{(1)})$. By definition of EAP, $C_{\mathbf{F}}^{(1)}(\cdot)$ must be constant in $\mathcal{S}(F^{(1)})$, giving us, $(C_{\mathbf{F}}^{(1)})^\prime(t)=\frac{(F^{(1)})^\prime(t)+(F^{(2)})^\prime(t)}{\mu_1}-\gamma^{(1)}=0$ in $\mathcal{S}(F^{(1)})$. This implies $(F^{(1)})^\prime(t)+(F^{(2)})^\prime(t)=\mu_1\gamma$.

If $\tau_1(t)\in\overline{E_2}$, Lemma \ref{lem_inst1_uneqpref_derv_of_tau2} together with the fact that $C_{\mathbf{F}}^{(2)}(\cdot)$ is constant in $\mathcal{S}(F^{(2)})$ implies $(F^{(2)})^\prime(t)=\mu_2\gamma$. Otherwise, queue 2 will have zero waiting time in some neighbourhood of $\tau_1(t)$. This can only happen in $\mathcal{S}(F^{(2)})$ if $t\in \mathcal{S}(F^{(1)})$. In that case, since $A_2(\tau_1(t))=F^{(2)}(t)$, we must have $A_2^\prime(\tau_1(t))=\frac{(F^{(2)})^\prime(t)}{\tau_1^\prime(t)}=\frac{(F^{(2)})^\prime(t)}{\gamma}$. Now for queue 2 to remain empty in a neighbourhood of $\tau_1(t)$, we will need $\frac{(F^{(2)})^\prime(t)}{\gamma}\leq\mu_2$ implying $(F^{(2)})^\prime(t)\leq\mu_2\gamma$. \hfill\qedsymbol 

\bigskip

\noindent\textbf{\textit{Proof of Lemma \ref{lem_queue1_and_2_idle_inst1_eqpref}:}}
\begin{enumerate}[leftmargin=*]
        \item  Proof of this follows the same argument used to prove Lemma \ref{lem_appndx_inst1_uneqpref_queue2idle}.
        \item Assume contradiction, \textit{i.e.}, queue 1 has a positive waiting time at $T^{(1)}_f$. By Lemma \ref{lem_rate_of_arrivals_inst1_eqpref} class 2 customers can arrive at a maximum rate of $\mu_2\gamma$ after $T^{(1)}_f$. Therefore using (\ref{eq:derv_of_tau}), till queue 1 remains engaged, $(\tau_{\mathbf{F}}^{(1)})^\prime(t)=\tau_1^\prime(t)\leq\frac{\mu_2\gamma}{\mu_1}$. Hence, $(C_{\mathbf{F}}^{(1)})^\prime(t)\leq\frac{\mu_2\gamma}{\mu_1}-\gamma<0$. Therefore, the class 1 customer arriving at $T^{(1)}_f$ will be better off arriving when queue 1 empties, making $\mathbf{F}$ unstable. \hfill\qedsymbol 
\end{enumerate}

\noindent\textbf{\textit{Proof of Lemma \ref{lem_cost2_const_inst1_eqpref}:}}~~By definition of EAP, if $t\in \mathcal{S}(F^{(2)})$, $(C_{\mathbf{F}}^{(2)})^\prime(t)=0$. Otherwise, if $t\in[T_a,T_f]/\mathcal{S}(F^{(2)})$, by Lemma \ref{lem_queue1_and_2_idle_inst1_eqpref}, queue 2 must be empty at some neighbourhood of $\tau_1(t)$. As a result, $\tau_{\mathbf{F}}^{(2)}(s)=\tau_1(s)$ for every $s\in[t-\delta,t+\delta]$ for $\delta>0$ chosen sufficiently small, and therefore $(\tau_{\mathbf{F}}^{(2)})^\prime(t)=\tau_1^\prime(t)$. By Lemma \ref{lem_rate_of_arrivals_inst1_eqpref}, class 1 customers arrive at rate $\mu_1\gamma$ in $[T_a,T_f]/\mathcal{S}(F^{(2)})$. So, using (\ref{eq:derv_of_tau}), $(C_{\mathbf{F}}^{(2)})^\prime(t)=\tau_1^\prime(t)-\gamma=\frac{(F^{(1)})^\prime(t)}{\mu_1}-\gamma=0$. Hence, $(C_{\mathbf{F}}^{(2)})^\prime(t)=0$ for every $t\in[T_a,T_f]$ and therefore the cost remains constant. \hfill\qedsymbol \\

\noindent\textbf{\textit{Proof of Lemma \ref{lem_inst1_eqpreF^{(2)}rate_less_than_mu2}:}}~~If $t\in\mathcal{S}(F^{(1)})\cap \mathcal{S}(F^{(2)})$, we must have $t\in\overline{E_1}$. By (\ref{eq:derv_of_tau}) and Lemma \ref{lem_rate_of_arrivals_inst1_eqpref}, we have $\tau_1^\prime(t)=\gamma$. Now, we also know $A_2(\tau_1(t))=F^{(2)}(t)$. By chain rule and using Lemma \ref{lem_rate_of_arrivals_inst1_eqpref}, we get $A_2^\prime(\tau_1(t))=\frac{(F^{(2)})^\prime(t)}{\tau_1^\prime(t)}\leq\frac{\mu_2\gamma}{\gamma}=\mu_2$, which is exactly the statement of the lemma.\hfill\qedsymbol

\subsubsection{Proof of Theorem \ref{mainthm_inst1_eqpref}}\label{appndx:thm_inst1_eqpref}
We only provide here the arguments which were skipped while presenting the key steps of the proof of Theorem \ref{mainthm_inst1_eqpref} in Section \ref{sec_inst1_eqpref}.  

\noindent\textit{\textbf{Skipped details in the proof of Lemma \ref{lem:inst1eqprefcase1}:}}

\textbf{Queue 2 has positive waiting time in $(0,T_f)$:}~By Lemma \ref{lem_queue1_and_2_idle_inst1_eqpref}, queue 1 must be empty at time $T^{(1)}_f$ and will stay empty in $[T^{(1)}_f,T_f]$ since maximum arrival rate of class 2 users in $[T^{(1)}_f,T_f]$ is $\mu_2\gamma<\mu_1$. Since a positive mass of class 2 users arrive in $(T^{(1)}_f,T_{f}]$, queue 2 must have a positive waiting time at $T^{(1)}_f$. Otherwise the network will be empty at $T^{(1)}_f$ and every class 2 user arriving after $T^{(1)}_f$ will have a strict incentive to arrive at $T^{(1)}_f$. Also as queue 2 is the only queue serving in the network in $[T^{(1)}_f,T_f]$, it must have a positive waiting time in $[T^{(1)}_f,T_f)$.

By Lemma \ref{lem_inst1_eqpreF^{(2)}rate_less_than_mu2}, in equilibrium, class 2 users arrive at queue 2 after passing through queue 1 at a maximum rate of $\mu_2$ in $[\tau_1(T^{(1)}_a),T^{(1)}_f]$. As a result, for queue 2 to have positive waiting queue at $T^{(1)}_f$, queue 2 must have positive waiting queue in $[\tau_1(T^{(1)}_a),T^{(1)}_f]$. 

Now if $T^{(1)}_a=T_a<0$, then $\tau_1(T^{(1)}_a)=0$ and the preceding argument implies queue 2 has positive waiting queue in $[0,T^{(1)}_f]$. On the other hand, if $T_a<T^{(1)}_a$, queue 1 must have a positive waiting time at $T^{(1)}_a$. As a result, class 2 users will arrive at queue 2 from queue 1 at rate $\mu_1>\mu_2$ in $(0,\tau_1(T^{(1)}_a)]$ causing queue 2 to have a positive queue length in $(0,\tau_1(T^{(1)}_a)]$. 

The preceding argument implies queue 2 has a positive waiting time in the intervals $(0,\tau_1(T^{(1)}_a)]$, $[\tau_1(T^{(1)}_a),T^{(1)}_f]$ and $[T^{(1)}_f,T_f)$ and therefore the same is true in $(0,T_f)$. \hfill\qedsymbol

\noindent\textbf{\textit{Skipped details in the proof of Lemma \ref{lem:inst1eqprefcase2}:}}

\textbf{If the necessary condition holds, every joint arrival profile in the set of candidates is an EAP:} We will prove that, if $\Lambda^{(1)}\geq\left(\frac{\mu_1}{\mu_2}-1\right)\cdot\Lambda^{(2)}$, when users are arriving by any joint arrival profile in the obtained set of Type II candidates (which we argued will be non-empty): every class have their cost constant in $[T_a,T_f]$ and higher outside. Towards that, it will be sufficient to show, if $\Lambda^{(1)}\geq\left(\frac{\mu_1}{\mu_2}-1\right)\cdot\Lambda^{(2)}$, for every joint arrival profile in the obtained set of Type II candidates: both the groups $i\in\{1,2\}$ have their derivative of the cost function satisfying $(C_{\mathbf{F}}^{(i)})^\prime(t)\leq 0$ in $(-\infty,T_a)$, $(C_{\mathbf{F}}^{(i)})^\prime(t)=0$ in $[T_a,T_f]$ and $(C_{\mathbf{F}}^{(i)})^\prime(t)\geq 0$ in $(T_f,\infty)$. We now analyze below the derivative of the cost function of the two classes to prove our mentioned objective: 

\begin{itemize}[leftmargin=*]
    \item In $(-\infty,T_a)$:~~Since for both classes $i\in\{1,2\}$, $\tau_{\mathbf{F}}^{(i)}(t)=0$ in $(-\infty,T_{a})$, we have $(C_{\mathbf{F}}^{(i)})^\prime(t)-\gamma<0$ for every $t\in(-\infty,T_a)$. 
    
    \item In $[T_a,T_f]$:~The function $F^{(1)}(t)+F^{(2)}(t)-\mu_1 t=\mu_1\gamma\cdot(t-T_a)-\mu_1 t$ is decreasing in $[0,T_f]$. We also have $F^{(1)}(T_f)+F^{(2)}(T_f)-\mu_1 T_f=\Lambda^{(1)}+\Lambda^{(2)}-\mu_1 T_f=0$, implying queue 1 empties at $T_f$. Combining the previous two statements $F^{(1)}(t)+F^{(2)}(t)>\mu_1\cdot\max\{0,t\}$ in $[T_a,T_f]$, implying queue 1 has a positive waiting time. Therefore using (\ref{eq:derv_of_tau}) and the fact $(F^{(1)})^\prime(t)+(F^{(2)})^\prime(t)=\gamma$ for every $t\in[T_a,T_f]$, we get $(C_{\mathbf{F}}^{(1)})^\prime(t)=\tau_1^\prime(t)-\gamma=\frac{(F^{(1)})^\prime(t)+(F^{(2)})^\prime(t)}{\mu_1}-\gamma=0$ for every $t\in[T_a,T_f]$. By Lemma \ref{lem_cost2_const_inst1_eqpref}, $C_{\mathbf{F}}^{(2)}(\cdot)$ stays constant between $[T_a,T_f]$ making $(C_{\mathbf{F}}^{(2)})^\prime(t)=0$ for every $t\in[T_a,T_f]$.   
    
    \item In $(T_f,\infty)$:~~By the previous step queue 1 became empty at $T_f$. We also argued before queue 2 stays empty after time zero. As a result, the whole network is empty in $(T_f,\infty)$, and for both classes $i\in\{1,2\}$, $(C_{\mathbf{F}}^{(i)})^\prime(t)=1-\gamma>0$ for every $t\in(T_f,\infty)$. \hfill\qedsymbol
\end{itemize}

\section{Proofs of the main results in Section \ref{sec_hetarrivals}}

Note that $\mathcal{S}(F^{(2)})\subseteq\overline{E_2}$ in every EAP. As a result, by (\ref{eq:derv_of_tau}), $\tau_2(\cdot)$ is differentiable in $\mathcal{S}(F^{(2)})$. Moreover since $C_{\mathbf{F}}^{(2)}(\cdot)$ is constant over $\mathcal{S}(F^{(2)})$, we will have $(C_{\mathbf{F}}^{(2)})^\prime(t)=\tau_2^\prime(t)-\gamma^{(2)}=0$ in $(\mathcal{S}(F^{(2)}))^o$ implying:
\begin{align}\label{eq:isocost_inst2_grp2}
    \forall~t\in(\mathcal{S}(F^{(2)}))^o,~\tau_2^\prime(t)&=\gamma^{(2)}.   
\end{align}

Similarly, since $C_{\mathbf{F}}^{(1)}(\cdot)$ is constant in $\mathcal{S}(F^{(1)})$, we must have $(C_{\mathbf{F}}^{(1)})^\prime(t)=(\tau_{\mathbf{F}}^{(1)})^\prime(t)-\gamma^{(1)}=0$ in $(\mathcal{S}(F^{(1)}))^o$,
\begin{align}\label{eq:isocost_inst2_grp1}
    \forall~t\in(\mathcal{S}(F^{(1)}))^o,~(\tau_{\mathbf{F}}^{(1)})^\prime(t)&=\gamma^{(1)}. 
\end{align}

\subsection{Structural properties of EAP for HAS with $\gamma^{(1)}\neq\gamma^{(2)}$}\label{appndx:inst2_uneqpref}

Several of the structural properties will be true in the case $\gamma^{(1)}=\gamma^{(2)}$. So, whenever some lemma is applicable only for the case of unequal preferences, we mention it explicitly in the lemma statement. 

Lemma \ref{lem_appndx_inst2_uneqpref_queu1_engaged_mixedarrival} helps us to identify the structure of the supports and also to prove Lemma \ref{lem_threshold_behav_inst2} on threshold behavior in HAS. 
\begin{lemma}\label{lem_appndx_inst2_uneqpref_queu1_engaged_mixedarrival}
    If $\gamma^{(1)}\neq\gamma^{(2)}$, in the EAP, $t\in(\mathcal{S}(F^{(1)}))^o\cap(\mathcal{S}(F^{(2)}))^o$ implies $t\in\overline{E_1}$. 
\end{lemma}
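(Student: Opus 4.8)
The plan is to argue by contradiction, mirroring the proof of the analogous HDS statement in Lemma \ref{lem_inst1_uneqpref_mixedarrival}. Suppose some $t\in(\mathcal{S}(F_1))^o\cap(\mathcal{S}(F_2))^o$ satisfies $t\notin\overline{E_1}$. Since $\overline{E_1}^c$ is open and $t$ lies in the (open) interiors of both supports, I would choose $\delta>0$ so that the whole neighbourhood $[t-\delta,t+\delta]$ is contained in $\overline{E_1}^c\cap(\mathcal{S}(F_1))^o\cap(\mathcal{S}(F_2))^o$. Because $\overline{E_1}^c\subseteq E_1^c=\{s:s\geq 0,\ Q_1(s)=0\}$, on this neighbourhood queue 1 is idle and time is nonnegative, so $W_1(s)=0$ and hence $\tau_1(s)=s$ throughout $[t-\delta,t+\delta]$. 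The absolute continuity guaranteed by Lemma \ref{lem_eq_has_no_jump} makes $\tau_1,\tau_2$ continuous, so this neighbourhood construction is legitimate.

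The crucial observation is that, once queue 1 is transparent (i.e.\ $\tau_1(s)=s$), a class 1 user and a class 2 user arriving at the same instant $s$ have an identical experience at queue 2. Indeed, from the HAS definitions, $\tau_{\mathbf{F}}^{(1)}(s)=\tau_1(s)+W_2(\tau_1(s))=s+W_2(s)=\tau_2(s)=\tau_{\mathbf{F}}^{(2)}(s)$, and similarly $W_{\mathbf{F}}^{(1)}(s)=W_1(s)+W_2(\tau_1(s))=W_2(s)=W_{\mathbf{F}}^{(2)}(s)$. Writing $W_2(s)=\tau_2(s)-s$ and substituting into the normalized costs gives $C_{\mathbf{F}}^{(1)}(s)=\tau_2(s)-\gamma_1 s$ and $C_{\mathbf{F}}^{(2)}(s)=\tau_2(s)-\gamma_2 s$ on the neighbourhood.

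To finish, I would invoke the iso-cost property of an EAP: since $[t-\delta,t+\delta]$ lies in the interior of both supports, both $C_{\mathbf{F}}^{(1)}$ and $C_{\mathbf{F}}^{(2)}$ are constant there, hence so is their difference $C_{\mathbf{F}}^{(1)}(s)-C_{\mathbf{F}}^{(2)}(s)=(\gamma_2-\gamma_1)s$. But this expression is strictly monotone in $s$ whenever $\gamma_1\neq\gamma_2$, which is the desired contradiction. (Equivalently, one may differentiate the two iso-cost conditions to obtain $\tau_2^\prime(s)=\gamma_1=\gamma_2$, again contradicting $\gamma_1\neq\gamma_2$.)

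The only real work is the bookkeeping in the middle paragraph: correctly reading off from the HAS waiting- and departure-time formulas that an idle first queue forces the two classes to face the \emph{same} second-queue dynamics. Once that reduction is in place the contradiction is immediate, and pleasantly it does not even require differentiability of $\tau_2$, since it suffices to compare the two constant cost functions rather than their derivatives. I expect the main obstacle to be purely the care needed to keep $W_1,W_2,\tau_1,\tau_2$ and the composite quantities $W_{\mathbf{F}}^{(i)},\tau_{\mathbf{F}}^{(i)}$ straight; notably, unlike its HDS counterpart, this argument uses no comparison between $\mu_1$ and $\mu_2$.
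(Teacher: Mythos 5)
Your proposal is correct and follows essentially the same route as the paper's proof: contradict $t\notin\overline{E_1}$, pick a neighbourhood in $\overline{E_1}^c$ intersected with both support interiors, observe that an idle queue 1 makes $\tau_{\mathbf{F}}^{(1)}(s)=\tau_{\mathbf{F}}^{(2)}(s)=\tau_2(s)$, and use the iso-cost property of both classes to force $\gamma_1=\gamma_2$. Your variant of the final step — comparing the constant difference $C_{\mathbf{F}}^{(1)}(s)-C_{\mathbf{F}}^{(2)}(s)=(\gamma_2-\gamma_1)s$ rather than differentiating the two iso-cost conditions — is a minor streamlining that sidesteps the paper's appeal to $[t-\delta,t+\delta]\subseteq E_2$ and to differentiability of $\tau_2$, but the underlying argument is the same.
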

\begin{proof} 
Assume the contradiction, \textit{i.e.}, $t\in (\mathcal{S}(F^{(1)}))^o\cap(\mathcal{S}(F^{(2)}))^o$ but $t\notin \overline{E_1}$. Then we can have $\delta>0$ sufficiently small such that $[t-\delta,t+\delta]\subseteq (\mathcal{S}(F^{(1)}))^o\cap(\mathcal{S}(F^{(2)}))^o\cap\overline{E_1}^c$. Since queue 1 has zero waiting time in $[t-\delta,t+\delta]$, we must have $[t-\delta,t+\delta]\subseteq E_2$ and for both the groups $i=1,2$ $\tau_{\mathbf{F}}^{(i)}(s)=\tau_2(s)$ at every $s\in[t-\delta,t+\delta]$. Therefore, using (\ref{eq:isocost_inst2_grp1}) and (\ref{eq:isocost_inst2_grp2}), $\tau_1^\prime(s)=\gamma^{(1)}=\gamma^{(2)}$ at every $s\in[t-\delta,t+\delta]$. This contradicts with the fact that $\gamma^{(1)}\neq\gamma^{(2)}$. 
\end{proof}

We define the arrival profile of class 1 users from queue 1 to 2 as $Y_1(t)=F^{(1)}(\tau_1^{-1}(t))$, where \newline$\tau_1^{-1}(t)=\sup\{s~\vert~\tau_1(s)\leq t\}$. Since $F^{(1)}(\cdot)$ is absolutely continuous,  $Y_1(\cdot)$ is also absolutely continuous and $\mathcal{S}(Y_1)$ has no isolated point. Lemma \ref{lem_inst2_uneqpref_mixedarr_nec_cond} partially proves the sufficiency of $\mu_1\geq\mu_2\gamma^{(2)}$ for queue 2 to serve the two classes over disjoint sets of times.

\begin{lemma}\label{lem_inst2_uneqpref_mixedarr_nec_cond}
    If $\mu_1\geq\mu_2\gamma^{(2)}$ and $\gamma^{(1)}\neq\gamma^{(2)}$, then $\mathcal{S}(Y_1)$ and $(\mathcal{S}(F^{(2)}))^o$ must be disjoint.
\end{lemma}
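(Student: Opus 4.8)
The plan is to argue by contradiction, following the same template used in the proof sketch of Lemma \ref{lem_threshold_behav_inst2} but now tracking the class 1 feed-in profile $Y_1$ explicitly. Suppose $\mathcal{S}(Y_1)$ and $(\mathcal{S}(F_2))^o$ overlap. Since both $Y_1$ and $F_2$ are absolutely continuous (so their supports have no isolated points), I can pick a nondegenerate interval $[s_1,s_2]$ lying in $(\mathcal{S}(F_2))^o$ on which a positive mass of class 1 users also arrives at queue 2, i.e. $Y_1(s_2)>Y_1(s_1)$. On this interval queue 2 serves both classes simultaneously, so queue 2 must be engaged throughout $(s_1,s_2)$ (otherwise a class 2 user could improve by shifting to an idle instant, contradicting the EAP property). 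Hence by (\ref{eq:derv_of_tau}) queue 2 empties arrivals at the full rate $\mu_2$, and since it is engaged and class 2 is iso-cost there, (\ref{eq:isocost_inst2_grp2}) gives $\tau_2'(t)=\gamma_2$ a.e. in $(s_1,s_2)$.

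The key computation is to bound the combined arrival rate into queue 2 from below by the class 1 contribution. The departure rate from queue 2 while engaged is $\mu_2$, and the arrival rate into queue 2 is $Y_1'(t)+F_2'(t)$; the iso-cost condition forces the effective combined arrival rate to equal $\mu_2\gamma_2$ (this is the HAS analogue of the $F_2'(t)=\mu_2\gamma_2$ step in Lemma \ref{lem_threshold_behav_inst1}, and should follow by integrating $\tau_2'=\gamma_2$ against $\tau_2(t)=A_2(t)/\mu_2 + (\text{idle term})$ with $A_2(t)=Y_1(t)+F_2(t)$). Since a \emph{positive} mass of class 1 arrives in $[s_1,s_2]$, the combined increment strictly exceeds the class 1 increment, but I also need the class 1 feed-in rate itself. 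The natural way to get it: on $[s_1,s_2]$ class 1 users are passing through queue 1, and for class 1 to be iso-cost while mixing with class 2, Lemma \ref{lem_appndx_inst2_uneqpref_queu1_engaged_mixedarrival} forces queue 1 to be engaged on the corresponding pre-image, so class 1 exits queue 1 (hence enters queue 2) at rate $\mu_1$. Combining, the total arrival rate into queue 2 is at least $\mu_1$ (from class 1 alone) plus a positive class 2 contribution, so it strictly exceeds $\mu_1$; but it equals $\mu_2\gamma_2$, yielding $\mu_2\gamma_2>\mu_1$, contradicting the hypothesis $\mu_1\ge\mu_2\gamma_2$.

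The main obstacle I anticipate is the careful bookkeeping of time-changes between the $t$-clock (arrival times at queue 1) and the $\tau_1(t)$-clock (arrival times at queue 2), since $Y_1=F_1\circ\tau_1^{-1}$ lives on the latter. In particular I must verify that the interval $[s_1,s_2]$ on which I read off the class 1 feed-in rate $\mu_1$ is genuinely the image under $\tau_1$ of a queue-1-engaged interval, and that the chain-rule manipulation $A_2'(\tau_1(t))=\big(Y_1'+F_2'\big)(\tau_1(t))$ is valid a.e.\ where the relevant derivatives exist (using absolute continuity of $Y_1$ and $F_2$). I would handle this exactly as in Lemma \ref{lem_inst1_uneqpref_derv_of_tau2}, choosing $\delta$ small enough that queue 1 is engaged on a full neighbourhood and invoking continuity of $\tau_1$. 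A secondary subtlety is that the lemma only claims disjointness of $\mathcal{S}(Y_1)$ from the \emph{interior} $(\mathcal{S}(F_2))^o$; this is what lets me select an interior interval $[s_1,s_2]\subseteq(\mathcal{S}(F_2))^o$ cleanly, avoiding boundary complications, so the strict-versus-weak inequality $\mu_1\ge\mu_2\gamma_2$ is exactly what the contradiction needs. The word ``partially'' in the lemma statement signals that the complementary piece (ruling out overlap with the boundary of $\mathcal{S}(F_2)$, and the converse direction) is deferred to a companion lemma, so I need not close that gap here.
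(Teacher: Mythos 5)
Your proposal is correct and takes essentially the same route as the paper's proof: assume an overlap, use Lemma \ref{lem_appndx_inst2_uneqpref_queu1_engaged_mixedarrival} to force queue 1 to be engaged so that class 1 feeds queue 2 at rate $\mu_1$, use the class 2 iso-cost condition (\ref{eq:isocost_inst2_grp2}) with (\ref{eq:derv_of_tau}) to get the combined arrival rate $Y_1^\prime+F_2^\prime=\mu_2\gamma_2$, and conclude $\mu_2\gamma_2>\mu_1$, contradicting the hypothesis. The paper phrases the comparison in terms of masses over a positive-measure set rather than rates on an interval, but the argument is the same.
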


\begin{proof}
    Assuming contradiction and the fact that $\mathcal{S}(Y_1)$ cannot have an isolated point, we will have $t_1,t_2\in\mathcal{S}(Y_1)$ such that $Y_1(t_2)>Y_1(t_1)$ and $(t_1,t_2)\subseteq(\mathcal{S}(F^{(2)}))^o$. By Lemma \ref{lem_appndx_inst2_uneqpref_queu1_engaged_mixedarrival}, we can find $S\subseteq(t_1,t_2)\cap\mathcal{S}(Y_1)$, such that, $\lambda(S)>0$ (where $\lambda(\cdot)$ is Lebesgue measure) and queue 1 stays engaged in  $S$. As a result, mass of class 1 users who have arrived in queue 2 from queue 1 in $S$ is $\mu_1\cdot\lambda(S)$. By (\ref{eq:isocost_inst2_grp2}) and (\ref{eq:derv_of_tau}), since $(t_1,t_2)\subseteq\mathcal{S}(F^{(2)})$, we have $Y_1^\prime(t)+(F^{(2)})^\prime(t)=\mu_2\gamma^{(2)}$ almost everywhere in $(t_1,t_2)$. Therefore, total mass of users who have arrived in queue 2  in the set of times $S$ will be $\mu_2\gamma^{(2)}\cdot\lambda(S)$. This gives us, $\mu_1\cdot\lambda(S)<\mu_2\gamma^{(2)}\cdot\lambda(S)$, which implies $\mu_1<\mu_2\gamma^{(2)}$, a contradiction to our assumption. 
\end{proof}

\noindent\textbf{A better proof:}~Observe that $E_1\subseteq\mathcal{S}(Y_1)$. There are two possibilities now. If $E_1\cap(\mathcal{S}(F^{(2)}))^o\neq\emptyset$, the old proof works nicely. Otherwise, if $(\mathcal{S}(Y_1)/E_1)\cap(\mathcal{S}(F^{(2)}))^o\neq\emptyset$, we have $t_1,t_2\subseteq\mathcal{S}(Y_1)\cap\mathcal{S}(F^{(2)})$ with $[t_1,t_2]\subseteq(\mathcal{S}(F^{(2)}))^o$, such that queue 1 has zero waiting in $[t_1,t_2]$. Since $[t_1,t_2]\subseteq(\mathcal{S}(F^{(2)}))^o$, we have $(C_{\mathbf{F}}^{(2)})^\prime(t)=0$ in $[t_1,t_2]$, which implies $\tau_2^\prime(t)=\gamma^{(2)}$. Again we have $\tau_{F}^{(1)}(t)=\tau_2(t)$ in $[t_1,t_2]$, since queue 1 is idle in $[t_1,t_2]$. As a result, $(C_{\mathbf{F}}^{(1)})^\prime(t)=\tau_2^\prime(t)-\gamma^{(1)}=\gamma^{(2)}-\gamma^{(1)}$ in $[t_1,t_2]$, implying $C_{\mathbf{F}}^{(1)}(t_1)\neq C_{\mathbf{F}}^{(2)}(t_2)$, which contradicts our assumption that $\mathbf{F}$ is an EAP.\hfill\qed\\ 

Lemma \ref{lem_supp_are_intervals_inst2} together with Lemma \ref{lem_arrival_rates_inst2} help us reduce our search of EAP to piece-wise linear joint arrival profiles. 

\begin{lemma}[\textbf{Support structure}]\label{lem_supp_are_intervals_inst2}
In every EAP of HAS, the following statements are true:
\begin{enumerate}
    \item $\mathcal{S}(F^{(1)})\cup \mathcal{S}(F^{(2)})$ is an interval.
    \item  If $\mu_1<\mu_2\gamma^{(2)}$, $\mathcal{S}(F^{(2)})$ is an interval.
    \item  If $\gamma^{(1)}\neq\gamma^{(2)}$, $\mathcal{S}(F^{(1)})$ is an interval.
\end{enumerate}
\end{lemma}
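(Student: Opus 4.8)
The plan is to prove each of the three statements by contradiction: assume the relevant support fails to be an interval, extract a maximal gap $(t_1,t_2)$ whose endpoints lie in the closed support, and then exhibit a user of an appropriate class who strictly lowers her cost by shifting her arrival, contradicting the EAP property. By Lemma \ref{lem_eq_has_no_jump} the profiles are absolutely continuous, so the supports have no isolated points; hence in each case $t_2$ is strictly below the corresponding right boundary and a positive mass still arrives after $t_2$. The two recurring engines are (i) the ``delay a class $1$ user'' move and (ii) the ``iso-cost derivative has a fixed sign'' move, both read off from (\ref{eq:derv_of_tau}), (\ref{eq:isocost_inst2_grp1}) and (\ref{eq:isocost_inst2_grp2}).

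For statement (1), suppose $\mathcal S(F_1)\cup\mathcal S(F_2)$ has a gap $(t_1,t_2)$; then $F_1$ and $F_2$ are both constant there, so $A_1=F_1$ is constant. I would first dispose of the easy case where a class $1$ user arrives at $t_1$ with queue $1$ engaged: since $A_1$ is flat on the gap, (\ref{eq:derv_of_tau}) gives $\tau_1(t_1)=\tau_1(t_1+\delta)$ for small $\delta>0$, so her entry time into queue $2$, and hence her network departure $\tau_{\mathbf F}^{(1)}=\tau_2\circ\tau_1$, are unchanged while her waiting strictly drops — a profitable delay. The substantive case is when the last pre-gap arrival is class $2$ (or class $1$ meets an idle queue $1$): here $A_2=Y_1+F_2$ need not be constant on the gap, because queue $1$ may still be discharging its pre-gap class-$1$ backlog into queue $2$. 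I would argue this backlog is finite and clears at some $t^\circ<t_2$ (otherwise class $1$ effectively feeds queue $2$ throughout the gap, a configuration handled by comparing class-$2$ costs directly), after which $A_2$ is constant on $(t^\circ,t_2)$. On $(t^\circ,t_2)$ queue $2$ is either engaged, whence $\tau_2'=A_2'/\mu_2=0$ by (\ref{eq:derv_of_tau}) and a class $2$ user arriving at $t^\circ$ can delay into the gap at the same departure but strictly less waiting; or idle, whence the whole network is empty on a subinterval and the first post-gap arrival strictly prefers to advance into that empty window. Either way the EAP property is contradicted.

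Statements (2) and (3) both start from statement (1): a gap in $\mathcal S(F_2)$ lies inside $\mathcal S(F_1)$, and a gap in $\mathcal S(F_1)$ lies inside $\mathcal S(F_2)$. For (2), on a class-$2$ gap $(t_1,t_2)\subseteq\mathcal S(F_1)$ the class $2$ cost is $C^{(2)}_{\mathbf F}(t)=\tau_2(t)-\gamma_2 t$, so I track $\frac{d}{dt}C^{(2)}_{\mathbf F}=\tau_2'-\gamma_2$ where queue $2$ is engaged. When queue $1$ is engaged, class $1$ feeds queue $2$ at rate $\mu_1$ while $F_2'=0$, so $\tau_2'=\mu_1/\mu_2<\gamma_2$ precisely because $\mu_1<\mu_2\gamma_2$; when queue $1$ is idle, $\tau_2'=F_1'/\mu_2$ with $F_1'\le\mu_1<\mu_2\gamma_2$, again giving $\tau_2'<\gamma_2$. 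Thus $C^{(2)}_{\mathbf F}$ is strictly decreasing across the gap, contradicting $C^{(2)}_{\mathbf F}(t_1)=C^{(2)}_{\mathbf F}(t_2)$; any idle-queue-$2$ subinterval is excluded by the network-empty deviation of (1). For (3), on a class-$1$ gap $(t_1,t_2)\subseteq\mathcal S(F_2)$, if queue $1$ is engaged at $t_1$ the delay move applies verbatim; if queue $1$ is idle there then $\tau_1(t)=t$ on the gap, a hypothetical class $1$ arrival has cost $\tau_2(t)-\gamma_1 t$, and since $(t_1,t_2)\subseteq(\mathcal S(F_2))^o$ we have $\tau_2'=\gamma_2$ by (\ref{eq:isocost_inst2_grp2}), so this cost has derivative $\gamma_2-\gamma_1\ne0$. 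It is therefore strictly monotone across the gap, and a class $1$ user at one endpoint can shift toward the other for a strict gain, contradicting $\gamma_1\ne\gamma_2$.

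The main obstacle is exactly the coupling introduced by routing class $1$ through queue $1$ before queue $2$: during a gap in the primitive arrivals, queue $2$ can keep filling from queue $1$'s residual backlog, so the single-queue reasoning ``nobody arrives, so queue $2$ drains'' fails. Controlling this requires isolating the sub-window $(t^\circ,t_2)$ on which the full feed $A_2=Y_1+F_2$ is genuinely constant, and separately treating configurations where queue $1$ discharges throughout the gap; here Lemma \ref{lem_appndx_inst2_uneqpref_queu1_engaged_mixedarrival} (queue $1$ must be engaged when both classes are served together) and Lemma \ref{lem_inst2_uneqpref_mixedarr_nec_cond} (disjointness of $\mathcal S(Y_1)$ and $(\mathcal S(F_2))^o$ when $\mu_1\ge\mu_2\gamma_2$) are the tools that pin down the queue states. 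The remaining bookkeeping — matching endpoint costs and converting each idle-queue window into a network-empty deviation — is routine once the queue-engagement status across the gap has been settled.
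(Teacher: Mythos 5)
Parts (2) and (3) of your argument coincide with the paper's proof: for a class-2 gap you bound $(C^{(2)}_{\mathbf F})'(t)=Y_1'(t)/\mu_2-\gamma_2\le\mu_1/\mu_2-\gamma_2<0$ across the gap, and for a class-1 gap you split on whether queue 1 is engaged at $t_1$ and otherwise use $\tau_2'=\gamma_2$ from (\ref{eq:isocost_inst2_grp2}) to get the sign $\gamma_2-\gamma_1\neq 0$. Those are correct and essentially identical to the paper.

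Part (1) is where there is a genuine gap, and it is exactly at the obstacle you correctly identify (queue 1 discharging its pre-gap class-1 backlog into queue 2 during the gap). Your proposed resolution does not close. First, the agent you deviate --- ``a class 2 user arriving at $t^\circ$'' --- does not exist: $t^\circ$ lies inside the gap of $\mathcal S(F_1)\cup\mathcal S(F_2)$, so nobody arrives there. The natural repair is to compare $C^{(2)}_{\mathbf F}(t_1)$ with $C^{(2)}_{\mathbf F}(t_2)$, but on $(t_1,t^\circ)$ queue 2 is fed at rate $\mu_1$ so $\tau_2'=\mu_1/\mu_2$, and statement (1) carries no assumption relating $\mu_1$ to $\mu_2\gamma_2$; the class-2 cost can therefore \emph{increase} on $(t_1,t^\circ)$ and then stay flat on $(t^\circ,t_2)$, yielding no contradiction. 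The ``backlog never clears'' configuration is likewise only waved at, and Lemma \ref{lem_inst2_uneqpref_mixedarr_nec_cond}, which you cite as a tool, assumes $\mu_1\ge\mu_2\gamma_2$ and so is unavailable here. The paper's proof sidesteps all of this by moving the deviation one step upstream: let $\tilde t$ be the last class-1 arrival time before the gap (which may be strictly less than $t_1$ if class-2 users arrive in $(\tilde t,t_1]$). If queue 1 is engaged at $\tilde t$, then since no class-1 user arrives in $(\tilde t,t_2)$, (\ref{eq:derv_of_tau}) gives $\tau_1(\tilde t)=\tau_1(\tilde t+\delta)$, hence $\tau^{(1)}_{\mathbf F}(\tilde t)=\tau^{(1)}_{\mathbf F}(\tilde t+\delta)$, and the class-1 user at $\tilde t$ profitably delays --- no analysis of queue 2 on the gap is ever needed. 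If instead queue 1 is idle at $\tilde t$ (or no class-1 mass arrives before $t_1$ at all), queue 1 stays idle through $t_2$, there is no discharge, $A_2$ is genuinely constant on $[t_1,t_2]$, $[t_1,t_2]\subseteq E_2$, and the endpoint comparison $\tau_2(t_1)=\tau_2(t_2)$ finishes the argument. You should restructure part (1) around this dichotomy on the state of queue 1 at the last pre-gap class-1 arrival, rather than around the state of queue 2 inside the gap.
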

\begin{proof}
\noindent\textbf{(1)} Assume that there is a gap $(t_1,t_2)$ in the support such that $F^{(1)}(t_1)+F^{(2)}(t_1)=F^{(1)}(t_2)+F^{(2)}(t_2)$. The following scenarios might happen:
\begin{itemize}[leftmargin=*]
    \item If $\mathcal{S}(F^{(1)})\cap (-\infty,t_1]$ has zero Lebesgue measure, then, all class $1$ users will arrive after time $t_2$. Till time $t_1$ only Class 2 players have arrived and $W_{\mathbf{F}}^{(1)}(\cdot)$ must be positive between $[t_1,t_2]$. Therefore $[t_1,t_2]\subseteq E_2$. Applying (\ref{eq:derv_of_tau}), since no class 2 user arrives between $[t_1,t_2]$, $\tau_2(t_1)=\tau_2(t_2)$. Therefore, the class 2 user arriving at time $t_1$ will be strictly better off by arriving at time $t_2$.
    
    \item Let $\mathcal{S}(F^{(1)})\cap (-\infty,t_1)$ has a positive Lebesgue measure. Then consider the time\\ $\tilde{t}=\inf\left\{t\leq t_1~\text{s.t.}~t\in \mathcal{S}(F^{(1)})\right\}$. 
    \begin{itemize}
        \item If $\tilde{t}\in E_1$, we can find $\delta\in(0,t_2-\tilde{t})$ such that $[\tilde{t},\tilde{t}+\delta]\subseteq E_1$. Since no class 1 user arrives in $[\tilde{t},\tilde{t}+\delta]$, by (\ref{eq:derv_of_tau}), $\tau_1(\tilde{t})=\tau_1(\tilde{t}+\delta)$. The previous statement along with $\tau_{\mathbf{F}}^{(1)}(t)=\tau_2(\tau_1(t))$ implies $\tau_{\mathbf{F}}^{(1)}(\tilde{t})=\tau_{\mathbf{F}}^{(1)}(\tilde{t}+\delta)$. Therefore, the class 1 user arriving at $\tilde{t}$ will be strictly better off arriving at $\tilde{t}+\delta$. 
        
        \item Otherwise if $\tilde{t}\notin E_1$, then, $[t_1,t_2]\subseteq E_2$. Again using (\ref{eq:derv_of_tau}), $\tau_2(t_1)=\tau_2(t_2)$. Since queue 1 has zero waiting time between $[t_1,t_2]$, $\tau_{\mathbf{F}}^{(i)}(s)=\tau_2(s)$ for every $s\in[t_1,t_2]$. Therefore the user arriving at $t_1$, irrespective of her group, will be strictly better off arriving at $t_2$.
    \end{itemize}
\end{itemize}
In all the situations above we can conclude $\{F^{(1)},F^{(2)}\}$ cannot be an EAP if there is a gap in $\mathcal{S}(F^{(1)})\cup\mathcal{S}(F^{(2)})$. 

\noindent\textbf{(2)} Assume contradiction, \textit{i.e.}, $\mu_1<\mu_2\gamma^{(2)}$ and $\mathcal{S}(F^{(2)})$ has a gap $(t_1,t_2)$ such that $t_2>t_1$, $t_1,t_2\in \mathcal{S}(F^{(2)})$ and $F^{(2)}(t_1)=F^{(2)}(t_2)$. As a positive mass of class 2 users will arrive after time $t_2$, we must have $(t_1,t_2)\subseteq E_2$. Therefore by (\ref{eq:derv_of_tau}), $(\tau_{\mathbf{F}}^{(2)})^\prime(t)=\frac{Y_1^\prime(t)}{\mu_2}\leq\frac{\mu_1}{\mu_2}$ a.e. in $(t_1,t_2)$ . Using this, $(C_{\mathbf{F}}^{(2)})^\prime(t)=\tau_2^\prime(t)-\gamma^{(2)}\leq\frac{\mu_1}{\mu_2}-\gamma^{(2)}<0$ a.e. in $(t_1,t_2)$ . This implies $C_{\mathbf{F}}^{(2)}(t_2)<C_{\mathbf{F}}^{(2)}(t_1)$ and as a result, the class 2 user arriving at $t_1$ will be strictly better off arriving at $t_2$ which contradicts the fact that $\{F^{(1)},F^{(2)}\}$ is an EAP.

\noindent\textbf{(3)} Assume contradiction, \textit{i.e.}, $\gamma^{(1)}\neq\gamma^{(2)}$ and $\mathcal{S}(F^{(1)})$ has a gap $(t_1,t_2)$ such that $t_2>t_1$, $t_1,t_2\in \mathcal{S}(F^{(1)})$ and $F^{(1)}(t_1)=F^{(1)}(t_2)$. Since $\mathcal{S}(F^{(1)})\cup\mathcal{S}(F^{(2)})$ must be an interval, we must have $[t_1,t_2]\subseteq\mathcal{S}(F^{(2)})$. 

Note that queue 1 must have zero waiting time at $t_1$. Otherwise, we can find $\delta>0$ sufficiently small such that $[t_1,t_1+\delta]\subseteq[t_1,t_2]\cap E_1$. Using (\ref{eq:derv_of_tau}), $\tau_1(t_1+\delta)=\tau_1(t_1)$ and as a consequence $\tau_{\mathbf{F}}^{(1)}(t_1+\delta)=\tau_{\mathbf{F}}^{(1)}(t_1)$, implying the class 1 user arriving at $t_1$ will be strictly better off arriving at $t_1+\delta$. 

Now if queue 1 has zero waiting time between $[t_1,t_2]$, we must have \newline $[t_1,t_2]\subseteq\overline{E_2}$ since $[t_1,t_2]\subseteq\mathcal{S}(F^{(2)})$. As a result, for $t\in[t_1,t_2]$, using (\ref{eq:derv_of_tau}), $\tau_2^\prime(t)=\frac{(F^{(2)})^\prime(t)}{\mu_2}$ and after applying (\ref{eq:isocost_inst2_grp2}), $(F^{(2)})^\prime(t)=\mu_2\gamma^{(2)}$. Therefore, in $[t_1,t_2]$, $(C_{\mathbf{F}}^{(1)})^\prime(t)=\tau_2^\prime(t)-\gamma^{(1)}=\gamma^{(2)}-\gamma^{(1)}\neq 0$. This implies, if $\gamma^{(1)}>\gamma^{(2)}$, class 1 user arriving at $t_1$ will be strictly better off arriving at $t_2$ and vice-versa. Therefore $\{F^{(1)},F^{(2)}\}$ will not be an EAP.
\end{proof}

\begin{lemma}[\textbf{Rate of arrival}]\label{lem_arrival_rates_inst2}
    If $\gamma^{(1)}\neq\gamma^{(2)}$, except over a set of times having zero Lebesgue measure, arrival rates in any EAP are,\newline $\forall t\in\mathcal{S}(F^{(1)})~~(F^{(1)})^\prime(t)=\begin{cases} \mu_1\gamma^{(1)}, &\text{for}~\tau_1(t)\in\overline{E_2}^c,\\ \mu_2\gamma^{(1)}, &\text{for $\tau_1(t)\notin \mathcal{S}(F^{(2)})\cup\overline{E_2}^c$},\\ \frac{\mu_1\gamma^{(1)}}{\gamma^{(2)}}, &\text{for $\tau_1(t)\in \mathcal{S}(F^{(2)})$}, \end{cases}$ and \newline $\forall t\in\mathcal{S}(F^{(2)}),~(F^{(2)})^\prime(t)=\begin{cases} \mu_2\gamma^{(2)},~&\text{for}~t\in(-\infty,0)\cup(\mathcal{S}(F^{(1)}))^c,\\ \mu_2\gamma^{(2)}-\mu_1~&\text{for}~t\in [0,\infty)\cap\mathcal{S}(F^{(1)}),\end{cases}$ \newline where $[0,\infty)\cap \mathcal{S}(F^{(1)})\cap\mathcal{S}(F^{(2)})$ has zero Lebesgue measure if $\mu_1\geq\mu_2\gamma^{(2)}$.
\end{lemma}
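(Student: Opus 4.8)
The plan is to pin down both arrival rates by differentiating the equilibrium iso-cost identities (\ref{eq:isocost_inst2_grp1}) and (\ref{eq:isocost_inst2_grp2}) and converting the resulting derivatives of departure times into arrival rates through (\ref{eq:derv_of_tau}). Throughout I write $Y_1(t)=F_1(\tau_1^{-1}(t))$ for the profile of class $1$ entering queue $2$, so that $A_2=Y_1+F_2$ and the key identity $Y_1(\tau_1(t))=F_1(t)$ holds. All equalities below are meant almost everywhere on the relevant supports, where the various derivatives exist.

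First, the class $2$ rate. Since $\mathcal{S}(F_2)\subseteq\overline{E_2}$, (\ref{eq:derv_of_tau}) gives $\tau_2^\prime(t)=A_2^\prime(t)/\mu_2=(Y_1^\prime(t)+F_2^\prime(t))/\mu_2$ on $\mathcal{S}(F_2)$; combined with $\tau_2^\prime(t)=\gamma_2$ from (\ref{eq:isocost_inst2_grp2}) this yields the master relation $Y_1^\prime(t)+F_2^\prime(t)=\mu_2\gamma_2$ on $\mathcal{S}(F_2)$. It then remains to read off $Y_1^\prime$, the rate at which class $1$ reaches queue $2$ at time $t$: for $t<0$ no mass has yet left queue $1$, so $Y_1^\prime=0$ and $F_2^\prime=\mu_2\gamma_2$; where no class $1$ reaches queue $2$, again $Y_1^\prime=0$ and $F_2^\prime=\mu_2\gamma_2$; and for $t\ge 0$ at which class $1$ is simultaneously departing an engaged queue $1$ (the regime labelled by $\mathcal{S}(F_1)$), queue $1$ drains at rate $\mu_1$, so $Y_1^\prime=\mu_1$ and $F_2^\prime=\mu_2\gamma_2-\mu_1$.

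Second, the class $1$ rate. For $t\in(\mathcal{S}(F_1))^o$ I use $\tau_{\mathbf F}^{(1)}=\tau_2\circ\tau_1$ together with the chain rule $(\tau_{\mathbf F}^{(1)})^\prime(t)=\tau_2^\prime(\tau_1(t))\,\tau_1^\prime(t)=\gamma_1$ from (\ref{eq:isocost_inst2_grp1}), where $\tau_1^\prime(t)=F_1^\prime(t)/\mu_1$ by (\ref{eq:derv_of_tau}) whenever queue $1$ is engaged at $t$. I then split on the state of queue $2$ at $\tau_1(t)$: if $\tau_1(t)\in\overline{E_2}^c$ then $\tau_2^\prime(\tau_1(t))=1$, forcing $F_1^\prime(t)=\mu_1\gamma_1$; if $\tau_1(t)\in\mathcal{S}(F_2)$ then $\tau_2^\prime(\tau_1(t))=\gamma_2$ by (\ref{eq:isocost_inst2_grp2}), forcing $F_1^\prime(t)=\mu_1\gamma_1/\gamma_2$; and if $\tau_1(t)$ lies in $\overline{E_2}$ but not in $\mathcal{S}(F_2)$ (queue $2$ engaged with no class $2$ arriving), then $A_2^\prime(\tau_1(t))=Y_1^\prime(\tau_1(t))$, and differentiating $Y_1(\tau_1(t))=F_1(t)$ gives $Y_1^\prime(\tau_1(t))=\mu_1$, whence $\tau_2^\prime(\tau_1(t))=\mu_1/\mu_2$ and $F_1^\prime(t)=\mu_2\gamma_1$. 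A short separate check rules out the degenerate sub-cases where queue $1$ is idle: an idle queue $1$ with $\tau_1(t)\in\overline{E_2}^c$ would give $(\tau_{\mathbf F}^{(1)})^\prime=1\ne\gamma_1$, while an idle queue $1$ with $\tau_1(t)\in\mathcal{S}(F_2)$ is excluded outright by Lemma \ref{lem_appndx_inst2_uneqpref_queu1_engaged_mixedarrival}; the one remaining idle sub-case reproduces the same value $\mu_2\gamma_1$, so the three formulas are unaffected.

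Finally, the measure-zero assertion when $\mu_1\ge\mu_2\gamma_2$ follows immediately from Lemma \ref{lem_inst2_uneqpref_mixedarr_nec_cond}: $\mathcal{S}(Y_1)$ and $(\mathcal{S}(F_2))^o$ are disjoint, so class $1$ and class $2$ never occupy queue $2$ together and the $\mu_2\gamma_2-\mu_1$ branch survives only on a null set. I expect the main obstacle to be the careful bookkeeping between the two time coordinates — arrival time at queue $1$ versus arrival time at queue $2$, linked by $\tau_1$ — and, relatedly, checking almost everywhere that queue $1$ is engaged on the relevant part of $\mathcal{S}(F_1)$ so that $\tau_1^\prime=F_1^\prime/\mu_1$ may legitimately be used; the identity $Y_1(\tau_1(t))=F_1(t)$ and Lemma \ref{lem_appndx_inst2_uneqpref_queu1_engaged_mixedarrival} are exactly the tools that make this bookkeeping go through.
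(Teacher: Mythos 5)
Your proposal is correct and follows essentially the same route as the paper's proof: differentiate the iso-cost identities (\ref{eq:isocost_inst2_grp1})--(\ref{eq:isocost_inst2_grp2}), convert departure-time derivatives into arrival rates via (\ref{eq:derv_of_tau}) and the chain rule through $\tau_1$, case-split on the state of queue 2 at $\tau_1(t)$ for class 1 and on the state of queue 1 at $t$ for class 2, and invoke Lemma \ref{lem_appndx_inst2_uneqpref_queu1_engaged_mixedarrival} for engagement of queue 1 and Lemma \ref{lem_inst2_uneqpref_mixedarr_nec_cond} for the measure-zero claim. The one point you gloss is $t\in(0,\infty)\cap\mathcal{S}(F_2)\setminus\mathcal{S}(F_1)$, where concluding $Y_1^\prime(t)=0$ needs the paper's short equilibrium argument that queue 1 is actually \emph{empty} at such $t$ (otherwise the last class-1 user arriving before $t$ could profitably delay), not merely that no class-1 users are arriving at queue 1 then.
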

\begin{proof} For class 1 users, if $\tau_1(t)\in\overline{E_2}^c$, we can find $\delta>0$ sufficiently small such that, $\tau_{\mathbf{F}}^{(1)}(s)=\tau_1(s)$ for every $s\in[t-\delta,t+\delta]$. Moreover we must have $[t-\delta,t+\delta]\subseteq\overline{E_1}$. Therefore, $(\tau_{\mathbf{F}}^{(1)})^\prime(s)=\tau_1^\prime(s)=\frac{(F^{(1)})^\prime(s)}{\mu_1}$ a.e. in $[t-\delta,t+\delta]$. Using (\ref{eq:isocost_inst2_grp1}), this implies $(F^{(1)})^\prime(s)=\mu_1\gamma^{(1)}$ a.e. in $[t-\delta,t+\delta]$.

Now if $\tau_1(t)\in\mathcal{S}(F^{(2)})$, by (\ref{eq:isocost_inst2_grp2}), we will have $\tau_2^\prime(\tau_1(t))=\gamma^{(2)}$. By Lemma \ref{lem_appndx_inst2_uneqpref_queu1_engaged_mixedarrival}, we have $t\in\overline{E_1}$. Therefore by (\ref{eq:derv_of_tau}), $\tau_1^\prime(t)=\frac{(F^{(1)})^\prime(t)}{\mu_1}$. Now applying (\ref{eq:isocost_inst2_grp1}), $(\tau_{\mathbf{F}}^{(1)})^\prime(t)=\frac{(F^{(1)})^\prime(t)}{\mu_1}\gamma^{(2)}=\gamma^{(1)}$, giving us $(F^{(1)})^\prime(t)=\frac{\mu_1\gamma^{(1)}}{\gamma^{(2)}}$. 

If $\tau_1(t)\in (\mathcal{S}(F^{(2)})\cup\overline{E_2}^c)^c$, we must have $\tau_1(t)\in \overline{E_2}$. Using an argument similar to the one used for proving Lemma \ref{lem_inst1_uneqpref_derv_of_tau2}, we will have $(\tau_{\mathbf{F}}^{(1)})^\prime(t)=\frac{(F^{(1)})^\prime(t)}{\mu_2}$. Using (\ref{eq:isocost_inst2_grp1}), we have $(F^{(1)})^\prime(t)=\mu_2\gamma^{(1)}$. 

Now $\mathcal{S}(F^{(2)})\subseteq \overline{E_2}$. Therefore, for $t\in\mathcal{S}(F^{(2)})$, $\tau_2^\prime(t)=\frac{(F^{(2)})^\prime(t)+Y_1^\prime(t)}{\mu_2}$. Using (\ref{eq:isocost_inst2_grp2}), $\tau_2^\prime(t)=\gamma^{(2)}$ implies, $(F^{(2)})^\prime(t)=\mu_2\gamma^{(2)}-Y_1^\prime(t)$. Now if $t\in(-\infty,0]$, $Y_1^\prime(t)=0$ and as a result, $(F^{(2)})^\prime(t)=\mu_2\gamma^{(2)}$. If $t\in(0,\infty)\cap\mathcal{S}(F^{(1)})$, by Lemma \ref{lem_appndx_inst2_uneqpref_queu1_engaged_mixedarrival}, we have $t\in \overline{E_1}$. As a result $Y_1^\prime(t)=\mu_1$. Therefore, if $t\in(0,\infty)\cap\mathcal{S}(F^{(1)})$, $(F^{(2)})^\prime(t)=\mu_2\gamma^{(2)}-\mu_1$. 

Now if $t\in(0,\infty)/\mathcal{S}(F^{(1)})$, queue 1 must be empty at $t$. This follows trivially if no class 1 user has arrived before $t$. Otherwise if some positive mass of class 1 user has arrived before $t$, let $\tilde{t}=\sup\{s<t~\vert~s\in\mathcal{S}(F^{(1)})\}$. Then the  class 1 user arriving at $\tilde{t}$ must see zero waiting time in queue 1. Otherwise she can arrive slightly late and improve her cost. Therefore queue 1 will have zero waiting time at $t$. As a result $Y_1^\prime(t)=0$ and this implies $(F^{(2)})^\prime(t)=\mu_2\gamma^{(2)}$ for $t\in (0,\infty)/\mathcal{S}(F^{(1)})$.
\end{proof}

Lemma \ref{lem_appndx_inst2_uneqpref_threshold_behav_suff_cond} partially implies that the condition in Lemma \ref{lem_threshold_behav_inst2} is necessary for queue 2 to serve the two classes over non overlapping sets of times. 

\begin{lemma}\label{lem_appndx_inst2_uneqpref_threshold_behav_suff_cond}
    If $\mu_1<\mu_2\gamma^{(2)}$ and $\gamma^{(1)}\neq\gamma^{(2)}$, $\mathcal{S}(Y_1)$ and $\mathcal{S}(F^{(2)})$ cannot have disjoint interiors.
\end{lemma}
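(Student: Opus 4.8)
The plan is to argue by contradiction, paralleling the necessity argument of Lemma~\ref{lem_threshold_behav_inst1}. Suppose $\mu_1<\mu_2\gamma_2$, $\gamma_1\neq\gamma_2$, and $(\mathcal{S}(Y_1))^o\cap(\mathcal{S}(F_2))^o=\emptyset$. First I would reduce to the case of two ordered intervals: $\mathcal{S}(F_2)=[T_{2,a},T_{2,f}]$ is an interval by Lemma~\ref{lem_supp_are_intervals_inst2} (using $\mu_1<\mu_2\gamma_2$), while $\mathcal{S}(F_1)$ is an interval by the same lemma (using $\gamma_1\neq\gamma_2$); since $Y_1=F_1\circ\tau_1^{-1}$ and $\tau_1(\cdot)$ is continuous and nondecreasing, $\mathcal{S}(Y_1)$ is the image of an interval and hence an interval. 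Two intervals with disjoint interiors are linearly ordered, so either class~1's arrival window at queue~2 precedes class~2's, or vice versa, and it remains to rule out both orderings.

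For the ordering in which class~2's queue-2 window follows class~1's (i.e.\ $\sup\mathcal{S}(Y_1)\le T_{2,a}$): since class~1 enters queue~2 at rate at most $\mu_1<\mu_2$ and does so only at times $\ge 0$ (every class-1 user departs queue~1 at a nonnegative time, by (\ref{eq:derv_of_tau})), queue~2 never accumulates a backlog and is empty throughout, in particular at $T_{2,a}$. A separate check that some mass must arrive before time~$0$ forces a positive mass of class~1 to queue~1 before $0$, whence the last class-1 departure from queue~1 is strictly positive and so $T_{2,a}>0$. Then the first class-2 user arrives at $T_{2,a}>0$ to an empty queue~2 and strictly lowers her cost by arriving at time $0$ instead, contradicting that $\mathbf{F}$ is an EAP.

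For the reverse ordering (class~2's queue-2 window precedes class~1's, $T_{2,f}\le\inf\mathcal{S}(Y_1)$), I would focus on the last class-2 user, arriving at $T_{2,f}$, and on the state of queue~2 immediately afterwards. After $T_{2,f}$ no further class-2 mass arrives and any class-1 mass entering queue~2 does so at rate at most $\mu_1$ (Lemma~\ref{lem_arrival_rates_inst2} together with (\ref{eq:derv_of_tau}) applied to $A_2=Y_1+F_2$). When queue~2 is engaged just after $T_{2,f}$, a deviation to $T_{2,f}+\delta$ advances this user's departure by at most $(\mu_1/\mu_2)\delta$ while cutting her waiting time by $\delta$, changing her cost by at most $(\mu_1/\mu_2-\gamma_2)\delta<0$; this strict improvement is exactly where the hypothesis $\mu_1<\mu_2\gamma_2$ enters. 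The remaining possibility---that queue~2 (and possibly the whole network) is idle at the junction---would be dispatched by the now-standard empty-queue arguments: a class-2 or class-1 boundary user then faces a strictly increasing cost along part of its own support, violating the iso-cost conditions (\ref{eq:isocost_inst2_grp2})--(\ref{eq:isocost_inst2_grp1}). In every subcase $\mathbf{F}$ fails to be an EAP.

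The main obstacle I anticipate is the careful bookkeeping of queue~2's state (empty versus engaged, and the composition of its backlog) at the junction of the two windows, especially the degenerate configurations where the junction coincides with time $0$, where all of class~2 arrives before time $0$, or where queue~2 empties at the very instant $T_{2,f}$. The crux is the marginal departure-rate bound---just after $T_{2,f}$ a class-2 user's departure can advance no faster than $\mu_1/\mu_2$, so her cost strictly decreases precisely when $\mu_1<\mu_2\gamma_2$; this is the mirror image of the rate computation behind the sufficiency result Lemma~\ref{lem_inst2_uneqpref_mixedarr_nec_cond}. Making the engagement claims rigorous and eliminating the idle-junction edge cases through the iso-cost property of the boundary users will constitute the bulk of the work, with Lemmas~\ref{lem_supp_are_intervals_inst2}, \ref{lem_appndx_inst2_uneqpref_queu1_engaged_mixedarrival}, and \ref{lem_arrival_rates_inst2} supplying the needed structural facts.
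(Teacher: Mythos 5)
Your proposal is correct and follows essentially the same route as the paper: the crux in both is that once all of class 2 has arrived at the junction of the two windows, queue 2's departure time can advance at rate at most $\mu_1/\mu_2$, so $(C_{\mathbf{F}}^{(2)})^\prime\leq\mu_1/\mu_2-\gamma_2<0$ and the last class-2 arrival strictly improves by waiting until queue 2 idles. The only difference is cosmetic: the paper dispatches your first ordering immediately by observing that $T_{2,a}<0$ while $\inf\mathcal{S}(Y_1)=T_{1,a}^{+}\geq 0$ in any EAP, so the junction can only be $T_{2,f}=T_{1,a}^{+}$, and it asserts (rather than fully argues, much as you defer it) that queue 2 is engaged there.
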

\begin{proof}
    By Lemma \ref{lem_supp_are_intervals_inst2}, we can assume $\mathcal{S}(F^{(1)})=[T^{(1)}_a,T^{(1)}_f]$ and $\mathcal{S}(F^{(2)})=[T^{(2)}_a,T^{(2)}_f]$, such that their union is an interval. Note that under all situations, in EAP $T^{(2)}_a<0$ and $T^{(1)}_f>0$. Moreover, we have $\mathcal{S}(Y_1)=[\tau_1(T^{(1)}_a),\tau_1(T^{(1)}_f)]$, where $\tau_1(T^{(1)}_a)=\max\{0,T^{(1)}_a\}$. The only possible way $\mathcal{S}(Y_1)$ and $\mathcal{S}(F^{(2)})$ can have disjoint interiors is by having $T^{(2)}_f=T^{(1),+}_a$. In that case, queue 2 must have positive waiting time at $T^{(2)}_f$. Therefore till queue 2 becomes idle, by (\ref{eq:derv_of_tau}), $\tau_2^\prime(t)=\frac{Y_1^\prime(t)}{\mu_2}\leq\frac{\mu_1}{\mu_2}$ and as a result $(C_{\mathbf{F}}^{(2)})^\prime(t)\leq\frac{\mu_1}{\mu_2}-\gamma^{(2)}<0$. Hence the class 2 user arriving at $T^{(2)}_f$ will be better off arriving at the moment queue 2 becomes idle. This implies $\{F^{(1)},F^{(2)}\}$ will not be an EAP and we arrive at a contradiction. 
\end{proof}

\noindent\textit{\textbf{Proof of Lemma \ref{lem_threshold_behav_inst2}:}} Lemma \ref{lem_inst2_uneqpref_mixedarr_nec_cond} and \ref{lem_appndx_inst2_uneqpref_threshold_behav_suff_cond} together with Lemma \ref{lem_supp_are_intervals_inst2} implies the statement of Lemma \ref{lem_threshold_behav_inst2}. \hfill\qedsymbol \\

Using Lemma \ref{lem_supp_are_intervals_inst2}, for every EAP $\mathbf{F}=\{F^{(1)},F^{(2)}\}$, we can assume $\mathcal{S}(F^{(1)})=[T^{(1)}_a,T^{(1)}_f]$ for some $T^{(1)}_f>T^{(1)}_a$. Lemma \ref{lem_inst2_uneqpref_queue1idle} and \ref{lem_inst2_uneqpref_queue2idle} are about the state of the two queues in $\mathcal{S}(F^{(1)})$ and $\mathcal{S}(F^{(2)})$, and they will help us identify the support boundaries of the EAP.

\begin{lemma}\label{lem_inst2_uneqpref_queue1idle}
    If $\gamma^{(1)}\neq\gamma^{(2)}$, in the EAP, Queue 1 must be empty at $T^{(1)}_f$.
\end{lemma}
\begin{proof} 
Otherwise, if $T^{(1)}_f\in E_1$, by (\ref{eq:derv_of_tau}), $\tau_1(\cdot)$ remains constant till queue 1 empties. As a result $\tau_{\mathbf{F}}^{(1)}(\cdot)$ remains constant till queue 1 empties. Hence the class 1 user arriving at $T^{(1)}_f$ will be strictly better off arriving at the moment queue 1 empties.  
\end{proof}

Again using Lemma \ref{lem_supp_are_intervals_inst2}, if $\mu_1<\mu_2\gamma^{(2)}$, we can assume $\mathcal{S}(F^{(2)})=[T^{(2)}_a,T^{(2)}_f]$ for some $T^{(2)}_f>T^{(2)}_a$.  

\begin{lemma}\label{lem_inst2_uneqpref_queue2idle}
    If $\mu_1<\mu_2\gamma^{(2)}$, in the EAP, Queue 2 must be empty at $T^{(2)}_f$.
\end{lemma}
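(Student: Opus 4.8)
The plan is to argue by contradiction, in parallel with the companion statement for HDS (Lemma \ref{lem_appndx_inst1_uneqpref_queue2idle}), but exploiting that in HAS a class 2 user waits only at queue 2, so that $\tau_{\mathbf{F}}^{(2)}(t)=\tau_2(t)$ and $W_{\mathbf{F}}^{(2)}(t)=W_2(t)=\tau_2(t)-t$; consequently the normalized cost reduces to $C_{\mathbf{F}}^{(2)}(t)=\gamma_2 W_{\mathbf{F}}^{(2)}(t)+(1-\gamma_2)\tau_{\mathbf{F}}^{(2)}(t)=\tau_2(t)-\gamma_2 t$. Suppose, contrary to the claim, that $Q_2(T_{2,f})>0$. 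Since the joint arrival profile is absolutely continuous, $Q_2(\cdot)$ is continuous, so there is a $\delta>0$ with $Q_2>0$ throughout $[T_{2,f},T_{2,f}+\delta]$; that is, queue 2 is engaged on this interval.

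Next I would bound the growth of $\tau_2$ on this engaged interval. By definition of $T_{2,f}=\sup\mathcal{S}(F_2)$, no class 2 mass arrives after $T_{2,f}$, so $F_2'\equiv 0$ on $(T_{2,f},T_{2,f}+\delta]$ and the only arrivals seen by queue 2 there are class 1 users emerging from queue 1, with cumulative profile $Y_1(t)=F_1(\tau_1^{-1}(t))$. The outflow rate of queue 1 never exceeds its service rate, i.e. $Y_1'(t)\le\mu_1$ a.e.\ (the same bound used in the proof of Lemma \ref{lem_supp_are_intervals_inst2}). Hence on $[T_{2,f},T_{2,f}+\delta]$ we have $A_2'(t)=Y_1'(t)+F_2'(t)=Y_1'(t)\le\mu_1$, and applying (\ref{eq:derv_of_tau}) to queue 2 gives $\tau_2'(t)=A_2'(t)/\mu_2\le \mu_1/\mu_2$ a.e.

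Now invoke the hypothesis $\mu_1<\mu_2\gamma_2$, equivalently $\mu_1/\mu_2<\gamma_2$, so that $\tau_2'(t)<\gamma_2$ a.e.\ on the interval. Integrating yields $\tau_2(T_{2,f}+\delta)-\tau_2(T_{2,f})<\gamma_2\delta$, whence
\[
C_{\mathbf{F}}^{(2)}(T_{2,f}+\delta)-C_{\mathbf{F}}^{(2)}(T_{2,f})=\big[\tau_2(T_{2,f}+\delta)-\tau_2(T_{2,f})\big]-\gamma_2\delta<0 .
\]
Because $\mathcal{S}(F_2)$ is compact, $T_{2,f}\in\mathcal{S}(F_2)$, and the strict inequality says a class 2 user arriving at $T_{2,f}$ is strictly better off arriving at $T_{2,f}+\delta$, contradicting the iso-cost (EAP) condition $C_{\mathbf{F}}^{(2)}(T_{2,f})\le C_{\mathbf{F}}^{(2)}(T_{2,f}+\delta)$. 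Therefore $Q_2(T_{2,f})=0$, i.e.\ queue 2 is empty at $T_{2,f}$.

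The only delicate point is the bound $Y_1'\le\mu_1$, which I would simply cite as already established, together with the corner case $T_{2,f}\le 0$: if part of $[T_{2,f},T_{2,f}+\delta]$ lies before time $0$, then $Y_1\equiv 0$ there (no class 1 user departs queue 1 before it starts serving), so $\tau_2$ is in fact constant on that subinterval and the strict saving $\tau_2(T_{2,f}+\delta)-\tau_2(T_{2,f})<\gamma_2\delta$ persists. Everything else is routine integration, and the hypothesis $\mu_1<\mu_2\gamma_2$ enters exactly once, converting the service-rate bound into a strict decrease in cost.
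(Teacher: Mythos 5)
Your proof is correct and follows essentially the same route as the paper's: assume $Q_2(T_{2,f})>0$, use (\ref{eq:derv_of_tau}) and $F_2'\equiv 0$ past $T_{2,f}$ to get $\tau_2'(t)=Y_1'(t)/\mu_2\le\mu_1/\mu_2<\gamma_2$ while queue 2 remains engaged, and conclude that the class 2 user arriving at $T_{2,f}$ strictly gains by arriving later, contradicting the EAP condition. The only cosmetic difference is that the paper runs the argument until the moment queue 2 empties rather than over a small $\delta$-interval, which changes nothing.
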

\begin{proof} 
Assuming contradiction, if queue 2 has a positive waiting time at $T^{(2)}_f$, using (\ref{eq:derv_of_tau}) $\tau_2^\prime(t)=\frac{Y_1^\prime(t)}{\mu_2}\leq\frac{\mu_1}{\mu_2}$ till queue 2 empties. Therefore, $(C_{\mathbf{F}}^{(2)})^\prime(t)\leq\frac{\mu_1}{\mu_2}-\gamma^{(2)}<0$ till queue 2 empties, implying the class 2 user arriving at $T^{(2)}_f$ will be strictly better off arriving at the moment queue 2 empties. This implies  $\{F^{(1)},F^{(2)}\}$ will not be an EAP.
\end{proof}

\subsubsection{\textbf{Supporting Lemmas for proving Theorem \ref{mainthm_inst2_reg1}}}
In the proof of Theorem \ref{mainthm_inst2_reg1}, we use the following two lemmas to characterize the EAP in the case: \textbf{1)}~$\gamma^{(1)}>\gamma^{(2)}$ and \textbf{2)}~$\gamma^{(1)}<\gamma^{(2)}$.

\begin{lemma}\label{lem_inst2_uneqpref_sign_reg1a}
    If $\gamma^{(1)}>\gamma^{(2)}$ and $\mu_1<\mu_2\gamma^{(2)}$, the support boundaries must satisfy $T^{(1)}_f\geq T^{(2)}_f$.
\end{lemma}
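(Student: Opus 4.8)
The plan is to argue by contradiction in the paper's usual style: assume $T_{2,f} > T_{1,f}$ and exhibit a class~$1$ user who can strictly lower her cost, contradicting that $\mathbf{F}$ is an EAP. First I would invoke Lemma~\ref{lem_supp_are_intervals_inst2} to write $\mathcal{S}(F_1)=[T_{1,a},T_{1,f}]$ and $\mathcal{S}(F_2)=[T_{2,a},T_{2,f}]$, both intervals since $\mu_1<\mu_2\gamma_2$ and $\gamma_1\neq\gamma_2$. Recalling (as noted in the proof of Lemma~\ref{lem_appndx_inst2_uneqpref_threshold_behav_suff_cond}) that in any EAP $T_{2,a}<0<T_{1,f}$, the contradiction hypothesis $T_{1,f}<T_{2,f}$ yields the chain $T_{2,a}<0<T_{1,f}<T_{2,f}$, so the open interval $(T_{1,f},T_{2,f})$ lies in the interior $(\mathcal{S}(F_2))^o$.

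The key step is to track queue~$1$ and the class~$1$ cost on $(T_{1,f},T_{2,f})$. By Lemma~\ref{lem_inst2_uneqpref_queue1idle} queue~$1$ is empty at $T_{1,f}$, and since no class~$1$ user arrives after $T_{1,f}$ (as $F_1'=0$ there), queue~$1$ stays empty throughout $(T_{1,f},T_{2,f})$; hence $\tau_1(t)=t$ and $W_1(t)=0$ for such $t$. Consequently a class~$1$ user arriving at $t\in(T_{1,f},T_{2,f})$ enters queue~$2$ exactly at time $t$, so $\tau_{\mathbf{F}}^{(1)}(t)=\tau_2(\tau_1(t))=\tau_2(t)$ and $W_{\mathbf{F}}^{(1)}(t)=W_1(t)+W_2(\tau_1(t))=\tau_2(t)-t$, giving
\[
C_{\mathbf{F}}^{(1)}(t)=\gamma_1\bigl(\tau_2(t)-t\bigr)+(1-\gamma_1)\tau_2(t)=\tau_2(t)-\gamma_1 t .
\]
Differentiating and using the class~$2$ iso-cost identity~(\ref{eq:isocost_inst2_grp2}), which gives $\tau_2'(t)=\gamma_2$ a.e. on $(\mathcal{S}(F_2))^o\supseteq(T_{1,f},T_{2,f})$, I obtain $(C_{\mathbf{F}}^{(1)})'(t)=\gamma_2-\gamma_1<0$ a.e. there. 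Since $\tau_2$ is absolutely continuous (inherited from absolute continuity of the arrival profiles), $C_{\mathbf{F}}^{(1)}$ is strictly decreasing on $(T_{1,f},T_{2,f})$.

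To close the argument I would note that $T_{1,f}\in\mathcal{S}(F_1)$, so by the EAP property $C_{\mathbf{F}}^{(1)}(T_{1,f})\le C_{\mathbf{F}}^{(1)}(\tilde t)$ for every $\tilde t$; but strict monotonicity gives $C_{\mathbf{F}}^{(1)}(t)<C_{\mathbf{F}}^{(1)}(T_{1,f})$ for any $t\in(T_{1,f},T_{2,f})$, so the class~$1$ user arriving at $T_{1,f}$ strictly gains by arriving slightly later, a contradiction. Hence $T_{1,f}\ge T_{2,f}$.

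I expect the only delicate points, and thus the main obstacle, to be the bookkeeping that queue~$1$ genuinely stays idle on all of $(T_{1,f},T_{2,f})$ (so that $\tau_1(t)=t$ and the deviating class~$1$ user truly sees no first-queue wait), together with the passage from an a.e.\ negative derivative to strict decrease, which rests on the absolute continuity of $\tau_2$. Everything else is a direct reuse of the class~$2$ iso-cost relation~(\ref{eq:isocost_inst2_grp2}) and the idling Lemma~\ref{lem_inst2_uneqpref_queue1idle}, exactly as in the earlier threshold and sign proofs for HDS.
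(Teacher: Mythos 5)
Your proposal is correct and follows essentially the same contradiction argument as the paper's proof: by Lemma \ref{lem_inst2_uneqpref_queue1idle} queue 1 is empty at $T_{1,f}$ and stays so, $(T_{1,f},T_{2,f})$ lies in $E_2$, the class 1 cost derivative there is $\gamma_2-\gamma_1<0$, and the class 1 user arriving at $T_{1,f}$ profitably deviates. The only cosmetic difference is that you read off $\tau_2^\prime=\gamma_2$ directly from the iso-cost identity (\ref{eq:isocost_inst2_grp2}) while the paper routes through the arrival rate $F_2^\prime=\mu_2\gamma_2$ of Lemma \ref{lem_arrival_rates_inst2} and (\ref{eq:derv_of_tau}); the two are equivalent.
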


\begin{proof} 
Otherwise if $T^{(1)}_f<T^{(2)}_f$, by Lemma \ref{lem_inst2_uneqpref_queue1idle}, queue 1 is empty at $T^{(1)}_f$ and $(T^{(1)}_f,T^{(2)}_f)\subseteq E_2$. As a result, between $[T^{(1)}_f,T^{(2)}_f]$ $\tau_{\mathbf{F}}^{(1)}(t)=\tau_2(t)$. Also,  by Lemma \ref{lem_arrival_rates_inst2} class 2 users are arriving at rate $\mu_2\gamma^{(2)}$ between $[T^{(1)}_f,T^{(2)}_f]$. So, using (\ref{eq:derv_of_tau}) $(C_{\mathbf{F}}^{(1)})^\prime(t)=\tau_2^\prime(t)-\gamma^{(1)}=\frac{(F^{(2)})^\prime(t)}{\mu_2}-\gamma^{(1)}=\gamma^{(2)}-\gamma^{(1)}<0$. As a result, the class 1 user arriving at $T^{(1)}_f$ will be strictly better off arriving at $T^{(2)}_f$.  
\end{proof}

\begin{lemma}\label{lem_inst2_uneqpref_sign_reg1b}
    If $\gamma^{(1)}<\gamma^{(2)}$, the support boundaries must satisfy $T^{(1)}_a\leq 0$.
\end{lemma}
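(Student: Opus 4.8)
The plan is to argue by contradiction: suppose $T_{1,a}>0$, and show that the first class~$1$ user, who arrives at $T_{1,a}$, can strictly lower her cost by arriving at some earlier time $t\in(0,T_{1,a})$, contradicting the EAP condition. The starting observation is that on $(0,T_{1,a})$ no class~$1$ mass has yet arrived at queue~$1$ (since $T_{1,a}=\inf\mathcal{S}(F_1)$ and $F_1$ is absolutely continuous, so $F_1\equiv 0$ there), hence queue~$1$ is empty and $\tau_1(s)=s$ for every $s\in[0,T_{1,a}]$. Consequently the first class~$1$ user enters queue~$2$ at time $T_{1,a}$, while a deviation to $t\in(0,T_{1,a})$ enters queue~$2$ at $t$; in both cases her network departure time equals $\tau_2$ evaluated at the entry time, so her cost is simply $C^{(1)}_{\mathbf{F}}(s)=\tau_2(s)-\gamma_1 s$ for $s\in(0,T_{1,a}]$.

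The core of the argument is then the lower bound $\tau_2'(s)\ge\gamma_2$ a.e.\ on $(0,T_{1,a})$. Since $Y_1\equiv 0$ on this interval (no class~$1$ mass reaches queue~$2$ before $T_{1,a}$), the only arrivals to queue~$2$ there are class~$2$, so $A_2'=F_2'$. I would split $(0,T_{1,a})$ according to whether queue~$2$ is engaged: on the complement of $\overline{E_2}$ the queue is empty, $\tau_2(s)=s$ and $\tau_2'(s)=1\ge\gamma_2$; on $\overline{E_2}$, (\ref{eq:derv_of_tau}) gives $\tau_2'(s)=F_2'(s)/\mu_2$, and Lemma~\ref{lem_arrival_rates_inst2} (using $s\notin\mathcal{S}(F_1)$ because $s<T_{1,a}$) forces $F_2'(s)=\mu_2\gamma_2$, i.e.\ $\tau_2'(s)=\gamma_2$. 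Integrating yields $\tau_2(T_{1,a})-\tau_2(t)\ge\gamma_2(T_{1,a}-t)$, whence $C^{(1)}_{\mathbf{F}}(T_{1,a})-C^{(1)}_{\mathbf{F}}(t)\ge(\gamma_2-\gamma_1)(T_{1,a}-t)>0$ since $\gamma_1<\gamma_2$. This produces the profitable earlier deviation and the contradiction.

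The main obstacle is justifying that, on the engaged part of $(0,T_{1,a})$, class~$2$ is genuinely arriving at rate $\mu_2\gamma_2$ rather than queue~$2$ merely draining a leftover class~$2$ backlog with no fresh arrivals; in that degenerate case $A_2'=0$, $\tau_2'=0$, and the bound would fail. I would rule this out as follows. By Lemma~\ref{lem_supp_are_intervals_inst2}(2), since $\mu_1<\mu_2\gamma_2$, the set $\mathcal{S}(F_2)=[T_{2,a},T_{2,f}]$ is an interval. If some $s\in\overline{E_2}\cap(0,T_{1,a})$ had $s\notin\mathcal{S}(F_2)$, then either $s<T_{2,a}$, forcing queue~$2$ to have received no mass and hence $Q_2(s)=0$ (contradicting $s\in E_2$), or $s>T_{2,f}$; but Lemma~\ref{lem_inst2_uneqpref_queue2idle} gives that queue~$2$ is empty at $T_{2,f}$, and with no arrivals in $(T_{2,f},T_{1,a})$ it remains empty, again contradicting $s\in E_2$. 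Hence every engaged time in $(0,T_{1,a})$ lies in $\mathcal{S}(F_2)$, the rate formula applies, and the lower bound holds. As a consistency check, the closed form (\ref{eq_inst2_uneqpref_reg1_bdary_case2b}) indeed yields $T_{1,a}=-\bigl(\tfrac{\gamma_2}{\gamma_1}-1\bigr)\tfrac{\Lambda_1}{\mu_1}\le 0$, matching the claim.
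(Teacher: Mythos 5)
Your proof is correct and follows essentially the same route as the paper's: assume $T_{1,a}>0$, note that queue 1 is empty on $[0,T_{1,a}]$ so the class-1 cost there reduces to $\tau_2(t)-\gamma_1 t$, deduce $\tau_2'\geq\gamma_2$ from the class-2 arrival rate, and conclude that the class-1 user arriving at $T_{1,a}$ strictly prefers time $0$. The only difference is that the paper takes $[0,T_{1,a})\subseteq\mathcal{S}(F_2)\subseteq\overline{E_2}$ for granted (it follows regime-free from part (1) of Lemma \ref{lem_supp_are_intervals_inst2} together with $T_{2,a}<0$), whereas your extra case analysis, though sound, routes through part (2) of Lemma \ref{lem_supp_are_intervals_inst2} and Lemma \ref{lem_inst2_uneqpref_queue2idle} and therefore quietly assumes $\mu_1<\mu_2\gamma_2$, even though the paper also invokes this lemma in case 3 of Theorem \ref{mainthm_inst2_reg2} where that inequality need not hold.
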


\begin{proof} 
Assuming contradiction, if $T^{(1)}_a>0$, by Lemma \ref{lem_arrival_rates_inst2} class 2 users arrive at rate $\mu_2\gamma^{(2)}$ in the interval $[0,T^{(1)}_a]$. As a result, for $t\in[0,T^{(1)}_a]$, $(C_{\mathbf{F}}^{(1)})^\prime(t)=(\tau_{\mathbf{F}}^{(1)})^\prime(t)-\gamma^{(1)}=\tau_2^\prime(t)-\gamma^{(1)}=\frac{(F^{(2)})^\prime(t)}{\mu_2}-\gamma^{(1)}=\gamma^{(2)}-\gamma^{(1)}>0$. So the class 1 user arriving at $T^{(1)}_a$ will be better off arriving at time $0$ and hence $\{F^{(1)},F^{(2)}\}$ will not be an EAP. 
\end{proof} 

\subsubsection{Proof of Theorem \ref{mainthm_inst2_reg1}}

We consider candidates which are absolutely continuous with $\mathcal{S}(F^{(1)})=[T^{(1)}_a,T^{(1)}_f]$, $\mathcal{S}(F^{(2)})=[T^{(2)}_a,T^{(2)}_f]$  and have arrival rates following the property in Lemma \ref{lem_arrival_rates_inst2}. As a result, if an EAP exists, it will be contained in the mentioned set of candidates. Now we consider the two cases: \textbf{1)}~$\gamma^{(1)}>\gamma^{(2)}$ and \textbf{2)}~$\gamma^{(1)}<\gamma^{(2)}$, separately. 

\noindent\textbf{Case 1 $\gamma^{(1)}>\gamma^{(2)}$}:~~Queue 2 cannot be empty at time zero, so $T^{(2)}_a<0$. By Lemma \ref{lem_inst2_uneqpref_sign_reg1a}, the support boundaries must satisfy one of the two properties: \textbf{1)}~Type I: $T^{(1)}_a\leq 0$, and \textbf{2)}~Type II: $0<T^{(1)}_a<T^{(2)}_f$. The following lemma implies existence of unique EAP for case 1 of Theorem \ref{mainthm_inst2_reg1}. 

\begin{lemma}\label{lem:inst2reg1case1}
    If $\mu_1<\mu_2\gamma^{(2)}$ and $\gamma^{(1)}>\gamma^{(2)}$, the following statements are true: 
    \begin{enumerate}
        \item[\textbf{1.}] An EAP of Type I exists if and only if $\Lambda^{(1)}\geq\frac{1-\gamma^{(2)}}{1-\gamma^{(1)}}\frac{\mu_1}{\mu_2-\mu_1}\Lambda^{(2)}$, and if it exists, it will be unique with arrival rates and support boundaries same as the joint arrival profile mentioned under case 1a of Theorem \ref{mainthm_inst2_reg1}. 
        \item[\textbf{2.}]  An EAP of Type II exists if and only if $\Lambda^{(1)}<\frac{1-\gamma^{(2)}}{1-\gamma^{(1)}}\frac{\mu_1}{\mu_2-\mu_1}\Lambda^{(2)}$, and if it exists, it will be unique with arrival rates and support boundaries same as the joint arrival profile mentioned under case 1b of Theorem \ref{mainthm_inst2_reg1}.
    \end{enumerate}
\end{lemma}

\begin{proof}
\noindent\textbf{Getting the arrival rates:}\hspace{0.05in}For both the types queue 2 must have positive queue length in $[0,T^{(2)}_f)$ and by Lemma \ref{lem_inst2_uneqpref_queue2idle}, queue 2 must become empty at $T^{(2)}_f$. Combining this observation with Lemma \ref{lem_arrival_rates_inst2}, arrival rates of candidates under both the types must satisfy, 
\begin{align}\label{eq_rate_inst2_uneqpref_reg1a}
    (F^{(1)})^\prime(t)&=\begin{cases}
        \frac{\mu_1\gamma^{(1)}}{\gamma^{(2)}}~~&\text{for}~t\in[T^{(1)}_a,\tau_1^{-1}(T^{(2)}_f)], \\
        \mu_1\gamma^{(1)}~~&\text{for}~t\in[\tau_1^{-1}(T^{(2)}_f), T^{(1)}_f], 
    \end{cases}~\text{and}~(F^{(2)})^\prime(t)=\begin{cases}
        \mu_2\gamma^{(2)}~~&\text{for}~t\in[T^{(2)}_a,T^{(1),+}_a],\\
        \mu_2\gamma^{(2)}-\mu_1~~&\text{for}~t\in[T^{(1),+}_a,T^{(2)}_f],
    \end{cases}
\end{align}
where $T^{(1),+}_a=\max\{T^{(1)}_a,0\}$. Applying Lemma \ref{lem_appndx_inst2_uneqpref_queu1_engaged_mixedarrival}, it is easy to observe that queue 1 must be engaged in $(T^{(1),+}_a,T^{(1)}_f)$ and starts serving from time $T^{(1),+}_a$. Therefore using (\ref{eq:derv_of_tau}), $\tau_1(t)=\frac{F^{(1)}(t)}{\mu_1}+T^{(1),+}_a$ for $t\in[T^{(1)}_a,T^{(1)}_f]$. Therefore $\tau_1(\cdot)$ is increasing in $[T^{(1)}_a,T^{(1)}_f]$ and $\tau_1^{-1}(T^{(2)}_f)$ is well-defined. Using (\ref{eq_rate_inst2_uneqpref_reg1a}) on the expression obtained for $\tau_1(t)$, we get, 
\begin{align*}
    T^{(2)}_f=\tau_1(\tau_1^{-1}(T^{(2)}_f))&=T^{(1),+}_a + \frac{F^{(1)}(\tau_1^{-1}(T^{(2)}_f))}{\mu_1}\\
    &=T^{(1),+}_a + \frac{\gamma^{(1)}}{\gamma^{(2)}}\left(\tau_1^{-1}(T^{(2)}_f)-T^{(1)}_a\right).
\end{align*}

After some manipulation, we get
\begin{align}\label{eq:inst2_uneqpref_reg1a_inv_of_tau2_at_T2f}
    \tau_1^{-1}(T^{(2)}_f)=T^{(1)}_a+\frac{\gamma^{(2)}}{\gamma^{(1)}}(T^{(2)}_f-T^{(1),+}_a).    
\end{align}

\noindent\textbf{Identifying the support boundaries:}\hspace{0.05in}Exploiting the structural properties proved before, we obtain the following system of equations to be satisfied by the support boundaries $T^{(1)}_a,T^{(1)}_f,T^{(2)}_a,T^{(2)}_f$: 
\begin{enumerate}[leftmargin=*]
    \item By Lemma \ref{lem_inst2_uneqpref_queue1idle} and \ref{lem_appndx_inst2_uneqpref_queu1_engaged_mixedarrival}, queue 1 starts at $T^{(1),+}_a$, empties at $T^{(1)}_f$ and has positive waiting time in $(T^{(1),+}_a,T^{(1)}_f)$. This gives us, $T^{(1)}_f=T^{(1),+}_a +\frac{\Lambda^{(1)}}{\mu_1}$. 
    
    \item By Lemma \ref{lem_inst2_uneqpref_queue2idle}, queue 2 starts at time zero, empties at $T^{(2)}_f$ and has positive waiting time in $(0,T^{(2)}_f)$. This gives us, $\mu_2 T^{(2)}_f=\Lambda^{(2)}+F^{(1)}(\tau_1^{-1}(T^{(2)}_f))$.
    
    \item All class 2 users arrive in $[T^{(2)}_a,T^{(2)}_f]$ at rates given by (\ref{eq_rate_inst2_uneqpref_reg1a}). This gives us,\\ $\Lambda^{(2)}=(\mu_2\gamma^{(2)}-\mu_1)(T^{(2)}_f-T^{(1),+}_a)+\mu_2\gamma^{(2)}(T^{(1),+}_a -T^{(2)}_a)$.
    
    \item All class 1 users arrive in $[T^{(1)}_a,T^{(1)}_f]$ at rates given by (\ref{eq_rate_inst2_uneqpref_reg1a}). This gives us,\\ $\Lambda^{(1)}=\frac{\mu_1\gamma^{(1)}}{\gamma^{(2)}}\left(\tau_1^{-1}(T^{(2)}_f)-T^{(1)}_a\right)+\mu_1\gamma^{(1)}\left(T^{(1)}_f-\tau_1^{-1}(T^{(2)}_f)\right)$.
\end{enumerate}

In the above system of equations  $F^{(1)},F^{(2)}$ are obtained using (\ref{eq_rate_inst2_uneqpref_reg1a}) and $\tau_1^{-1}(T^{(2)}_f)$ obtained using (\ref{eq:inst2_uneqpref_reg1a_inv_of_tau2_at_T2f}). Upon plugging in $T^{(1),+}_a = 0$ for Type I and $T^{(1)}_a$ for Type II, we obtain a system of linear equations for both the types. Their solutions for Type I and II are, respectively, in (\ref{eq_inst2_uneqpref_reg1_bdary_case1a}) and (\ref{eq_inst2_uneqpref_reg1_bdary_case1b}). Therefore, every EAP under Type I and II must, respectively, have (\ref{eq_inst2_uneqpref_reg1_bdary_case1a}) and (\ref{eq_inst2_uneqpref_reg1_bdary_case1b}) as support boundaries.

\noindent\textbf{Identifying the necessary conditions:}\hspace{0.05in}The support boundaries in (\ref{eq_inst2_uneqpref_reg1_bdary_case1a}) must satisfy $T^{(1)}_a\leq 0$ and upon imposing that, we get the condition $\Lambda^{(1)}\geq\frac{1-\gamma^{(2)}}{1-\gamma^{(1)}}\frac{\mu_1}{\mu_2-\mu_1}\Lambda^{(2)}$, which is necessary for existence of a Type I EAP. It is easy to verify that, if $\Lambda^{(1)}\geq\frac{1-\gamma^{(2)}}{1-\gamma^{(1)}}\frac{\mu_1}{\mu_2-\mu_1}\Lambda^{(2)}$, (\ref{eq_inst2_uneqpref_reg1_bdary_case1a}) satisfies $T^{(1)}_a\leq 0, T^{(2)}_a<0<T^{(2)}_f, T^{(1)}_a<\tau_1^{-1}(T^{(2)}_f)\leq T^{(1)}_f$. By (\ref{eq_rate_inst2_uneqpref_reg1a}), the obtained Type I candidate has a closed form same as the joint arrival profile under case 1a of Theorem \ref{mainthm_inst2_reg1} and is the only Type I candidate which can qualify as an EAP. \\
Similarly imposing $T^{(1)}_a>0$ on the support boundaries in (\ref{eq_inst2_uneqpref_reg1_bdary_case1b}), we get  $\Lambda^{(1)}<\frac{1-\gamma^{(2)}}{1-\gamma^{(1)}}\frac{\mu_1}{\mu_2-\mu_1}\Lambda^{(2)}$ and this is a necessary condition for existence of a Type II candidate. Again, it is easy to verify, if  $\Lambda^{(1)}<\frac{1-\gamma^{(2)}}{1-\gamma^{(1)}}\frac{\mu_1}{\mu_2-\mu_1}\Lambda^{(2)}$,  (\ref{eq_inst2_uneqpref_reg1_bdary_case1b}) satisfies $T^{(2)}_f>T^{(1)}_a>0>T^{(2)}_a, T^{(1)}_f\geq\tau_1^{-1}(T^{(2)}_f)>T^{(1)}_a$. By (\ref{eq_rate_inst2_uneqpref_reg1a}),, with the necessary condition satisfies, the obtained Type II candidate has closed form same as the joint arrival profile mentioned under case 1b of Theorem \ref{mainthm_inst2_reg1} and is the only Type II candidate which can qualify as an EAP.

\noindent\textbf{Proving sufficiency of the obtained necessary condition:}\hspace{0.05in}Now the following sequence of arguments imply that the obtained candidate for both the types satisfies: $(C_{\mathbf{F}}^{(i)})^\prime(t)\leq  0$ in $(-\infty,T_{i,a})$, $(C_{\mathbf{F}}^{(i)})^\prime(t)=0$ in $[T_{i,a},T_{i,f}]$ and $(C_{\mathbf{F}}^{(i)})^\prime(t)\leq  0$ in $(T_{i,f},\infty)$ for both the classes $i=1,2$. As a result, the candidate obtained under both the types is an EAP if the necessary condition is satisfied. 
\begin{itemize}[leftmargin=*]
    \item \textbf{For class 1:} Every class 1 user arriving in $(-\infty,T^{(1)}_a)$ will depart at $\tau_2(0)$. As a result, $(C_{\mathbf{F}}^{(1)})^\prime(t)=-\gamma^{(1)}<0$ for $t\in(-\infty,T^{(1)}_a)$. For both the types, the obtained candidate satisfies  $F^{(1)}(t)>\mu_1\cdot(t-T^{(1),+}_a)\cdot\mathbb{I}(t\geq T^{(1),+}_a)$ for every $t\in[T^{(1)}_a,T^{(1)}_f)$, making queue 1 engaged in $[T^{(1)}_a,T^{(1)}_f]$. Also, since $\Lambda^{(1)}=\mu_1\cdot(T^{(1)}_f-T^{(1),+}_a)$, queue 1 becomes empty at $T^{(1)}_f$. As a result, class 1 users arrive at queue 2 at rate $\mu_1$ in $[0,T^{(2)}_f]$ and by \ref{eq_rate_inst2_uneqpref_reg1a}, class 2 users arrive at rate $\mu_2\gamma^{(2)}-\mu_1$ in $[0,T^{(2)}_f]$ making $A_2^\prime(t)=\mu_2\gamma^{(2)}<\mu_2$ in $[0,T^{(2)}_f]$. As a result, $A_2(t)-\mu_2 t$ is decreasing in $[0,T^{(2)}_f]$. Now, the previous section and $A_2(T^{(2)}_f)=\Lambda^{(2)}+F^{(1)}(\tau_1^{-1}(T^{(2)}_f))=\mu_2\cdot T^{(2)}_f$ imply $A_2(t)>\mu_2 t$ for $t\in[0,T^{(2)}_f)$. Hence queue 2 stays engaged in $[0,T^{(2)}_f]$ and empties at $T^{(2)}_f$. With this, for $t\in[T^{(1)}_a,\tau_1^{-1}(T^{(2)}_f)]$, using (\ref{eq:derv_of_tau}) and (\ref{eq_rate_inst2_uneqpref_reg1a}), $(\tau_{\mathbf{F}}^{(1)})^\prime(t)=(\tau_2\circ\tau_1)^\prime(t)=\tau_2^\prime(\tau_1(t))\cdot\tau_1^\prime(t)=\frac{A_2^\prime(\tau_1(t))}{\mu_2}\cdot\frac{(F^{(2)})^\prime(t)}{\mu_1}=\gamma^{(1)}$ and as a result, $(C_{\mathbf{F}}^{(1)})^\prime(t)=(\tau_{\mathbf{F}}^{(1)})^\prime(t)-\gamma^{(1)}=0$ in $[T^{(1)}_a,\tau_1^{-1}(T^{(2)}_f)]$. Note that, class 1 users arrive from queue 1 to 2 at a maximum rate of $\mu_1<\mu_2$ in $[T^{(2)}_f,T^{(1)}_f]$. As a result, queue 2 stays empty in $[T^{(2)}_f,T^{(1)}_f]$ and for every $t\in[\tau_1^{-1}(T^{(2)}_f),T^{(1)}_f]$, using (\ref{eq:derv_of_tau}) and (\ref{eq_rate_inst2_uneqpref_reg1a}), $(\tau_{\mathbf{F}}^{(1)})^\prime(t)=\tau_1^\prime(t)=\frac{(F^{(1)})^\prime(t)}{\mu_1}=\gamma^{(1)}$. Hence, $(C_{\mathbf{F}}^{(1)})^\prime(t)=0$ in $[\tau_1^{-1}(T^{(2)}_f),T^{(1)}_f]$. Since queue 1 becomes empty at $T^{(1)}_f$, $(C_{\mathbf{F}}^{(1)})^\prime(t)=1-\gamma^{(1)}>0$ for $t\in(T^{(1)}_f,\infty)$. 

    \item \textbf{For class 2:} Every class 2 user arriving before $T^{(2)}_a$ will get served at time zero and hence, $(C_{\mathbf{F}}^{(2)})^\prime(t)=-\gamma^{(2)}<0$ for $t\in(-\infty,T^{(2)}_a)$. By the argument for class 1, queue 2 stays engaged in $(T^{(2)}_a,T^{(2)}_f)$ and empties at $T^{(2)}_f$. As a result, using (\ref{eq:derv_of_tau}) and (\ref{eq_rate_inst2_uneqpref_reg1a}), for every $t\in[T^{(2)}_a,T^{(2)}_f]$, $(\tau_{\mathbf{F}}^{(2)})^\prime(t)=\tau_2^\prime(t)=\frac{A_2^\prime(t)}{\mu_2}=\frac{\mu_2\gamma^{(2)}}{\mu_2}=\gamma^{(2)}$ and as a result, $(C_{\mathbf{F}}^{(2)})^\prime(t)=(\tau_{\mathbf{F}}^{(2)})^\prime(t)-\gamma^{(2)}=0$. Since queue 2 stays empty in $(T^{(2)}_f,\infty)$, $(C_{\mathbf{F}}^{(2)})^\prime(t)=1-\gamma^{(2)}>0$ in $(T^{(2)}_f,\infty)$.  \vspace{-0.3in}
\end{itemize}
\end{proof}

\noindent\textbf{Case 2 $\gamma^{(1)}<\gamma^{(2)}$}:~~By Lemma \ref{lem_inst2_uneqpref_sign_reg1b}, the support boundaries in this case must satisfy $T^{(1)}_a\leq 0$ and one of the two properties: \textbf{1)} Type I: $T^{(1)}_f\geq T^{(2)}_f$, and \textbf{2)} Type II: $T^{(2)}_f>T^{(1)}_f$. Also we must have $T^{(2)}_a<0$, since queue 2 cannot be empty at time zero. Now the following lemma implies existence of a unique EAP under case 2 of Theorem \ref{mainthm_inst2_reg2}.

\begin{lemma}\label{lem:inst2reg1case2}
    If $\mu_1<\mu_2\gamma^{(2)}$ and $\gamma^{(1)}<\gamma^{(2)}$, the following statements are true: 
        \begin{enumerate}
        \item[\textbf{1.}] An EAP of Type I exists if and only if $\Lambda^{(1)}\geq\frac{\mu_1}{\mu_2-\mu_1}\Lambda^{(2)}$, and if it exists, it will be unique with arrival rates and support boundaries same as the joint arrival profile mentioned under case 2a of Theorem \ref{mainthm_inst2_reg1}. 
        \item[\textbf{2.}]  An EAP of Type II exists if and only if $\Lambda^{(1)}<\frac{\mu_1}{\mu_2-\mu_1}\Lambda^{(2)}$, and if it exists, it will be unique with arrival rates and support boundaries same as the joint arrival profile mentioned under case 2b of Theorem \ref{mainthm_inst2_reg1}.
    \end{enumerate}
\end{lemma}

\begin{proof}
\noindent\textbf{Getting the arrival rates:}\hspace{0.05in}By Lemma \ref{lem_arrival_rates_inst2}, arrival rates of the two classes in candidates of the two types must be: 
\begin{enumerate}
    \item[Type I:] Same as mentioned in (\ref{eq_rate_inst2_uneqpref_reg1a}) by putting $T^{(1),+}_a = 0$. 
    \item[Type II:] \begin{align}\label{eq_rate_inst2_uneqpref_reg2b}
        (F^{(1)})^\prime(t)&=\frac{\mu_1\gamma^{(1)}}{\gamma^{(2)}}~~\text{for}~t\in[T^{(1)}_a,T^{(1)}_f]\nonumber \\ 
        \text{and}~(F^{(2)})^\prime(t)&=\begin{cases} \mu_2\gamma^{(2)}~~&\text{for}~t\in[T^{(2)}_a,0]\cup[T^{(1)}_f,T^{(2)}_f],\\ \mu_2\gamma^{(2)}-\mu_1~~&\text{for}~t\in[0,T^{(1)}_f]. \end{cases}   
    \end{align} 
\end{enumerate}

\noindent\textbf{Identifying the support boundaries:}\hspace{0.05in}For Type I, since $T^{(1)}_a\leq 0$ and $T^{(1)}_f\geq T^{(2)}_f$, the support boundaries satisfy the same linear system obtained for Type I in the proof of Lemma \ref{lem:inst2reg1case1}. Hence every Type I EAP has (\ref{eq_inst2_uneqpref_reg1_bdary_case1a}) as support boundaries. Similarly we obtain the following system of equations to be satisfied by the support boundaries for any Type II EAP:

\begin{itemize}[leftmargin=*]
    \item By Lemma \ref{lem_inst2_uneqpref_queue1idle}, queue 1 empties at $T^{(1)}_f$ starting at time zero and has a positive waiting time between $(0,T^{(1)}_f)$. This gives us $T^{(1)}_f=\frac{\Lambda^{(1)}}{\mu_1}$.
    \item By Lemma \ref{lem_inst2_uneqpref_queue2idle}, queue 2 empties at $T^{(2)}_f$ starting at time zero, and has a positive waiting time in $(0,T^{(2)}_f)$ giving us, $T^{(2)}_f=\frac{\Lambda^{(1)}+\Lambda^{(2)}}{\mu_2}$.
    \item By definition of EAP $C_{\mathbf{F}}^{(2)}(T^{(2)}_a)=C_{\mathbf{F}}^{(2)}(T^{(2)}_f)$, giving us, $T^{(2)}_a=-\left(\frac{1}{\gamma^{(2)}}-1\right)T^{(2)}_f$.
    \item By Lemma \ref{lem_arrival_rates_inst2}, all class 1 users arrive between $[T^{(1)}_a,T^{(1)}_f]$ at rate $\frac{\mu_1\gamma^{(1)}}{\gamma^{(2)}}$. This gives us $\Lambda^{(1)}=\frac{\mu_1\gamma^{(1)}}{\gamma^{(2)}}(T^{(1)}_f-T^{(1)}_a)$. 
\end{itemize}

Solution to the above system is in (\ref{eq_inst2_uneqpref_reg1_bdary_case2b}) and hence every EAP under Type II has (\ref{eq_inst2_uneqpref_reg1_bdary_case2b}) as its support boundaries.

\noindent\textbf{Obtaining the necessary conditions:}\hspace{0.05in}The support boundaries in (\ref{eq_inst2_uneqpref_reg1_bdary_case1a}) must satisfy $T^{(1)}_f\geq T^{(2)}_f$, which gives us $\Lambda^{(1)}\geq\frac{\mu_1}{\mu_2-\mu_1}\Lambda^{(2)}$ and this is a necessary condition for existence of a Type I EAP. It is easy to verify that, if $\Lambda^{(1)}\geq\frac{\mu_1}{\mu_2-\mu_1}\Lambda^{(2)}$, (\ref{eq_inst2_uneqpref_reg1_bdary_case1a}) satisfies $T^{(1)}_a<0, T^{(1)}_f\geq T^{(2)}_f$ and the obtained Type I candidate has a closed form same as the joint arrival profile mentioned under case 1a of Theorem \ref{mainthm_inst2_reg1}.

On the other hand, imposing $T^{(2)}_f>T^{(1)}_f$ on (\ref{eq_inst2_uneqpref_reg1_bdary_case2b}), we obtain  $\Lambda^{(1)}<\frac{\mu_1}{\mu_2-\mu_1}\Lambda^{(2)}$ and this is a necessary condition for existence of an EAP under Type II. It is easy to verify, if $\Lambda^{(1)}<\frac{\mu_1}{\mu_2-\mu_1}\Lambda^{(2)}$, (\ref{eq_inst2_uneqpref_reg1_bdary_case2b}) satisfies $T^{(1)}_a\leq 0, T^{(2)}_f>T^{(1)}_f$ and the obtained Type II candidate has a closed form same as the joint arrival profile mentioned under case 2b of Theorem \ref{mainthm_inst2_reg1}. 

\noindent\textbf{Proving sufficiency of the obtained necessary condition:}\hspace{0.05in}Now the following sequence of arguments imply that the obtained candidate under the two types satisfies: $(C_{\mathbf{F}}^{(i)})^\prime(t)\leq 0$ in $(-\infty,T_{i,a})$, $(C_{\mathbf{F}}^{(i)})^\prime(t)=0$ in $[T^{(1)}_a,T^{(1)}_f]$, and $(C_{\mathbf{F}}^{(i)})^\prime(t)\geq 0$ in $(T^{(1)}_f,\infty)$ for both the classes $i=1,2$. As a result, the unique candidate obtained under the two types are EAPs. Hence, the statement of Lemma \ref{lem:inst2reg1case2} stands proved. 

\begin{itemize}[leftmargin=*]
    \item \textbf{For Type I candidate:}~It follows by the same argument as was used for Type I of case 1 ($\gamma^{(1)}>\gamma^{(2)}$). 
    \item \textbf{For Type II candidate:}~Note that, $F^{(1)}(t)>\mu_1\cdot\max\{t,0\}$ for $t\in[T^{(1)}_a,T^{(1)}_f)$ and $\mu_1\cdot T^{(1)}_f=\Lambda^{(1)}$. As a result, queue 1 stays engaged in $[T^{(1)}_a,T^{(1)}_f]$ and empties at $T^{(1)}_f$. Therefore, class 1 users arrive at rate $\mu_1$ from queue 1 to 2 in $[0,T^{(1)}_f]$. Hence, using (\ref{eq_rate_inst2_uneqpref_reg2b}), $A_2^\prime(t)=\mu_2\gamma^{(2)}<\mu_2$ between $[0,T^{(2)}_f]$, making $A_2(t)-\mu_2 t$ a decreasing function in $[0,T^{(2)}_f]$. Since $\mu_2 T^{(2)}_f=\Lambda^{(1)}+\Lambda^{(2)}=A_2(T^{(2)}_f)$, the previous statement implies $A_2(t)>\mu_2 t$ in $[0,T^{(2)}_f]$ and as a result, queue 2 stays engaged in $[T^{(2)}_a,T^{(2)}_f]$ and empties at $T^{(2)}_f$. Now every class 1 user arriving before $T^{(1)}_a$ gets served at $\tau_2(0)$, making $(C_{\mathbf{F}}^{(1)})^\prime(t)=-\gamma^{(1)}<0$ in $(-\infty,T^{(1)}_a)$. Since queue 1 stays engaged in $[T^{(1)}_a,T^{(1)}_f]$ and queue 2 stays engaged in $[0,T^{(1)}_f]$, using (\ref{eq:derv_of_tau}) and (\ref{eq_rate_inst2_uneqpref_reg2b}), for every $t\in[T^{(1)}_a,T^{(1)}_f]$, $(\tau_{\mathbf{F}}^{(1)})^\prime(t)=(\tau_2\circ\tau_1)^\prime(t)=\tau_2^\prime(\tau_1(t))\tau_1^\prime(t)=\frac{A_2^\prime(\tau_1(t))}{\mu_2}\cdot\frac{(F^{(1)})^\prime(t)}{\mu_1}=\frac{\mu_2\gamma^{(2)}}{\mu_2}\cdot\frac{\mu_1\gamma^{(1)}/\gamma^{(2)}}{\mu_1}=\gamma^{(1)}$. As a result, $(C_{\mathbf{F}}^{(1)})^\prime(t)=0$ in $[T^{(1)}_a,T^{(1)}_f]$. For $t\in(T^{(1)}_f,\infty)$, since queue 1 stays empty , $(C_{\mathbf{F}}^{(1)})^\prime(t)=1-\gamma^{(1)}>0$. Now every class 2 user arriving before $T^{(2)}_a$ gets served at time zero, making $(C_{\mathbf{F}}^{(2)})^\prime(t)=-\gamma^{(2)}$ in $(-\infty,T^{(2)}_a)$. In $[T^{(2)}_a,T^{(2)}_f]$, queue 2 remains engaged with arrival rate $A_2^\prime(t)=\mu_2\gamma^{(2)}$ and therefore, using (\ref{eq:derv_of_tau}), $(C_{\mathbf{F}}^{(2)})^\prime(t)=\frac{A_2^\prime(t)}{\mu_2}-\gamma^{(2)}=0$ in $[T^{(2)}_a,T^{(2)}_f]$. Since queue 2 stays empty after $T^{(2)}_f$, $(C_{\mathbf{F}}^{(2)})^\prime(t)=1-\gamma^{(2)}>0$ in $(T^{(2)}_f,\infty)$.   
\end{itemize}    
\end{proof}

\subsubsection{Supporting lemmas for proving Theorem \ref{mainthm_inst2_reg2}}

Before proving Theorem \ref{mainthm_inst2_reg2}, which specifies the EAP for the regime $\mu_1\geq\mu_2\gamma^{(2)}$, we need the following two lemmas to characterize the EAP in different parametric regions. In the statement of these two lemmas, the quantities $T^{(1)}_a,T^{(1)}_f,T^{(2)}_a,T^{(2)}_f$ are same as they were introduced in Section \ref{sec_inst2_uneqpref} before specifying the EAPs. By Lemma \ref{lem_supp_are_intervals_inst1}, we must have $\mathcal{S}(F^{(1)})=[T^{(1)}_a,T^{(1)}_f]$. 

\begin{lemma}\label{lem_inst2_uneqpref_reg2_case12_sign}
    If $\mu_1\geq\mu_2\gamma^{(2)}$ and $\gamma^{(1)}>\gamma^{(2)}$, $T^{(2)}_f\leq T^{(1)}_f$.
\end{lemma}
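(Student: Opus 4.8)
The plan is to argue by contradiction, following the template of the proof of Lemma~\ref{lem_inst2_uneqpref_sign_reg1a} (the analogous statement in the regime $\mu_1<\mu_2\gamma_2$), but replacing its use of ``$\mathcal{S}(F_2)$ is an interval'' — which is only guaranteed when $\mu_1<\mu_2\gamma_2$ — with the weaker union-is-an-interval property that holds in every regime. So suppose $T_{2,f}>T_{1,f}$. By Lemma~\ref{lem_supp_are_intervals_inst2} we have $\mathcal{S}(F_1)=[T_{1,a},T_{1,f}]$ and $\mathcal{S}(F_1)\cup\mathcal{S}(F_2)$ is an interval. The first thing I would extract is that class $2$ arrivals fill the entire stretch beyond $T_{1,f}$: since no class $1$ user arrives after $T_{1,f}$, every point of $(T_{1,f},T_{2,f}]$ lies in the union but not in $\mathcal{S}(F_1)$, hence $(T_{1,f},T_{2,f}]\subseteq\mathcal{S}(F_2)$. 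This is the key step that survives even though $\mathcal{S}(F_2)$ itself may be disconnected here.

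Next I would pin down the two queues on $(T_{1,f},T_{2,f})$. By Lemma~\ref{lem_inst2_uneqpref_queue1idle} queue $1$ is empty at $T_{1,f}$, and since $A_1$ is constant past $T_{1,f}$, queue $1$ stays empty on $(T_{1,f},\infty)$; therefore $Y_1^\prime\equiv 0$ there, and a class $1$ user arriving at such a $t$ passes through queue $1$ without delay, so $\tau_{\mathbf{F}}^{(1)}(t)=\tau_2(t)$ and $C_{\mathbf{F}}^{(1)}(t)=\tau_2(t)-\gamma_1 t$. Because $(T_{1,f},T_{2,f}]\subseteq\mathcal{S}(F_2)\setminus\mathcal{S}(F_1)$, Lemma~\ref{lem_arrival_rates_inst2} gives $F_2^\prime(t)=\mu_2\gamma_2$, and with $Y_1^\prime=0$ this yields $A_2^\prime(t)=\mu_2\gamma_2$. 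Since $\mathcal{S}(F_2)\subseteq\overline{E_2}$, (\ref{eq:derv_of_tau}) then gives $\tau_2^\prime(t)=A_2^\prime(t)/\mu_2=\gamma_2$ a.e. on $(T_{1,f},T_{2,f})$.

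Combining these, $(C_{\mathbf{F}}^{(1)})^\prime(t)=\tau_2^\prime(t)-\gamma_1=\gamma_2-\gamma_1<0$ a.e. on $(T_{1,f},T_{2,f})$, where the strict inequality is exactly where the hypothesis $\gamma_1>\gamma_2$ is used. As $C_{\mathbf{F}}^{(1)}$ is continuous (the arrival profiles are absolutely continuous), integrating gives $C_{\mathbf{F}}^{(1)}(T_{2,f})<C_{\mathbf{F}}^{(1)}(T_{1,f})$. Since $T_{1,f}\in\mathcal{S}(F_1)$, this says the class $1$ user arriving at $T_{1,f}$ would strictly reduce her cost by arriving at $T_{2,f}$, contradicting the definition of an EAP; hence $T_{2,f}\le T_{1,f}$. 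Note that the condition $\mu_1\ge\mu_2\gamma_2$ only fixes the regime and is not actually invoked in the chain of inequalities, which is consistent with the fact that $\gamma_1>\gamma_2$ already forces $T_{1,f}\ge T_{2,f}$ in both regimes.

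I expect the main obstacle to be the very first step — showing $(T_{1,f},T_{2,f}]\subseteq\mathcal{S}(F_2)$ and that queue $1$ contributes nothing past $T_{1,f}$ — precisely because $\mathcal{S}(F_2)$ can be disconnected in this regime, so one must lean only on the union-interval property of Lemma~\ref{lem_supp_are_intervals_inst2} together with $Y_1^\prime=0$, rather than on a clean interval description of $\mathcal{S}(F_2)$. A secondary point I would check is the degenerate case $T_{2,f}\le 0$: there a deviating class $1$ user still faces an empty queue $1$ and merely waits for service to open, so the same computation shows $C_{\mathbf{F}}^{(1)}$ is strictly decreasing as the arrival time increases, and the contradiction persists.
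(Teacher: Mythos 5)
Your proof is correct and follows essentially the same route as the paper's own argument: assume $T_{2,f}>T_{1,f}$, use Lemma~\ref{lem_inst2_uneqpref_queue1idle} to empty queue 1 at $T_{1,f}$, deduce $\tau_2^\prime=\gamma_2$ on $(T_{1,f},T_{2,f})$ (the paper cites (\ref{eq:isocost_inst2_grp2}) directly where you re-derive it from Lemma~\ref{lem_arrival_rates_inst2}), and conclude $(C_{\mathbf{F}}^{(1)})^\prime=\gamma_2-\gamma_1<0$ so the class~1 user at $T_{1,f}$ profitably deviates to $T_{2,f}$. The only difference is that you make explicit the step $(T_{1,f},T_{2,f}]\subseteq\mathcal{S}(F_2)$ via the union-interval property, which the paper leaves implicit.
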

\begin{proof} 
By Lemma \ref{lem_inst2_uneqpref_queue1idle}, queue 1 will be empty at time $T^{(1)}_f$. Assuming contradiction, by (\ref{eq:isocost_inst2_grp2}), $(C_{\mathbf{F}}^{(1)})^\prime(t)=\tau_2^\prime(t)-\gamma^{(1)}=\gamma^{(2)}-\gamma^{(1)}<0$ for $t\in[T^{(1)}_f,T^{(2)}_f]$. Therefore, the class 1 user arriving at $T^{(1)}_f$ will be better off arriving at $T^{(2)}_f$. 
\end{proof}

\begin{lemma}\label{lem_inst2_uneqpref_reg2_case12_sign2}
    If $\mu_1\geq\mu_2\gamma^{(2)}$ and $\gamma^{(1)}>\gamma^{(2)}$, then for every EAP, the only possible arrival profile of class 2 users is $(F^{(2)})^\prime(t)=\mu_2\gamma^{(2)}\cdot\mathbb{I}\left(t\in\left[T^{(1),+}_a -\frac{\Lambda^{(2)}}{\mu_2\gamma^{(2)}},T^{(1),+}_a\right]\right)$.
\end{lemma}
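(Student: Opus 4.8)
The plan is to pin down $\mathcal{S}(F_2)$ in three stages: first locate it entirely to the left of $T_{1,a}^+$, then read off its arrival rate from Lemma \ref{lem_arrival_rates_inst2}, and finally show it is a single interval whose right endpoint is exactly $T_{1,a}^+$. The organizing observation is that queue $1$ carries only class $1$ and, by Lemma \ref{lem_inst2_uneqpref_queue1idle}, empties at $T_{1,f}$, so $\tau_1(T_{1,f})=T_{1,f}$ and the image $\mathcal{S}(Y_1)=\tau_1(\mathcal{S}(F_1))=[\tau_1(T_{1,a}),T_{1,f}]=[T_{1,a}^+,T_{1,f}]$ describes exactly when class $1$ feeds queue $2$. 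Consequently $Y_1\equiv 0$ on $(-\infty,T_{1,a}^+)$, i.e.\ there queue $2$ is fed by class $2$ alone and $A_2=F_2$; this reduction is what makes the later deviation arguments clean.

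\textbf{Location and rate.} Since $\mu_1\ge\mu_2\gamma_2$ and $\gamma_1\neq\gamma_2$, Lemma \ref{lem_inst2_uneqpref_mixedarr_nec_cond} gives that $\mathcal{S}(Y_1)$ and $(\mathcal{S}(F_2))^o$ are disjoint, so $(\mathcal{S}(F_2))^o\cap[T_{1,a}^+,T_{1,f}]=\emptyset$; combining this with $T_{2,f}\le T_{1,f}$ (Lemma \ref{lem_inst2_uneqpref_reg2_case12_sign}) forces $\mathcal{S}(F_2)\subseteq(-\infty,T_{1,a}^+]$, hence $F_2(T_{1,a}^+)=\Lambda_2$. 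For the rate, note that every $t\in(\mathcal{S}(F_2))^o$ satisfies $t<T_{1,a}^+$, so either $t<0$ (when $T_{1,a}\le 0$, so $T_{1,a}^+=0$) or $t<T_{1,a}$ and hence $t\notin[T_{1,a},T_{1,f}]=\mathcal{S}(F_1)$. In both cases $t$ lies in the first branch of the class-$2$ rate formula of Lemma \ref{lem_arrival_rates_inst2}, giving $F_2'(t)=\mu_2\gamma_2$ a.e.\ on $\mathcal{S}(F_2)$ (and $F_2'=0$ off the support).

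\textbf{Single interval ending at $T_{1,a}^+$.} When $T_{1,a}>0$ this is immediate from the support-interval property: $\mathcal{S}(F_1)=[T_{1,a},T_{1,f}]$ begins at $T_{1,a}$ while $\mathcal{S}(F_2)\subseteq(-\infty,T_{1,a}]$, so any internal gap of $\mathcal{S}(F_2)$, or any shortfall $T_{2,f}<T_{1,a}$, would leave a gap in $\mathcal{S}(F_1)\cup\mathcal{S}(F_2)$ strictly below $T_{1,a}$, contradicting Lemma \ref{lem_supp_are_intervals_inst2}(1); thus $\mathcal{S}(F_2)=[T_{2,a},T_{1,a}]$ with $T_{1,a}=T_{1,a}^+$. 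When $T_{1,a}\le 0$ (so $T_{1,a}^+=0$) the union lemma still eliminates every gap whose interior meets $(-\infty,T_{1,a})$, but gaps inside $[T_{1,a},0]$ and the possibility $T_{2,f}<0$ must be excluded directly. Here I would use that for $t\le 0$ no service has yet occurred and $A_2=F_2$, so (\ref{eq_waiting_time_proc}) yields $C_{\mathbf{F}}^{(2)}(t)=A_2(t)/\mu_2-\gamma_2 t$; across any interval on which $A_2$ is constant — a candidate gap, or the ``dead zone'' $(T_{2,f},0)$ in which nothing arrives at queue $2$ — this cost is strictly decreasing in $t$, so the class-$2$ user at the left end would strictly prefer the right end (in the limiting case, to pile up at the opening time $0$, where the waiting mass $Q_2(0)=\Lambda_2$ is served by $\Lambda_2/\mu_2$ with strictly less waiting). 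This contradicts the iso-cost equilibrium property, ruling out such gaps and forcing $T_{2,f}=0=T_{1,a}^+$. Combining the three stages, $\mathcal{S}(F_2)$ is the single interval of length $\Lambda_2/(\mu_2\gamma_2)$ ending at $T_{1,a}^+$ on which $F_2'=\mu_2\gamma_2$, which is precisely the asserted profile.

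The step I expect to be the main obstacle is the $T_{1,a}\le 0$ case: there Lemma \ref{lem_supp_are_intervals_inst2}(1) is not by itself enough to pin the right endpoint, and one must argue explicitly through the pre-service cost identity $C_{\mathbf{F}}^{(2)}(t)=A_2(t)/\mu_2-\gamma_2 t$ that class $2$ has a strict incentive to compress all of its arrivals up against the opening time $0=T_{1,a}^+$. The remaining facts — location via Lemmas \ref{lem_inst2_uneqpref_mixedarr_nec_cond} and \ref{lem_inst2_uneqpref_reg2_case12_sign}, and the rate via Lemma \ref{lem_arrival_rates_inst2} — are essentially bookkeeping once the feed structure $\mathcal{S}(Y_1)=[T_{1,a}^+,T_{1,f}]$ is in hand.
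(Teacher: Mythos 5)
Your proposal is correct and follows essentially the same route as the paper: the disjointness of $\mathcal{S}(Y_1)$ and $(\mathcal{S}(F_2))^o$ (the $\mu_1\geq\mu_2\gamma_2$ half of the threshold lemma) together with $T_{2,f}\leq T_{1,f}$ locates $\mathcal{S}(F_2)$ in $(-\infty,T_{1,a}^+]$, the rate is read off from Lemma \ref{lem_arrival_rates_inst2}, and gaps are excluded by showing the class-2 user at the left end of any gap strictly gains by moving to its right end. The only cosmetic difference is that the paper runs a single uniform gap argument ($\tau_{\mathbf{F}}^{(2)}$ is constant across a gap because queue 2 stays engaged with no new arrivals), whereas you split into the $T_{1,a}>0$ case (via the union-of-supports lemma) and the $T_{1,a}\leq 0$ case (via the explicit pre-opening cost identity); both variants are sound.
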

\begin{proof} 
By Lemma \ref{lem_inst2_uneqpref_reg2_case12_sign} and \ref{lem_threshold_behav_inst2}, no class 2 user can arrive after $T^{(1),+}_a$. Moreover, $\mathcal{S}(F^{(2)})$ cannot have a gap before $T^{(1),+}_a$. Otherwise, if there is a gap $[t,t+\delta]$ for some $t\in \mathcal{S}(F^{(2)})$ and $\delta>0$ sufficiently small, then queue 2 must have positive waiting time in $[t,t+\delta]$ with no new class 2 user arriving. As a result, by (\ref{eq:derv_of_tau}) $\tau_{\mathbf{F}}^{(2)}(\cdot)$ remains constant in $[t,t+\delta]$. Hence the class 2 user arriving at $t$ can improve her cost arriving at $t+\delta$, giving us a contradiction. Therefore, class 2 users will  arrive over a contiguous interval ending at $T^{(1),+}_a$ at rate $\mu_2\gamma^{(2)}$ given by Lemma \ref{lem_arrival_rates_inst2}. The only arrival profile satisfying this property is the one with arrival rate $(F^{(2)})^\prime(t)=\mu_2\gamma^{(2)}\cdot\mathbb{I}\left(t\in\left[T^{(1),+}_a -\frac{\Lambda^{(2)}}{\mu_2\gamma^{(2)}},T^{(1),+}_a\right]\right)$. 
\end{proof}

\subsubsection{Proof of Theorem \ref{mainthm_inst2_reg2}}

We will prove the existence and uniqueness of the EAP separately for the three cases by exploiting different structural properties:~\textbf{1)}~$\mu_2\gamma^{(1)}>\mu_1\geq\mu_2\gamma^{(2)}$;~\textbf{2)}~$\mu_1\geq\mu_2\gamma^{(1)}>\mu_2\gamma^{(2)}$;~and~\textbf{3)}~$\mu_1\geq\mu_2\gamma^{(2)}>\mu_2\gamma^{(1)}$. The structure of the proof is similar to the one used in proving Theorem \ref{mainthm_inst2_reg1}. We define the quantity $T\overset{def.}{=}\inf\{t>0~\vert~Q_2(t)=0\}$. 

\noindent\textbf{Case 1}~$\mu_2\gamma^{(1)}>\mu_1\geq\mu_2\gamma^{(2)}$:1\hspace{0.05in}By Lemma \ref{lem_supp_are_intervals_inst2} and \ref{lem_inst2_uneqpref_reg2_case12_sign2}, we only consider absolutely continuous candidates $\mathbf{F}=\{F^{(1)},F^{(2)}\}$ such that, $\mathcal{S}(F^{(1)})=[T^{(1)}_a,T^{(1)}_f]$ for some $T^{(1)}_f>T^{(1),+}_a=\max\{T^{(1)}_a,0\}$, $(F^{(2)})^\prime(t)=\mu_2\gamma^{(2)}$ in $\left[T^{(1),+}_a-\frac{\Lambda^{(2)}}{\mu_2\gamma^{(2)}}, T^{(1),+}_a\right]$, and the arrival rate $(F^{(1)})^\prime(\cdot)$ satisfies the properties in Lemma \ref{lem_arrival_rates_inst2}. The set of EAPs (if non-empty) will be contained in this set. 

For every candidate in the above set and therefore for every EAP, there are two possibilities:~\textbf{1)}~Type I:~$T^{(1)}_a\leq 0$, and~\textbf{2)}~Type II:~$T^{(1)}_a>0$. The following lemma gives the necessary and sufficient condition for existence of EAP under both the types. The existence of unique EAP under case 1 follows from this lemma. 

\begin{lemma}\label{lem:inst2reg2case1}
    If $\mu_2\gamma^{(1)}>\mu_1\geq\mu_2\gamma^{(2)}$, then the following properties are true about the EAP, 
    \begin{enumerate}
        \item There exists an EAP under Type I if and only if $\Lambda^{(1)}\geq\frac{\mu_1}{(1-\gamma^{(1)})\mu_2}\Lambda^{(2)}$, and if it exists, then there is a unique EAP under Type I which has closed form same as the joint arrival profile mentioned under case 1a in the theorem statement.  
        \item There exists an EAP under Type II if and only if $\Lambda^{(1)}<\frac{\mu_1}{(1-\gamma^{(1)})\mu_2}\Lambda^{(2)}$, and if it exists, then there is a unique EAP under Type II which has closed form same as the joint arrival profile mentioned under case 1b in the theorem statement.
    \end{enumerate}
\end{lemma}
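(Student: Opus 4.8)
\textbf{Proof plan for Lemma \ref{lem:inst2reg2case1}.}

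The plan is to follow the same elimination-and-verification agenda established in the proof of Theorem \ref{mainthm_inst1}, now specialized to the regime $\mu_2\gamma_1>\mu_1\geq\mu_2\gamma_2$. By Lemma \ref{lem_supp_are_intervals_inst2} and Lemma \ref{lem_inst2_uneqpref_reg2_case12_sign2} the class 2 profile is already pinned down to a single interval of length $\frac{\Lambda_2}{\mu_2\gamma_2}$ ending at $T_{1,a}^+$ with rate $\mu_2\gamma_2$, so the only freedom left is in the four support boundaries $T_{1,a},T_{1,f},T_{2,a},T_{2,f}$ and the emptying time $T$. First I would derive a closed system of equations for these quantities under each of the two orderings (Type I with $T_{1,a}\leq 0$, Type II with $T_{1,a}>0$). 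The equations come from: (i) the class 1 arrival rates of Lemma \ref{lem_arrival_rates_inst2}, which give $F_1^\prime=\mu_2\gamma_1$ while $\tau_1(t)\in\overline{E_2}$ (i.e.\ before queue 2 empties at $T$) and $F_1^\prime=\mu_1\gamma_1$ afterwards; (ii) the fact (Lemma \ref{lem_inst2_uneqpref_queue1idle}) that queue 1 empties at $T_{1,f}$, together with the engagement of queue 1 on its support; (iii) the defining relation for $T$ as the first time $Q_2$ vanishes, obtained by tracking the net arrival-minus-service mass at queue 2 where class 1 feeds in at rate $\mu_1$ and class 2 at rate $\mu_2\gamma_2$; and (iv) the iso-cost condition $C_{\mathbf{F}}^{(1)}(T_{1,a})=C_{\mathbf{F}}^{(1)}(T_{1,f})$ pinning the left boundary. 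Solving these linear systems should reproduce exactly the boundary values in (\ref{eq_inst2_uneqpref_reg2_bdary_case1a}) and (\ref{eq_inst2_uneqpref_reg2_bdary_case1b}).

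Next I would extract the threshold. Every Type I candidate must satisfy $T_{1,a}\leq 0$; imposing this on the solution (\ref{eq_inst2_uneqpref_reg2_bdary_case1a}) yields the inequality $\Lambda_1\geq\frac{\mu_1}{(1-\gamma_1)\mu_2}\Lambda_2$, while imposing $T_{1,a}>0$ on (\ref{eq_inst2_uneqpref_reg2_bdary_case1b}) yields its strict negation. Since these two conditions partition the parameter space, for any given $(\Lambda_1,\Lambda_2)$ exactly one type survives, leaving a single candidate profile. I would then check that when the relevant inequality holds, the full ordering needed for that type ($T_{2,a}<T_{1,a}^+-\frac{\Lambda_2}{\mu_2\gamma_2}$ consistency, $T_{1,a}^+<\tau_1^{-1}(T)\leq T_{1,f}$, and $0<T$) is automatically satisfied, so the candidate is genuinely well-formed and feasible with $F_1(T_{1,f})=\Lambda_1$, $F_2(T_{2,f})=\Lambda_2$.

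Finally I would verify that the surviving candidate is actually an EAP, which is the substantive part. For this I would compute $(C_{\mathbf{F}}^{(i)})^\prime$ on each subinterval and show it is $\leq 0$ before the support, $=0$ on the support, and $\geq 0$ after, for both classes. The key is to establish the queue behavior: queue 1 stays engaged on $[T_{1,a},T_{1,f}]$ and empties at $T_{1,f}$; queue 2 stays engaged from $T_{2,a}$ until $T$ and is empty thereafter. Then on $[T_{1,a},\tau_1^{-1}(T)]$ a class 1 user passes through both congested queues, and the chain rule $(\tau_{\mathbf{F}}^{(1)})^\prime=\tau_2^\prime(\tau_1(t))\tau_1^\prime(t)=\frac{A_2^\prime(\tau_1(t))}{\mu_2}\cdot\frac{F_1^\prime(t)}{\mu_1}$ should collapse to exactly $\gamma_1$ given the rates above, yielding zero cost derivative; on $[\tau_1^{-1}(T),T_{1,f}]$ queue 2 is empty so the class 1 departure rate is governed by queue 1 alone and again gives $\gamma_1$. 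I expect the main obstacle to be the bookkeeping around the emptying time $T$ of queue 2: one must argue carefully that $A_2(t)-\mu_2 t$ is decreasing past $T$ (so $Q_2$ stays zero and does not re-form) using that the combined feed rate $A_2^\prime=\mu_1+(\mu_2\gamma_2-\mu_1)=\mu_2\gamma_2<\mu_2$ during the mixed phase, and that the condition $\mu_1\geq\mu_2\gamma_2$ is precisely what keeps class 1's feed rate $\mu_1$ into queue 2 from being dominated by the class 2 target rate. Handling the Type II case requires the extra check that queue 2 is already congested at time zero (because class 2 arrives entirely before $T_{1,a}>0$), so that the class 1 users see a nonempty queue 2 throughout their early arrivals; this is where the inequality $\mu_1<\frac{\mu_1}{(1-\gamma_1)\mu_2}\Lambda_2/\Lambda_1$ translating into $T_{1,a}>0$ does its work.
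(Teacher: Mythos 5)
Your overall strategy matches the paper's: pin down $F_2$ via Lemma \ref{lem_inst2_uneqpref_reg2_case12_sign2}, solve a linear system for $T_{1,a},T_{1,f},T$ under each ordering, read the threshold $\Lambda_1\gtrless\frac{\mu_1}{(1-\gamma_1)\mu_2}\Lambda_2$ off the sign of $T_{1,a}$, and verify the surviving candidate by the sign pattern of $(C_{\mathbf{F}}^{(i)})^\prime$. Two points need repair, one of them a genuine error.

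First, your verification step imports the wrong regime's queue-2 bookkeeping. You describe a ``mixed phase'' in which class 1 feeds queue 2 at rate $\mu_1$ while class 2 simultaneously arrives at rate $\mu_2\gamma_2-\mu_1$, giving a combined input $\mu_2\gamma_2$. That is the picture for Regime I ($\mu_1<\mu_2\gamma_2$, Theorem \ref{mainthm_inst2_reg1}); here $\mu_2\gamma_2-\mu_1\leq 0$, so it cannot be an arrival rate, and more importantly Lemma \ref{lem_threshold_behav_inst2} together with Lemma \ref{lem_inst2_uneqpref_reg2_case12_sign2} (applicable since $\mu_2\gamma_1>\mu_1\geq\mu_2\gamma_2$ forces $\gamma_1>\gamma_2$) guarantees the two classes never feed queue 2 at the same time: class 2 finishes arriving by $T_{1,a}^+$ and class 1 only starts exiting queue 1 at $T_{1,a}^+$. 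The correct argument is that $A_2^\prime$ equals $\mu_2\gamma_2<\mu_2$ before $T_{1,a}^+$ and $\mu_1<\mu_2\gamma_1\leq\mu_2$ afterwards, so $A_2(t)-\mu_2 t$ is decreasing on $[0,T]$ and, combined with $A_2(T)=\Lambda_2+\mu_1(T-T_{1,a}^+)=\mu_2 T$, queue 2 is engaged on $[0,T)$, empties at $T$, and never re-forms. Your conclusion survives, but not for the reason you give.

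Second, for the uniqueness direction you assume the two-phase rate structure ($F_1^\prime=\mu_2\gamma_1$ then $\mu_1\gamma_1$ with the switch at $\tau_1^{-1}(T)\leq T_{1,f}$) and only check the ordering a posteriori on the closed-form solution. To conclude that \emph{every} Type I or II EAP has these boundaries you must first show a priori that $T<T_{1,f}$, i.e.\ queue 2 empties strictly before queue 1 does. This is where the hypothesis $\mu_2\gamma_1>\mu_1$ is used: queue 2 is engaged at $\tau_1(T_{1,a})=T_{1,a}^+$ (in Type I the whole class 2 mass is already waiting there; in Type II the network cannot be empty at $T_{1,a}$), so class 1 initially arrives at rate $\mu_2\gamma_1>\mu_1$ and a queue builds at queue 1; if $T\geq T_{1,f}$ this rate would persist on all of $[T_{1,a},T_{1,f}]$ and queue 1 could not be empty at $T_{1,f}$, contradicting Lemma \ref{lem_inst2_uneqpref_queue1idle}. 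Finally, your equation (iv), the iso-cost condition $C_{\mathbf{F}}^{(1)}(T_{1,a})=C_{\mathbf{F}}^{(1)}(T_{1,f})$, is redundant: it is already implied by the arrival rates of Lemma \ref{lem_arrival_rates_inst2}, and the paper's system closes with only the mass-balance and queue-emptying equations; including it is harmless but you should not treat it as an independent constraint on a three-unknown system.
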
 
\begin{proof}
\textbf{Getting the arrival rates:}\hspace{0.05in}In both the types, class 1 users start arriving at queue 2 from time $T^{(1),+}_a=\max\{T^{(1)}_a,0\}$. Note that, in both the types queue 2 must have a positive waiting time at $\tau_1(T^{(1)}_a)=T^{(1),+}_a$:~\textbf{1)}~In Type I, the entire class 2 population waits at queue 2 at time zero,~\textbf{2)}~In Type II, queue 2 must have a positive waiting time at $T^{(1)}_a$. Otherwise, the whole network must be empty at $T^{(1)}_a$, making the class 1 users arriving after $T^{(1)}_a$ strictly better off arriving at $T^{(1)}_a$. 

Therefore, for both the types queue 2 has a positive waiting time at $\tau_1(T^{(1)}_a)=T^{(1),+}_a$. As a result, by Lemma \ref{lem_arrival_rates_inst2}, class 1 users start arriving at rate $\mu_2\gamma^{(1)}>\mu_1$ from $T^{(1)}_a$, causing a queue to form in queue 1. Also by Lemma \ref{lem_inst2_uneqpref_queue1idle}, both the queues must be empty at $T^{(1)}_f$. Therefore we can conclude that queue 2 will empty strictly before queue 1 empties and hence $T<T^{(1)}_f$. Otherwise, if $T\geq T^{(1)}_f$, queue 2 will have a positive waiting time in $(T^{(1),+}_a,T^{(1)}_f)$ and queue 1 will be empty at $T^{(1)}_f$ (by Lemma \ref{lem_inst2_uneqpref_queue1idle}), making $\tau_1(T^{(1)}_f)=T^{(1)}_f$. As a result, for every $t\in[T^{(1)}_a,T^{(1)}_f]$, $\tau_1(t)\leq \tau_1(T^{(1)}_f)=T^{(1)}_f\leq T$. Hence class 1 users will arrive at queue 1 at rate $\mu_2\gamma^{(1)}>\mu_1$ in $[T^{(1)}_a,T^{(1)}_f]$ causing queue 1 to have a positive waiting time at $T^{(1)}_f$, contradicting Lemma \ref{lem_inst2_uneqpref_queue1idle}. 

Therefore, for both the types, queue 1 stays engaged in $[0,T^{(1)}_f]$ and queue 2 stays engaged in $[0,T]$ with $T<T^{(1)}_f$. Applying Lemma \ref{lem_arrival_rates_inst2}, arrival rates of the two classes will be: 
\begin{align}\label{eq_rates_inst2_uneqpref_reg2_case1}
    (F^{(1)})^\prime(t)&=\begin{cases}
        \mu_2\gamma^{(1)}~~&\text{if}~t\in[T^{(1)}_a,\tau_1^{-1}(T)], \\
        \mu_1\gamma^{(1)}~~&\text{if}~t\in[\tau_1^{-1}(T),T^{(1)}_f],
    \end{cases}~\text{and}~(F^{(2)})^\prime(t)=\mu_2\gamma^{(2)}~\text{if}~t\in\left[T^{(1),+}_a -\frac{\Lambda^{(2)}}{\mu_2\gamma^{(2)}},T^{(1),+}_a\right],
\end{align}

where $\tau_1^{-1}(T)=\inf\{t~\vert~\tau_1(t)\geq T\}$. Since queue 1 stays engaged in $(T^{(1)}_a,T^{(1)}_f)$, using (\ref{eq:derv_of_tau}), $\tau_1(t)$ is strictly increasing in $(T^{(1)}_a,T^{(1)}_f)$, making $\tau_1^{-1}(T)$ well defined. Therefore, $\tau_{1}(\tau_1^{-1}(T))=T$. Now, in $(T^{(1)}_a,T^{(1)}_f)$, using (\ref{eq:derv_of_tau}), $\tau_1(t)=\frac{F^{(1)}(t)}{\mu_1}+T^{(1),+}_a$. Using this, and (\ref{eq_rates_inst2_uneqpref_reg2_case1}), $T=\tau_1 (\tau_1^{-1}(T))=\frac{F^{(1)}(\tau_1^{-1}(T))}{\mu_1}+T^{(1),+}_a=\frac{\mu_2\gamma^{(1)}}{\mu_1}\cdot(\tau_1^{-1}(T)-T^{(1)}_a)+T^{(1),+}_a$. After some manipulation, this gives us, 
\begin{align}\label{eq:inst2_uneqpref_reg2_case1_inv_of_tau2_at_T}
    \tau_1^{-1}(T)&=T^{(1)}_a+\frac{\mu_1}{\mu_2\gamma^{(1)}}\cdot(T-T^{(1),+}_a)
\end{align}

\textbf{Identifying the support boundaries:}\hspace{0.05in}We can obtain the following system of equations to be satisfied by $T^{(1)}_a,T^{(1)}_f,T$:
\begin{enumerate}[leftmargin=*]
    \item Class 1 users arrive in $[T^{(1)}_a,T^{(1)}_f]$ at rates given by (\ref{eq_rates_inst2_uneqpref_reg2_case1}), implying $\mu_1\gamma^{(1)}(T^{(1)}_f-\tau_1^{-1}(T))+\mu_2\gamma^{(1)}(\tau_1^{-1}(T)-T^{(1)}_a)=\Lambda^{(1)}$.
    
    \item By Lemma \ref{lem_inst2_uneqpref_queue1idle}, queue 1 starts serving from $T^{(1),+}_a$, empties at $T^{(1)}_f$ and has positive waiting time in $(T^{(1),+}_a,T^{(1)}_f)$. This gives us, $T^{(1)}_f=T^{(1),+}_a +\frac{\Lambda^{(1)}}{\mu_1}$.

    \item  By assumption queue 2 empties at $T$ and starts serving from time zero. This gives us, $\mu_2 T=\Lambda^{(2)} + \mu_2\gamma^{(1)} (\tau_1^{-1}(T)-T^{(1)}_a)$.
\end{enumerate}

Applying $T^{(1),+}_a = 0$ for Type I and $=T^{(1)}_a$ for Type II, and plugging in $\tau_1^{-1}(T)$ from (\ref{eq:inst2_uneqpref_reg2_case1_inv_of_tau2_at_T}), we obtain the linear systems for Types I and II whose solutions are respectively in (\ref{eq_inst2_uneqpref_reg2_bdary_case1a}) and (\ref{eq_inst2_uneqpref_reg2_bdary_case1b}). Therefore, every EAP under Type I and II must, respectively, have support boundaries (\ref{eq_inst2_uneqpref_reg2_bdary_case1a}) and (\ref{eq_inst2_uneqpref_reg2_bdary_case1b}), and arrival rates given by (\ref{eq_rates_inst2_uneqpref_reg2_case1}). 

\textbf{Obtaining the necessary conditions:}\hspace{0.05in}The support boundaries in (\ref{eq_inst2_uneqpref_reg2_bdary_case1a}) must satisfy $T^{(1)}_a\leq 0$, for existence of an EAP under Type I. Imposing $T^{(1)}_a\leq 0$ on (\ref{eq_inst2_uneqpref_reg2_bdary_case1a}) gives us $\Lambda^{(1)}\geq \frac{\mu_1}{(1-\gamma^{(1)})\mu_2}\Lambda^{(2)}$, and this is a necessary condition for existence of an EAP under Type I. It is easy to verify that, once the necessary condition is satisfied, the support boundaries in (\ref{eq_inst2_uneqpref_reg2_bdary_case1a}) satisfy $T^{(1)}_a\geq 0, T^{(1)}_f>T>T^{(1)}_a$, and upon plugging them into (\ref{eq_rates_inst2_uneqpref_reg2_case1}), we get the only Type I candidate which qualifies to be an EAP. The obtained Type I candidate has a closed form same as the joint arrival profile mentioned under case 1a of Theorem \ref{mainthm_inst2_reg2}. 

Similarly plugging in $T^{(1)}_a>0$ in (\ref{eq_inst2_uneqpref_reg2_bdary_case1b}), we get $\Lambda^{(1)}<\frac{\mu_1}{(1-\gamma^{(1)})\mu_2}\Lambda^{(2)}$, which is a necessary condition for existence of an EAP under Type II. It is easy to verify, once the necessary condition holds, the support boundaries in (\ref{eq_inst2_uneqpref_reg2_bdary_case1b}) satisfies $T^{(1)}_a>0, T^{(1)}_f>T>0$, and upon plugging them into (\ref{eq_rates_inst2_uneqpref_reg2_case1}), we get the only Type II candidate which qualifies to be an EAP. The obtained Type II candidate has a closed form same as the joint arrival profile mentioned under case 1b of Theorem \ref{mainthm_inst2_reg2}.

\textbf{Proving sufficiency of the obtained necessary conditions:}\hspace{0.05in}Now by the next sequence of argument, we prove that, the only remaining candidate under both the types satisfy: 
\begin{align*}
    (C^{(1)}_{\mathbf{F}})^\prime(t)&\begin{cases}
        \leq 0~~\text{if}~t\in(-\infty,T^{(1)}_a) \\
        =0~~\text{if}~t\in[T^{(1)}_a,T^{(1)}_f] \\
        \geq 0~~\text{if}~t\in(T^{(1)}_f,\infty),
    \end{cases}~\text{and}~(C^{(2)}_{\mathbf{F}})^\prime(t)\begin{cases}
        \leq 0~~\text{if}~t\in(-\infty,T^{(1),+}_a-\frac{\Lambda^{(2)}}{\mu_2\gamma^{(2)}}) \\
        =0~~\text{if}~t\in\left[T^{(1),+}_a-\frac{\Lambda^{(2)}}{\mu_2\gamma^{(2)}},T^{(1),+}_a\right]\\
        \geq 0~~\text{if}~t\in(T^{(1),+}_a,\infty).
    \end{cases}
\end{align*} 
As a result, the following argument will imply, if the necessary condition of the corresponding type is true, the only remaining candidate under it is an EAP. Hence, the obtained necessary conditions are sufficient for existence of a unique EAP under those types, and the statement of Lemma \ref{lem:inst2reg2case1} follows.  
\begin{itemize}[leftmargin=*]
    \item \textbf{State of the queues:}~Obtained candidates of both the types satisfy $F^{(1)}(t)>\mu_1\cdot \max\{t-T^{(1),+}_a,0\}$ for $t\in(T^{(1)}_a,T^{(1)}_f)$ and $F^{(1)}(T^{(1)}_f)=\mu_1\cdot (T^{(1)}_f-T^{(1),+}_a)$. As a result, queue 1 stays engaged in $[T^{(1)}_a,T^{(1)}_f]$ and empties at $T^{(1)}_f$. Since the obtained candidates of both the types satisfy $T^{(1),+}_a<T<T^{(1)}_f$ and class 1 users arrive from queue 1 to 2 at rate $\mu_1$ in $[T^{(1),+}_a,T]$, we have $A_2(t)=F^{(2)}(t)+\mu_1\cdot(t-T^{(1),+}_a)^+$ for $t\in [0,T]$. For both the types, $A_2(T)=\Lambda^{(2)}+\mu_1\cdot (T-T^{(1),+}_a)=\mu_2 T$ and by (\ref{eq_rates_inst2_uneqpref_reg2_case1}), $A_2^\prime(t)<\mu_2$ in $[0,T)$, making $A_2(t)-\mu_2 t$ decreasing in $[0,T]$. As a result, $A_2(t)>\mu_2\cdot t$ in $[0,T)$ causing queue 2 to be engaged in $[T^{(1),+}_a-\frac{\Lambda^{(2)}}{\mu_2\gamma^{(2)}},T]$ and empty at $T$. Once queue 2 empties at $T$, it never engages again since class 1 users can arrive at a maximum rate of $\mu_1<\mu_2$. 

    \item \textbf{For class 1 users:}~For every $t<T^{(1)}_a$, $\tau_1(t)=t^+$ and $\tau^{(1)}_{\mathbf{F}}(t)=\tau_2(\tau_1(t))=\tau_2(t^+)$. Now for the Type I candidate, $\tau^{(1)}_{\mathbf{F}}(t)=\tau_2(0)$ in $(-\infty,T^{(1)}_a)$, causing $(C^{(1)}_{\mathbf{F}})^\prime(t)=-\gamma^{(1)}<0$ in $(-\infty,T^{(1)}_a)$. For the Type II candidate, following an argument similar to Type I, $(C^{(1)}_{\mathbf{F}})^\prime(t)=-\gamma^{(1)}<0$ in $(-\infty,0)$ and in $[0,T^{(1)}_a)$, $\tau^{(1)}_{\mathbf{F}}(t)=\tau_2(t)$. As a result, for Type II, using (\ref{eq:derv_of_tau}) and (\ref{eq_rates_inst2_uneqpref_reg2_case1}), $(\tau^{(1)}_{\mathbf{F}})^\prime(t)=\tau_2^\prime(t)=\frac{(F^{(2)})^\prime(t)}{\mu_2}=\gamma^{(2)}$ in $[0,T^{(1)}_a)$, causing $(C^{(1)}_{\mathbf{F}})^\prime(t)=\gamma^{(2)}-\gamma^{(1)}\leq 0$ in $[0,T^{(1)}_a)$. For $t\in[T^{(1)}_a,\tau_1^{-1}(T)]$, since queue 1 is engaged and queue 2 is engaged at $\tau_1(t)$, using (\ref{eq:derv_of_tau}), $(\tau^{(1)}_{\mathbf{F}})^\prime(t)=(\tau_2\circ\tau_1)^\prime(t)=\tau_2^\prime(\tau_1(t))\tau_1^\prime(t)=\frac{\mu_1}{\mu_2}\cdot\frac{\mu_2\gamma^{(1)}}{\mu_1}=\gamma^{(1)}$ causing $(C^{(1)}_{\mathbf{F}})^\prime(t)=0$ in $[T^{(1)}_a,\tau_1^{-1}(T)]$. For $t\in[\tau_1^{-1}(T),T^{(1)}_f]$, since queue 1 is engaged and queue 2 is empty at $\tau_1(t)\geq T$, $(\tau^{(1)}_{\mathbf{F}})^\prime(t)=\tau_1^\prime(t)=\frac{(F^{(1)})^\prime(t)}{\mu_1}=\gamma^{(1)}$, causing $(C^{(1)}_{\mathbf{F}})^\prime(t)=0$ in $[\tau_1^{-1}(T),T^{(1)}_f]$. Since queue 1 stays empty after $T^{(1)}_f$, $(C^{(1)}_{\mathbf{F}})^\prime(t)=1-\gamma^{(1)}>0$ in $[T^{(1)}_f,\infty)$. 

    \item \textbf{For class 2 users:}~Candidates obtained for both the types satisfy $T^{(1)}_a < \frac{\Lambda^{(2)}}{\mu_2\gamma^{(2)}}$. As a result, $\tau^{(2)}_{\mathbf{F}}(t)=\tau_2(t)=0$ for $t<T^{(1),+}_a-\frac{\Lambda^{(2)}}{\mu_2\gamma^{(2)}}$ making $(C_{\mathbf{F}}^{(2)})^\prime(t)=-\gamma^{(2)}<0$ in $\left(-\infty,T^{(1),+}_a-\frac{\Lambda^{(2)}}{\mu_2\gamma^{(2)}}\right)$. For $t\in\left[T^{(1),+}_a-\frac{\Lambda^{(2)}}{\mu_2\gamma^{(2)}},T^{(1),+}_a\right]$, queue 2 stays engaged and using (\ref{eq:derv_of_tau}), $(\tau_{\mathbf{F}}^{(2)})^\prime(t)=\tau_2^\prime(t)=\frac{(F^{(2)})^\prime(t)}{\mu_2}=\gamma^{(2)}$ causing $(C^{(2)}_{\mathbf{F}})^\prime(t)=0$ in $\left[T^{(1),+}_a-\frac{\Lambda^{(2)}}{\mu_2\gamma^{(2)}},T^{(1),+}_a\right]$. For $t\in(T^{(1),+}_a,T]$, queue 1 remains engaged and as a result, using (\ref{eq:derv_of_tau}), $(\tau_{\mathbf{F}}^{(2)})^\prime(t)=\frac{\mu_1}{\mu_2}$, causing $(C_{\mathbf{F}}^{(2)})^\prime(t)=\frac{\mu_1}{\mu_2}-\gamma^{(2)}\geq 0$ in $(T^{(1),+}_a,T]$. Since queue 2 remains empty after $T$, $(C_{\mathbf{F}}^{(2)})^\prime(t)=1-\gamma^{(2)}>0$ in $(T,\infty)$.
\end{itemize}
\vspace{-0.3in}
\end{proof}

\noindent\textbf{Case 2}~$\mu_1\geq\mu_2\gamma^{(1)}>\mu_2\gamma^{(2)}$:\hspace{0.05in}Like in case 1, by Lemma \ref{lem_supp_are_intervals_inst2} and \ref{lem_inst2_uneqpref_reg2_case12_sign2}, we only consider absolutely continuous candidates $\mathbf{F}=\{F^{(1)},F^{(2)}\}$ such that, $\mathcal{S}(F^{(1)})=[T^{(1)}_a,T^{(1)}_f]$ for some $T^{(1)}_f>T^{(1),+}_a=\max\{T^{(1)}_a,0\}$, $(F^{(2)})^\prime(t)=\mu_2\gamma^{(2)}$ in $\left[T^{(1),+}_a-\frac{\Lambda^{(2)}}{\mu_2\gamma^{(2)}}, T^{(1),+}_a\right]$, and the arrival rate $(F^{(1)})^\prime(\cdot)$ satisfies the properties in Lemma \ref{lem_arrival_rates_inst2}. The set of EAPs (if non-empty) will be contained in this set.

By Lemma \ref{lem_inst2_uneqpref_queue1idle}, queue 1 must be empty at $T^{(1)}_f$. Now if queue 2 has a positive length at $T^{(1)}_f$, the last arriving class 1 user will be better-off arriving at the time queue 2 empties. As a result, every EAP must have $T\leq T^{(1)}_f$.

If $T^{(1)}_a>0$, queue 2 must have a positive length at $T^{(1)}_a$ and by Lemma \ref{lem_arrival_rates_inst2}, class 1 users will arrive at rate $\mu_2\gamma^{(1)}\leq \mu_1$. So, for candidates with $T^{(1)}_a>0$, queue 1 will have no waiting queue and as a result, queue 2 will stay engaged in $[0,T^{(1)}_f]$, and empty at $T=T^{(1)}_f$. Now based on Lemma \ref{lem_inst2_uneqpref_reg2_case12_sign2} and the observation we just made, there are three possibilities: \textbf{1)}~Type I $T^{(1)}_a\leq 0$ and $T<T^{(1)}_f$; \textbf{2)}~Type II $T^{(1)}_a\leq 0$ and $T=T^{(1)}_f$; and \textbf{3)}~Type III $T^{(1)}_a>0$ and $T=T^{(1)}_f$. The following lemma states the necessary and sufficient conditions for existence of EAPs under the three types. Existence of unique EAP under case 2 of Theorem \ref{mainthm_inst2_reg2} follows from this lemma. 

\begin{lemma}\label{lem:inst2reg2case2}
    If $\mu_1\geq\mu_2\gamma^{(1)}>\mu_2\gamma^{(2)}$, the following statements are true about the EAP, 
    \begin{enumerate}
        \item There exists an EAP under Type I if and only if $\Lambda^{(2)}< \left(\frac{\mu_2}{\mu_1}-1\right) \Lambda^{(1)}$, and if it exists, it will be unique with a closed form same as the joint arrival profile mentioned under case 2a of Theorem \ref{mainthm_inst2_reg2}.   
        
        \item There exists an EAP under Type II if and only if $\left(\frac{\mu_2}{\mu_1}-1\right)\Lambda^{(1)}\leq\Lambda^{(2)}\leq \left(\frac{1}{\gamma^{(1)}}-1\right)\Lambda^{(1)}$, and if it exists, it must be unique with a closed form same as the joint arrival profile mentioned under case 2b of Theorem \ref{mainthm_inst2_reg2}.
        
        \item There exists an EAP under Type III if and only if $\left(\frac{1}{\gamma^{(1)}}-1\right)\Lambda^{(1)}<\Lambda^{(2)}$, and if it exists, it must be unique with a closed form same as the joint arrival profile mentioned under case 2c in Theorem \ref{mainthm_inst2_reg2}.
    \end{enumerate}
\end{lemma}

\begin{proof}
\noindent\textbf{Getting the arrival rates:}\hspace{0.05in}For all the types, queue 2 stays engaged in $[T^{(1),+}_a,T]$ and queue 1 empties at $T^{(1)}_f$ (by Lemma \ref{lem_inst2_uneqpref_queue1idle}). Therefore, by Lemma \ref{lem_arrival_rates_inst2}, we restrict to Types II and III candidates with arrival rates: 
\begin{align}\label{eq_rates_inst2_uneqpref_reg2_case2}
    (F^{(1)})^\prime(t)&=\mu_2\gamma^{(1)}~\text{for}~t\in[T^{(1)}_a,T^{(1)}_f]~\text{and}~(F^{(2)})^\prime(t)=\mu_2\gamma^{(2)}\cdot\mathbb{I}\left(t\in\left[T^{(1),+}_a-\frac{\Lambda^{(2)}}{\mu_2\gamma^{(2)}},T^{(1),+}_a\right]\right).
\end{align}
Note that, under Type I, queue 1 must remain engaged in $[T^{(1)}_a,T^{(1)}_f]$, otherwise, if queue 1 empties at some $\tilde{T}\in[0,T^{(1)}_f]$, by Lemma \ref{lem_arrival_rates_inst2} class 1 users arrive at a rate $\leq\mu_1$ after time $\tilde{T}$ and the networks stays empty in $[\max\{T,\tilde{T}\},T^{(1)}_f]$, which is not possible in an EAP. As a result, throughout $[0,T^{(1)}_f]$ class 1 users arrive from queue 1 to 2 at rate $\mu_1$. Therefore, once queue 2 empties at $T<T^{(1)}_f$, it stays empty in  $(T,\infty)$ and has a positive waiting time in $[0,T)$. Combining these observations with Lemma \ref{lem_arrival_rates_inst2}, we will only consider Type I candidates where arrival rates of the two classes are given by (\ref{eq_rates_inst2_uneqpref_reg2_case1}) upon putting $T^{(1),+}_a=0$.

\noindent\textbf{Identifying the support boundaries:}\hspace{0.05in}The support boundaries and $T$ of any Type I EAP satisfies a linear system same as the one we obtained for Type I in the proof of Lemma \ref{lem:inst2reg2case1}, and therefore has solution (\ref{eq_inst2_uneqpref_reg2_bdary_case1a}). 

The support boundaries of any Type II EAP satisfies the following linear system: \begin{itemize}[leftmargin=*]
    \item Queue 2 must empty at $T=T^{(1)}_f$, start at time zero and serve all the users in $[0,T^{(1)}_f]$, giving us $T^{(1)}_f=\frac{\Lambda^{(1)}+\Lambda^{(2)}}{\mu_2}$.
    \item By Lemma \ref{lem_arrival_rates_inst2}, class 1 users arrive in $[T^{(1)}_a,T^{(1)}_f]$ at rate $\mu_2\gamma^{(1)}$, giving us, $T^{(1)}_a=T^{(1)}_f-\frac{\Lambda^{(1)}}{\mu_2\gamma^{(1)}}$
\end{itemize}
Solution to the above system is in (\ref{eq_inst2_uneqpref_reg2_bdary_case2b}). 

The support boundaries of  any Type III EAP satisfies the following linear system: \begin{itemize}[leftmargin=*]
    \item We argued that $T=T^{(1)}_f$ if $T^{(1)}_a>0$. 
    \item Queue 2 must serve the whole population in $[0,T^{(1)}_f]$ and empty at $T^{(1)}_f$, giving us, $T^{(1)}_f=\frac{\Lambda^{(1)}+\Lambda^{(2)}}{\mu_2}$.
    \item By Lemma \ref{lem_arrival_rates_inst2}, class 1 users arrive at rate $\mu_2\gamma^{(1)}$ between $[T^{(1)}_a,T^{(1)}_f]$. This implies $T^{(1)}_a=T^{(1)}_f-\frac{\Lambda^{(1)}}{\mu_2\gamma^{(1)}}$.
\end{itemize}
Solution to the above system is in (\ref{eq_inst2_uneqpref_reg2_bdary_case2c}). 

\noindent\textbf{Obtaining the necessary conditions:}\hspace{0.05in}Support boundaries and $T$ in (\ref{eq_inst2_uneqpref_reg2_bdary_case1a}) must satisfy $T^{(1)}_f>T$ and $T^{(1)}_a\leq 0$, to represent a Type I EAP. This gives us $\Lambda^{(2)}<\left(\frac{\mu_2}{\mu_1}-1\right)\Lambda^{(1)}$, which is a necessary condition for existence of a Type I EAP. Once the necessary condition is satisfied, (\ref{eq_inst2_uneqpref_reg2_bdary_case1a}) satisfies $T^{(1)}_f>T$ and $T^{(1)}_a\leq 0$. Upon plugging in support boundaries from (\ref{eq_inst2_uneqpref_reg2_bdary_case1a}) into (\ref{eq_rates_inst2_uneqpref_reg2_case1}), we get the only Type I candidate which qualifies to be an EAP, and it has a closed form same as the joint arrival profile mentioned under case 2a of Theorem \ref{mainthm_inst2_reg2}.

The support boundaries in (\ref{eq_inst2_uneqpref_reg2_bdary_case2b}) must satisfy $T^{(1)}_a\leq 0$ and $\mu_1 T^{(1)}_f\geq\Lambda^{(1)}$ (for queue 1 to have zero waiting time at $T^{(1)}_f$), and this implies $\left(\frac{1}{\gamma^{(1)}}-1\right)\Lambda^{(1)}\geq\Lambda^{(2)}\geq\left(\frac{\mu_2}{\mu_1}-1\right)\Lambda^{(1)}$, which is a necessary condition for existence of an EAP under Type II. Once the necessary condition holds, (\ref{eq_inst2_uneqpref_reg2_bdary_case2b}) satisfies $T^{(1)}_a\leq 0$ and $\mu_1 T^{(1)}_f\geq\Lambda^{(1)}$. Upon plugging in (\ref{eq_inst2_uneqpref_reg2_bdary_case2b}) into (\ref{eq_rates_inst2_uneqpref_reg2_case2}), we get the only Type II candidate which qualifies to be an EAP, and it has a closed form same as the joint arrival profile mentioned under case 2b of Theorem \ref{mainthm_inst2_reg2} 

Similarly, the support boundaries in (\ref{eq_inst2_uneqpref_reg2_bdary_case2c}) must satisfy $T^{(1)}_a>0$. This gives us $\Lambda^{(2)}>\left(\frac{1}{\gamma^{(1)}}-1\right)\Lambda^{(1)}$, which is a necessary condition for existence of an EAP under Type III. Upon plugging in (\ref{eq_inst2_uneqpref_reg2_bdary_case2c}) into (\ref{eq_rates_inst2_uneqpref_reg2_case2}), we get the only Type III candidate which qualifies to be an EAP, and it has a closed form same as the joint arrival profile under case 3 of Theorem \ref{mainthm_inst2_reg2}.

\noindent\textbf{Proving sufficiency of the obtained necessary conditions:}\hspace{0.05in}Now by the next sequence of argument, we prove that, for every type,  if the corresponding necessary condition holds, the obtained candidate satisfies: 
\begin{align*}
    (C^{(1)}_{\mathbf{F}})^\prime(t)&\begin{cases}
        \leq 0~~\text{if}~t\in(-\infty,T^{(1)}_a) \\
        =0~~\text{if}~t\in[T^{(1)}_a,T^{(1)}_f] \\
        \geq 0~~\text{if}~t\in(T^{(1)}_f,\infty),
    \end{cases}~\text{and}~(C^{(2)}_{\mathbf{F}})^\prime(t)\begin{cases}
        \leq 0~~\text{if}~t\in(-\infty,T^{(1),+}_a-\frac{\Lambda^{(2)}}{\mu_2\gamma^{(2)}}) \\
        =0~~\text{if}~t\in\left[T^{(1),+}_a-\frac{\Lambda^{(2)}}{\mu_2\gamma^{(2)}},T^{(1),+}_a\right]\\
        \geq 0~~\text{if}~t\in(T^{(1),+}_a,\infty).
    \end{cases}
\end{align*} 
As a result, under every type, the obtained candidate is an EAP and the statement of Lemma \ref{lem:inst2reg2case2} follows. When $\Lambda^{(2)}<\left(\frac{\mu_2}{\mu_1}-1\right)\Lambda^{(1)}$, proving the above property for the only remaining Type I candidate follows the same argument as was used in the proof of Lemma \ref{lem:inst2reg2case1} for proving that, the unique remaining Type I candidate is an EAP. So, here we only prove the above property for the only remaining candidates under Type II and III.
\begin{enumerate}[leftmargin=*]
    \item If $\left(\frac{\mu_2}{\mu_1}-1\right)\Lambda^{(1)}\leq\Lambda^{(2)}\leq \left(\frac{1}{\gamma^{(1)}}-1\right)\Lambda^{(1)}$ the unique Type II candidate satisfies $\mu_1 T^{(1)}_f\geq\Lambda^{(1)}=A_1(T^{(1)}_f)$. As a result,  queue 1 empties in $[0,T^{(1)}_f]$ and let $\tilde{T}=\inf\{t\geq 0~\vert~Q_1(t)=0\}$, \textit{i.e.} the time at which queue 1 empties. Since class 1 users arrive at a constant rate of $\mu_2\gamma^{(1)}$ in $[T^{(1)}_a,T^{(1)}_f]$, queue 1 stays engaged in $[T^{(1)}_a,\tilde{T}]$ and empty in $[\tilde{T},T^{(1)}_f]$. In the case $\mu_1=\mu_2\gamma^{(1)}$, since the only way the candidate can be of Type II is by having $\Lambda^{(1)}=\left(\frac{1}{\gamma^{(1)}}-1\right)\Lambda^{(2)}$, from (\ref{eq_inst2_uneqpref_reg2_bdary_case2b}), we have $T^{(1)}_a=0$ and as a result, no queue develops in queue 1 implying $\tilde{T}=0$. In all cases, class 1 users arrive from queue 1 to 2 at rate $\mu_1<\mu_2$ in $[0,\tilde{T}]$ and at rate $\mu_2\gamma^{(1)}<\mu_2$ in $[\tilde{T},T^{(1)}_f]$. As a result, $A_2(t)-\mu_2\cdot t$ is strictly decreasing in $[\tilde{T},T^{(1)}_f]$ with $A_2(T^{(1)}_f)=\mu_2\cdot T^{(1)}_f$, implying $A_2(t)>\mu_2 t$ in $[\tilde{T},T^{(1)}_f)$. Note that $A_2(t)=\Lambda^{(2)}+\mu_1 t$ in $[0,\tilde{T}]$ with $A_2(0)=\Lambda^{(2)}>0$, and $A_2(\tilde{T})>\mu_2 \tilde{T}$ implies $A_2(t)\geq\mu_2 t$ in $[0,\tilde{T}]$. Combining the preceding two statements, we get $A_2(t)>\mu_2\cdot t$ in $[0,T^{(1)}_f)$ which implies queue 2 stays engaged with a positive queue length in $[0,T^{(1)}_f)$ and empties at $T^{(1)}_f$. We now argue for the two classes separately:
    \begin{enumerate}
        \item[\textbf{a.}] \textbf{For class 1 users:}~Users arriving before $T^{(1)}_a$, gets served at time $\tau_2(0)$, causing $(C_{\mathbf{F}}^{(1)})^\prime(t)=-\gamma^{(1)}$ in $(-\infty,T^{(1)}_a)$. In $[T^{(1)}_a,T^{(1)}_f]$, queue 2 stays engaged at $\tau_1(t)\in[0,T^{(1)}_f]$.  Now in $[T^{(1)}_a,\tilde{T}]$, since queue 1 stays engaged, using (\ref{eq:derv_of_tau}), $(\tau_{\mathbf{F}}^{(1)})^\prime(t)=(\tau_2\circ\tau_1)^\prime(t)=\tau_2^\prime(\tau_1(t))\cdot\tau_1^\prime(t)=\frac{\mu_1}{\mu_2}\cdot\frac{\mu_2\gamma^{(1)}}{\mu_1}=\gamma^{(1)}$. In $[\tilde{T},T^{(1)}_f]$, since queue 1 empties, using (\ref{eq:derv_of_tau}), $(\tau_{\mathbf{F}}^{(1)})^\prime(t)=\tau_2^\prime(t)=\gamma^{(1)}$ in $[\tilde{T},T^{(1)}_f]$. As a result, $(\tau_{\mathbf{F}}^{(1)})^\prime(t)=\gamma^{(1)}$ in $[T^{(1)}_a,T^{(1)}_f]$, causing $(C_{\mathbf{F}}^{(1)})^\prime(t)=0$ there. Since both the queues become empty after $T^{(1)}_f$, we have $(C_{\mathbf{F}}^{(1)})^\prime(t)=1-\gamma^{(1)}>0$ in $(T^{(1)}_f,\infty)$.
        
        \item[\textbf{b.}] \textbf{For class 2 users:}~Since $\tau_{\mathbf{F}}^{(2)}(t)=0$ in $(-\infty,-\frac{\Lambda^{(2)}}{\mu_2\gamma^{(2)}}$, $(C^{(2)}_{\mathbf{F}})^\prime(t)=-\gamma^{(2)}<0$ there. Using (\ref{eq:derv_of_tau}), since queue 2 stays engaged in $\left[-\frac{\Lambda^{(2)}}{\mu_2\gamma^{(2)}},0\right]$, $(\tau^{(2)}_{\mathbf{F}})^\prime(t)=\tau_2^\prime(t)=\frac{(F^{(2)})^\prime(t)}{\mu_2}=\gamma^{(2)}$, causing $(C^{(2)}_{\mathbf{F}})^\prime(t)=0$ there. In $[0,\tilde{T}]$, since queue 2 stays engaged with class 1 users arriving at rate $\mu_1$, using (\ref{eq:derv_of_tau}), $(\tau^{(2)}_{\mathbf{F}})^\prime(t)=\frac{\mu_1}{\mu_2}$, causing $(C_{\mathbf{F}}^{(2)})^\prime(t)=\frac{\mu_1}{\mu_2}-\gamma^{(2)}\geq 0$ there. In $[\tilde{T},T^{(1)}_f]$, queue 2 stays engaged and queue 1 idle, with class 1 users arriving at rate $\mu_2\gamma^{(1)}$, causing $(\tau_{\mathbf{F}}^{(2)})^\prime(t)=\gamma^{(1)}$ and $(C_{\mathbf{F}}^{(2)})^\prime(t)=\gamma^{(1)}-\gamma^{(2)}\geq 0$ there. In $(T^{(1)}_f,\infty)$, since queue 2 is idle, we have $(C_{\mathbf{F}}^{(2)})^\prime(t)=1-\gamma^{(2)}$ there. 
    \end{enumerate}
    
    \item If $\left(\frac{1}{\gamma^{(1)}}-1\right)\Lambda^{(1)}<\Lambda^{(2)}$, in the unique remaining Type III candidate, class 1 users arrive at queue 1 at rate $\mu_2\gamma^{(1)}\leq\mu_1$ from $T^{(1)}_a>0$. As a result, no queue develops at queue 1. Hence, both the classes $i=1,2$ have $\tau^{(i)}_{\mathbf{F}}(t)=\tau_2(t)$. Since $A_2^\prime(t)<\mu_2$ in $[0,T^{(1)}_f]$, $A_2(t)-\mu_2 t$ is strictly decreasing in $[0,T^{(1)}_f]$. This along with $A_2(T^{(1)}_f)=\Lambda^{(1)}+\Lambda^{(2)}=\mu_2 T^{(1)}_f$ implies $A_2(t)>\mu_2 t$, which causes queue 2 to have a positive waiting time in $[0,T^{(1)}_f)$ and empty at $T^{(1)}_f$. Since every user arriving before $T^{(2)}_a$ departs at time zero, we have $(C_{\mathbf{F}}^{(i)})^\prime(t)=-\gamma^{(i)}$ in $(-\infty,T^{(2)}_a)$ for $i\in\{1,2\}$. In $[T^{(2)}_a,T^{(2)}_f]$, $A_2^\prime(t)=\mu_2\gamma^{(2)}$ implies $\tau_2^\prime(t)=\gamma^{(2)}$ and $(C_{\mathbf{F}}^{(1)})^\prime(t)=\gamma^{(2)}-\gamma^{(1)}<0,~(C_{\mathbf{F}}^{(2)})^\prime(t)=0$ in $[T^{(2)}_a,T^{(2)}_f]$. In $[T^{(2)}_f,T^{(1)}_f]$, $A_2^\prime(t)=\mu_2\gamma^{(1)}$ implies $\tau_2^\prime(t)=\gamma^{(1)}$ and $(C_{\mathbf{F}}^{(1)})^\prime(t)==0,~(C_{\mathbf{F}}^{(2)})^\prime(t)=\gamma^{(1)}-\gamma^{(2)}>0$ in $[T^{(2)}_f,T^{(1)}_f]$. In $(T^{(1)}_f,\infty)$, queue 2 is empty, causing $(C_{\mathbf{F}}^{(i)})^\prime(t)=1-\gamma^{(i)}>0$ for $i\in\{1,2\}$.  \vspace{-0.2in}
\end{enumerate}
\end{proof}
\vspace{-0.2in}
\noindent\textbf{Case 3}~~$\mu_1\geq\mu_2\gamma^{(2)}>\mu_2\gamma^{(1)}$:\hspace{0.05in}By Lemma \ref{lem_inst2_uneqpref_sign_reg1b}, every EAP will have $T^{(1)}_a\leq 0$. 
Using Lemma \ref{lem_supp_are_intervals_inst2} and \ref{lem_inst2_uneqpref_sign_reg1b}, we consider absolutely continuous candidates $\mathbf{F}=\{F^{(1)},F^{(2)}\}$, such that $\mathcal{S}(F^{(1)})=[T^{(1)}_a,T^{(1)}_f]$ for some $T^{(1)}_f>0\geq T^{(1)}_a$, and $\mathcal{S}(F^{(1)})\cup\mathcal{S}(F^{(2)})$ is an interval, and the arrival rates $(F^{(1)})^\prime(\cdot),~(F^{(2)})^\prime(\cdot)$ satisfies the property in Lemma \ref{lem_arrival_rates_inst2}. We restrict to candidates with $T^{(2)}_a$ and $T^{(2)}_f$ finite. For every candidate there are two possibilities, either the whole class 2 population arrives before time zero, or a fraction of them arrives after $T^{(1)}_f$. Based on this observation, we divide the set of candidates into three types: \textbf{1)}  Type I~~$T^{(1)}_a\leq 0$ and all class 2 users arrive before time zero and $T<T^{(1)}_f$,~\textbf{2)}  Type II~~$T^{(1)}_a\leq 0$, all class 2 users arrive before time zero and $T=T^{(1)}_f$, and \textbf{3)}  Type III~~$T^{(1)}_a\leq 0$ and a positive mass of class 2 users arrive after $T^{(1)}_f$. The following lemma gives the necessary and sufficient condition for existence of an EAP under the three types. Existence of unique EAP under case 3 of Theorem \ref{mainthm_inst2_reg2} follows from this lemma. 

\begin{lemma}\label{lem:inst2reg2case3}
    If $\mu_1\geq\mu_2\gamma^{(2)}>\mu_2\gamma^{(1)}$, the following statements are true about the EAP,  
    \begin{enumerate}
        \item There exists an EAP under Type I if and only if $\Lambda^{(2)}< \left(\frac{\mu_2}{\mu_1}-1\right) \Lambda^{(1)}$, and if it exists, it will be unique with a closed form same as the joint arrival profile mentioned under case 3a of Theorem \ref{mainthm_inst2_reg2}.   
        
        \item There exists an EAP under Type II if and only if $\left(\frac{\mu_2}{\mu_1}-1\right)\Lambda^{(1)}\leq\Lambda^{(2)}\leq \left(\frac{1}{\gamma^{(2)}}-1\right)\Lambda^{(1)}$, and if it exists, it must be unique with a closed form same as the joint arrival profile mentioned under case 3b of Theorem \ref{mainthm_inst2_reg2}.
        
        \item There exists an EAP under Type III if and only if $\left(\frac{1}{\gamma^{(2)}}-1\right)\Lambda^{(1)}<\Lambda^{(2)}$, and if it exists, it must be unique with a closed form same as the joint arrival profile mentioned under case 3c in Theorem \ref{mainthm_inst2_reg2}.
    \end{enumerate}
\end{lemma}

\begin{proof}
\noindent\textbf{Getting the arrival rates:}\hspace{0.05in}For Type I and II, we can restrict ourselves to candidates for which $\mathcal{S}(F^{(2)})$ must be an interval. Otherwise, if $\mathcal{S}(F^{(2)})$ has any gap $[t_1,t_2]$, such that $t_1<t_2\leq 0$ and $F^{(2)}(t_1)=F^{(2)}(t_2)$, then by (\ref{eq:derv_of_tau}), $\tau_{\mathbf{F}}^{(2)}(\cdot)=\tau_2(\cdot)$ remains constant in $[t_1,t_2]$ and the class 2 user arriving at $t_1$ will be strictly better off arriving at $t_2$ and such candidates cannot be EAP. Also, following the same argument, we must have $T^{(2)}_f=0$. Otherwise, if $T^{(2)}_f<0$, the class 2 user arriving at $T^{(2)}_f$ will be strictly better off arriving at time zero. Now by Lemma \ref{lem_arrival_rates_inst2}, class 2 users must arrive over an interval ending at time zero at a constant rate of $\mu_2\gamma^{(2)}$. As a result, under Type I and II, we consider only candidates having $(F^{(2)})^\prime(t)=\mu_2\gamma^{(2)}\cdot\mathbb{I}\left(t\in\left[-\frac{\Lambda^{(2)}}{\mu_2\gamma^{(2)}},0\right]\right)$. Using Lemma \ref{lem_arrival_rates_inst2}, arrival rates of the two classes for Type I and II candidates must, respectively, be the ones in (\ref{eq_rates_inst2_uneqpref_reg2_case1}) and (\ref{eq_rates_inst2_uneqpref_reg2_case2}) with $T^{(1),+}_a = 0$. 

For Type III candidates, using an argument similar to types I and II, the portion of $\mathcal{S}(F^{(2)})$ before time zero must be an interval ending at time zero and therefore $\mathcal{S}(F^{(2)})\cap(-\infty,0]=[T^{(2)}_a,0]$. By Lemma \ref{lem_inst2_uneqpref_queue1idle}, queue 1 empties at time $T^{(1)}_f$ and therefore, as a result, no class 1 user arrives at queue 2 after time $T^{(1)}_f$. Since $\mathcal{S}(F^{(1)})\cup\mathcal{S}(F^{(2)})$ must be an interval, the portion of $\mathcal{S}(F^{(2)})$ after $T^{(1)}_f$ must be an interval ending at $T^{(2)}_f>T^{(1)}_f$. Hence, using Lemma \ref{lem_arrival_rates_inst2}, we can restrict ourselves to Type III candidates for whom $(F^{(2)})^\prime(t)=\mu_2\gamma^{(2)}\cdot\mathbb{I}\left(t\in[T^{(2)}_a,0]\cup[T^{(1)}_f,T^{(2)}_f]\right)$ for some $T^{(2)}_f>T^{(1)}_f>0>T^{(2)}_a$. We restrict ourselves to Type III candidates where, queue 2 has a positive waiting time in $[0,T^{(1)}_f]$, otherwise, the class 2 users arriving after $T^{(1)}_f$ will be strictly better off arriving at the time queue 2 empties in $[0,T^{(1)}_f]$. Therefore, for every $t\in[T^{(1)}_a,T^{(1)}_f]$, since queue 1 empties at $T^{(1)}_f$ and $\tau_1(T^{(1)}_a)=0$, we have $\tau_1(t)\in[0,T^{(1)}_f]$ and as a result, queue 2 has a positive waiting time at $\tau_1(t)$. As a result, by Lemma \ref{lem_arrival_rates_inst2}, we can restrict ourselves to candidates where class 1 users arrive at rate $(F^{(1)})^\prime(t)=\mu_2\gamma^{(1)}\cdot\mathbb{I}\left(t\in[T^{(1)}_a,T^{(1)}_f]\right)$. Hence, we are left with Type III candidates where arrival rates of the two classes are: 
\begin{align}\label{eq_rates_inst2_uneqpref_reg2_case3}
    (F^{(1)})^\prime(t)&=\mu_2\gamma^{(1)}\cdot\mathbb{I}(t\in[T^{(1)}_a,T^{(1)}_f])~\text{and}~(F^{(2)})^\prime(t)=\mu_2\gamma^{(2)}\cdot\mathbb{I}(t\in[T^{(2)}_a,0]\cup[T^{(1)}_f,T^{(2)}_f]).
\end{align}

\noindent\textbf{Identifying the support boundaries:}\hspace{0.05in}For Type I, the support boundaries satisfy a linear system same as the one obtained in the proof of Lemma \ref{lem:inst2reg2case1} for Type I candidates, and therefore has solution (\ref{eq_inst2_uneqpref_reg2_bdary_case1a}). 

For Type II, the support boundaries satisfy a linear system same as the one obtained in the proof of Lemma \ref{lem:inst2reg2case2} for Type II candidates, and therefore has solution (\ref{eq_inst2_uneqpref_reg2_bdary_case2b}).

For Type III, we obtain the following linear system to be satisfied by the support boundaries: 
\begin{itemize}[leftmargin=*]
    \item By (\ref{eq_rates_inst2_uneqpref_reg2_case3}), class 1 users arrive at $\mu_2\gamma^{(1)}$ between $[T^{(1)}_a,T^{(1)}_f]$. This gives us $T^{(1)}_f-T^{(1)}_a=\frac{\Lambda^{(1)}}{\mu_2\gamma^{(1)}}$.
    \item By (\ref{eq_rates_inst2_uneqpref_reg2_case3}), class 2 users arrive at $\mu_2\gamma^{(2)}$ between $[T^{(2)}_a,0]\cup[T^{(1)}_f,T^{(2)}_f]$. This gives us $T^{(2)}_f-T^{(1)}_f-T^{(2)}_a=\frac{\Lambda^{(2)}}{\mu_2\gamma^{(2)}}$.
    \item Queue 2 must serve the whole population between $[0,T^{(2)}_f]$ and empty at $T=T^{(2)}_f$, giving us, $T^{(2)}_f=\frac{\Lambda^{(1)}+\Lambda^{(2)}}{\mu_2}$.
    \item By definition of EAP $C_{\mathbf{F}}^{(2)}(T^{(2)}_a)=C_{\mathbf{F}}^{(2)}(T^{(2)}_f)$. This gives us $T^{(2)}_a=-\left(\frac{1}{\gamma^{(2)}}-1\right)T^{(2)}_f$. \vspace{-0.1in}
\end{itemize} 
Solution to the above linear system is (\ref{eq_inst2_uneqpref_reg2_bdary_case3c}). 

\noindent\textbf{Obtaining the necessary conditions:}\hspace{0.05in}To represent a Type I candidate (\ref{eq_inst2_uneqpref_reg2_bdary_case1a}) must satisfy $T^{(1)}_a\leq 0$ and $T^{(1)}_f>T$. This gives us $\Lambda^{(2)}<\left(\frac{\mu_2}{\mu_1}-1\right)\Lambda^{(1)}$, which is a necessary condition for existence of an EAP under Type I. Once the necessary condition is satisfied, (\ref{eq_inst2_uneqpref_reg2_bdary_case1a}) satisfies $T^{(1)}_a\leq 0$ and $T^{(1)}_f>T$. Upon plugging in (\ref{eq_inst2_uneqpref_reg2_bdary_case1a}) into (\ref{eq_rate_inst2_uneqpref_reg1a}), we get the only Type I candidate which qualifies to be an EAP, and it has closed form same as the joint arrival profile mentioned under case 3a of Theorem \ref{mainthm_inst2_reg2}. 

To represent a Type II candidate, the solution in (\ref{eq_inst2_uneqpref_reg2_bdary_case2b}) must satisfy $T^{(1)}_a\leq 0$, $\mu_1 T^{(1)}_f\geq\Lambda^{(1)}$ (by Lemma \ref{lem_inst2_uneqpref_queue1idle} queue 1 must be empty by time $T^{(1)}_f$) and $C_\mathbf{F}^{(2)}(T^{(1)}_f)=\gamma^{(2)} T^{(1)}_f\geq C_\mathbf{F}^{(2)}(T^{(2)}_a)=-(1-\gamma^{(2)})T^{(2)}_a$, which implies $\left(\frac{1}{\gamma^{(2)}}-1\right)\Lambda^{(1)}\geq\Lambda^{(2)}\geq\left(\frac{\mu_2}{\mu_1}-1\right)\Lambda^{(1)}$ and this gives us a necessary condition for existence of an EAP under Type II. It is easy to verify that, if $\left(\frac{1}{\gamma^{(2)}}-1\right)\Lambda^{(1)}\geq\Lambda^{(2)}\geq\left(\frac{\mu_2}{\mu_1}-1\right)\Lambda^{(1)}$ , (\ref{eq_inst2_uneqpref_reg2_bdary_case2b}) satisfies the desired conditions. Upon plugging in (\ref{eq_inst2_uneqpref_reg2_bdary_case2b}) into (\ref{eq_rate_inst2_uneqpref_reg2b}), we get the only Type II candidate which qualifies to be an EAP, and it has a closed form same as the joint arrival profile mentioned under case 3b in Theorem \ref{mainthm_inst2_reg2}.

Repeating the same procedure with the solution in (\ref{eq_inst2_uneqpref_reg2_bdary_case3c}) but instead imposing $T^{(2)}_f>T^{(1)}_f$, we obtain $\left(\frac{1}{\gamma^{(2)}}-1\right)\Lambda^{(1)}>\Lambda^{(2)}$, which is a necessary condition for existence of an EAP under Type III. With the necessary condition satisfied, (\ref{eq_inst2_uneqpref_reg2_bdary_case3c}) satisfies the desired conditions. Upon plugging in  (\ref{eq_inst2_uneqpref_reg2_bdary_case3c}) into (\ref{eq_rates_inst2_uneqpref_reg2_case3}), we get the only Type III candidate which qualifies to be an EAP, and it has closed form same as the joint arrival profile mentioned under case 3 of Theorem \ref{mainthm_inst2_reg2}. 

\noindent\textbf{Proving sufficiency of the obtained necessary conditions:}\hspace{0.05in}When $\Lambda^{(2)}< \left(\frac{\mu_2}{\mu_1}-1\right)\Lambda^{(1)}$, the only remaining Type I candidate is an EAP following the same argument as was used for proving that the only remaining Type I candidate is an EAP in the proof of Lemma \ref{lem:inst2reg2case1}. Below we argue that, in the other two types as well, if the necessary condition is satisfied, the only remaining candidate is an EAP. As a result, statement of Lemma \ref{lem:inst2reg2case3} follows.
\begin{itemize}[leftmargin=*]
    \item If $\left(\frac{\mu_2}{\mu_1}-1\right)\Lambda^{(1)}\leq\Lambda^{(2)}\leq \left(\frac{1}{\gamma^{(2)}}-1\right) \Lambda^{(1)}$: The unique remaining Type II candidate satisfies $\mu_1 T^{(1)}_f\geq A_1(T^{(1)}_f)=\Lambda^{(1)}$ causing queue 1 to empty at some $\tilde{T}\in[0,T^{(1)}_f]$. 
    Following the argument used in the proof of Lemma \ref{lem:inst2reg2case2} to prove the unique remaining Type II candidate is an EAP, queue 1 stays engaged in $[T^{(1)}_a,\tilde{T}]$ and empty in $[\tilde{T},\infty)$. On the other hand, queue 2 stays engaged in $[-\Lambda^{(2)}/\mu_2\gamma^{(2)}, T^{(1)}_f]$ and empty in $[T^{(1)}_f,\infty)$. Now for class 1 users, using the same argument, 
    \begin{align*}
        (C_{\mathbf{F}}^{(1)})^\prime(t)&\begin{cases}
            \leq 0~~\text{for}~t\in(-\infty,T^{(1)}_a)\\
            =0~~\text{for}~t\in[T^{(1)}_a,T^{(1)}_f]\\
            \geq 0~~\text{for}~t\in(T^{(1)}_f,\infty),
        \end{cases}
    \end{align*} 
   and similarly for class 2 users, we have, 
    \begin{align*}
        (C_{\mathbf{F}}^{(2)})^\prime(t)&=\begin{cases}
            -\gamma^{(2)}<0~~\text{for}~t\in(-\infty,-\Lambda^{(2)}/\mu_2\gamma^{(2)}) \\
            =0~~\text{for}~t\in[-\Lambda^{(2)}/\mu_2\gamma^{(2)},0) \\
            =\frac{\mu_1}{\mu_2}-\gamma^{(2)}\geq 0~~\text{for}~t\in[0,\tilde{T}) \\
            =\gamma^{(1)}-\gamma^{(2)} < 0~~\text{for}~t\in[\tilde{T},T^{(1)}_f] \\
            =1-\gamma^{(2)} >0 ~~ \text{for}~t\in(T^{(1)}_f,\infty)
        \end{cases}
    \end{align*}
    Now $C_{\mathbf{F}}^{(2)}(\cdot)$ is constant in $[-\Lambda^{(2)}/\mu_2\gamma^{(2)},0]$, non-decreasing in $[0,\tilde{T}]$, decreasing in $[\tilde{T},T^{(1)}_f]$ and then increasing in $(T^{(1)}_f,\infty)$. Since the candidate satisfies, $C_{\mathbf{F}}^{(2)}(T^{(1)}_f)\geq C_{\mathbf{F}}^{(2)}(-\Lambda^{(2)}/\mu_2\gamma^{(2)})=C_{\mathbf{F}}^{(2)}(0)$, by the previous statement we have $C_{\mathbf{F}}^{(2)}(t)\geq C_{\mathbf{F}}^{(2)}(0)$ for every $t\geq 0$. This is sufficient to prove that, both the classes have constant cost on their support and higher cost outside of it and hence the candidate is an EAP. 

    \item If $\left(\frac{1}{\gamma^{(2)}}-1\right)\Lambda^{(1)}<\Lambda^{(2)}$: The unique remaining Type III candidate satisfies, $\mu_1 T^{(1)}_f\geq \Lambda^{(1)}=A_1(T^{(1)}_f)$, causing queue 1 to empty at some $\tilde{T}\in[0,T^{(1)}_f]$. Single class 1 users arrive at a constant rate of $\mu_2\gamma^{(1)}$, queue 1 stays engaged in $[T^{(1)}_a,\tilde{T}]$ and empty in $(\tilde{T},\infty)$. Using this, arrival rate of class 2 users in queue 2 is: $\mu_1$ in $[0,\tilde{T}]$, $\mu_2\gamma^{(1)}<\mu_2$ in $[\tilde{T},T^{(1)}_f]$ and $\mu_2\gamma^{(2)}<\mu_2$ in $[T^{(1)}_f,T^{(2)}_f]$. As a result, $A_2(t)-\mu_2 t$ is strictly decreasing in $[\tilde{T},T^{(2)}_f]$ with $A_2(T^{(2)}_f)=\Lambda^{(1)}+\Lambda^{(2)}=\mu_2 T^{(2)}_f$, causing $A_2(t)>\mu_2 t$ in $[\tilde{T},T^{(2)}_f)$. Again, $A_2(t)-\mu_2 t$ is linear in $[0,\tilde{T}]$ with $A_2(0)=F^{(2)}(0)>0$ and $A_2(\tilde{T})>\mu_2 \tilde{T}$, implying $A_2(t)>\mu_2 t$ in $[0,\tilde{T}]$. As a result, queue 2 has a positive waiting time in $[0,T^{(2)}_f)$ and is empty in $[T^{(2)}_f,\infty)$. Now we consider the two classes separately and prove that, for both the classes cost is constant on their support and higher outside:
    
    \begin{itemize}
    \item \textbf{For class 1:} $\tau_{\mathbf{F}}^{(1)}(t)=\tau_2(0)$ in $(-\infty,T^{(1)}_a)$, causing $(C_{\mathbf{F}}^{(1)})^\prime(t)=-\gamma^{(1)}<0$ in $(-\infty,T^{(1)}_a)$. In $[T^{(1)}_a,\tilde{T}]$, using (\ref{eq:derv_of_tau}), $(\tau_{\mathbf{F}}^{(1)})^\prime(t)=(\tau_2\circ\tau_1)^\prime(t)=\tau_2^\prime(\tau_1(t))\cdot\tau_1^\prime(t)=\frac{\mu_1}{\mu_2}\cdot\frac{\mu_2\gamma^{(1)}}{\mu_1}=\gamma^{(1)}$, causing $(C_{\mathbf{F}}^{(1)})^\prime(t)=0$ in $[T^{(1)}_a,\tilde{T}]$. In $[\tilde{T},T^{(1)}_f]$, queue 1 is empty and queue 2 is busy. As a result, using (\ref{eq:derv_of_tau}), $(\tau_{\mathbf{F}}^{(1)})^\prime(t)=\tau_2^\prime(t)=\frac{(F^{(1)})^\prime(t)}{\mu_2}=\gamma^{(1)}$ and $(C_{\mathbf{F}}^{(1)})^\prime(t)=0$ in $[\tilde{T},T^{(1)}_f]$. In $[T^{(1)}_f,T^{(2)}_f]$, queue 2 stays busy and $(\tau_{\mathbf{F}}^{(1)})^\prime(t)=\tau_2^\prime(t)=\frac{(F^{(2)})^\prime(t)}{\mu_2}=\gamma^{(2)}>\gamma^{(1)}$, causing $(C_{\mathbf{F}}^{(1)})^\prime(t)=\gamma^{(2)}-\gamma^{(1)}>0$ in $[T^{(1)}_f,T^{(2)}_f]$. In $(T^{(2)}_f,\infty)$, both the queues stay idle, causing $(C_{\mathbf{F}}^{(1)})^\prime(t)=1-\gamma^{(1)}>0$. Hence for the class 1 users, cost is constant in $[T^{(1)}_a,T^{(1)}_f]$ and higher outside. 
    
    \item \textbf{For class 2:} $\tau_{\mathbf{F}}^{(2)}(t)=0$ in $(-\infty,-\Lambda^{(2)}/\mu_2\gamma^{(2)})$, causing $(C_{\mathbf{F}}^{(2)})^\prime(t)=-\gamma^{(2)}<0$. In $[-\Lambda^{(2)}/\mu_2\gamma^{(2)},0]$, since queue 2 is engaged, using (\ref{eq:derv_of_tau}), we have $(\tau_{\mathbf{F}}^{(2)})^\prime(t)=\frac{(F^{(2)})^\prime(t)}{\mu_2}=\gamma^{(2)}$, which causes $(C_{\mathbf{F}}^{(2)})^\prime(t)=0$. In $[0,T^{(2)}_f]$, since queue 2 stays engaged, using (\ref{eq:derv_of_tau}),    \begin{align*}
        (C_{\mathbf{F}}^{(2)})^\prime(t)=\tau_2^\prime(t)-\gamma^{(2)}&=\begin{cases}
            \frac{\mu_1}{\mu_2}-\gamma^{(2)}\geq 0~~\text{for}~t\in(0,\tilde{T}], \\
            \gamma^{(1)}-\gamma^{(2)}<0~~\text{for}~t\in[\tilde{T},T^{(1)}_f], \\
            0~~\text{for}~t\in(T^{(1)}_f,T^{(2)}_f].
        \end{cases}
    \end{align*}    
    Now, $(C_{\mathbf{F}}^{(2)})^\prime(t)=0$ in $[T^{(2)}_a,0]\cup[T^{(1)}_f,T^{(2)}_f]$ and $C_{\mathbf{F}}^{(2)}(T^{(2)}_f)=\gamma^{(2)} T^{(2)}_f=-(1-\gamma^{(2)})T^{(2)}_a=C_{\mathbf{F}}^{(2)}(T^{(2)}_a)$ implies $C_{\mathbf{F}}^{(2)}(0)=C_{\mathbf{F}}^{(2)}(T^{(1)}_f)$. Since $C_{\mathbf{F}}^{(2)}(\cdot)$ is first non-decreasing in $[0,\tilde{T}]$ and then decreasing in $[\tilde{T},T^{(1)}_f]$, with $C_{\mathbf{F}}^{(2)}(0)=C_{\mathbf{F}}^{(2)}(T^{(1)}_f)$, we must have $C_{\mathbf{F}}^{(2)}(t)\geq C_{\mathbf{F}}^{(2)}(0)=C_{\mathbf{F}}^{(2)}(T^{(1)}_f)$ for every $t\in[0,T^{(1)}_f]$. Since queue 2 empties at $T^{(2)}_f$, we have $(C_{\mathbf{F}}^{(2)})^\prime(t)=1-\gamma^{(2)}>0$ in $(T^{(2)}_f,\infty)$. Therefore, for the class 2 users, the cost is constant on $[T^{(2)}_a,0]\cup[T^{(1)}_f,T^{(2)}_f]$ and higher outside.   
    \end{itemize}
    \vspace{-0.4in}
\end{itemize}
\end{proof}

\subsection{Structural properties of EAP for HAS with $\gamma^{(1)}=\gamma^{(2)}$}\label{appndx:inst2_eqpref}

\begin{lemma}[\textbf{Rate of arrival}]\label{lem_arrivalrates_inst2_eqpref}
    In every EAP the rate of arrivals will satisfy the following properties: 
    \begin{align*}
    \forall t\in \mathcal{S}(F^{(1)}),~~(F^{(1)})^\prime(t)&\begin{cases}
            =\mu_1\gamma~~&~\text{if}~\tau_1(t)\in\overline{E_2}^c,\\
            =\mu_2\gamma~~&\text{if}~\tau_1(t)\notin\mathcal{S}(F^{(2)})\cup\overline{E_2}^c,\\
            \leq\mu_1~~&\text{if}~\tau_1(t)\in\mathcal{S}(F^{(2)})~\text{and}~t\notin\overline{E_1},\\
            =\mu_1~~&\text{if}~\tau_1(t)\in\mathcal{S}(F^{(2)})~\text{and}~t\in\overline{E_1},
        \end{cases}\\
    \text{and}~\forall t\in \mathcal{S}(F^{(2)}),~~(F^{(2)})^\prime(t)&=\begin{cases}
    \mu_2\gamma-\mu_1~~&\text{if}~t\in E_1,~\text{and},\\
    \mu_2\gamma-(F^{(1)})^\prime(t)~~~&\text{otherwise.}
    \end{cases}
    \end{align*}
\end{lemma}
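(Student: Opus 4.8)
The plan is to follow the template of the proof of Lemma~\ref{lem_arrival_rates_inst2}, differentiating the two iso-cost identities (\ref{eq:isocost_inst2_grp1}) and (\ref{eq:isocost_inst2_grp2}) (now both equal to the common preference $\gamma$) and converting the resulting constraints on departure-time derivatives into constraints on the arrival rates via (\ref{eq:derv_of_tau}). The one structural tool that is \emph{not} available when $\gamma_1=\gamma_2$ is Lemma~\ref{lem_appndx_inst2_uneqpref_queu1_engaged_mixedarrival}, which in the unequal case forces queue~1 to be engaged whenever both classes arrive together; its absence is exactly why the present lemma records the inequality $F_1^\prime(t)\le\mu_1$ rather than an equality in the regime where class~2 arrives simultaneously. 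Throughout I work a.e. on the interiors of the supports, discarding the measure-zero boundary sets, and I use the identities $\tau_{\mathbf{F}}^{(1)}(t)=\tau_2(\tau_1(t))$ and $Y_1(\tau_1(t))=F_1(t)$, the latter giving $Y_1^\prime(\tau_1(t))\,\tau_1^\prime(t)=F_1^\prime(t)$ by the chain rule.

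For the class~1 rates I would split on the state of queue~2 at the departure instant $\tau_1(t)$. When $\tau_1(t)\in\overline{E_2}^{\,c}$ the class~1 user incurs no wait at queue~2, so $\tau_{\mathbf{F}}^{(1)}=\tau_1$ near $t$ and (\ref{eq:isocost_inst2_grp1}) gives $\tau_1^\prime(t)=\gamma<1$; since $\gamma<1$, queue~1 cannot be idle at $t$ (an idle queue~1 would force $\tau_1^\prime(t)=1$), so $t\in\overline{E_1}$ and (\ref{eq:derv_of_tau}) yields $F_1^\prime(t)=\mu_1\gamma$. When $\tau_1(t)\in\overline{E_2}$ but $\tau_1(t)\notin\mathcal S(F_2)$, queue~2 is engaged and fed only by class~1, so $A_2^\prime(\tau_1(t))=Y_1^\prime(\tau_1(t))$; combining $(\tau_{\mathbf{F}}^{(1)})^\prime(t)=\tau_2^\prime(\tau_1(t))\tau_1^\prime(t)=\frac1{\mu_2}Y_1^\prime(\tau_1(t))\tau_1^\prime(t)=\frac1{\mu_2}F_1^\prime(t)$ with (\ref{eq:isocost_inst2_grp1}) gives $F_1^\prime(t)=\mu_2\gamma$ directly, needing no knowledge of the state of queue~1. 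Finally, when $\tau_1(t)\in\mathcal S(F_2)$, queue~2 is engaged and (\ref{eq:isocost_inst2_grp2}) gives $\tau_2^\prime(\tau_1(t))=\gamma$; feeding this into $\gamma=(\tau_{\mathbf{F}}^{(1)})^\prime(t)=\tau_2^\prime(\tau_1(t))\tau_1^\prime(t)$ forces $\tau_1^\prime(t)=1$. Here I split on queue~1: if $t\in\overline{E_1}$ then (\ref{eq:derv_of_tau}) gives $F_1^\prime(t)=\mu_1$, whereas if $t\notin\overline{E_1}$ the queue stays empty and the no-build-up condition $A_1^\prime\le\mu_1$ yields only $F_1^\prime(t)\le\mu_1$, the stated bound.

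For the class~2 rates I use that $\mathcal S(F_2)\subseteq\overline{E_2}$, so (\ref{eq:derv_of_tau}) and (\ref{eq:isocost_inst2_grp2}) give $A_2^\prime(t)=\mu_2\gamma$ for $t\in\mathcal S(F_2)$; since $A_2=Y_1+F_2$ this reads $F_2^\prime(t)=\mu_2\gamma-Y_1^\prime(t)$, and it remains to evaluate the departure rate $Y_1^\prime(t)$ of queue~1. When queue~1 is actively serving a backlog ($Q_1(t)>0$) it discharges at the full rate $\mu_1$, giving $F_2^\prime(t)=\mu_2\gamma-\mu_1$; when $t\ge 0$ and queue~1 is idle it passes arrivals through instantaneously, so $Y_1^\prime(t)=F_1^\prime(t)$ and $F_2^\prime(t)=\mu_2\gamma-F_1^\prime(t)$ (for $t<0$ no class~1 user has yet been served and $Y_1^\prime(t)=0$, consistent with a zero contribution from queue~1). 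This recovers the two listed cases.

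The main obstacle, and the genuine point of departure from Lemma~\ref{lem_arrival_rates_inst2}, is the mixed-arrival branch $\tau_1(t)\in\mathcal S(F_2)$ for class~1: because equal preferences deprive us of the engagement guarantee of Lemma~\ref{lem_appndx_inst2_uneqpref_queu1_engaged_mixedarrival}, I cannot conclude $t\in\overline{E_1}$ and hence cannot pin $F_1^\prime(t)$ to a single value, so the argument must instead extract the inequality from the queue-emptiness constraint. The remaining work is routine measure-theoretic bookkeeping: justifying $Y_1^\prime(\tau_1(t))\tau_1^\prime(t)=F_1^\prime(t)$ a.e. (as done elsewhere in the paper), confirming that queue engagement lets (\ref{eq:derv_of_tau}) apply, and checking that the $\gamma<1$ argument excludes the degenerate both-queues-idle configuration on any positive-measure subset of a support interior.
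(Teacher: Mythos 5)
Your proposal is correct and follows essentially the same route as the paper's proof: differentiate the iso-cost identities (\ref{eq:isocost_inst2_grp1})--(\ref{eq:isocost_inst2_grp2}), translate departure-rate derivatives into arrival rates via (\ref{eq:derv_of_tau}) and the chain rule through $Y_1(\tau_1(t))=F_1(t)$, case-split on the states of the two queues, and extract the inequality $F_1^\prime(t)\le\mu_1$ from the no-build-up condition when queue~1 is idle in the mixed-arrival branch. If anything you are slightly more explicit than the paper on two points it leaves implicit (why $\tau_1(t)\in\overline{E_2}^{\,c}$ forces $t\in\overline{E_1}$ via $\gamma<1$, and the fact that $Y_1^\prime(t)=0$ for $t<0$ so the ``$t\in E_1$'' branch really concerns $Q_1(t)>0$ with $t\ge 0$).
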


\begin{proof} 
For class 2 users, by (\ref{eq:isocost_inst2_grp2}), if $t\in\mathcal{S}(F^{(2)})$, $\tau_2^\prime(t)=\gamma$. Since $\mathcal{S}(F^{(2)})\subseteq \overline{E_2}$, using (\ref{eq:derv_of_tau}), $\tau_2^\prime(t)=\frac{Y_1^\prime(t)+(F^{(2)})^\prime(t)}{\mu_2}$. Now if $t\in E_1$, $y_1^\prime(t)=\mu_1$ and as a result, $(F^{(2)})^\prime(t)=\mu_2\gamma-\mu_1$. Otherwise if $t\notin \overline{E_1}$, $Y_1^\prime(t)=(F^{(1)})^\prime(t)$ and as a result, $(F^{(2)})^\prime(t)=\mu_2\gamma-(F^{(1)})^\prime(t)$.

For class 1 users, the argument behind arrival rates in the first two cases is similar to the one used for first two cases of class 1 arrival rate in Lemma \ref{lem_arrival_rates_inst2}. Therefore in the rest of the proof, we argue for the rest two cases. If $\tau_1(t)\in\mathcal{S}(F^{(2)})$, by (\ref{eq:isocost_inst2_grp2}), $\tau_2^\prime(\tau_1(t))=\gamma$. Now if $t\in \overline{E_1}^c$ and $\tau_1(t)\in\mathcal{S}(F^{(2)})$, $\tau_1(t)=t$ and $\tau_{\mathbf{F}}^{(1)}(t)=\tau_2(t)$ and as a result, $(C_{\mathbf{F}}^{(1)})^\prime(t)=\tau_2^\prime(t)-\gamma=0$.  But the condition $t\in \overline{E_1}^c$ puts the constraint $(F^{(1)})^\prime(t)\leq\mu_1$. On the other hand if $t\in E_1$ and $\tau_1(t)\in\mathcal{S}(F^{(2)})$, we simultaneously have $\tau_1^\prime(t)=\frac{(F^{(1)})^\prime(t)}{\mu_1}$ and $\tau_2^\prime(\tau_1(t))=\gamma$. Applying (\ref{eq:isocost_inst2_grp1}) we will have $(\tau_{\mathbf{F}}^{(1)})^\prime(t)=\frac{(F^{(1)})^\prime(t)}{\mu_1}\cdot\gamma=\gamma$, implying $(F^{(1)})^\prime(t)=\mu_1$.  
\end{proof}

Note that $\overline{E_2}^c$ and $\mathcal{S}(F^{(2)})$ must be disjoint in EAP. Also, Lemma \ref{lem_arrivalrates_inst2_eqpref} implies, if $\mu_1\geq\mu_2\gamma$ class 2 users cannot arrive when queue 1 has a positive waiting time and the set $\mathcal{S}(F^{(2)})\cap\overline{E_1}$ has zero Lebesgue measure. 

\subsubsection{Supporting lemmas for proving Theorem \ref{mainthm_inst2_eqpref_reg1}}
The following two lemmas are helpful for proving Theorem \ref{mainthm_inst2_eqpref_reg1}.

\begin{lemma}\label{lem_inst2_eqpref_reg1_case1sign}
    If $\mu_1<\mu_2\gamma$ and $T_f>T^{(2)}_f$, in the EAP, class 1 users should start arriving before time zero.
\end{lemma}

\begin{proof} 
Otherwise, if class 1 users start arriving after time zero, since queue 2 has positive queue length between $(0,T^{(2)}_f)$, class 1 users arrive at a maximum rate of $\mu_1$ by Lemma \ref{lem_arrivalrates_inst2_eqpref}. So, no waiting queue develops in queue 1 and it remains empty at $T^{(2)}_f$. By Lemma \ref{lem_inst2_uneqpref_queue2idle}, queue 2 has zero waiting time at $T^{(2)}_f$. Hence, the network will become empty at $T^{(2)}_f$ while there is a positive mass of class 1 users yet to arrive between $(T^{(2)}_f,T_f)$, which cannot happen in EAP. 
\end{proof}

\begin{lemma}\label{lem_inst2_eqpref_reg1_case2sign}
    If $\mu_1<\mu_2\gamma$ and $T_f=T^{(2)}_f$, in the EAP, class 1 users should start arriving after time zero.
\end{lemma}
\begin{proof} 
By Lemma \ref{lem_inst2_uneqpref_queue2idle}, queue 2 will be empty at $T_f$. Queue 1 must also be empty at $T_f$, otherwise the last arriving class 1 user can arrive at the moment queue 1 empties and be strictly better off.

Now if class 1 users start arriving before time zero, since $[0,T^{(2)}_f]\subseteq \overline{E_2}$ and queue 1 has a positive waiting time at time zero, by Lemma \ref{lem_arrivalrates_inst2_eqpref}, class 1 users will arrive at rate $\mu_1$ between $[0,T^{(2)}_f]$. As a result, $Q_1(t)$ remains constant in $[0,T^{(2)}_f]$ and queue 1 will have a positive waiting queue at $T^{(2)}_f=T_f$, contradicting our observation that queue 1 must be empty at $T_f$. 
\end{proof}

\subsubsection{Proof of Theorem \ref{mainthm_inst2_eqpref_reg1}}

With $\gamma^{(1)}=\gamma^{(2)}=\gamma$ and $\mu_1<\mu_2\gamma$, there are two possibilities for every EAP:\hspace{0.05in} \textbf{1)}$T^{(2)}_f<T_f$, and \textbf{2)}$T^{(2)}_f=T_f$. The statement of Theorem \ref{mainthm_inst2_eqpref_reg1} follows from the following two lemma. 
\begin{lemma}\label{lem:inst2reg1eqpref1}
    If $\mu_1<\mu_2\gamma$ and $\gamma^{(1)}=\gamma^{(2)}=\gamma$, there exists an EAP with $T^{(2)}_f<T_f$ if and only if $\Lambda^{(1)}>\frac{\mu_1}{\mu_2-\mu_1}\Lambda^{(2)}$, and if it exists, it will be unique with a closed form same as the joint arrival profile mentioned under case 1 of Theorem \ref{mainthm_inst2_eqpref_reg1}.
\end{lemma}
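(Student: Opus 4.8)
The plan is to follow the elimination-and-verification template used for the unequal-preference Type I analysis in Lemma \ref{lem:inst2reg1case1}, now specialized to $\gamma_1=\gamma_2=\gamma$. By Lemma \ref{lem_supp_are_intervals_inst2} I may restrict attention to absolutely continuous candidates with $\mathcal{S}(F_1)=[T_{1,a},T_{1,f}]$ and $\mathcal{S}(F_2)=[T_{2,a},T_{2,f}]$ whose union is an interval and whose arrival rates obey Lemma \ref{lem_arrivalrates_inst2_eqpref}. Since we are in the regime $T_f>T_{2,f}$, Lemma \ref{lem_inst2_eqpref_reg1_case1sign} forces class $1$ to begin arriving before time zero, so $T_{1,a}<0$, $T_{1,a}^+=0$, and $T_f=T_{1,f}$. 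First I would pin down the engagement pattern: queue $2$ stays engaged on $(0,T_{2,f})$ and empties at $T_{2,f}$ (Lemma \ref{lem_inst2_uneqpref_queue2idle}), while queue $1$ stays engaged on $(0,T_{1,f})$ and empties at $T_{1,f}$ (Lemma \ref{lem_inst2_uneqpref_queue1idle}). Because $\mu_1<\mu_2\gamma$, Lemma \ref{lem_arrivalrates_inst2_eqpref} then dictates the rates exactly: class $2$ arrives alone at rate $\mu_2\gamma$ on $[T_{2,a},0]$; on $[0,T_{2,f}]$ class $1$ feeds queue $2$ at rate $\mu_1$, so that $F_2'=\mu_2\gamma-\mu_1$ there and $F_1'=\mu_1$ for $t\in[T_{1,a},\tau_1^{-1}(T_{2,f})]$; and once $\tau_1(t)>T_{2,f}$ queue $2$ is empty at arrival, giving $F_1'=\mu_1\gamma$ on $[\tau_1^{-1}(T_{2,f}),T_{1,f}]$.

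Next I would set up the four equations that the boundaries must satisfy, mirroring Lemma \ref{lem:inst2reg1case1} with $\gamma_1=\gamma_2=\gamma$: queue $1$ emptying gives $T_{1,f}=\Lambda_1/\mu_1$; queue $2$ serving the whole population (class $2$ plus the class-$1$ inflow $\mu_1 T_{2,f}$) gives $\mu_2 T_{2,f}=\Lambda_2+\mu_1 T_{2,f}$, i.e.\ $T_{2,f}=\Lambda_2/(\mu_2-\mu_1)$; conservation of the class-$2$ mass over $[T_{2,a},T_{2,f}]$ fixes $T_{2,a}$; and conservation of the class-$1$ mass over $[T_{1,a},T_{1,f}]$, together with the relation $\tau_1^{-1}(T_{2,f})=T_{1,a}+T_{2,f}$ read off from $\tau_1(t)=F_1(t)/\mu_1$ on the first segment, fixes $T_{1,a}=-\tfrac{1-\gamma}{\gamma}\big(\tfrac{\Lambda_1}{\mu_1}-\tfrac{\Lambda_2}{\mu_2-\mu_1}\big)$. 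This linear system has the unique solution listed in case $1$ of Theorem \ref{mainthm_inst2_eqpref_reg1}, so any EAP with $T_f>T_{2,f}$ must have these boundaries and rates. The necessary condition then drops out of the defining ordering: $T_f=T_{1,f}>T_{2,f}$ is equivalent to $\Lambda_1/\mu_1>\Lambda_2/(\mu_2-\mu_1)$, i.e.\ $\Lambda_1>\tfrac{\mu_1}{\mu_2-\mu_1}\Lambda_2$ (equivalently $T_{1,a}<0$), which also secures the full ordering $T_{2,a}<T_{1,a}<0<T_{2,f}<T_{1,f}$ needed for the candidate to be well-formed.

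For sufficiency I would assume $\Lambda_1>\tfrac{\mu_1}{\mu_2-\mu_1}\Lambda_2$ and verify that the single surviving candidate is an EAP by checking, for each class $i$, that $(C_{\mathbf{F}}^{(i)})'\le 0$ on $(-\infty,T_{i,a})$, $=0$ on $[T_{i,a},T_{i,f}]$, and $\ge 0$ on $(T_{i,f},\infty)$. This is exactly the computation carried out for the Type I candidate in Lemma \ref{lem:inst2reg1case1}, now with $\gamma_1=\gamma_2=\gamma$, so I would either transcribe it or invoke the limiting identification (cases $1$a and $2$a of Theorem \ref{mainthm_inst2_reg1} as the preferences equalize). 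The main obstacle is the queue-dynamics bookkeeping inside this verification: one must confirm that queue $1$ stays engaged throughout $[T_{1,a},T_{1,f}]$ and empties exactly at $T_{1,f}$; that $A_2(t)-\mu_2 t$ is decreasing with $A_2(T_{2,f})=\mu_2 T_{2,f}$, so queue $2$ is engaged on $[T_{2,a},T_{2,f})$ and empties at $T_{2,f}$; and that on the overlap segment $(\tau_{\mathbf{F}}^{(1)})'=\tau_2'(\tau_1(t))\,\tau_1'(t)=\tfrac{\mu_2\gamma}{\mu_2}\cdot\tfrac{\mu_1}{\mu_1}=\gamma$ while $(\tau_{\mathbf{F}}^{(1)})'=\tau_1'=\gamma$ on the final segment, giving $(C_{\mathbf{F}}^{(1)})'=0$ across $[T_{1,a},T_{1,f}]$, with the analogous check for class $2$. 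Since the condition $\Lambda_1>\tfrac{\mu_1}{\mu_2-\mu_1}\Lambda_2$ and its negation exhaust the two regimes $T_f>T_{2,f}$ and $T_f=T_{2,f}$, uniqueness of the EAP with $T_f>T_{2,f}$ follows once the verification is complete.
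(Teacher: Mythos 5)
Your overall skeleton matches the paper's: identify the engagement pattern, read off the rates from Lemma \ref{lem_arrivalrates_inst2_eqpref}, solve the same linear system as in the Type I case of Lemma \ref{lem:inst2reg1case1} with $\gamma_1=\gamma_2=\gamma$, extract the necessary condition from $T_{1,a}<0$ (equivalently $T_{1,f}>T_{2,f}$), and verify sufficiency by the same cost-derivative computation. However, there is a gap at the very first step. You invoke Lemma \ref{lem_supp_are_intervals_inst2} to restrict to candidates with $\mathcal{S}(F_1)=[T_{1,a},T_{1,f}]$, but statement (3) of that lemma — the one asserting $\mathcal{S}(F_1)$ is an interval — is proved only under the hypothesis $\gamma_1\neq\gamma_2$; here it does not apply. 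Likewise, Lemma \ref{lem_inst2_uneqpref_queue1idle} is stated for $\gamma_1\neq\gamma_2$ and in any case only gives emptiness of queue 1 at $T_{1,f}$, not engagement on all of $(0,T_{1,f})$. This matters because your derivation of the rates hinges on queue 1 being engaged throughout $[0,T_{2,f}]$: by Lemma \ref{lem_arrivalrates_inst2_eqpref}, when $\tau_1(t)\in\mathcal{S}(F_2)$ and queue 1 is \emph{not} engaged you only get $F_1^\prime(t)\leq\mu_1$ with $F_2^\prime(t)=\mu_2\gamma-F_1^\prime(t)$ — exactly the indeterminacy that produces the convex set of equilibria in case 2 of Theorem \ref{mainthm_inst2_eqpref_reg1}. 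So without first ruling out an idle period for queue 1, neither the interval structure of $\mathcal{S}(F_1)$ nor the exact rate $F_1^\prime=\mu_1$ on the overlap, and hence neither uniqueness nor the closed form, is established.

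The missing idea is the paper's short contradiction argument that exploits the standing assumption $T_{2,f}<T_f$: queue 2 is engaged on $[0,T_{2,f})$ and empties at $T_{2,f}$ (Lemma \ref{lem_inst2_uneqpref_queue2idle}); since a positive mass of class 1 arrives on $(T_{2,f},T_f]$, queue 1 must be engaged on $[T_{2,f},T_f)$; and if queue 1 ever emptied inside $[0,T_{2,f}]$, then (class 1 arriving at rate at most $\mu_1$ thereafter, by Lemma \ref{lem_arrivalrates_inst2_eqpref}) it would still be empty at $T_{2,f}$, leaving the whole network empty with class-1 mass yet to arrive — impossible in an EAP. This simultaneously forces queue 1 to be engaged on $[0,T_f)$, gives $F_1^\prime>0$ on $[0,T_f]$ so that $[0,T_f]\subseteq\mathcal{S}(F_1)$, and, combined with Lemma \ref{lem_inst2_eqpref_reg1_case1sign} and a no-gap argument before time zero, yields $\mathcal{S}(F_1)=[T_{1,a},T_{1,f}]$ with $T_{1,a}<0$. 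Once you insert this step, the rest of your proposal goes through as written.
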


\begin{proof}
\textbf{Proving $\mathcal{S}(F^{(1)})$ is an interval:}\hspace{0.05in}Since $T^{(2)}_f$ is not an isolated point of $\mathcal{S}(F^{(2)})$, queue 2 stays engaged in $[0,T^{(2)}_f]$ and must be empty at $T^{(2)}_f$ (by Lemma \ref{lem_inst2_uneqpref_queue2idle}). As a result, queue 1 should have a positive waiting time between $[T^{(2)}_f,T_f)$, since a positive mass of class 1 users arrive in $[T^{(2)}_f,T_f]$. 

Queue 1 must also have a positive waiting time between $[0,T^{(2)}_f]$. Otherwise, if queue 1 empties somewhere between $[0,T^{(2)}_f]$, since queue 2 stays engaged in $[0,T^{(2)}_f]$, class 1 users will arrive at a maximum rate of $\mu_1$ (by Lemma \ref{lem_arrivalrates_inst2_eqpref}). As a result, queue 1 becomes empty at $T^{(2)}_f$, contradicting our previous observation.

Therefore, if queue 1 stays engaged in $[0,T^{(2)}_f]$ and $[T^{(2)}_f,T_f)$, by Lemma \ref{lem_arrivalrates_inst2_eqpref}, $(F^{(1)})^\prime(t)>0$ in $[0,T_f]$, making $[0,T_f]\subseteq\mathcal{S}(F^{(1)})$. Hence, $\mathcal{S}(F^{(1)})\cap[0,\infty)=[0,T_f]$

By Lemma \ref{lem_inst2_eqpref_reg1_case1sign}, class 1 users start arriving before time zero. It is easy to observe that, before time zero, $F^{(1)}$ will be supported on an interval ending at $0$. Otherwise, if there is a gap $(t_1,t_2)$ in $\mathcal{S}(F^{(1)})\cap(-\infty,0]$ such that $F^{(1)}(t_1)=F^{(1)}(t_2)$, the class 1 user arriving at $t_1$ can be strictly better off arriving at $t_2$.  

Combining all our observations, we conclude $\mathcal{S}(F^{(1)})$ will be an interval and we can assume it to be $\mathcal{S}(F^{(1)})=[T^{(1)}_a,T^{(1)}_f]$ for some $T^{(1)}_a<0$ (by Lemma \ref{lem_inst2_eqpref_reg1_case1sign}) and $T^{(1)}_f>T^{(1)}_a$. Note that $T_f=T^{(1)}_f>T^{(2)}_f$.

\textbf{Getting the arrival rates and support boundaries:}\hspace{0.05in}We observed that, queue 2 stays engaged in $[0,T^{(2)}_f]$, queue 2 empties at $T^{(2)}_f$, queue 1 stays engaged in $[0,T^{(1)}_f]$ and $T^{(1)}_f>T^{(2)}_f$. Moreover, queue 1 must be empty at $T^{(1)}_f$ by an argument similar to the one used for proving Lemma \ref{lem_inst2_uneqpref_queue1idle}. As a result, by Lemma \ref{lem_arrivalrates_inst2_eqpref}, arrival rates in any EAP with $T_f>T^{(2)}_f$ must be:
\begin{align}\label{eq:rates_inst2reg1eqpref}
    (F^{(1)})^\prime(t)&=\begin{cases}
        \mu_1~~&\text{if}~~t\in[T^{(1)}_a,\tau_1^{-1}(T^{(2)}_f)],\\
        \mu_1\gamma~~&\text{if}~~t\in[\tau_1^{-1}(T^{(2)}_f),T^{(1)}_f].
    \end{cases},~\text{and}~(F^{(2)})^\prime(t)=\begin{cases}
        \mu_2\gamma~~&\text{if}~t\in[T^{(2)}_a,0],\\
        \mu_2\gamma-\mu_1~~&\text{if}~t\in[0,T^{(2)}_f].
    \end{cases}
\end{align}
where $\tau_1^{-1}(T^{(2)}_f)$ is well-defined because, queue 1 is engaged in $[0,T^{(1)}_f]$ and as a result, (by \ref{eq:derv_of_tau}), $\tau_1(\cdot)$ is strictly increasing in $[0,T^{(1)}_f]$. Therefore, $T^{(1)}_a,T^{(1)}_f,T^{(2)}_a,T^{(2)}_f$ satisfy the system of equations which was satisfied by the support boundaries of any Type I candidate in the proof of Lemma \ref{lem:inst2reg1case1}, except replacing $\gamma^{(1)}=\gamma^{(2)}=\gamma$. Solution to that system of equations is, $T^{(1)}_a=-\frac{1-\gamma}{\gamma}\left(\frac{\Lambda^{(1)}}{\mu_1}-\frac{\Lambda^{(2)}}{\mu_2-\mu_1}\right)$, $T^{(1)}_f=\frac{\Lambda^{(1)}}{\mu_1}$, $T^{(2)}_a=-\frac{1-\gamma}{\gamma}\frac{\Lambda^{(2)}}{\mu_2-\mu_1}$, and $T^{(2)}_f=\frac{\Lambda^{(2)}}{\mu_2-\mu_1}$.

\textbf{Obtaining the necessary condition and proving its sufficiency:}~The support boundaries obtained must satisfy $T^{(1)}_a<0$ and $T^{(1)}_f>T^{(2)}_f$. Upon imposing them, we get $\Lambda^{(1)}>\frac{\mu_1}{\mu_2-\mu_1}\Lambda^{(2)}$ and this is the necessary condition for existence of an EAP with $T_f>T^{(2)}_f$. Moreover, if the necessary condition is true, it is easy to verify that, the support boundaries satisfy $T^{(1)}_a<0$ and $T^{(1)}_f>T^{(2)}_f$. Putting these support boundaries in (\ref{eq:rates_inst2reg1eqpref}), we get only one candidate with $T_f>T^{(2)}_f$, which qualifies to be an EAP, and it has a closed form same as the joint arrival profile mentioned under case 1 of Theorem \ref{mainthm_inst2_eqpref_reg1}. Proving that the unique remaining candidate is an EAP follows by the same argument used in the proof of Lemma \ref{lem:inst2reg1case1} for proving that the unique remaining Type I candidate is an EAP. Therefore, Lemma \ref{lem:inst2reg1eqpref1} stands proved.
\end{proof}

\begin{lemma}\label{lem:inst2reg1eqpref2}
    If $\mu_1<\mu_2\gamma$ and $\gamma^{(1)}=\gamma^{(2)}=\gamma$, there exists an EAP with $T^{(2)}_f=T_f$ if and only if $\Lambda^{(1)}\leq\frac{\mu_1}{\mu_2-\mu_1}\Lambda^{(2)}$ and it it exists, the set of all such EAPs is given by the convex set mentioned under case 2 of Theorem \ref{mainthm_inst2_eqpref_reg1}.
\end{lemma}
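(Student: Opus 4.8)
The plan is to mirror the two-part structure used for the equal-preference HDS case (Lemma \ref{lem:inst1eqprefcase2}): first pin down the support boundaries and show that the three defining constraints of the convex set are \emph{forced} in any EAP with $T_{2,f}=T_f$ (necessity), then show the set is non-empty exactly when $\Lambda_1\le\frac{\mu_1}{\mu_2-\mu_1}\Lambda_2$ and that every member is an EAP (sufficiency).

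First I would fix the geometry. Since $\mu_1<\mu_2\gamma$, the argument of Lemma \ref{lem_supp_are_intervals_inst2}(2) extends verbatim to equal preferences, so $\mathcal{S}(F_2)=[T_a,T_f]$ is an interval. By Lemma \ref{lem_inst2_eqpref_reg1_case2sign} class $1$ begins arriving after time zero, whence $T_{1,a}>0>T_{2,a}=T_a$ and $[0,T_f]\subseteq\mathcal{S}(F_2)$. As in Lemma \ref{lem:inst1eqprefcase2}, because class $2$ arrives throughout $[0,T_f)$ and the network is non-empty there, queue $2$ stays engaged on $[0,T_f)$ and (by Lemma \ref{lem_inst2_uneqpref_queue2idle}, whose proof only needs $\mu_1<\mu_2\gamma$) empties at $T_f$; serving the whole mass gives $T_f=\frac{\Lambda_1+\Lambda_2}{\mu_2}$. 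Class $2$ iso-cost, with the first user served at time $0$ and the last departing at $T_f$, then yields $T_a=-\big(\tfrac1\gamma-1\big)\frac{\Lambda_1+\Lambda_2}{\mu_2}$, matching case $2$ of Theorem \ref{mainthm_inst2_eqpref_reg1}.

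For necessity of the rate constraints, the key observation is that queue $1$ never builds a backlog. By the argument of Lemma \ref{lem_inst2_uneqpref_queue1idle} queue $1$ empties at $T_{1,f}\le T_f$, so $\tau_1(t)\in[0,T_f]\subseteq\mathcal{S}(F_2)$ for every class-$1$ arrival time $t$; the rate Lemma \ref{lem_arrivalrates_inst2_eqpref} then places every such $t$ in the case $\tau_1(t)\in\mathcal{S}(F_2)$, forcing $F_1'(t)\le\mu_1$. Consequently $A_1'\le\mu_1$ on $[0,T_f]$ with $A_1(0)=0$, so $\tau_1(t)=t$ there, class $1$ enters queue $2$ at its arrival time ($Y_1'=F_1'$), and the engaged-queue identity (\ref{eq:derv_of_tau}) together with (\ref{eq:isocost_inst2_grp2}) gives $A_2'(t)=\mu_2\gamma$, i.e.\ $F_1'(t)+F_2'(t)=\mu_2\gamma$ on $[0,T_f]$ and (since $Y_1'=0$ before zero) $F_1'(t)=0,\ F_2'(t)=\mu_2\gamma$ on $[T_a,0]$. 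These are exactly constraints (1)--(3), and integrating $F_1'\le\mu_1$ over $[0,T_f]$ gives $\Lambda_1\le\mu_1T_f=\frac{\mu_1}{\mu_2}(\Lambda_1+\Lambda_2)$, i.e.\ the stated inequality; so this inequality is necessary and every EAP with $T_{2,f}=T_f$ lies in the convex set.

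Finally, for sufficiency I would verify the set is non-empty under $\Lambda_1\le\frac{\mu_1}{\mu_2-\mu_1}\Lambda_2$ (the constant-rate split $F_1'\equiv\Lambda_1/T_f\le\mu_1$ on $[0,T_f]$ works, and the limits of cases 1b, 2b of Theorem \ref{mainthm_inst2_reg1} give further members), and then that every member is an EAP. For any such profile, $F_1'\le\mu_1$ forces $\tau_1(t)=t$ (queue $1$ empty), so $A_2'=F_1'+F_2'=\mu_2\gamma<\mu_2$ on $[0,T_f]$ with $A_2(T_f)=\Lambda_1+\Lambda_2=\mu_2T_f$; hence $A_2(t)-\mu_2t$ is strictly decreasing and positive on $[0,T_f)$, so queue $2$ is engaged on $[0,T_f)$ and empties at $T_f$. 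Because queue $1$ is transparent, for $t\ge0$ both classes see departure time $\tau_2(t)$, so $C_{\mathbf{F}}^{(1)}(t)=C_{\mathbf{F}}^{(2)}(t)=\tau_2(t)-\gamma t$, which is constant on $[T_a,T_f]$ (as $\tau_2'=\gamma$) and, by the usual boundary check ($-\gamma t$ to the left of the support, $(1-\gamma)t$ to the right), strictly larger off the support; this certifies the EAP conditions for both classes. Convexity of the set is immediate since all defining constraints are linear in $(F_1',F_2')$ with fixed endpoint masses. \textbf{The main obstacle} is the sufficiency step: unlike the unique-EAP regimes, here $F_1'$ may be an arbitrary measurable split of the total rate $\mu_2\gamma$ subject only to $F_1'\le\mu_1$, so the queue-dynamics and iso-cost verification must be carried out without assuming piecewise-linear profiles; the crucial simplification being that $F_1'\le\mu_1$ alone already guarantees queue $1$ carries no backlog and hence both classes experience the identical cost $\tau_2(t)-\gamma t$.
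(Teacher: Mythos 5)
Your proposal is correct and follows essentially the same route as the paper's proof: class~1 starts after time zero so queue~1 carries no backlog, queue~2 engaged on $[0,T_f)$ and emptying at $T_f$ pins down $T_f=\frac{\Lambda_1+\Lambda_2}{\mu_2}$ and (via class-2 iso-cost) $T_a$, Lemma~\ref{lem_arrivalrates_inst2_eqpref} forces the three defining constraints and the bound $\mu_1T_f\geq\Lambda_1$, and sufficiency is checked by the identical cost-derivative computation. The only (harmless) cosmetic differences are your explicit appeal to Lemma~\ref{lem_supp_are_intervals_inst2}(2) for the interval structure of $\mathcal{S}(F_2)$ and the constant-rate split as an extra non-emptiness witness alongside the limits of cases 1b and 2b of Theorem~\ref{mainthm_inst2_reg1}.
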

\begin{proof}
\textbf{Identifying $\mathbf{T_a}$ and $\mathbf{T_f}$:}\hspace{0.05in}By Lemma \ref{lem_inst2_eqpref_reg1_case2sign}, class 1 users start arriving after time zero, implying $T_a=T^{(2)}_a$. Queue 2 will be empty at $T^{(2)}_f$ (by Lemma \ref{lem_inst2_uneqpref_queue2idle}) and has positive waiting time in $[0,T^{(2)}_f)$. So, by Lemma \ref{lem_arrivalrates_inst2_eqpref}, class 1 users arrive between $[0,T^{(2)}_f]$ at a maximum rate of $\mu_1$ and as a result queue 1 stays empty in $[0,T^{(2)}_f]$. Since queue 2 serves all the users between $[0,T^{(2)}_f]$, we have $T_f=T^{(2)}_f=\frac{\Lambda^{(1)}+\Lambda^{(2)}}{\mu_2}$. By Lemma \ref{lem_arrivalrates_inst2_eqpref}, since $(F^{(1)})^\prime(t)\leq \mu_1$ in $[0,T^{(2)}_f]$, we have $(F^{(2)})^\prime(t)\geq \mu_2\gamma-\mu_1$ in $[0,T^{(2)}_f]$, implying $[0,T^{(2)}_f]\subseteq\mathcal{S}(F^{(2)})$. Therefore $[T_{a},T^{(2)}_f]=\mathcal{S}(F^{(2)})$ and as a result, $C_{\mathbf{F}}^{(2)}(T_a)=C_{\mathbf{F}}^{(2)}(T^{(2)}_f)$. This gives us $T_{a}=-\left(\frac{1}{\gamma}-1\right)\frac{\Lambda^{(1)}+\Lambda^{(2)}}{\mu_2}$. 

\textbf{Getting the necessary condition:}\hspace{0.05in}The class 1 users have to arrive between $[0,T_{f}]$ at a maximum rate of $\mu_1$. Therefore, we must have $\mu_1 T_f \geq \Lambda^{(1)}$, which implies $\Lambda^{(1)}\leq\frac{\mu_1}{\mu_2-\mu_1}\Lambda^{(2)}$ and this is the necessary condition for existence of an EAP with $T_f=T^{(2)}_f$.

\noindent\textbf{Identifying the convex set:}\hspace{0.05in}Since queue 1 stays empty and queue 2 stays engaged in $[0,T_f]$, by Lemma \ref{lem_arrivalrates_inst2_eqpref}, every EAP with $T_f=T^{(2)}_f$ must satisfy: 
\begin{enumerate}[leftmargin=*]
    \item[\textbf{1.}] $(F^{(1)})^\prime(t)=0$ and $(F^{(2)})^\prime(t)=\mu_2\gamma$ for $t\in[T_a,0]$, and
    \item[\textbf{2.}] $(F^{(1)})^\prime(t)\leq\mu_1$ and $(F^{(1)})^\prime(t)+(F^{(2)})^\prime(t)=\mu_2\gamma$ for $t\in[0,T_f]$.
    \item[\textbf{3.}] $F^{(1)}(T_f)=\Lambda^{(1)},~F^{(2)}(T_f)=\Lambda^{(2)}$. 
\end{enumerate}
The above set of candidates is same as the set mentioned under case 2 of Theorem \ref{mainthm_inst2_eqpref_reg1}. If the necessary condition is satisfied, the set of candidates defined by the above two properties is non-empty. It is easy to verify that, two elements of the set are the limit of the EAPs in cases 1b (in Figure \ref{fig:inst2reg1case12}) and 2b (in Figure \ref{fig:inst2reg1case22}) of Theorem \ref{mainthm_inst2_reg1}, respectively, upon taking limits $\gamma^{(2)}=\gamma$,~$\gamma^{(1)}\to\gamma+$ and $\gamma^{(1)}=\gamma$,~$\gamma^{(2)}\to\gamma+$. This implies, the obtained set of candidates is non-empty if and only if the necessary condition holds.

\textbf{Proving sufficiency of the obtained necessary condition:}\hspace{0.05in}For every candidate in the obtained set, queue 1 remains empty in $[0,T_f]$ and class 1 users start arriving after time zero. As a result, $\tau_{\mathbf{F}}^{(1)}(\cdot)=\tau_{\mathbf{F}}^{(2)}(\cdot)=\tau_2(\cdot)$ in $[0,T_f]$ and therefore, both classes have same cost function. Note that every candidate satisfies $F^{(1)}(t)+F^{(2)}(t)>\mu_2\cdot\max\{0,t\}$ in $(T_a,T_f)$, causing queue 2 to have a positive waiting time in $[0,T_f)$. Therefore, by (\ref{eq:derv_of_tau}), $\tau_2^\prime(t)=\frac{(F^{(1)})^\prime(t)+(F^{(2)})^\prime(t)}{\mu_2}=\gamma$ in $[T_a,T_f]$, and as a result,  $(C_{\mathbf{F}}^{(1)})^\prime(t)=0$ in $[0,T_f]$ and $(C_{\mathbf{F}}^{(2)})^\prime(t)=0$ in $[T_a,T_f]$. Every class 1 user arriving before time zero gets served at $\tau_2(0)$, causing $(C_{\mathbf{F}}^{(1)})^\prime(t)=-\gamma$ in $(-\infty,0)$. Similarly every class 2 user arriving before time $T_a$, gets served at time zero, causing $(C_{\mathbf{F}}^{(2)})^\prime(t)=-\gamma$ in $(-\infty,T_a)$. After $T_f$, since queue 2 stays idle, we have $(C_{\mathbf{F}}^{(i)})^\prime(t)=1-\gamma$ in $(T_f,\infty)$ for $i=1,2$.  Therefore, for every candidate in the obtained set, every class have their cost constant in their support and higher outside. Hence, every candidate in the set  is an EAP. Therefore, if the necessary condition holds, the obtained set of candidates is the set of all EAPs with $T_f=T^{(2)}_f$, and hence Lemma \ref{lem:inst2reg1eqpref2} stands proved.
\end{proof}

\subsubsection{Supporting lemmas for proving Theorem \ref{mainthm_inst2_eqpref_reg2}}
The following two lemmas will be helpful for proving Theorem \ref{mainthm_inst2_eqpref_reg2}. 
\begin{lemma}\label{lem_inst2_eqpref_reg2_sign}
    If $\mu_1\geq\mu_2\gamma$ and all the class 2 users have arrived before time zero by the arrival profile \newline $(F^{(2)})^\prime(t)=\mu_2\gamma\cdot\mathbb{I}\left(t\in\left[-\frac{\Lambda^{(2)}}{\mu_2\gamma},0\right]\right)$, then class 1 users must start arriving before time zero and $\mathcal{S}(F^{(1)})$ must be an interval.
\end{lemma}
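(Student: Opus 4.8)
The plan is to work with an arbitrary EAP $\mathbf{F}=\{F_1,F_2\}$ whose class~2 profile is the given one, so that $\mathcal{S}(F_2)=\left[-\tfrac{\Lambda_2}{\mu_2\gamma},0\right]$ and $Q_2(0)=\Lambda_2$. The whole argument rests on one preliminary observation about queue~2: every class~2 user reaches queue~2 directly at a time $\le 0$, whereas every class~1 user reaches it only through queue~1 and hence at a time $\tau_1(\cdot)\ge 0$, so in the FIFO order at queue~2 all of class~2 sits ahead of all of class~1. Consequently a class~2 user arriving at $t\in\left[-\tfrac{\Lambda_2}{\mu_2\gamma},0\right]$ has exactly $F_2(t)=\mu_2\gamma t+\Lambda_2$ mass ahead of her and, since queue~2 serves continuously at rate $\mu_2$ from time $0$ onwards, departs at $\tau_2(t)=F_2(t)/\mu_2=\gamma t+\tfrac{\Lambda_2}{\mu_2}$. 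Her cost is therefore $C^{(2)}_{\mathbf F}(t)=\tau_2(t)-\gamma t=\tfrac{\Lambda_2}{\mu_2}$, a constant; I will use this value as the equilibrium cost of class~2.

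First I would prove $T_{1,a}\le 0$ by contradiction. Suppose $T_{1,a}>0$ and pick any $s\in\bigl(0,\min\{T_{1,a},\Lambda_2/\mu_2\}\bigr)$, a nonempty interval. On $(0,s]$ no class~1 user has yet entered queue~2 (the first reaches it at $\tau_1(T_{1,a})\ge T_{1,a}>s$), so the content at $s$ is just the residual class~2 mass $Q_2(s)=\Lambda_2-\mu_2 s>0$. A class~2 user deviating to $s$ would then depart at $s+Q_2(s)/\mu_2=\Lambda_2/\mu_2$, incurring cost $\Lambda_2/\mu_2-\gamma s<\Lambda_2/\mu_2$, strictly below the equilibrium class~2 cost computed above. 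This contradicts the definition of an EAP, so $T_{1,a}\le 0$, i.e.\ class~1 does not start arriving after time zero.

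Next I would show $\mathcal{S}(F_1)$ is an interval. By the first part of Lemma~\ref{lem_supp_are_intervals_inst2}, $\mathcal{S}(F_1)\cup\mathcal{S}(F_2)$ is an interval; hence any maximal gap $(t_1,t_2)$ of $\mathcal{S}(F_1)$ (with $t_1,t_2\in\mathcal{S}(F_1)$ and $F_1(t_1)=F_1(t_2)=:m$) must be covered by $\mathcal{S}(F_2)=\left[-\tfrac{\Lambda_2}{\mu_2\gamma},0\right]$, so in particular $t_1<t_2\le 0$. Both users thus arrive before queue~1 begins service, and each has exactly the same mass $m$ of class~1 ahead of her in queue~1; since queue~1 serves FIFO at rate $\mu_1$ from time $0$, both leave queue~1 at the identical instant $m/\mu_1$, enter queue~2 together, and therefore depart the network at a common time $\tau^{(1)}_{\mathbf F}(t_1)=\tau^{(1)}_{\mathbf F}(t_2)=:D$. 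The class~1 cost then reduces to $C^{(1)}_{\mathbf F}(t_i)=D-\gamma t_i$, which is strictly larger at $t_1$ than at $t_2$; as both points lie in the iso-cost support, this is impossible. Hence $\mathcal{S}(F_1)$ has no gap and is an interval.

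The delicate point I would set up most carefully is the queue~2 ordering claim underlying both halves, namely that class~2 is served out entirely before any class~1 reenters queue~2, since this is what pins the class~2 equilibrium cost at $\Lambda_2/\mu_2$ and makes the two deviations above exact. Everything else is a short computation with the HAS departure-time formulas and the iso-cost property; the regime hypothesis $\mu_1\ge\mu_2\gamma$ serves mainly to place us in the parameter range where the posited all-before-zero class~2 profile is the relevant one (for $\mu_1<\mu_2\gamma$ the companion lemmas of Section~\ref{appndx:inst2_eqpref} apply instead). I would also note that the equality $T_{1,a}=0$ arises only at the boundary $\Lambda_2=\left(\tfrac1\gamma-1\right)\Lambda_1$, so that away from it the conclusion is the strict statement that class~1 starts arriving strictly before time zero.
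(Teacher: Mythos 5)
Your proof is correct and follows essentially the same route as the paper's: the interval property of $\mathcal{S}(F_1)$ is established by exactly the paper's deviation argument for a gap with $t_1<0$ (your observation that any gap must be covered by $\mathcal{S}(F_2)\subseteq(-\infty,0]$ replaces the paper's separate one-line treatment of the case $t_1\ge 0$, both resting on Lemma \ref{lem_supp_are_intervals_inst2}). The only difference is in the claim $T_{1,a}\le 0$: the paper simply cites Lemma \ref{lem_supp_are_intervals_inst2} (a gap $(0,T_{1,a})$ in $\mathcal{S}(F_1)\cup\mathcal{S}(F_2)$ is impossible), whereas you unroll the underlying deviation by computing the class-2 equilibrium cost $\Lambda_2/\mu_2$ and exhibiting a profitable class-2 deviation into the putative gap --- a self-contained but equivalent argument.
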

\begin{proof} 
If class 1 users starts to arrive from some positive time, there will be a gap in $\mathcal{S}(F^{(1)})\cup \mathcal{S}(F^{(2)})$ starting from time zero, contradicting Lemma \ref{lem_supp_are_intervals_inst2}. 

Now we assume a contradiction, \textit{i.e.}, there is a gap $(t_1,t_2)$ in $\mathcal{S}(F^{(1)})$ with $t_1,t_2\in \mathcal{S}(F^{(1)})$, $t_2>t_1$ and $F^{(1)}(t_2)=F^{(1)}(t_1)$. If $t_1<0$, by (\ref{eq:derv_of_tau}), $\tau_1(t_1)=\tau_1(\min\{t_2,0\})$, and as a result, the class 1 user arriving at $t_1$ will be strictly better off arriving at $\min\{t_2,0\}$.

On the other hand, if $t_1\geq 0$, $(t_1,t_2)$ is a gap in $\mathcal{S}(F^{(1)})\cup \mathcal{S}(F^{(2)})$ contradicting Lemma \ref{lem_supp_are_intervals_inst2}.
\end{proof}

\begin{lemma}\label{lem_inst2_eqpref_reg2_sign2}
    If $\mu_1\geq\mu_2\gamma$ and a positive mass of class 2 users arrive after time zero in the EAP, class 1 users cannot start arriving before time zero.
\end{lemma}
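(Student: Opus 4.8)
The plan is to argue by contradiction: assume $\mu_1 \geq \mu_2\gamma$, that $\mathcal{S}(F_2)$ contains some point $t_0 > 0$, yet $T_{1,a} = \inf\mathcal{S}(F_1) < 0$. The entire argument rests on one coincidence special to equal preferences. Writing $g(t) \overset{def.}{=} \tau_2(t) - \gamma t$, note that the normalized cost of \emph{any} user (of either class) who arrives at a time when queue 1 carries no backlog equals $g(t)$: such an arrival bypasses queue 1, so $\tau_1(t) = t$, its network departure is $\tau_2(t)$, and its cost is $\tau^{(\cdot)}_{\mathbf{F}}(t) - \gamma t = g(t)$. Thus at a queue-1-free instant a class 1 user can attain exactly the cost a class 2 user would attain there, and I will exploit this to give the first class 1 user a profitable deviation.

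First I would pin down two cost values. Since $T_{1,a} < 0$, the first class 1 user accumulates at queue 1 before service opens, and because $F_1(T_{1,a}) = 0$ we get $\tau_1(T_{1,a}) = A_1(T_{1,a})/\mu_1 = 0$; hence it leaves queue 1 at time $0$ and departs the network at $\tau_2(0)$, giving equilibrium cost $C^{(1)}_{\mathbf{F}}(T_{1,a}) = \tau_2(0) - \gamma T_{1,a} = g(0) + \gamma\lvert T_{1,a}\rvert$. On the class 2 side, the iso-cost condition on $\mathcal{S}(F_2)$ makes $\bar C_2 := C^{(2)}_{\mathbf{F}}(t_0) = g(t_0)$ the common class 2 cost, and the EAP optimality of $t_0$ for class 2 against the alternative time $0$ gives $\bar C_2 = C^{(2)}_{\mathbf{F}}(t_0) \le C^{(2)}_{\mathbf{F}}(0) = \tau_2(0)$.

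The crux is then to show that queue 1 is idle at $t_0$, so that a class 1 user deviating to $t_0$ really does realize the cost $g(t_0) = \bar C_2$. This is where the regime $\mu_1 \geq \mu_2\gamma$ enters: by Lemma \ref{lem_arrivalrates_inst2_eqpref}, a class 2 arrival at a time $t>0$ at which queue 1 is engaged (discharging class 1 into queue 2 at rate $\mu_1$) would require $F_2'(t) = \mu_2\gamma - \mu_1 \le 0$, which cannot sustain positive class 2 mass; hence every post-zero point of $\mathcal{S}(F_2)$, in particular $t_0$, lies at a time where queue 1 is empty. Consequently a class 1 user arriving at $t_0$ passes queue 1 instantly and incurs $C^{(1)}_{\mathbf{F}}(t_0) = g(t_0) = \bar C_2$. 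Applying EAP optimality for class 1 at $T_{1,a} \in \mathcal{S}(F_1)$ against the time $t_0$ now yields $C^{(1)}_{\mathbf{F}}(T_{1,a}) \le C^{(1)}_{\mathbf{F}}(t_0)$, i.e. $\tau_2(0) - \gamma T_{1,a} \le \bar C_2 \le \tau_2(0)$. The outer inequality forces $-\gamma T_{1,a} \le 0$, so $T_{1,a} \ge 0$, contradicting $T_{1,a} < 0$.

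I expect the main obstacle to be the emptiness-of-queue-1-at-$t_0$ step, since it is the only place the hypothesis $\mu_1 \ge \mu_2\gamma$ is used and it requires reading off the correct branch of the arrival-rate dichotomy in Lemma \ref{lem_arrivalrates_inst2_eqpref} (and, if one wants to be fully rigorous, noting that $t<0$ must be excluded from the ``queue 1 engaged'' regime because queue 1 does not discharge before service opens). The remaining bookkeeping — the value $\tau_1(T_{1,a}) = 0$, the identification of costs with $g$ under equal preferences, and the two EAP inequalities — is routine once this structural fact is in place, and notably requires no case analysis on the sign of $\mu_1 - \mu_2$ or on whether class 2 also arrives before time zero.
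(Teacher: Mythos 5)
Your proof is correct, and it takes a genuinely different route from the paper's. The paper argues dynamically: assuming class 1 starts before time zero, it lets $\tilde T$ be the first time queue 1 empties after zero, notes via Lemma \ref{lem_arrivalrates_inst2_eqpref} that the residual class 2 mass can only arrive after queue 1 idles, and then splits into cases --- for $\mu_1>\mu_2\gamma$ it shows $(C_{\mathbf{F}}^{(2)})^\prime(t)=\frac{\mu_1}{\mu_2}-\gamma>0$ on $[0,\tilde T]$ and zero thereafter until class 2 resumes, so the post-zero class 2 arrivals would strictly prefer arriving before zero; for $\mu_1=\mu_2\gamma$ it shows $Q_1$ cannot decrease on $[0,\tilde T]$, so queue 1 never empties. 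You instead make a single static cost comparison: the chain $C^{(1)}_{\mathbf{F}}(T_{1,a})\le C^{(1)}_{\mathbf{F}}(t_0)=C^{(2)}_{\mathbf{F}}(t_0)\le C^{(2)}_{\mathbf{F}}(0)=\tau_2(0)$ set against $C^{(1)}_{\mathbf{F}}(T_{1,a})=\tau_2(0)-\gamma T_{1,a}$ kills $T_{1,a}<0$ outright, with no case split on $\mu_1$ versus $\mu_2\gamma$ and no tracking of the queue--length trajectory. Both proofs rest on the same structural input --- the branch of Lemma \ref{lem_arrivalrates_inst2_eqpref} giving $F_2^\prime=\mu_2\gamma-\mu_1\le 0$ on $\mathcal{S}(F_2)\cap E_1$ --- but you convert it into the existence of a point $t_0>0$ of $\mathcal{S}(F_2)$ with $Q_1(t_0)=0$, where the two classes' costs coincide because $\gamma_1=\gamma_2$. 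The one spot needing a touch more care is that the rate identity holds only almost everywhere, so you should select $t_0$ as a point of positive $F_2$-density in $(0,\infty)$ (such a point exists precisely because a positive class 2 mass arrives after zero and $F_2$ is absolutely continuous); alternatively, observe that $Q_1(t)>0$ forces $F_2$ to be locally constant at $t$ and hence $t\notin\mathcal{S}(F_2)$, which vindicates your stronger claim that every post-zero point of the support sees an empty queue 1. This is a refinement, not a gap.
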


\begin{proof} 
On assuming a contradiction, there will be a positive waiting queue in queue 1 at time zero. So class 1 users will be arriving from queue 1 to 2 at a rate $\mu_1$ from time zero and the remaining mass of class 2 players will start arriving from some time after queue 1 has idled. Let $\tilde{T}$ be the first time after time zero at which queue 1 empties. Then $[0,\tau_1(\tilde{T})]\subseteq E_2$, since the remaining mass of class 2 users will start arriving after $\tau_1(\tilde{T})$. Therefore, class 1 users will arrive between $[0,\tau_1(\tilde{T})]$ at rate $\mu_2\gamma$ by Lemma \ref{lem_arrivalrates_inst2_eqpref}. Two situations might happen:  

\begin{itemize}[leftmargin=*]
    \item If $\mu_1>\mu_2\gamma$, till queue 1 empties, $\tau_2^\prime(t)=\frac{\mu_1}{\mu_2}$. As a result, $(C_{\mathbf{F}}^{(2)})^\prime(t)=\frac{\mu_1}{\mu_2}-\gamma>0$ till $T$. After that, till class 2 players start arriving, cost of class 2 users will not change since class 1 users will be arriving at queue 2 at a rate $\mu_2\gamma$ by Lemma \ref{lem_arrivalrates_inst2_eqpref}. Therefore, the class 2 users arriving after time zero will be strictly better off arriving before time zero.
    
    \item If $\mu_1=\mu_2\gamma$, waiting time in queue 1 will  not change between $[0,T]$. Therefore $Q_1(T)=Q_1(0)>0$, which contradicts with the fact that queue 1 empties at $T$. \vspace{-0.2in}
\end{itemize}
\end{proof}
\vspace{-0.15in}
\subsubsection{Proof of Theorem \ref{mainthm_inst2_eqpref_reg2}}
In any EAP, class 2 users start arriving before time zero. There are two possibilities: \textbf{1)} the whole class 2 population arrive before time zero to queue 2 at rate $\mu_2\gamma$ given by Lemma \ref{lem_arrivalrates_inst2_eqpref} or, \textbf{2)} a positive mass of class 2 users arrive after time zero. In the first case, we can assume by Lemma \ref{lem_inst2_eqpref_reg2_sign}, $\mathcal{S}(F^{(1)})=[T^{(1)}_a,T^{(1)}_f]$ with $T^{(1)}_a\leq 0$. By an argument similar to the one used while proving Lemma \ref{lem_inst2_uneqpref_queue1idle}, queue 1 must be empty at $T^{(1)}_f$. Let $T=\inf~[0,\infty)\cap\overline{E_2}^c$. 

For every EAP under the first case, we must have $0<T\leq T^{(1)}_f$. Otherwise, if $T>T^{(1)}_f$, since queue 1 stays empty, and queue 2 stays engaged  in $[T^{(1)}_f,T]$, by (\ref{eq:derv_of_tau}), $\tau_{\mathbf{F}}^{(1)}(T^{(1)}_f)=\tau_{\mathbf{F}}^{(1)}(T)$. As a result, the last class 1 user arriving at $T^{(1)}_f$ becomes strictly better off arriving at $T$. Now every EAP can have three possible structures:
\begin{enumerate}
    \item[\textbf{1)}] Type I: All class 2 users arrive before time zero and $T<T^{(1)}_f$. 
    \item[\textbf{2)}] Type II: All class 2 users arrive before time zero and $T=T^{(1)}_f$. 
    \item[\textbf{3)}] Type III: A positive mass of class 2 users arrive after time zero.
\end{enumerate}

The statement of Theorem \ref{mainthm_inst2_eqpref_reg2} follows from Lemma \ref{lem:inst2reg2eqpref1}, \ref{lem:inst2reg2eqpref2}, and \ref{lem:inst2reg2eqpref3} below.

\begin{lemma}\label{lem:inst2reg2eqpref1}
    If $\gamma^{(1)}=\gamma^{(2)}=\gamma$ and $\mu_1\geq\mu_2\gamma$, there exists an EAP of Type I if and only if $\Lambda^{(2)}<\left(\frac{\mu_2}{\mu_1}-1\right)\Lambda^{(1)}$, and if it exists, it will be unique with a closed form same as the joint arrival profile mentioned under case 1 of Theorem \ref{mainthm_inst2_eqpref_reg2}.
\end{lemma}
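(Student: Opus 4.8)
The plan is to mirror, in the equal-preference setting, the Type~I portion of the proof of Lemma~\ref{lem:inst2reg2case2}, replacing the unequal-preference arrival-rate and sign lemmas by their $\gamma_1=\gamma_2=\gamma$ analogues. I work under the standing regime $\mu_1\geq\mu_2\gamma$ together with the Type~I hypotheses (the entire class~2 mass arrives before time zero and $T<T_{1,f}$, where $T$ is the first time queue~2 empties after $0$); note the target condition $\Lambda_2<\left(\tfrac{\mu_2}{\mu_1}-1\right)\Lambda_1$ forces $\mu_2>\mu_1$, so $T=\tfrac{\Lambda_2}{\mu_2-\mu_1}>0$ is well defined. \textbf{Getting the arrival rates:} First I would invoke Lemma~\ref{lem_inst2_eqpref_reg2_sign} to conclude that class~1 starts arriving before time zero, that $\mathcal{S}(F_1)=[T_{1,a},T_{1,f}]$ with $T_{1,a}\leq 0$, and that class~2 arrives at rate $\mu_2\gamma$ on $\left[-\tfrac{\Lambda_2}{\mu_2\gamma},0\right]$, so the whole mass $\Lambda_2$ waits in queue~2 at time $0$ and queue~2 is engaged there. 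By the argument of Lemma~\ref{lem_inst2_uneqpref_queue1idle}, queue~1 must empty at $T_{1,f}$, and as in Lemma~\ref{lem:inst2reg2case2} one shows queue~1 stays engaged on $[T_{1,a},T_{1,f}]$ (otherwise the network would be empty on a subinterval of $[\,\max\{T,\tilde T\},T_{1,f}]$, impossible in an EAP). Feeding these facts into Lemma~\ref{lem_arrivalrates_inst2_eqpref} yields the two-phase class~1 rate: while $\tau_1(t)<T$ queue~2 is engaged, placing $t$ in the second line of Lemma~\ref{lem_arrivalrates_inst2_eqpref} and giving $F_1'(t)=\mu_2\gamma$, whereas once $\tau_1(t)\geq T$ queue~2 is empty while queue~1 remains engaged, giving $F_1'(t)=\mu_1\gamma$. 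Since queue~1 is engaged from $0$, $\tau_1(\cdot)$ is strictly increasing there by (\ref{eq:derv_of_tau}), so $\tau_1^{-1}(T)$ is well defined and the relation $\mu_1 T=F_1(\tau_1^{-1}(T))$ expresses it in terms of the boundaries exactly as in (\ref{eq_rates_inst2_uneqpref_reg2_case1}).

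\textbf{Identifying the boundaries and the necessary condition:} Next I set up the same linear system in $T_{1,a},T_{1,f},T$ as for Type~I in Lemma~\ref{lem:inst2reg2case2} with $\gamma_1=\gamma_2=\gamma$: (i) all of $\Lambda_1$ arrives over $[T_{1,a},T_{1,f}]$ at the two rates above; (ii) queue~1 starts at $0$ and empties at $T_{1,f}$, so $T_{1,f}=\Lambda_1/\mu_1$; (iii) queue~2 starts at $0$ and empties at $T$, serving $\Lambda_2$ plus the class~1 mass pushed in over $[0,T]$. Its solution is $T_{1,a}=\tfrac{\Lambda_2}{\mu_2\gamma}-\left(\tfrac{1}{\gamma}-1\right)\tfrac{\Lambda_1}{\mu_1}$, $T_{1,f}=\tfrac{\Lambda_1}{\mu_1}$, $T=\tfrac{\Lambda_2}{\mu_2-\mu_1}$, i.e.\ the profile of case~1 of Theorem~\ref{mainthm_inst2_eqpref_reg2}. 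Imposing the Type~I orderings $T_{1,a}\leq 0$ and $T<T_{1,f}$ on this solution gives $\Lambda_2<\left(\tfrac{\mu_2}{\mu_1}-1\right)\Lambda_1$, which is therefore necessary; conversely, when it holds those orderings are met, leaving a single candidate, so uniqueness within Type~I is immediate.

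\textbf{Sufficiency:} Finally I verify this candidate is an EAP by the sign-of-cost-derivative computation of Lemmas~\ref{lem:inst2reg2case1} and \ref{lem:inst2reg2case2}: check $F_1(t)>\mu_1\max\{t,0\}$ on $[T_{1,a},T_{1,f})$ with equality at $T_{1,f}$ (queue~1 engaged, empties at $T_{1,f}$), and $A_2(t)>\mu_2 t$ on $[0,T)$ with equality at $T$ (queue~2 engaged, empties at $T$ and never re-engages, since class~1 feeds it at rate $\leq\mu_1<\mu_2$ thereafter). Using (\ref{eq:derv_of_tau}) and Lemma~\ref{lem_arrivalrates_inst2_eqpref} one then obtains $(C^{(1)}_{\mathbf F})'=0$ on $[T_{1,a},T_{1,f}]$ and $(C^{(2)}_{\mathbf F})'=0$ on $\left[-\tfrac{\Lambda_2}{\mu_2\gamma},0\right]$, with the derivatives nonpositive before and nonnegative after each support, so both classes are iso-cost on their supports and strictly worse off elsewhere.

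The main obstacle I anticipate is the queue-one bookkeeping in the arrival-rate step: because both candidate rates $\mu_2\gamma$ and $\mu_1\gamma$ are strictly below $\mu_1$, queue~1 cannot stay engaged on its own, and one must argue that the pre-zero accumulation $-\mu_2\gamma T_{1,a}$ together with the rate switch occurring \emph{exactly} at $\tau_1^{-1}(T)$ keeps $Q_1$ positive on $[0,T_{1,f})$ and drives it to $0$ precisely at $T_{1,f}$ --- equivalently, that $0\leq\tau_1^{-1}(T)<T_{1,f}$. This coupling is what pins down $T=\Lambda_2/(\mu_2-\mu_1)$, and it is the one place where the equal-preference degeneracy $\mu_2\gamma_1=\mu_2\gamma_2$ must be routed through Lemmas~\ref{lem_arrivalrates_inst2_eqpref}, \ref{lem_inst2_eqpref_reg2_sign} and \ref{lem_inst2_eqpref_reg2_sign2} rather than the strict-inequality arguments available in the $\gamma_1\neq\gamma_2$ case.
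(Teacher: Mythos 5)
Your proposal is correct and follows essentially the same route as the paper's proof: derive the two-phase arrival rate for class 1 from Lemma \ref{lem_arrivalrates_inst2_eqpref} and the engagement of queues 1 and 2 on $[0,T_{1,f}]$ and $[0,T]$ respectively, solve the same linear system for $T_{1,a},T_{1,f},T$ as in the Type I case of Lemma \ref{lem:inst2reg2case1} with $\gamma_1=\gamma_2=\gamma$, read off the necessary condition from the orderings $T_{1,a}\leq 0$ and $T<T_{1,f}$, and verify sufficiency by the same cost-derivative computation. The queue-1 bookkeeping you flag as the main obstacle is exactly the point the paper handles via the $T_{1,idle}$ argument and the verification that $F_1(t)>\mu_1\max\{t,0\}$ on $[T_{1,a},T_{1,f})$ with equality at $T_{1,f}$, so no gap remains.
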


\begin{proof}
\textbf{Identifying $T^{(1)}_a,T^{(1)}_f,T$ and getting the arrival rates:}\hspace{0.05in}By Lemma \ref{lem_arrivalrates_inst2_eqpref}, for every EAP under Type I, class 1 users arrive at a rate $\mu_2\gamma$ in $[T^{(1)}_a,\tau_1^{-1}(T)]$. Since queue 2 empties at $T$ and $T<T^{(1)}_f$, queue 1 must have a positive waiting time at $\tau_1^{-1}(T)$. Since $\mu_2\gamma\leq\mu_1$, the previous statement implies queue 1 has a positive waiting time in $[T^{(1)}_a,\tau_1^{-1}(T)]$. Therefore, in $[0,T]$, class 1 users arrived at queue 2 from queue 1 at rate $\mu_1$. Since queue 2 empties at $T$, queue 2 stays empty after $T$ till the time class 1 users keep arriving at a rate $\mu_1$ at queue 2 from queue 1. Now consider $T_{1,idle}=\min{t\geq \tau_1^{-1}(T)~\vert~Q_1(t)=0}$. Then we must have $t>\tau_1^{-1}(T)$ and till $T_{1,idle}$, class 1 users arrive at queue 2 from queue 1 at rate $\mu_1$. As a result, queue 2 stays empty at $T_{1,idle}$. Note that we must have $T_{1,idle}\leq T^{(1)}_f$. Now if $T_{1,idle}<T^{(1)}_f$, the network stays empty at $T_{1,idle}$ with a positive mass of class 1 users arriving after $T_{1,idle}$, which is not possible in EAP. As a result, we must have $T_{1,idle}=T^{(1)}_f$. Hence queue 1 has a positive waiting time in $[0,T^{(1)}_f)$ and it empties at $T^{(1)}_f$. On the other hand, queue 2 has a positive waiting time in $[0,T)$ and it stays empty after $T$. 

By Lemma \ref{lem_arrivalrates_inst2_eqpref}, the arrival rates of the two classes will be:
\begin{align}\label{eq_rates_inst2_eqpref_reg2_case1}
    \forall t\in[T^{(1)}_a,T^{(1)}_f]~(F^{(1)})^\prime(t)&=\begin{cases}
        \mu_2\gamma~~&\text{if}~t\in[T^{(1)}_a,\tau_1^{-1}(T)], \\
        \mu_1\gamma~~&\text{if}~t\in[\tau_1^{-1}(T),T^{(1)}_f]
    \end{cases}~\text{and} \nonumber \\
     (F^{(2)})^\prime(t)&=\mu_2\gamma\cdot\mathbb{I}\left(t\in\left[-\frac{\Lambda^{(2)}}{\mu_2,\gamma},0\right]\right).
\end{align}

As a result, $T^{(1)}_a,T^{(1)}_f,T$ satisfies the system of equations obtained for Type I under case 1 ($\mu_2\gamma^{(1)}>\mu_1\geq\mu_2\gamma^{(2)}$) in the proof of Theorem \ref{mainthm_inst2_reg2}, except replacing $\gamma^{(1)}=\gamma^{(2)}=\gamma$. The solution to that system is $T^{(1)}_a=\frac{\Lambda^{(2)}}{\mu_2\gamma}-\left(\frac{1}{\gamma}-1\right)\frac{\Lambda^{(1)}}{\mu_1},~T=\frac{\Lambda^{(2)}}{\mu_2-\mu_1}$ and $T^{(1)}_f=\frac{\Lambda^{(1)}}{\mu_1}$. 
 
\noindent\textbf{Identifying the necessary condition and proving its sufficiency:}\hspace{0.05in}The obtained solution must satisfy $T^{(1)}_f>T$ and $T^{(1)}_a\leq 0$, to represent a Type I EAP. For those conditions to be satisfied, we need $\left(\frac{\mu_2}{\mu_1}-1\right)\Lambda^{(1)}>\Lambda^{(2)}$ and this is a necessary condition for existence of a Type I EAP. With the necessary condition satisfied, it is easy to check, that the obtained values of $T^{(1)}_a,T^{(1)}_f,T$ satisfies $T^{(1)}_f>T$ and $T^{(1)}_a\leq 0$. As a result, upon plugging in the arrival rates from (\ref{eq_rates_inst2_eqpref_reg2_case1}), we get the only candidate under Type I which can qualify to be an EAP. The candidate has a closed form same as the joint arrival profile under case 1 of Theorem \ref{mainthm_inst2_eqpref_reg2}. Proving that this candidate is an EAP follows by the argument used for proving that the Type I candidate under case 2 ($\mu_1\geq\mu_2\gamma^{(1)}>\mu_2\gamma^{(2)}$) in the proof of Theorem \ref{mainthm_inst2_reg2} is an EAP, by replacing $\gamma^{(1)}=\gamma^{(2)}=\gamma$. As a result, Lemma \ref{lem:inst2reg2eqpref1} stands proved.
\end{proof}

\begin{lemma}\label{lem:inst2reg2eqpref2}
    If $\gamma^{(1)}=\gamma^{(2)}=\gamma$ and $\mu_1\geq\mu_2\gamma$, there exists an EAP of Type II if and only if $\left(\frac{\mu_2}{\mu_1}-1\right)\Lambda^{(1)}\leq\Lambda^{(2)}\leq\left(\frac{1}{\gamma}-1\right)\Lambda^{(1)}$, and if it exists, it will be unique with a closed form same as the joint arrival profile mentioned under case 2 of Theorem \ref{mainthm_inst2_eqpref_reg2}.
\end{lemma}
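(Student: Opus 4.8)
The plan is to follow exactly the template established in the proof of Lemma \ref{lem:inst2reg2case2} (case 2, Type II) and its equal-preference analogue, since Lemma \ref{lem:inst2reg2eqpref2} is the $\gamma_1=\gamma_2=\gamma$ degeneration of that case. First I would record the structural facts already available for Type II: all class 2 users arrive before time zero at rate $\mu_2\gamma$ (so $F_2^\prime(t)=\mu_2\gamma\cdot\mathbb{I}(t\in[-\tfrac{\Lambda_2}{\mu_2\gamma},0])$), and $T=T_{1,f}$, meaning queue 2 stays engaged throughout $[0,T_{1,f}]$ and empties exactly when queue 1 finishes. Because $\mu_1\geq\mu_2\gamma$, Lemma \ref{lem_arrivalrates_inst2_eqpref} forces class 1 to arrive at queue 1 at rate $\mu_2\gamma$ whenever $\tau_1(t)\in\mathcal{S}(F_2)$ and queue 2 is engaged, so $F_1^\prime(t)=\mu_2\gamma$ on $[T_{1,a},T_{1,f}]$.

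Next I would pin down the support boundaries by writing the small linear system they must satisfy, identical in form to the Type II system in Lemma \ref{lem:inst2reg2case2} with $\gamma_1=\gamma_2=\gamma$: queue 2 serves the whole mass $\Lambda_1+\Lambda_2$ on $[0,T_{1,f}]$ giving $T_{1,f}=\tfrac{\Lambda_1+\Lambda_2}{\mu_2}$, and class 1 arriving at rate $\mu_2\gamma$ on $[T_{1,a},T_{1,f}]$ giving $T_{1,a}=T_{1,f}-\tfrac{\Lambda_1}{\mu_2\gamma}=\tfrac{1}{\mu_2}\bigl(\Lambda_2-(\tfrac{1}{\gamma}-1)\Lambda_1\bigr)$. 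I would then extract the necessary condition by imposing the two inequalities that characterize Type II: $T_{1,a}\leq 0$ (so that class 1 starts before or at time zero, consistent with Lemma \ref{lem_inst2_eqpref_reg2_sign}) yields $\Lambda_2\leq(\tfrac{1}{\gamma}-1)\Lambda_1$, and $\mu_1 T_{1,f}\geq\Lambda_1$ (queue 1 must be able to empty by $T_{1,f}$, as required by the analogue of Lemma \ref{lem_inst2_uneqpref_queue1idle}) yields $\Lambda_2\geq(\tfrac{\mu_2}{\mu_1}-1)\Lambda_1$. These two together give the stated condition $(\tfrac{\mu_2}{\mu_1}-1)\Lambda_1\leq\Lambda_2\leq(\tfrac{1}{\gamma}-1)\Lambda_1$.

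For sufficiency I would verify that this single candidate is indeed an EAP by the now-standard cost-derivative computation. Letting $\tilde T=\inf\{t\geq0\mid Q_1(t)=0\}$, I would argue queue 1 is engaged on $[T_{1,a},\tilde T]$ and idle afterward, while queue 2 is engaged on $[-\tfrac{\Lambda_2}{\mu_2\gamma},T_{1,f}]$ and empties at $T_{1,f}$; this follows because $A_2(t)-\mu_2 t$ is decreasing with $A_2(T_{1,f})=\mu_2 T_{1,f}$, exactly as in Lemma \ref{lem:inst2reg2case2}. Using (\ref{eq:derv_of_tau}) I would compute $(\tau_{\mathbf{F}}^{(1)})^\prime(t)=\gamma$ on $[T_{1,a},T_{1,f}]$ (splitting into the queue-1-engaged piece, where $\tau_2^\prime(\tau_1(t))\tau_1^\prime(t)=\tfrac{\mu_1}{\mu_2}\cdot\tfrac{\mu_2\gamma}{\mu_1}=\gamma$, and the queue-1-idle piece, where $\tau_2^\prime(t)=\gamma$ directly), giving $(C_{\mathbf{F}}^{(1)})^\prime\equiv0$ on the support, $-\gamma<0$ before $T_{1,a}$, and $1-\gamma>0$ after $T_{1,f}$; an analogous computation handles class 2. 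The main obstacle, as elsewhere in this family of lemmas, is not the algebra but carefully confirming that queue 2 never re-engages once it empties at $T_{1,f}$ (class 1's residual arrival rate to queue 2 is at most $\mu_2\gamma\leq\mu_1$ but the relevant comparison is against $\mu_2$) and that the cost of class 2 users arriving just after time zero — who pass through the busy queue 1 — is no lower than their support cost; this is precisely where the lower bound $\Lambda_2\geq(\tfrac{\mu_2}{\mu_1}-1)\Lambda_1$ is needed, and I would appeal to the monotonicity-of-$C_{\mathbf{F}}^{(2)}$ argument (non-decreasing on $[0,\tilde T]$, then the support values dominate) exactly as in the Type II sufficiency step of Lemma \ref{lem:inst2reg2case2}.
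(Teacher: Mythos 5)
Your proposal is correct and follows essentially the same route as the paper: the same arrival rates from Lemma \ref{lem_arrivalrates_inst2_eqpref}, the same two-equation linear system giving $T_{1,f}=\frac{\Lambda_1+\Lambda_2}{\mu_2}$ and $T_{1,a}=T_{1,f}-\frac{\Lambda_1}{\mu_2\gamma}$, the same two constraints $T_{1,a}\leq 0$ and $\mu_1 T_{1,f}\geq\Lambda_1$ producing the stated window, and the same appeal to the Type II sufficiency computation of Lemma \ref{lem:inst2reg2case2} specialized to $\gamma_1=\gamma_2=\gamma$. One small slip in phrasing: class 2 users in HAS do not pass through queue 1 --- their post-zero deviation cost is driven by class 1's flow from queue 1 into queue 2 at rate $\mu_1$ while queue 1 is busy --- but the cost-derivative computation you describe is the correct one.
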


\begin{proof}
\textbf{Identifying $\mathbf{T^{(1)}_a,T^{(1)}_f,T}$:}\hspace{0.05in}Since queue 2 must be engaged in $[0,T^{(1)}_f]$, class 1 users arrive at a rate $\mu_2\gamma$ in $[T^{(1)}_a,T^{(1)}_f]$ by Lemma \ref{lem_arrivalrates_inst2_eqpref}. As a result, $T^{(1)}_a,T^{(1)}_f,T$ follows the linear system followed by $T^{(1)}_a,T^{(1)}_f,T$ in Type II under case 2 ($\mu_1\geq\mu_2\gamma^{(1)}>\mu_2\gamma^{(2)}$) in the proof of Theorem \ref{mainthm_inst2_reg2}, except by replacing $\gamma^{(1)}=\gamma^{(2)}=\gamma$. Solution to that linear system is $T^{(1)}_a=\frac{1}{\mu_2}\left(\Lambda^{(2)}-\left(\frac{1}{\gamma}-1\right)\Lambda^{(1)}\right)$, $T=T^{(1)}_f=\frac{\Lambda^{(1)}+\Lambda^{(2)}}{\mu_2}$.

\textbf{Getting the necessary condition and proving its sufficiency:}\hspace{0.05in}The obtained solution must satisfy $T^{(1)}_a\leq 0$ and $\mu_1 T^{(1)}_f\geq \Lambda^{(1)}$ (for queue 1 to be empty at $T^{(1)}_f$). Upon imposing them, we get the necessary condition $\left(\frac{\mu_2}{\mu_1}-1\right)\Lambda^{(1)}\leq\Lambda^{(2)}\leq\left(\frac{1}{\gamma}-1\right)\Lambda^{(1)}$ for existence of an EAP under Type II. It is easy to check that, if the necessary conditions satisfied, the desired conditions on the support boundaries are also satisfied. Also, by Lemma \ref{lem_arrivalrates_inst2_eqpref}, arrival rates of the two classes must be $(F^{(1)})^\prime(t)=\mu_2\gamma$ in $[T^{(1)}_a,T^{(1)}_f]$ and $(F^{(2)})^\prime(t)=\mu_2\gamma$ in $[-\Lambda^{(2)}/\mu_2\gamma, 0]$, which gives us exactly one candidate, which has closed form same as the joint arrival profile mentioned under case 2 of Theorem \ref{mainthm_inst2_eqpref_reg2}. Proving that this candidate is an EAP follows by the argument used for proving that the unique Type II candidate under case 2 in the proof of Theorem \ref{mainthm_inst2_reg2} is an EAP. 
\end{proof}

\begin{lemma}\label{lem:inst2reg2eqpref3}
    If $\gamma^{(1)}=\gamma^{(2)}=\gamma$ and $\mu_1\geq\mu_2\gamma$, there exists and EAP of Type III if and only if $\left(\frac{1}{\gamma}-1\right)\Lambda^{(1)}<\Lambda^{(2)}$, and if it exists, the set of all EAPs under Type III is given by the set of joint arrival profiles mentioned under case 3 of Theorem \ref{mainthm_inst2_eqpref_reg2}. 
\end{lemma}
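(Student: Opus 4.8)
The plan is to mirror the structure of Lemma \ref{lem:inst2reg1eqpref2}, the analogous convex-set result for the regime $\mu_1<\mu_2\gamma$, adapting each step to the present regime $\mu_1\geq\mu_2\gamma$. First I would pin down the support boundaries. Since we are in Type III, a positive mass of class $2$ arrives after time zero, so Lemma \ref{lem_inst2_eqpref_reg2_sign2} forces class $1$ to start arriving only from time zero onward; combined with the fact that class $2$ always begins before time zero, this gives $T_a=T_{2,a}$ and that class $1$ occupies $[0,T_f]$. Because class $1$ cannot arrive faster than the combined rate $\mu_2\gamma\leq\mu_1$ dictated by Lemma \ref{lem_arrivalrates_inst2_eqpref}, no waiting queue ever forms in queue $1$, so every class $1$ user passes through queue $1$ instantly and is served by queue $2$ exactly as a class $2$ user would be. Since $\mathcal{S}(F_1)\cup\mathcal{S}(F_2)$ is an interval (Lemma \ref{lem_supp_are_intervals_inst2}) and a positive mass arrives throughout, queue $2$ stays engaged on $(T_a,T_f)$ and must empty at $T_f$ (otherwise the last-arriving user could postpone her arrival and strictly improve). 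Hence queue $2$ serves the whole mass $\Lambda_1+\Lambda_2$ over $[0,T_f]$, giving $T_f=\frac{\Lambda_1+\Lambda_2}{\mu_2}$, and the iso-cost condition $C_{\mathbf{F}}^{(2)}(T_a)=C_{\mathbf{F}}^{(2)}(T_f)$ (the first class $2$ user waits $-T_a$, the network is empty at $T_f$) yields $T_a=-\left(\frac{1}{\gamma}-1\right)\frac{\Lambda_1+\Lambda_2}{\mu_2}$, matching case $3$ of Theorem \ref{mainthm_inst2_eqpref_reg2}.

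Next I would extract the necessary condition. Over $[0,T_f]$ the combined arrival rate equals $\mu_2\gamma$ (queue $2$ engaged, $\tau_2^\prime=\gamma$, Lemma \ref{lem_arrivalrates_inst2_eqpref}), so the total mass arriving after time zero is $\mu_2\gamma\,T_f=\gamma(\Lambda_1+\Lambda_2)$. All of class $1$ (mass $\Lambda_1$) arrives here, so the class $2$ mass arriving after zero is $\gamma(\Lambda_1+\Lambda_2)-\Lambda_1$, which is strictly positive exactly when $\Lambda_2>\left(\frac{1}{\gamma}-1\right)\Lambda_1$. This is both the condition for a genuine Type III profile to exist and the threshold stated in the lemma.

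Then I would identify the convex set and prove sufficiency. Invoking Lemma \ref{lem_arrivalrates_inst2_eqpref} with queue $1$ idle throughout $[0,T_f]$ forces $F_1^\prime(t)=0,\ F_2^\prime(t)=\mu_2\gamma$ on $[T_a,0]$ and $F_1^\prime(t)+F_2^\prime(t)=\mu_2\gamma$ on $[0,T_f]$, together with the mass constraints $F_i(T_f)=\Lambda_i$; this is precisely the candidate set of case $3$, and since these are affine constraints on $(F_1,F_2)$ the set is convex. Non-emptiness under the necessary condition follows by exhibiting two members as the $\gamma_1,\gamma_2\to\gamma$ limits of the EAPs in cases $2$c and $3$c of Theorem \ref{mainthm_inst2_reg2}. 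To show that each candidate is an EAP, I would use that queue $1$ is empty so $\tau_{\mathbf{F}}^{(1)}(t)=\tau_{\mathbf{F}}^{(2)}(t)=\tau_2(t)$ for $t\geq 0$, making the two classes share a single cost function on $[0,T_f]$; queue $2$ engaged gives $\tau_2^\prime(t)=\gamma$ on $(T_a,T_f)$, so $(C_{\mathbf{F}}^{(i)})^\prime\equiv 0$ on each class's support, while the cost is strictly decreasing before a class starts (such arrivals merely wait out to the same service epoch, slope $-\gamma<0$) and strictly increasing after $T_f$ (empty network, slope $1-\gamma>0$).

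The main obstacle I anticipate is the sufficiency step: verifying the iso-cost property uniformly across the entire convex family, since the split of the combined rate between the two classes is unconstrained apart from the combined-rate and nonnegativity conditions. The key simplification that makes this tractable is that no queue forms at queue $1$, so both classes' departure times collapse to $\tau_2$ and the argument reduces to controlling a single function; the remaining care is in checking the boundary behavior at $T_a$, at time zero, and at $T_f$, and in confirming that every admissible split keeps queue $1$ empty, which is guaranteed by $F_1^\prime\leq\mu_2\gamma\leq\mu_1$.
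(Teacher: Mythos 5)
Your proposal follows essentially the same route as the paper's proof: use Lemma \ref{lem_inst2_eqpref_reg2_sign2} to force class 1 to arrive only after time zero, deduce that queue 1 never builds a queue so both classes share the departure map $\tau_2$, pin down $T_f=\frac{\Lambda_1+\Lambda_2}{\mu_2}$ and $T_a$ from queue 2 emptying and the iso-cost condition, obtain the threshold $\Lambda_2>\left(\frac{1}{\gamma}-1\right)\Lambda_1$ from the mass-balance $\mu_2\gamma T_f>\Lambda_1$, exhibit non-emptiness via the limits of cases 2c and 3c of Theorem \ref{mainthm_inst2_reg2}, and verify sufficiency exactly as in Lemma \ref{lem:inst2reg1eqpref2}. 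The argument is correct and matches the paper's proof step for step, with only cosmetic differences (e.g., your explicit remark that convexity follows from the affine constraints).
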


\begin{proof}
\textbf{Identifying $T_a$ and $T_f$:}\hspace{0.05in}By Lemma \ref{lem_inst2_eqpref_reg2_sign2}, in Type III, class 1 users will start arriving after time zero. Therefore before time zero, by Lemma \ref{lem_arrivalrates_inst2_eqpref}, $(F^{(1)})^\prime(t)=0$ and $(F^{(2)})^\prime(t)=\mu_2\gamma$ for $t\in[T_a,0]$.

By Lemma \ref{lem_arrivalrates_inst2_eqpref}, class 1 users will arrive after time zero at a maximum rate of $\mu_1$. As a result, queue 1 will have zero waiting time at all times. Hence, by Lemma \ref{lem_arrivalrates_inst2_eqpref}, the EAP after time zero will satisfy $(F^{(1)})^\prime(t)+(F^{(2)})^\prime(t)=\mu_2\gamma$. 

Queue 2 must have a positive waiting time in $(T_a,T_f)$ and it will empty at $T_f$ after serving all users. Therefore, we have $T_f=\frac{\Lambda^{(1)}+\Lambda^{(2)}}{\mu_2}$. Since users of both the groups arrive at rate $\mu_2\gamma$ in $[0,T_f]$ and queue 2 stays engaged, using (\ref{eq:derv_of_tau}), we must have $(C_{\mathbf{F}}^{(2)})^\prime(t)=0$ in $[0,T_f]$.  Combining the previous statement with the fact $[T_a,0]\subseteq\mathcal{S}(F^{(2)})$, we must have $C_{\mathbf{F}}^{(2)}(\cdot)$ constant in $[T_a,T_f]$. Therefore, we have $C_{\mathbf{F}}^{(2)}(T_a)=C_{\mathbf{F}}^{(2)}(T_f)$, which gives us $T_a=-\left(\frac{1}{\gamma}-1\right)\frac{\Lambda^{(1)}+\Lambda^{(2)}}{\mu_2}$.

\textbf{Getting the necessary condition:}\hspace{0.05in}In every Type III EAP, by Lemma \ref{lem_arrivalrates_inst2_eqpref}, all class 1 users must arrive between $[0,T_f]$ at a maximum rate of $\mu_2\gamma$ and a positive mass of class 2 users arrive after time zero. Since $\mu_2\gamma\cdot T_f-\Lambda^{(1)}$ is larger than the mass of class 2 users arriving after time zero, we must have $\mu_2\gamma\cdot T_f>\Lambda^{(1)}$. This gives us the necessary condition $\Lambda^{(2)}>\left(\frac{1}{\gamma}-1\right)\Lambda^{(1)}$ for existence of a Type III EAP. With the necessary condition satisfied, the convex set described in the third case in Theorem \ref{mainthm_inst2_eqpref_reg2} will be non empty. Two elements of this set will be the limit of the EAPs in cases 2c and 3c of Theorem \ref{mainthm_inst2_reg2}, respectively, when $\gamma^{(2)}=\gamma$, $\gamma^{(1)}\to\gamma+$ and $\gamma^{(1)}=\gamma$, $\gamma^{(2)}\to\gamma+$. 

\textbf{Identifying the set of EAPs:} By the argument for identifying $T_a,T_f$, any Type III EAP must be contained in the set of candidates satisfying: \begin{enumerate}[leftmargin=*]
    \item[\textbf{1.}] $(F^{(1)})^\prime(t)=0$ and $(F^{(2)})^\prime(t)=\mu_2\gamma$ in $[T_a,0]$, 
    \item[\textbf{2.}] $(F^{(1)})^\prime(t)+(F^{(2)})^\prime(t)=\mu_2\gamma$, and 
    \item[\textbf{3.}] $F^{(1)}(T_f)=\Lambda^{(1)},~(F^{(2)})^\prime(T_f)=\Lambda^{(2)}$, 
\end{enumerate}
where $T_a=-\left(\frac{1}{\gamma}-1\right)\frac{\Lambda^{(1)}+\Lambda^{(2)}}{\mu_2}$ and $T_f=\frac{\Lambda^{(1)}+\Lambda^{(2)}}{\mu_2}$. The obtained set of candidates is same as the set of joint arrival profiles mentioned under case 3 of Theorem \ref{mainthm_inst2_eqpref_reg2}.

\textbf{Proving sufficiency of the necessary condition:}\hspace{0.05in}Following the same argument used in the proof of Lemma \ref{lem:inst2reg1eqpref2} for proving sufficiency of the obtained necessary condition, it follows that, with the necessary condition satisfied, every joint arrival profile in the obtained set of Type III candidates is an EAP. Therefore, the lemma stands proved.  
\end{proof}

\end{document}